\theoremstyle{plain}%
\newtheorem{thm}{Theorem}[section]
\newtheorem{lem}{Lemma}
\newtheorem{cor}{Corollary}[section]
\newtheorem{prop}{Proposition}[section]
\newtheorem{conj}{Conjecture}[section]
\theoremstyle{definition}
\newtheorem{defn}{Definition}
\newtheorem{remark}{Remark}
\newtheorem*{wext}{Weak extension principle}
\newtheorem*{gext}{Generalized extension principle}
\newcommand{\beqn}{\begin{equation}}
\newcommand{\beqna}{\begin{eqnarray}}
\newcommand{\eeqn}{\end{equation}}
\newcommand{\eeqna}{\end{eqnarray}}
\newcommand{\m}{\mathfrak{m}}
\newcommand{\chg}{\mathcal{CH}_{\Gamma}}
\newcommand{\sgo}{\mathcal{S}_{\Gamma}^1}
\newcommand{\sgt}{\mathcal{S}_{\Gamma}^2}
\newcommand{\ii}{{\rm i}}
\newcommand{\us}{u_*}
\newcommand{\vs}{v_*}
\newcommand{\ue}{U'}
\newcommand{\ve}{V'}
\newcommand{\ut}{\tilde{u}}
\newcommand{\vt}{\tilde{v}}
\newcommand{\e}{\mathfrak{e}}
\newcommand{\M}{\mathcal{M}}
\newcommand{\Q}{\mathcal{Q}}
\newcommand{\R}{\mathbb{R}}
\newcommand{\dd}{\textrm{d}}
\newcommand{\D}{\textrm{D}}
\newcommand{\Dm}{\mathcal{D}}
\newcommand{\Qm}{Q_{\textrm{m}}}
\newcommand{\Qe}{Q_{\textrm{e}}}
\newcommand{\pu}{\partial_u}
\newcommand{\pv}{\partial_v}
\newcommand{\imp}{\mathfrak{Im}[\phi]}
\begin{document}

\author[]{\sc Jonathan Kommemi\\ Department of Applied Mathematics and Theoretical Physics\\ University of Cambridge}
\date{\today}

\title[]{The global structure of spherically symmetric charged scalar field spacetimes}

\begin{abstract}
We study the spherical collapse of self-gravitating charged scalar fields.  The main result gives a complete characterization of the future boundary of spacetime, providing a starting point for studying the cosmic censorship conjectures.  In general, the boundary includes two null components, one emanating from the center of symmetry and the other from the future limit point of null infinity, joined by an achronal component to which the area-radius function $r$ extends continuously to zero.  Various components of the boundary \emph{a priori} may be empty and establishing such generic emptiness would suffice to prove formulations of weak or strong cosmic censorship.    As a simple corollary of the boundary characterization, the present paper rules out scenarios of `naked singularity' formation by means of `super-charging' (near-)extremal Reissner-Nordstr\"om black holes.   The main difficulty in delimiting the boundary is isolated in proving a suitable global extension principle that effectively excludes a broad class of singularity formation.  This suggests a new notion of `strongly tame' matter models, which we introduce in this paper.  The boundary characterization proven here extends to any such `strongly tame' Einstein-matter system. 
\end{abstract}

\maketitle

\setcounter{tocdepth}{2}

{{\tableofcontents} }

\section{Introduction}\label{sec:intro}
A fundamental open problem in classical general relativity concerns the structure of singularities formed by the gravitational collapse of self-gravitating bodies.  There are two conjectures that one gives the utmost consideration: weak and strong cosmic censorship. Each is a statement regarding the `visibility' of singularities with respect to distinct notions of spacetime predictability: weak cosmic censorship is concerned with predictability in the sense of completeness of future null infinity, e.g.,~as given by Christodoulou \cite{DC98}; and, strong cosmic censorship is concerned with predictability in the sense of spacetime inextendibility. The choice of terminology is a bit unfortunate; the two conjectures are, in fact, strictly speaking, logically independent. 

A rigorous formulation of either conjecture is deeply rooted in the field of partial differential equations (PDEs).  Indeed, the theory of general relativity itself is only described, in a mathematical context, as an initial value problem to the (second-order quasilinear hyperbolic) Einstein equations
\beqn\label{EIN}
R_{\mu\nu}- \frac{1}{2}g_{\mu\nu}R = 8\pi T_{\mu\nu}
\eeqn
(in natural units, with $c = G = 1$).
It is well-known that given sufficiently regular initial data, there exists a unique maximal globally hyperbolic spacetime $(\mathcal{M}, g_{\mu\nu})$ that solves (\ref{EIN}) coupled to various matter models, the so-called maximal future development of initial data \cite{CB52, CB71,CBG69}.  Little is known \emph{a priori} about the structure of this spacetime in the large, as our understanding of global initial value problems for wave equations of such non-linearity is limited.

This paper considers the formation and characterization of singularities in spherically symmetric asymptotically flat (with one end) Einstein-Maxwell-Klein-Gordon spacetimes. Previously, these spacetimes have been studied numerically (for the collapse of a charged massless scalar field) by Hod and Piran \cite{SHTP98,HP98} and small data global existence has been shown by Chae \cite{DChae03}.  This model, important to understanding many phenomena relevant to gravitational collapse, generalizes the models of Christodoulou \cite{DC87} and Dafermos \cite{MD03}. The self-gravitating real-valued massless scalar field model of Christodoulou is completely understood  in regards to weak and strong cosmic censorship \cite{DC91, DC93, DC94,DC99}, but his model does not admit what is possibly the most relevant feature counter-examples to strong cosmic censorship have: a Cauchy horizon emanating from the future limit point of null infinity. In `coupling' an electromagnetic field to the model of Christodoulou, Dafermos is able to study the stability and instability of Cauchy horizon formation \cite{MD03, MD05c}, but his model is limited, in turn, by global topology\footnote{The electromagnetic field is only `coupled' to the \emph{real-valued} scalar field via its interaction with the geometry and thus requires non-trivial topology in the initial data set for the charge to be non-zero.} incompatible with the spacetime having only one asymptotically flat end. A further motivation for coupling an electromagnetic field, moreover, stems from the possibility of identifying charge with a `poor man's' notion of angular momentum (cp.~Reissner-Nordstr\"om and Kerr black hole solutions), providing a natural primer to the more difficult problem of non-spherical collapse. With the present model, we can address both cosmic censorship conjectures within a single framework admitting many of the most fascinating features of gravitational collapse.  

While we do not prove or disprove here the cosmic censorship conjectures for this model, this paper will lay a framework that will provide the necessary tools to tackle these very difficult problems in the future. The main result in Theorem \ref{thm:main} will expound on the possible global structure of spacetime, giving a characterization of its future boundary.  In general, the boundary includes two null components, one emanating from the center of symmetry and the other from the future limit point of null infinity, joined by an achronal component to which the area-radius function $r$ extends continuously to zero.  Some components of the boundary may be \emph{a priori} empty and establishing such generic emptiness would suffice to prove the conjectures. In \S \ref{sec:intro/forth}, however, we announce forthcoming results that will assert the non-emptiness of various boundary components that will affect the outcome of (versions of) cosmic censorship (cf~\S \ref{sec:intro/conj/strong}).  With respect to the state of cosmic censorship, we give an overview of results given by Christodoulou and Dafermos in \S\ref{sec:intro/cd}. Stemming from these specific models, we are then led to give a series of conjectures in \S\ref{sec:intro/conj} for the general Einstein-Maxwell-Klein-Gordon system. These conjectures and their relationship, in particular, to cosmic censorship are highlighted in \S\ref{sec:intro/conj/web}. Within the present paper, we bolster the case for the validity of weak cosmic censorship by immediately ruling out (as a corollary of Theorem \ref{thm:main}) the possibility, as entertained in the physics literature, e.g.,~\cite{BCK10, CLS10, FdFYY01, VH99,TJTS09, MS07,RS08,SS11,RW72}, of creating `naked singularities' by `super-charging' (near-)extremal Reissner-Nordstr\"om black holes.

The main difficulty in the proof of Theorem \ref{thm:main} is establishing a stronger characterization of `first singularities' than that proposed by Dafermos in \cite{MD05b}.  We will call this stronger characterization the `generalized extension principle' to distinguish it from (what we shall call) the `weak extension principle' of \cite{MD05b}. The `generalized extension principle' is formulated very generally without reference to the topology or the geometry of the spherically symmetric initial data (so as to be applicable as well in the cosmological setting or in the case with two asymptotically flat ends). See \S \ref{sec:intro/general/weak} for a formal definition.  In \S\ref{sec:general_emkg/proof_gext}, we will show that Einstein-Maxwell-Klein-Gordon satisfies both extension principles, specifically: A `first singularity' must emanate from a spacetime boundary to which the area-radius function $r$ extends continuously to zero.  

Given the `generalized extension principle', the proof of Theorem \ref{thm:main} follows from monotonicity arguments derived from the dominant energy condition\footnote{Much of Theorem \ref{thm:main}, in fact, uses the monotonicity governed by Raychaudhuri's equation, which just needs the null energy condition (cf.~the proof of Theorem \ref{thm:main} in \S\ref{sec:proof_main}).} and is independent of any other structure of the system.  This observation will motivate a notion of `strongly tame' and `weakly tame' Einstein-matter systems introduced in this paper.  In the case of a `strongly tame' system, i.e.,~one which satisfies the `generalized extension principle' and obeys the dominant energy condition, the conclusion of Theorem \ref{thm:main} holds.  In the case of a `weakly tame' system, i.e.,~one which satisfies the `weak extension principle' and obeys the dominant energy condition, parts of the conclusion of Theorem \ref{thm:main} still hold, in particular, those most relevant for the study of weak cosmic censorship but not strong cosmic censorship.  See \S\ref{sec:intro/general/tame}--\ref{sec:intro/general/weakly_tame} for a discussion.  

We note finally that many of the conjectures in \S\ref{sec:intro/conj} are not model-specific and rely simply on parts of the conclusion of Theorem \ref{thm:main}; these conjectures, in turn, can be read more generally so as to apply to `strongly tame' and `weakly tame' systems (where appropriate).  This paper, as such, provides a blueprint for establishing the global structure of spherically symmetric spacetimes. 
 
\subsection{Self-gravitating charged scalar field model}\label{sec:intro/model}

We briefly summarize the mathematical framework (see, e.g., \cite{TF97} and \cite{GN00}) necessary to describe the self-gravitating charged scalar field model.  

\subsubsection{Principal $U(1)$-bundles and associated complex line bundles}\label{prelimbundle}
Consider a 4-dimensional spacetime $(\M, g_{\mu\nu})$ and a principal $U(1)$-bundle $\pi: P\rightarrow \M$.  

Define the adjoint principal bundle $P\times_{U(1)} U(1)$ by
\beqn\nonumber
P\times_{U(1)} U(1) = \left(P\times U(1)\right)/ \sim,
\eeqn
where $(p_1, g_1)\sim (p_2, g_2)$ if there exists $g\in U(1)$ such that $p_2 = p_1\cdot g$ and $g_2 = g^{-1}\cdot g_1 \cdot g$. 
Consider a local section $s: U\subset \M \rightarrow \pi^{-1}(U)\subset P$ of the principal bundle (called a gauge choice).  This mapping induces an isomorphism between each fiber $P_x$ of the principal bundle and $\{x\}\times U(1)$ for each $x\in U$ by sending $p\in P_x$ to the unique $g\in U(1)$ such that $p = s(x)\cdot g$.  A local section of the adjoint bundle assigns the equivalence class $[s(x), g(x)]$ of a pair $(s(x), g(x))$ at each $x$. Let $f_{g(x)}$ be the left action of $g(x)$ on each fiber $\{x\}\times U(1)$.  Then, for each $p\in P_x$ there exists a unique $g\in U(1)$ such that $p = s(x)\cdot g$ and
\beqn\nonumber
f_{g(x)} (p) = s(x)\cdot (g(x) \cdot g).
\eeqn
The mapping $f_{g(x)}$ is a vertical automorphism of $P$ (called a gauge transformation).  That is, $f_{g(x)}: P \rightarrow P$ is a diffeomorphism that fixes fibers $P_x$ of $P$, i.e., $\pi \circ f_{g(x)} = \pi$, and commutes with the $U(1)$-action, i.e., for $p\in P_x$ and $h\in U(1)$,  $f_{g(x)}(p\cdot h) = f_{g(x)}(p) \cdot h$.

Let $\omega$ be a $\mathfrak{u}(1)$-valued connection defined on $P$ and let $\mathcal{F}= \dd \omega$ denote its curvature.  Under a gauge transformation $f_{g(x)}:P\rightarrow P$, the connection and curvature transform on each fiber as
\beqna\nonumber
f_{g(x)}^*\omega &=& g(x)^{-1}\cdot\omega\cdot g(x) + g(x)^{-1}\cdot\dd g(x)\\
  f_{g(x)}^*\mathcal{F} &=& g(x)^{-1}\cdot \mathcal{F}\cdot g(x).\nonumber
\eeqna
In particular, as $U(1)$ is abelian, $\mathcal{F}$ is invariant under gauge transformations, i.e.,~
\beqn\nonumber
f_{g(x)}^*\mathcal{F} = \mathcal{F}.
\eeqn
Using the gauge choice $s: U \rightarrow \pi^{-1}(U)$, we define a $\mathfrak{u}(1)$-valued 1-form $\tilde{A} = s^*\omega$ on $U$ (called the local gauge potential) and a $\mathfrak{u}(1)$-valued 2-form $\tilde{F} = s^*\mathcal{F} = \dd\tilde{A}$ on $U$ (called the electromagnetic field strength). Identifying the Lie algebra $\mathfrak{u}(1)$ with $\ii\R$, we, in turn, define a $\R$-valued 1-form $A$ on $U$ such that
\beqn\nonumber
\tilde{A} = \ii A
\eeqn
and hence define a (global) $\R$-valued 2-form $F$ on $\M$ such that
\beqn\nonumber
F =  \dd A.
\eeqn  

Fix $\e\in \mathbb{Z}$. Let $\varrho: U(1)\rightarrow GL(1,\mathbb{C})$ be a representation of $U(1)$ on the vector space $\mathbb{C}$ (over the field $\mathbb{C}$) given by $\varrho(g) = g^{\e}$.
Define
\beqn\nonumber
P\times_{\varrho}\mathbb{C} = \left(P\times \mathbb{C} \right)/ \sim,
\eeqn
where $(p_1, z_1)\sim (p_2, z_2)$ if there exists $g\in U(1)$ such that $p_2 = p_1\cdot g$ and $z_2 = \varrho(g)^{-1} \cdot z_1$. We call the fiber bundle
\beqn\nonumber
\pi_{\varrho}: P\times_{\varrho}\mathbb{C} \rightarrow \M, \hspace{1cm} [p, z]\mapsto \pi(p)
\eeqn
the associated complex line bundle through the representation $\varrho$. A charged scalar field $\phi$ is a global section of $P\times_{\varrho}\mathbb{C}$ and corresponds to an equivariant $\mathbb{C}$-valued map on $P$, i.e.,~
\beqn\nonumber
\phi(p\cdot g) = \varrho(g)^{-1}\cdot \phi(p)
\eeqn
for all $p\in P$ and all $g\in U(1)$. Let $\omega_{\varrho}= \varrho_*\omega$ be the associated $\mathfrak{gl}(1)$-valued (in fact, $\mathfrak{u}(1)$-valued)  connection on $P\times_{\varrho}\mathbb{C}$.  This connection is defined as follows: since $\omega(X)\in \mathfrak{u}(1)$ for all $X\in TP$ and $\varrho_*(\omega (X))\in \mathfrak{u}(1)\subset\mathfrak{gl}(1)$, put
\beqn\nonumber
\omega_{\varrho}(X)= \varrho_*(\omega (X)).
\eeqn
Using the gauge choice $s: U\rightarrow \pi^{-1}(U)$, we can similarly pullback this connection to a  $\mathfrak{u}(1)$-valued 1-form on $U$ by defining $\tilde{A}_{\varrho} = s^* \omega_{\varrho} = \e \ii A$. The exterior covariant derivative (called the gauge derivative) on sections of $P\times_{\varrho}\mathbb{C}$ is then defined\footnote{The definition of the exterior covariant derivative is independent of the metric on $\M$, but transforms `tensorially' under a gauge transformation.  Equivalently, we can write $\D= \nabla + \e\ii A$.} so that locally on $\M$
\beqn\nonumber
\D  = \dd + \e \ii A.
\eeqn 

When $\e = 0$, the associated complex line bundle reduces to (the trivial bundle) $P\times \mathbb{C}$ and, moreover, the only admissible gauge transformations $f_{g(x)}: P\rightarrow P$ are those for which the choice of $g(x)\in U(1)$ is independent of the base point $x$ (commonly referred to as a `global' gauge transformation).  

 \subsubsection{System of equations}\label{sec:intro/model/eqns}
Given $\m^2\in \R$, the charged scalar field model is described by the collection (cf.~\S \ref{prelimbundle})
\beqn\nonumber
\left\{(\M, g_{\mu\nu}),~ \e,~\pi_{\varrho}: P\times_{\varrho}\mathbb{C}\rightarrow 
\M,~ \omega_{\varrho}, ~\phi , ~\m^2\right\}
\eeqn
satisfying the Einstein-Maxwell-Klein-Gordon equations
\beqn\label{RMN}
R_{\mu\nu} - \frac{1}{2}g_{\mu\nu}R = 8\pi T_{\mu\nu} = 8\pi \left(T_{\mu\nu}^{\textrm{EM}} + T_{\mu\nu}^{\textrm{KG}}\right)
\eeqn
\beqn\label{TEM}
T_{\mu\nu}^{\textrm{EM}} = \frac{1}{4\pi} \left(g^{\alpha\beta}F_{\alpha\mu}F_{\beta\nu} -\frac{1}{4}g_{\mu\nu}F^{\alpha\beta}F_{\alpha\beta}\right)
\eeqn
\beqn\label{TKG}
T_{\mu\nu}^{\textrm{KG}}\ = \frac{1}{2}\D_{\mu}\phi\left(\D_{\nu}\phi\right)^{\dagger} + \frac{1}{2}\D_{\nu}\phi \left(\D_{\mu}\phi\right)^{\dagger} -\frac{1}{2}g_{\mu\nu}\left(g^{\alpha\beta}\D_{\alpha}\phi \left(\D_{\beta}\phi\right)^{\dagger}+\m^2\phi\phi^{\dagger}\right)
\eeqn
\beqn\label{max2}
\nabla^{\nu}F_{\mu\nu}=  2\pi\e\ii \left(\phi \left(\D_{\mu}\phi\right)^{\dagger}-\phi^{\dagger}\D_{\mu}\phi \right)
\eeqn
\beqn\label{eqn:kg}
g^{\mu\nu}\D_{\mu}\D_{\nu}\phi = \m^2 \phi.
\eeqn
The coupling constant $\e$ is to be interpreted as the (electromagnetic) charge of the scalar field having mass $\m^2$.
\subsubsection{Dominant energy condition}

To ensure that the matter model obeys the dominant energy condition, we must require that $\m^2\geq0$ (cf.~\S\ref{energymomentum}). We note, however, that the `generalized extension principle' for the Einstein-Klein-Gordon system ($\e= F_{\mu\nu}=0$) can be established for arbitrary $\m^2$ (cf.~footnote \ref{foot:dominant}).  

\subsubsection{Well-posedness}\label{sec:intro/model/wellposed}

An easy generalization of a theorem of Choquet-Bruhat and Geroch \cite{CBG69} gives a unique smooth maximal future development $(\M, g_{\mu\nu}, \phi, F_{\mu\nu})$ satisfying (\ref{RMN})--(\ref{eqn:kg}) for given smooth initial data defined on a Cauchy surface $\Sigma^{\scriptsize{(3)}}$ (in our convention, $\M$ is a manifold-with-boundary with (past) boundary $\Sigma^{\scriptsize{(3)}}$).  For the purpose of this paper, it will not be necessary to concern ourselves with a general discussion of the constraint equations and the construction of such initial data sets, as we shall restrict to spherical symmetry where the relevant considerations are straightforward.  See \S \ref{sec:pre/max}.

\subsection{Theorem \ref{thm:main}: global characterization of spacetime}\label{sec:intro/main}

The main result (Theorem \ref{thm:main}) of this paper concerns the general possible structure of a spherically symmetric Einstein-Maxwell-Klein-Gordon spacetime arising from gravitational collapse.  This global characterization already captures non-trivial aspects of the dynamics of (\ref{RMN})--(\ref{eqn:kg}) and can be considered a starting point for further study of the cosmic censorship conjectures, whose statements are given in \S\ref{sec:intro/ws}. 

As the proof of Theorem \ref{thm:main} will make clear, this characterization holds for a larger class of spherically symmetric Einstein-matter systems, which we call `strongly tame'. In fact, much of Theorem \ref{thm:main} will hold for a still larger class of `weakly tame' Einstein-matter systems.  With this in mind, in the statement of Theorem \ref{thm:main} below, a box will indicate the specific appeal to an Einstein-matter system being `strongly tame',  otherwise the model need only be `weakly tame' for the assertion to hold.  See \S\ref{sec:intro/general/tame}--\ref{sec:intro/general/weakly_tame} for a discussion and the statements of Theorems \ref{thm:s_t_general} and \ref{thm:w_t_general}.

\begin{thm}\label{thm:main} Let $(\mathcal{M}=\Q^+\times_r \mathbb{S}^2, g_{\mu\nu}, \phi, F_{\mu\nu})$ denote the maximal future development of smooth spherically symmetric asymptotically flat initial data with one end for the Einstein-Maxwell-Klein-Gordon system containing no anti-trapped\footnote{A point $p\in \Q^+$ is an anti-trapped sphere of symmetry if the ingoing null derivative of $r$, evaluated at $p$, is non-negative. (This is what sometimes would be called `\emph{past outer} trapped or marginally trapped'.)} spheres of symmetry, where $r: \Q^+\rightarrow [0, \infty)$ is the area-radius function.

\vspace{.5mm}
\begin{flushleft}{\textbf{\emph{I:~Penrose diagram}}}\end{flushleft}
\vspace{.5mm}
The Penrose diagram\footnote{A Penrose diagram is the range of globally defined bounded double null co-ordinates as a subset of $\R^{1+1}$ (cf.~\S\ref{sec:pre/max}).  For spherically symmetric spacetimes, the diagrams conveniently help convey global causal-geometric information about the metric. Readers unfamiliar with Penrose diagrams should consult the appendix of \cite{MDIR05}.
} of $\Q^+$ is as depicted

\begin{center}
\includegraphics[scale=1.1]{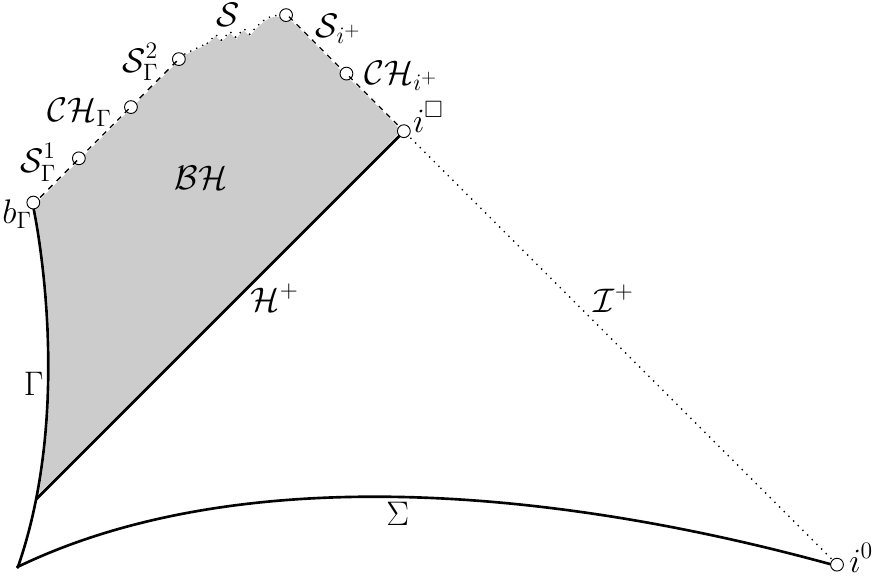}
\end{center}
with boundary $ \Sigma\cup \Gamma$ in the sense of manifold-with-boundary and boundary $\mathcal{B}^+$ induced by the ambient manifold $\R^{1+1}$ admitting a decomposition

\beqn\label{bp}
\mathcal{B}^+= b_{\Gamma}\cup \sgo \cup \chg \cup \sgt \cup\mathcal{S}\cup \mathcal{S}_{i^+}\cup\mathcal{CH}_{i^+}\cup i^{\square}\cup \mathcal{I}^+\cup i^0
\eeqn
\noindent to be enumerated immediately below.
\vspace{1mm}
\pagebreak
\begin{flushleft}{\textbf{\emph{II:~Boundary characterization}}}\end{flushleft}

The spacetime boundary is described as follows:
\vspace{1mm}
\begin{center}{\textbf{\emph{Boundary in the sense of manifold-with-boundary}}}\end{center}
\vspace{1mm}
\noindent $\Sigma$ is the spacelike past boundary of $\Q^+$ and is the projection to $\Q^+$ of the initial Cauchy hypersurface $\Sigma^{\scriptsize{(3)}}$ in $\M$.
 
\vspace{2.5mm}
\noindent
$\Gamma$ is the timelike boundary of $\Q^+$ on which $r=0$ and is the projection to $\Q^+$ of the set of fixed points of the group action $SO(3)$ on $\M$.
\vspace{1mm}
\begin{center}{\textbf{\emph{Boundary induced from the ambient $\R^{1+1}$}}}\end{center}
\vspace{1mm}
\noindent $i^0$ is the unique limit point of $\Sigma$ in $\overline{\Q^+}\backslash \Q^+$.\footnote{The closure is with respect to a bounded conformal representation of $\Q^+$ into the ambient manifold $\R^{1+1}$. See \S \ref{sec:pre/max}.  Similarly, causal-geometric constructions, e.g.,~the causal future $J^+$, the chronological past $I^-$, etc., will be with respect to the topology and the causal structure of the ambient $\R^{1+1}$.}  $r$ extends continuously\footnote{By this we mean here, and in what follows: $r$ extends continuously to a $[0, \infty]$-valued function on $\Q^+\cup i^0$ so as to yield $\infty$ on $i^0$.}  to $\infty$ on $i^0$.

\vspace{2.5mm}
\noindent $\mathcal{I}^+$ is a connected non-empty open null segment emanating from (but not including) $i^0$ characterized by the set of $p\in\overline{\Q^+}\backslash \Q^+$ that are limit points of outgoing null rays in $\Q^+$ for which $r\rightarrow \infty$. $r$ extends continuously to $\infty$ on $\mathcal{I}^+$.

\vspace{2.5mm}
\noindent $i^{\square}$ is the unique future limit point of $\mathcal{I}^+$.

\vspace{2.5mm}
\noindent
 $\mathcal{CH}_{i^+}$ is a connected (possibly empty) half-open null segment emanating\footnote{The `Cauchy horizon' $\mathcal{CH}_{i^+}$ will have special significance within the context of this paper, for it will be the only type of Cauchy horizon that is non-`first singularity'-emanating.  See \S \ref {sec:intro/gext}.}
 from (but not including) $i^{\square}$.  $r$ extends\footnote{The extension, which need not be continuous (see, however, Statement IV.4 below), is given by monotonicity along outgoing null curves.} to a function on $\mathcal{CH}_{i^+}$ that is non-zero except possibly at its future endpoint. 
 
 \vspace{2.5mm}
\noindent
$\mathcal{S}_{i^+}$ is a connected (possibly empty) half-open null segment emanating from (but not including) the future endpoint of $\mathcal{CH}_{i^+}\cup i^{\square}$. $r$ extends continuously to zero on $\mathcal{S}_{i^+}$. 

\vspace{2.5mm}

\noindent$b_{\Gamma}$ is the unique future limit point of $\Gamma$ in $\overline{\Q^+}\backslash \Q^+$. $r$ extends continuously to zero on $b_{\Gamma}\backslash \left(\mathcal{CH}_{i^+}\cup i^{\square}\right)$.
 
 \vspace{2.5mm}
\noindent
  $\sgo$ is a connected (possibly empty) half-open null segment emanating from (but not including) $b_{\Gamma}$.  $r$ extends continuously to zero on $\sgo\backslash \left(\mathcal{CH}_{i^+}\cup i^{\square}\right)$.
  
  \vspace{2.5mm}
\noindent $\chg$ is a connected (possibly empty) half-open null segment emanating from (but not including) the future endpoint of $b_{\Gamma}\cup\sgo$. $r$ extends continuously to a non-zero function on $\chg$ except possibly at its future endpoint.

\vspace{2.5mm}
\noindent$\sgt$ is a connected (possibly empty) half-open null segment emanating from (but not including) the future endpoint of $\chg$. $r$ extends continuously to zero on $\sgt$.\\

\noindent $\mathcal{S}$ is a connected (possibly empty) achronal curve that does not intersect the null rays emanating from limit points $b_{\Gamma}$ and $i^{\square}$. \fbox{ $r$ extends continuously to zero on $\mathcal{S}$.}
\vspace{1mm}
\begin{center}{\textbf{\emph{Common intersection of the boundary components}}}\end{center}
\vspace{1mm}
$\Sigma$ and $\Gamma$ intersect at a single point.

\vspace{2.5mm}
\noindent
If $\mathcal{S} = \emptyset$, the future endpoint of $b_{\Gamma}\cup\sgo\cup \chg\cup\sgt$ coincides with the future endpoint of $\mathcal{S}_{i^+}\cup \mathcal{CH}_{i^+}\cup i^{\square}$. 

\vspace{2.5mm}
\noindent
Modulo these common intersections, the boundary decomposition (\ref{bp}) is disjoint.
\vspace{1mm}
\begin{flushleft}{\textbf{\emph{III:~Completeness of $\mathcal{I}^+$}}}\end{flushleft}
\vspace{1mm}
\noindent If either of the following hold:
 \vspace{-1mm}
 \begin{itemize}
 \item [1.] $\mathcal{BH} = \Q^+ \backslash J^-(\mathcal{I}^+)\neq \emptyset$; or,
 \item [2.] $\sup_{\chg} r < \infty$,\footnote{If $\chg = \emptyset$, then condition 2 is trivially satisfied, as we take the convention $\sup \emptyset = -\infty$.}
 \end{itemize}
then $\mathcal{I}^+$ is complete in the sense of Christodoulou \cite{DC98}. 
  
{\rm [The completeness condition of \cite{DC98}, in the present context, takes the following form: Consider the parallel transport of an ingoing null vector $X$ along a fixed outgoing null segment in $\Q^+$ that has a limit point on $\mathcal{I}^+$. The affine length of integral curves of $X$ (in fact, restricted to $J^-(\mathcal{I}^+)\cap \Q^+$) tends to $\infty$ as $\mathcal{I}^+$ is approached.] }

If $\mathcal{I}^+$ is complete, we write\footnote{This explains the choice of notation $\mathcal{S}_{i^+}$ and $\mathcal{CH}_{i^+}$: If either of these sets are non-empty, then $i^{\square} = i^+$, since condition 1 is satisfied.}  
\beqn\nonumber
i^{\square} = i^+.
\eeqn

Alternatively, when $\mathcal{I}^+$ is not complete, we write
\beqn\nonumber
i^{\square} = i^{\emph{naked}}
\eeqn
and we say that $(\M, g_{\mu\nu})$ is a `naked singularity' spacetime.
\vspace{1mm}

\begin{flushleft}{\textbf{\emph{IV:~Geometry of the trapped region}}}\end{flushleft}
\vspace{1mm}
Let $\mathcal{R}$ be the non-empty `regular region' defined as the set of all $p\in \Q^+$ for which the outgoing null derivative of $r$ is positive.  Let $\mathcal{A}$ be the (possibly empty) `apparent horizon' defined as the set of all $p\in \Q^+$ for which the outgoing null derivative of $r$ vanishes. Let $\mathcal{T}$ be the (possibly empty) `trapped region' defined as the set of all $p\in \Q^+$ for which the outgoing null derivative of $r$ is negative.\footnote{The `apparent horizon' is thus the set of symmetry spheres that are marginally trapped; the `trapped region' is the set of symmetry spheres that are trapped.}

\begin{center}
\includegraphics[scale=.85]{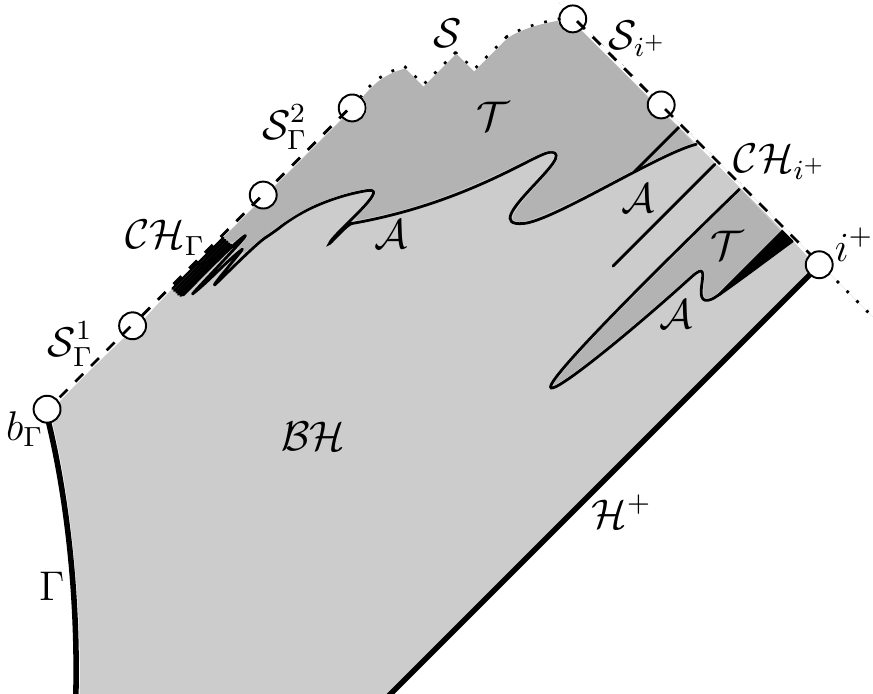}
\end{center}
\begin{itemize}
\item [1.] $\Gamma \subset \mathcal{R}$.
\item [2.] Consider $p, p'\in \Q^+$ along an outgoing null ray with $p'$ to the future of $p$. 
\begin{itemize}
\item [a.] If $p\in \mathcal{A}$, then $p'\in \mathcal{A}\cup\mathcal{T}$. 
\item [b.] If $p\in \mathcal{T}$, then $p'\in \mathcal{T}$. 
\end{itemize}
 In particular, $(\mathcal{A}\cup \mathcal{T})\cap J^-(\mathcal{I}^+)=\emptyset$.
 \item [3.] \fbox{\begin{minipage}{2.2in}
If $\sgt\cup\mathcal{S}\cup \mathcal{S}_{i^+}\neq \emptyset$, then $\mathcal{A}\cup \mathcal{T}\neq \emptyset$.\end{minipage}}
(If $\sgt\cup\mathcal{S}\cup \mathcal{S}_{i^+}=\emptyset$, then $\mathcal{A}\cup\mathcal{T}$ is possibly empty.)\item [4.] Let $p\in\mathcal{CH}_{i^+}$ that is not the future endpoint of $\mathcal{CH}_{i^+}$. If there exists a neighborhood $\mathcal{U}\subset \R^{1+1}$ of $p$ such that either $\mathcal{U}\cap \Q^+ \subset \mathcal{A}$ or $\mathcal{U}\cap \Q^+ \subset \mathcal{T}$, then $r$ extends continuously on $\mathcal{U}\cap\left(\Q^+ \cup \mathcal{CH}_{i^+}\right)$.
\item [5.] The apparent horizon $\mathcal{A}$ is clearly a closed set in $\Q^+$. Consider, however, the limit points of $\mathcal{A}$ on the boundary $\overline{\mathcal{Q}^+}\backslash\mathcal{Q}^+$ in the topology of $\R^{1+1}$.
\begin{itemize}
\item [a.] \fbox{\begin{minipage}{4.4in}
If $\mathcal{A}\neq \emptyset$, then all limit points of $\mathcal{A}$ that lie on the boundary $\overline{\mathcal{Q}^+}
\backslash\mathcal{Q}^+$ lie on $\mathcal{CH}_{i^+}\cup i^+$ and on a (possibly degenerate) closed, necessarily connected interval of $b_{\Gamma}\cup\sgo\cup\chg$.\end{minipage}}
\item [b.] If $\mathcal{A}\neq\emptyset$, then $\mathcal{A}$ has a limit point on $\mathcal{CH}_{i^+}\cup i^+$.
\item [c.]  
If $\mathcal{A}\neq\emptyset$ and $\mathcal{A}$ has a limit point on $\chg$, then there are no limit points of $\mathcal{A}$ on $b_{\Gamma}\cup\sgo$.
\item [d.] If $\mathcal{A}\neq\emptyset$ and $\mathcal{A}$ has a limit point on $b_{\Gamma}\cup\sgo$, then $\chg=\emptyset$. 
\item [e.] \fbox{\begin{minipage}{4in}
 If $\sgt\cup\mathcal{S}\cup \mathcal{S}_{i^+}\neq\emptyset$, then $\mathcal{A}$ has a limit point on $b_{\Gamma}\cup\sgo\cup \chg$.\end{minipage}}\end{itemize} 
\end{itemize}

\vspace{1mm}
\pagebreak
\begin{flushleft}{\textbf{\emph{V:~Properties of the Hawking mass}}}\end{flushleft}
\vspace{1mm}
Let $m$ be the Hawking mass function given by
\beqn\nonumber
1 - \frac{2m}{r} = g(\nabla r, \nabla r).
\eeqn
\begin{itemize}
\item [1.] If $p\in \mathcal{R}\cup \mathcal{A}$, then $m(p)$ is non-decreasing in the future-directed outgoing null direction and is non-increasing in the future-directed ingoing null direction.  Consequently,  $m$ extends (not necessarily continuously) to a non-increasing, non-negative function along $\mathcal{I}^+$. In particular, the final Bondi mass $M_f$ of the spacetime, defined by
\beqn\nonumber
M_f = \inf_{\mathcal{I}^+} m,
\eeqn
is a real (finite) number.
\item [2.] The following relations hold:
\beqna\nonumber
\frac{2m}{r} & < & 1 \hspace{.5cm}\emph{in}\hspace{.5cm}\mathcal{R}\\
\frac{2m}{r} & = & 1\hspace{.5cm}\emph{in}\hspace{.5cm}\mathcal{A}\nonumber\\
\frac{2m}{r} & >& 1\hspace{.5cm}\emph{in}\hspace{.5cm}\mathcal{T}.\nonumber
\eeqna
\end{itemize}
\vspace{1mm} 
\begin{flushleft}{\textbf{\emph{VI:~Penrose inequality}}}\end{flushleft}
\vspace{1mm} 
Let $\mathcal{H}^+\subset \mathcal{R}\cup \mathcal{A}$ denote the (possibly empty) half-open outgoing null segment forming the past-boundary of the (possibly empty) black hole region $\mathcal{BH}$, i.e.,
\beqn\nonumber
\mathcal{H}^+ =\left( \overline{J^-(\mathcal{I}^+)}\cap \Q^+\right) \backslash J^-(\mathcal{I}^+).
\eeqn
We note that if $\mathcal{H}^+\neq \emptyset$, then $\mathcal{H}^+$ has a past endpoint on $\Sigma\cup \Gamma$.\footnote{In the above diagram, we have depicted $\mathcal{H}^+$ such that $\mathcal{H}^+\cap \Sigma=\emptyset$, but, indeed, it may be that $\mathcal{H}^+\cap \Sigma \neq \emptyset$.}

If $\mathcal{BH}\neq \emptyset$, then $\mathcal{H}^+\neq \emptyset$ and, moreover, the following inequality holds: 
\beqn\nonumber
\sup_{\mathcal{H}^+}r \leq 2 M_f.
\eeqn
In particular, if $\mathcal{BH}\neq \emptyset$, then $M_f >0$.
\vspace{1mm} 
\begin{flushleft}{\textbf{\emph{VII:~Extendibility of the solution}}}\end{flushleft}
\vspace{1mm} 
\begin{itemize}
\item [1.] \fbox{\begin{minipage}{4.9in}
The Kretschmann scalar $R_{\mu\nu\alpha\beta}R^{\mu\nu\alpha\beta}$ is a continuous $[0,\infty]$-valued function on $\Q^+\cup \sgt\cup\mathcal{S}\cup \mathcal{S}_{i^+}$ that yields $\infty$ on $\sgt\cup\mathcal{S}\cup \mathcal{S}_{i^+}$. The rate of blow-up is no slower than $r^{-4}$.
\end{minipage}}
\item [2.] Let $p\in \sgo$ and consider a neighborhood $\mathcal{U}\subset \R^{1+1}$ of $p$.  
\begin{itemize}
\item [a.] There exists a sequence $\{p_j\}_{j=1}^{\infty}\subset \mathcal{U}\cap\Q^+$ with $p_j \rightarrow p$ such that
\beqn\nonumber
 \limsup_{j\rightarrow \infty} R_{\mu\nu\alpha\beta}R^{\mu\nu\alpha\beta}(p_j) = \infty.
\eeqn
The rate of blow-up is no slower than $r^{-4+\epsilon}$, for some $\epsilon>0$.
\item [b.]If $\mathcal{U}\cap\Q^+ \subset\mathcal{A}\cup \mathcal{T}$, then the Kretschmann scalar is a continuous $[0,\infty]$-valued function on $\mathcal{U}\cap\left(\Q^+\cup \sgo\right)$ that yields $\infty$ on $\sgo\cap \mathcal{U}$. The rate of blow-up is no slower than $r^{-4}$.
\end{itemize}
\item [3.] \fbox{\begin{minipage}{4.2in}
Let $m$ be the Hawking mass. If $\sup_{\mathcal{H}^+} m \neq \frac{1}{2}\sup_{\mathcal{H}^+} r$,
 then $\mathcal{CH}_{i^+}\neq \emptyset$.
\end{minipage}}
\item [4.] \fbox{\begin{minipage}{4.9in}
If $(\mathcal{M},g_{\mu\nu})$ is future-extendible as a $C^2$-Lorentzian manifold $(\widetilde{\mathcal{M}}, \widetilde{g_{\mu\nu}})$, then there exists a timelike curve $\gamma\subset \widetilde{\M}$ exiting the manifold $\mathcal{M}$ such that the closure of the projection of $\gamma|_{\M}$ to $\Q^+$ intersects $\chg\cup \mathcal{CH}_{i^+}$.   In particular, if $\chg \cup \mathcal{CH}_{i^+} = \emptyset$, then  $(\mathcal{M}, g_{\mu\nu})$ is $C^2$-future-inextendible.
\end{minipage}}\end{itemize}
\end{thm}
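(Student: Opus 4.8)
The plan is to argue directly from the hypothesis: given a proper $C^2$-future-extension $(\widetilde{\M},\widetilde{g})$, I would extract a future-directed exiting timelike curve whose projection to $\Q^+$ can only cluster on the two Cauchy-horizon components $\chg\cup\mathcal{CH}_{i^+}$, because forced curvature blow-up obstructs any $C^2$-extension across every other piece of $\mathcal{B}^+$. First I would record the standard fact that a proper $C^2$-extension of the globally hyperbolic development $\M$ yields a future-directed timelike curve $\gamma$ in $\widetilde{\M}$ whose past portion $\gamma|_{\M}$ lies in $\M$ and whose future endpoint $q$ lies in $\widetilde{\M}\setminus\M$ (take a boundary point $q$ of the image of $\M$ and a short timelike geodesic through $q$ dipping into $\M$ to the past). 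Projecting $\gamma|_{\M}$ to $\Q^+$ gives a continuous curve $\lambda$, and because the conformal representation of $\Q^+$ sits inside a compact subset of $\R^{1+1}$, the cluster set $L$ of $\lambda$ at its future endpoint is a nonempty, compact, connected subset of $\mathcal{B}^+$. The goal is to show $L\cap(\chg\cup\mathcal{CH}_{i^+})\neq\emptyset$.

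The central mechanism is that the Kretschmann scalar $R_{\mu\nu\alpha\beta}R^{\mu\nu\alpha\beta}$ is $SO(3)$-invariant, hence descends to a function on $\Q^+$ that is constant on each sphere of symmetry. Since $\widetilde{g}$ is $C^2$ its curvature is continuous on $\widetilde{\M}$, so $R_{\mu\nu\alpha\beta}R^{\mu\nu\alpha\beta}$ is bounded on a neighborhood of $q$; as the embedding is isometric, the descended scalar satisfies $\limsup_{s} R_{\mu\nu\alpha\beta}R^{\mu\nu\alpha\beta}(\lambda(s))<\infty$ along $\lambda$. Invoking Statement VII.1, which gives continuous blow-up of the Kretschmann scalar to $+\infty$ on $\sgt\cup\mathcal{S}\cup\mathcal{S}_{i^+}$, any limit point of $\lambda$ on these components would force $R_{\mu\nu\alpha\beta}R^{\mu\nu\alpha\beta}(\lambda(s))\to\infty$ along a subsequence, a contradiction. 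Hence $L$ avoids $\sgt\cup\mathcal{S}\cup\mathcal{S}_{i^+}$.

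Next I would dispose of the remaining non-Cauchy-horizon pieces. The components $\mathcal{I}^+$ and $i^0$ are excluded on asymptotic grounds, being the loci $r=\infty$ at null and spatial infinity and so not finite endpoints of the timelike curve $\gamma$; $\Sigma$ and the interior of $\Gamma$ are excluded causally, since a future-directed curve exiting to the future cannot cluster on the past boundary, and near $\Gamma\subset\mathcal{R}$ the solution is regular by IV.1, so no exit occurs there. The delicate component is $\sgo$, where $r\to0$ but Statement VII.2 only guarantees blow-up along a sequence $p_j\to p$. Here I would exploit that $r\to0$ on $\sgo$ forces the symmetry spheres $S_{\lambda(s)}$ to shrink to a point; since each contains $\gamma(s)\to q$, these shrinking spheres converge to $q$ and, for $s$ near the endpoint, lie entirely inside a fixed neighborhood $\mathcal{V}$ of $q$ on which $R_{\mu\nu\alpha\beta}R^{\mu\nu\alpha\beta}$ is bounded. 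Using the continuous extension of $r$ to $0$ at $q$ one argues that $\mathcal{V}$ captures the full symmetry spheres over an entire $\Q^+$-neighborhood of $p$, so the blow-up sequence $p_j$ of VII.2 also lands where the descended scalar is bounded, the desired contradiction. (The fully trapped/antitrapped sub-case is covered directly by the continuous blow-up in VII.2b.)

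Finally, $L$ is a connected compact subset of $b_{\Gamma}\cup\chg\cup\mathcal{CH}_{i^+}\cup i^{\square}$. If $L$ meets the interior of $\chg$ or $\mathcal{CH}_{i^+}$ we are done; the remaining task is the degenerate cases $L=\{b_{\Gamma}\}$ or $L=\{i^{\square}\}$, which I would settle by the same shrinking-sphere argument (since $r\to0$ at $b_{\Gamma}$ off the Cauchy horizon) together with connectedness: a nondegenerate approach must thread through $\chg$ (resp.\ $\mathcal{CH}_{i^+}$) adjacent to these endpoints, while a genuinely isolated exit at $b_{\Gamma}$ with $r\to0$ is excluded exactly as for $\sgo$. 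The ``in particular'' assertion is then immediate: if $\chg\cup\mathcal{CH}_{i^+}=\emptyset$ the cluster set $L$ can meet nothing, contradicting $L\neq\emptyset$, so no $C^2$-extension exists. I expect the main obstacle to be the $\sgo$ step, namely turning the intuitive ``shrinking spheres converge to $q$'' picture into a rigorous statement that a neighborhood of $q$ in the \emph{non-symmetric} extension $\widetilde{\M}$ captures entire symmetry spheres over a two-dimensional neighborhood of $p$ in $\Q^+$, which requires controlling the continuity of the area-radius and of the spheres' embedding up to the extension boundary.
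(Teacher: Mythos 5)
Your proposal addresses only Statement VII.4 (taking Statements II, IV and VII.1--VII.2 as given), and its skeleton --- extract an exiting timelike curve, project it to $\Q^+$, take the cluster set $L\subset\mathcal{B}^+$, and eliminate every boundary piece other than $\chg\cup\mathcal{CH}_{i^+}$ by curvature arguments --- is the same as the paper's. However, there are two genuine gaps, and the second one is fatal to the approach as written. First, your curvature bound lives only along the one-dimensional projected curve $\lambda$, whereas the blow-up of Statement VII.2a is merely sequential in a two-dimensional neighborhood; you see the mismatch and propose the ``shrinking spheres'' argument to promote the bound to a full $\Q^+$-neighborhood of $p\in\sgo$. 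That promotion is exactly what you cannot do: to conclude that the spheres over a two-dimensional neighborhood of $p$ lie in a fixed neighborhood of $q$, you must estimate distances in $\widetilde{\M}$ transverse to $\gamma$, i.e.\ you need quantitative control of $\Omega^2$ (equivalently, of the embedding of the group orbits) up to the singular boundary --- precisely the regularity whose failure you are trying to exhibit. The paper avoids this circularity by a soft argument: it proves that past sets upstairs project to past sets downstairs (via horizontal lifts of timelike curves), so that ``$C^2$-compatibility'' of a scalar invariant --- boundedness on the compact causal diamond $J^+(x')\cap J^-(x)$ in $\widetilde{\M}$ --- descends to boundedness on the full two-dimensional diamond $J^-(p)\cap J^+(p')\cap\Q^+$; the sequential blow-up of VII.2a inside that diamond then yields the contradiction with no metric estimates at all.

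Second, and more seriously, the degenerate case $L=\{b_{\Gamma}\}$ cannot be ``excluded exactly as for $\sgo$'', because at $b_{\Gamma}$ there is no curvature blow-up to invoke: when $\chg\neq\emptyset$ the solution may be perfectly regular --- indeed extendible --- in $J^-(b_{\Gamma})$ near $b_{\Gamma}$ (this is what happens in Christodoulou's examples), so a shrinking-spheres-plus-blow-up argument has nothing to contradict. What must be ruled out is that the projections of \emph{all} exiting timelike curves are crushed onto the single corner point $b_{\Gamma}$, and this is where the paper's real work lies: it constructs a multi-parameter family of exiting timelike curves, parallel-transports a null frame $\mathbf{X}^1=a\,\partial_u$, $\mathbf{X}^2=b\,\partial_v$ along them normalized by $ab\,\Omega^2=2$, shows that $\Omega^2$ is uniformly bounded near $b_{\Gamma}$ in renormalized bounded coordinates (using Kretschmann boundedness to keep $\mathcal{A}\cup\mathcal{T}$ away from $b_{\Gamma}$, then Raychaudhuri monotonicity), deduces $a>\xi$ or $b>\xi$, and then flows the family to sweep out a tube whose $u$- (or $v$-) extent is bounded below --- incompatible with the fact that the $u$-extent of $J^+(p')\cap J^-(b_{\Gamma})$ tends to zero as $p'\to b_{\Gamma}$. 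Nothing in your proposal supplies a mechanism of this kind. A smaller instance of the same problem: for $L=\{i^{\square}\}$ your shrinking-spheres argument does not even apply, since $r\not\to 0$ there; the paper instead excludes $\mathcal{I}^+\cup i^{\square}$ using the Killing norms, $\sum_i g(X_i,X_i)=2r^2$, which are $C^2$-compatible invariants because Killing fields satisfy $\nabla_{\nu}\nabla_{\alpha}X_{\beta}=R_{\mu\nu\alpha\beta}X^{\mu}$ and hence extend continuously to the boundary; your appeal to ``asymptotic grounds'' needs to be replaced by this (or an equivalent) argument.
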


\subsection{Weak and strong cosmic censorship}\label{sec:intro/ws}

We will discuss known (limited) results concerning cosmic censorship for the Einstein-Maxwell-Klein-Gordon system in \S\ref{sec:intro/cd} and establish more conjectures in \S\ref{sec:intro/conj} that, if true, will imply, in particular, cosmic censorship.  To aid this study, it will be convenient to include here concise statements of the cosmic censorship conjectures (following Christodoulou in \cite{DC98})  under the framework presented by Theorem \ref{thm:main}.

\begin{conj}[\textbf{Weak cosmic censorship}]\label{conj:weak} Among all the data admissible from Theorem \ref{thm:main}, there exists a generic sub-class for which $\mathcal{I}^+$ is complete.
\end{conj}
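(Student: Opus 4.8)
The statement is a genuine conjecture and a major open problem; what follows is only a strategy, modelled on Christodoulou's resolution of weak cosmic censorship for the uncharged massless scalar field \cite{DC99}. The first move is a reduction already furnished by Theorem \ref{thm:main}. By Part III, $\mathcal{I}^+$ fails to be complete --- equivalently, $i^{\square}=i^{\mathrm{naked}}$ --- only if \emph{both} $\mathcal{BH}=\Q^+\backslash J^-(\mathcal{I}^+)=\emptyset$ and $\sup_{\chg}r=\infty$; following the paper, call such a development a \emph{naked singularity} spacetime. Weak cosmic censorship then follows once one shows that the admissible data giving rise to naked singularity spacetimes are non-generic, so the entire problem becomes to topologize the space of admissible data and to prove that this naked set has, in an appropriate sense, positive codimension.

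The second step is analytic infrastructure, since genericity cannot even be discussed without it. As for the real scalar field, the natural setting is a global well-posedness and structure theory at low regularity --- solutions of bounded variation, where the double-null formulation and the monotonicity of $r$ governed by Raychaudhuri's equation remain effective. Here the present paper supplies a decisive tool: the generalized extension principle guarantees that any first singularity emanates from a boundary on which $r$ extends continuously to zero, so the question of nakedness localizes to a neighborhood of the central component $\Gamma$ and of the null segments $b_{\Gamma}\cup\sgo$ issuing from it. One would then seek a characterization of naked data as those whose development focuses at the center, measured by the scale-invariant ratio $2m/r$ together with the renormalized charge.

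The heart of the argument --- and the main obstacle --- is the instability of naked singularity formation. In the uncharged massless case this rested on the exact dilation symmetry of the field equations: self-similar solutions furnished the candidate naked singularities, and Christodoulou showed that the naked behaviour is destroyed by a generic finite-parameter perturbation, so it is a phenomenon of positive codimension. For the Einstein-Maxwell-Klein-Gordon system this symmetry is broken twice, by the charge coupling $\e$ and by the mass $\m^2$, and the self-similar analysis cannot be imported directly. The realistic plan is to treat $\e$ and $\m^2$ as subcritical with respect to the scaling of the massless uncharged model, to show that the focusing dynamics near $\Gamma$ is to leading order governed by that scale-invariant model, and to propagate the instability perturbatively. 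Making this rigorous --- uniformly over the noncompact family of approach scales, and in a topology fine enough to detect genericity yet coarse enough to support the BV existence theory --- is precisely where Christodoulou's program demanded its most delicate estimates; the additional electromagnetic and massive terms here can only sharpen that difficulty.
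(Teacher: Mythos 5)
You have correctly recognized that this statement is a conjecture the paper leaves open --- the paper explicitly states it neither proves nor disproves cosmic censorship, so there is no proof to compare against --- and your strategy coincides with the route the paper itself envisions. Your reduction via Statement III (incompleteness of $\mathcal{I}^+$ forces both $\mathcal{BH}=\emptyset$ and $\sup_{\chg}r=\infty$) is exactly how the paper deploys Theorem \ref{thm:main}, and your Christodoulou-style program is the one sketched in \S\ref{sec:intro/cd/c}. The only difference is in how the missing step is crystallized: the paper isolates it as the Spherical Trapped Surface Conjectures (Conjectures \ref{conj:trapped_surface_conj_christo} and \ref{conj:trappedsurfaceconj}) --- generically either $b_{\Gamma}=i^+$ or $\mathcal{A}\neq\emptyset$ --- from which completeness of $\mathcal{I}^+$ follows by Corollary \ref{cor:trap_w}, whereas you phrase it as positive codimension of the set of ``naked'' data; these amount to the same thing, since in both formulations the mechanism destroying nakedness is trapped surface formation (Theorem \ref{thm:christo_trap_form}). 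One refinement worth making in your sketch: the paper emphasizes that the linear instability driving Christodoulou's genericity argument is the infinite blue-shift along the ingoing null ray $C^-_p$ terminating at $b_{\Gamma}$, rather than the self-similar structure per se, and it is this blue-shift mechanism (together with some analogue of the BV well-posedness theory, whose extension to $\e\neq 0$, $\m^2\neq 0$ is itself open) that one would need to propagate perturbatively against the symmetry-breaking terms you identify.
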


\begin{conj}[\textbf{Strong cosmic censorship}]\label{conj:strong} Among all the data admissible from Theorem \ref{thm:main}, there exists a generic sub-class for which the maximal future development is future-inextendible as a suitably regular Lorentzian metric.
\end{conj}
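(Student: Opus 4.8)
The plan is to reduce strong cosmic censorship, within the data class of Theorem \ref{thm:main}, to statements about the two Cauchy-horizon components $\chg$ and $\mathcal{CH}_{i^+}$ already isolated there, and then to attack those by decay-and-instability estimates. The starting point is Statement VII.4: any $C^2$-extension of $(\M,g_{\mu\nu})$ must admit a timelike curve whose projection to $\Q^+$ accumulates on $\chg\cup\mathcal{CH}_{i^+}$, so that emptiness of both segments already yields $C^2$-future-inextendibility. It therefore suffices to exhibit a generic subclass of data for which \emph{either} $\chg\cup\mathcal{CH}_{i^+}=\emptyset$ \emph{or} the metric fails to extend across $\chg\cup\mathcal{CH}_{i^+}$ at the regularity demanded by the conjecture. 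Since the $C^0$ formulation is expected to fail --- the metric should extend continuously through $\mathcal{CH}_{i^+}$, as in the Reissner-Nordstr\"om interior --- the appropriate regularity is $C^2$, or more sharply $H^1_{\mathrm{loc}}$ Christoffel symbols, and the mechanism enforcing inextendibility at that level is mass inflation.

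First I would treat the black-hole Cauchy horizon $\mathcal{CH}_{i^+}$. Non-emptiness of $\mathcal{CH}_{i^+}$ forces $\mathcal{BH}\neq\emptyset$, $i^\square=i^+$, and completeness of $\mathcal{I}^+$ (Statement III), so the exterior is available for asymptotic analysis; and Statement VII.3 already signals that a surplus of Hawking mass on $\mathcal{H}^+$ over $\tfrac12\sup_{\mathcal{H}^+}r$ forces $\mathcal{CH}_{i^+}\neq\emptyset$, the hallmark of a dynamical, mass-carrying interior. The strategy is the three-step program of Dafermos for the uncharged case, adapted to the complex field: (i) prove convergence of the exterior to a sub-extremal Reissner-Nordstr\"om solution with final Bondi mass $M_f$ and nonzero asymptotic charge, using the monotonicity of $m$ (Statement V) and the Maxwell equation~(\ref{max2}); (ii) establish a Price-type inverse-polynomial bound for $\phi$ and $\D\phi$ along $\mathcal{H}^+$; and (iii) propagate these into the interior and exploit the exponential blueshift at $\mathcal{CH}_{i^+}$ to show that, generically, the Hawking mass blows up along $\mathcal{CH}_{i^+}$. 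Mass inflation then obstructs a $C^2$ --- indeed $H^1_{\mathrm{loc}}$ --- extension, which is the inextendibility the conjecture demands.

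Next I would dispose of the central Cauchy horizon $\chg$. Unlike $\mathcal{CH}_{i^+}$, this segment emanates from the future limit point $b_\Gamma$ of the axis $\Gamma$, a genuine first singularity at which $r\to 0$, and is thus tied to singular behaviour at the center rather than to an inner horizon of Reissner-Nordstr\"om type. Following Christodoulou's treatment of the real massless field, the goal is to prove $\chg=\emptyset$ for a generic subclass, by showing that data producing a nontrivial central Cauchy horizon --- equivalently, a non-spacelike central singularity across which the solution could be continued --- form a non-generic set in the relevant function-space topology; generic central collapse should instead terminate on the $r\to 0$ boundary $\sgt\cup\mathcal{S}$. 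Where $\chg$ nonetheless persists one would fall back on curvature blow-up along it to rule out a regular extension.

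The hard part will be steps (ii)--(iii) for the \emph{charged} field. The complex, gauge-coupled nature of $\phi$ and the dynamical charge make convergence to Reissner-Nordstr\"om and the sharp Price-law decay rates genuinely more delicate than in the real, uncharged settings of Christodoulou and Dafermos: the oscillations of $\phi$ and the back-reaction through~(\ref{max2}) obstruct the clean monotonicity and scaling structure available there. Establishing both the sharp decay along $\mathcal{H}^+$ and the quantitative lower bound that drives mass inflation --- uniformly over a class of data that is generic yet nonempty in a sufficiently fine topology --- is where essentially all of the analytic difficulty resides; granted this input, everything else reduces, through Theorem \ref{thm:main}, to the blueshift-amplification argument at $\mathcal{CH}_{i^+}$ together with the central-singularity genericity argument for $\chg$.
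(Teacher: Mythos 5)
The statement you set out to prove is a \emph{conjecture}, not a theorem of the paper: the author says explicitly that the paper neither proves nor disproves cosmic censorship for this model, and Conjecture \ref{conj:strong} is known only in the special case $\m^2=\e=\imp=F_{\mu\nu}=0$, by importing Christodoulou's work (Theorem \ref{thm:christo}). So there is no proof in the paper to compare yours against; the honest comparison is with the paper's conjectural roadmap in \S\ref{sec:intro/conj}--\S\ref{sec:intro/conj/web}. Measured against that, your reduction is faithful and essentially identical: Statement VII.4 of Theorem \ref{thm:main} localizes any $C^2$-extension at $\chg\cup\mathcal{CH}_{i^+}$; generic emptiness of $\chg$ is precisely Conjecture \ref{conj:trapped_surface_conj_christo}; curvature blow-up along $\mathcal{CH}_{i^+}$ is precisely Conjecture \ref{conj:strong/EMKG}; your Price-law input is Conjecture \ref{conj}; and you correctly recognize that the $C^0$-formulation must be abandoned in favor of a $C^2$-type formulation (cf.~Theorem \ref{thm:c0strong} and Conjecture \ref{conj:nonemptyext}).

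As a proof, however, the proposal has a genuine --- indeed total --- gap: every analytic step in it is itself one of the paper's open conjectures, and you concede as much when you write that steps (ii)--(iii) are ``where essentially all of the analytic difficulty resides.'' Reducing an open conjecture to other open conjectures is a program, not a proof. Beyond this, two specific steps would fail as written. First, your mass-inflation step leans on the mechanism of Theorem \ref{thm:mass}, whose hypothesis is the pointwise lower bound (\ref{lowerbound}) along $\mathcal{H}^+$; but the paper notes (\S\ref{sec:intro/conj/strong}) that for the complex, gauge-coupled field the expected late-time oscillations make exactly this lower bound unlikely to hold, so the blueshift argument cannot simply be ``adapted'' --- an oscillation-robust substitute is needed and none is known. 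Second, your fallback for the central horizon, ``curvature blow-up along $\chg$,'' is not available in general: $r$ extends continuously to a non-zero function on $\chg$, the paper proves no blow-up statement there (Statement VII.2 concerns $\sgo$, not $\chg$), and Christodoulou's naked-singularity examples \cite{DC94}, as well as the dust solutions \cite{DC84}, are extendible across $\chg$. This is precisely why the paper, following Theorems \ref{thm:christo} and \ref{thm:christo_trap_form}, routes everything through \emph{generic emptiness} of $\chg$ via trapped-surface formation, with no blow-up fallback.
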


Despite the historical nomenclature, there is no logical relationship between weak and strong cosmic censorship.  This should not be surprising if one thinks of weak cosmic censorship, in the language of PDEs, as ultimately a statement of global existence and strong cosmic censorship as a statement of global uniqueness (and, of course, existence and uniqueness are \emph{a priori} unrelated issues).  In this regard, one can perhaps better appreciate the importance of regularity in Conjecture \ref{conj:strong}, which will be discussed in  \S\ref{sec:intro/cd/d}, \S\ref{sec:intro/cd/d/mass}, and \S\ref{sec:intro/conj/strong}.      

\subsection{Models of Christodoulou and Dafermos}\label{sec:intro/cd}

Contained within Theorem \ref{thm:main} is the self-gravitating real-valued massless scalar field model of Christodoulou \cite{DC91}; this corresponds to taking $\m^2=\e = \imp = F_{\mu\nu}=0$.  The model of Dafermos \cite{MD03}, i.e.,~the model for which $\m^2=\e =\imp=0$, but $F_{\mu\nu}$ is not assumed to vanish, is not, however, included in the statement of Theorem \ref{thm:main} in view of the topology of the initial data.  If we impose that $\Sigma$ has one asymptotically flat end and $\m^2=\e =\imp=0$, then the model of Dafermos, necessarily, reduces to that of Christodoulou, i.e.,~it follows that $F_{\mu\nu}=0$.

\subsubsection{Christodoulou: the real-valued massless scalar field} \label{sec:intro/cd/c}
In the case $\m^2=\e=\imp=F_{\mu\nu}=0$, the system (\ref{RMN})--(\ref{eqn:kg}) exhibits stronger monotonicity properties (above and beyond Raychaudhuri; cf.~the Einstein equation (\ref{eqn:ruv})) not present in the more general case.  In particular, we can strengthen the boundary characterization of Theorem \ref{thm:main} as follows:
 
\begin{itemize}
\item [1.] $\mathcal{S}_{i^+} \cup \mathcal{CH}_{i^+}=\emptyset$.
\item [2.] $\sgt= \emptyset$ and $\mathcal{S}_{\Gamma} := \sgo$.
\item [3.] $\mathcal{S}$ is $C^1$-spacelike.
\end{itemize}
In all, there are eight possible spacetimes as depicted\footnote{We do not differentiate between the cases in which the past endpoint of $\mathcal{H}^+$ intersects $\Gamma \backslash \Sigma$, $\Sigma\backslash \Gamma$, or $\Sigma\cap \Gamma$.  This is related, however, to the important issue of dynamical formation of black holes.} below in diagrams I--VIII.

\begin{figure}[h!]
\vspace{-.25cm}
   \subfloat[\scriptsize dispersive]{\includegraphics[scale=.95]{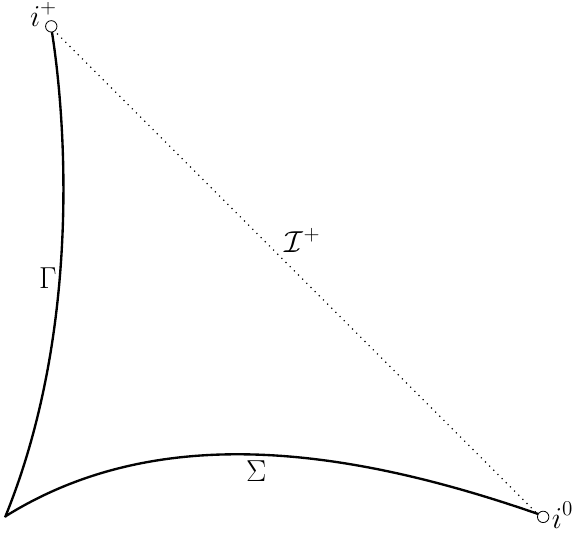}}  \quad
   \centering
  \subfloat[\scriptsize black hole with spacelike singularity]{\includegraphics[scale=.75]{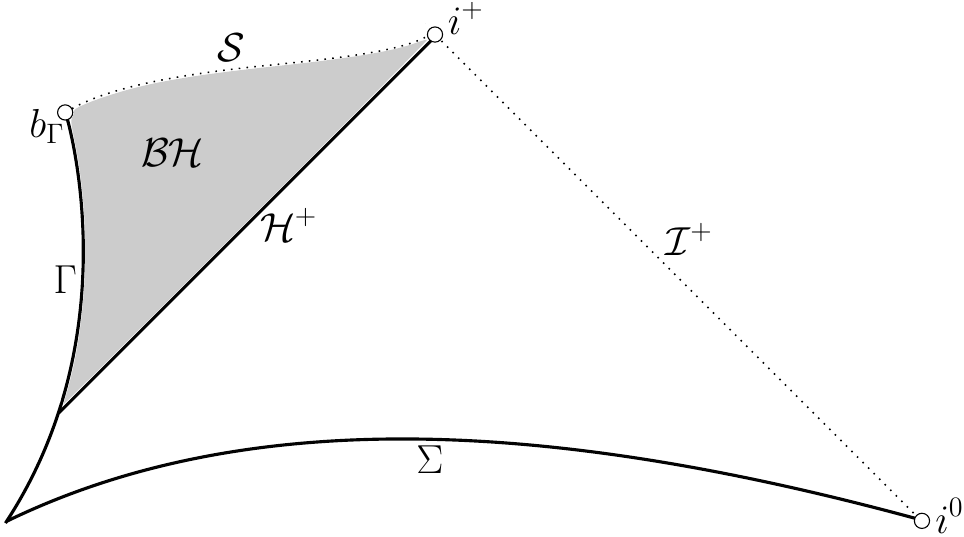}}
\end{figure}

\begin{figure}[h!]
\vspace{-.25cm}
\setcounter{subfigure}{2}
\centering
  \subfloat[\scriptsize light cone singularity]{\includegraphics[scale=.9]{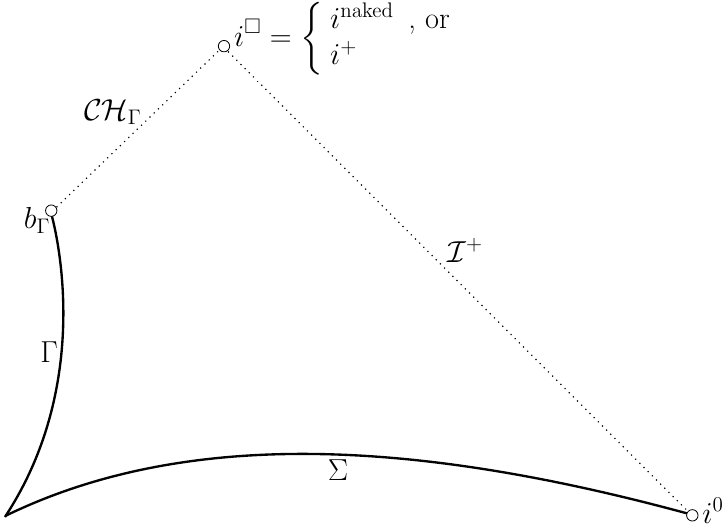}}
    \subfloat[ \scriptsize black hole with light cone singularity]{\includegraphics[scale=.7]{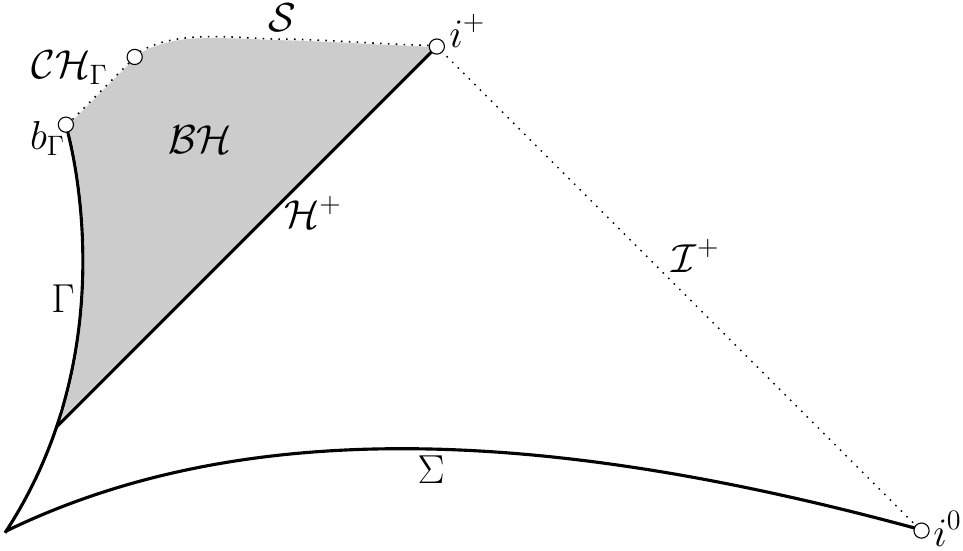}}     
\end{figure}
\begin{figure}[h!]
\vspace{-.25cm}
\setcounter{subfigure}{4}
\subfloat[\scriptsize collapsed light cone singularity]{\includegraphics[scale=.7]{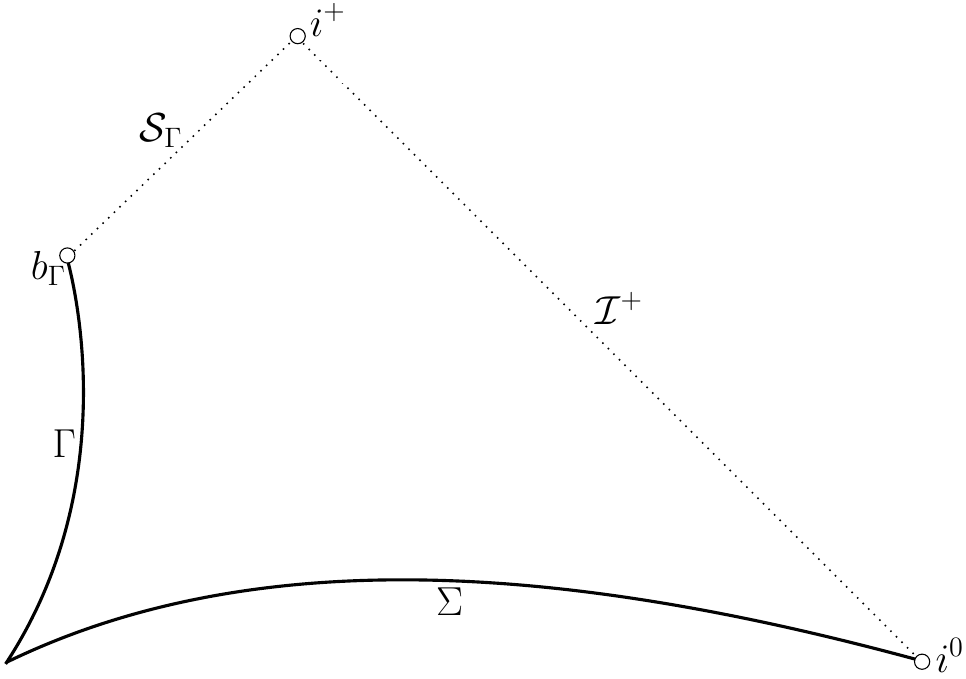}}
   \centering
  \subfloat[\scriptsize black hole with collapsed light cone singularity]{\includegraphics[scale=.7]{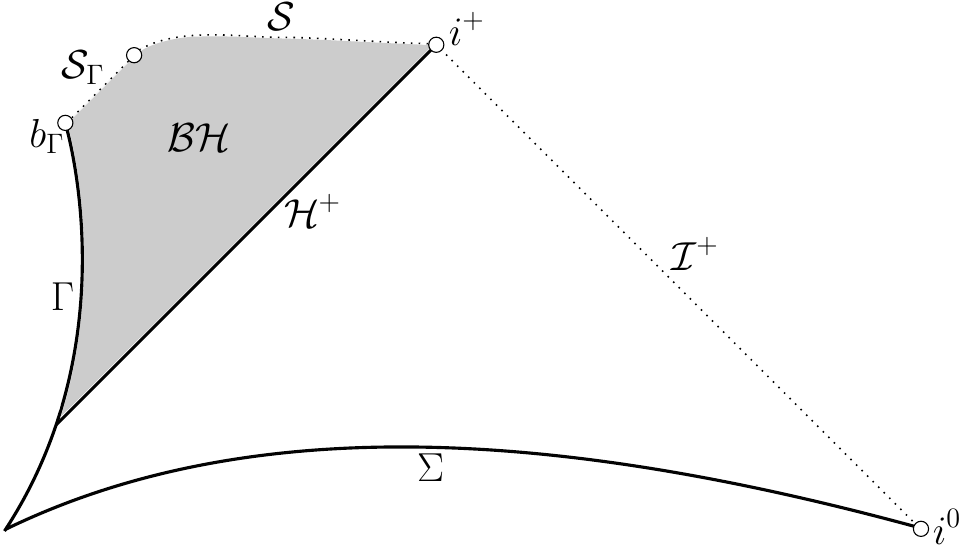}}
\end{figure}

\begin{figure}[h]
\vspace{-.25cm}
\setcounter{subfigure}{6}
 \subfloat[\scriptsize expanding collapsed light cone singularity]{\includegraphics[scale=1.05]{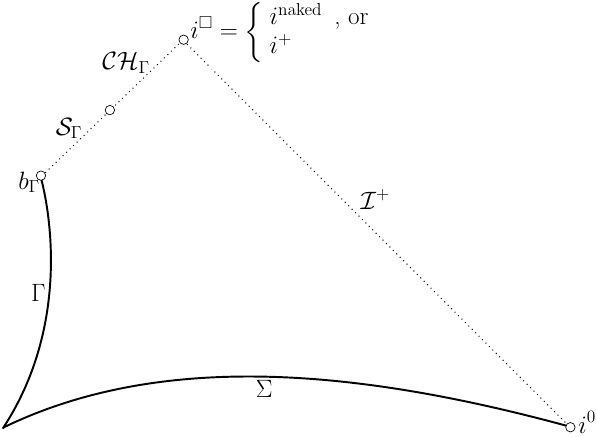}}\quad
 \centering
  \subfloat[\scriptsize black hole with expanding collapsed light cone singularity]{\includegraphics[scale=.38]{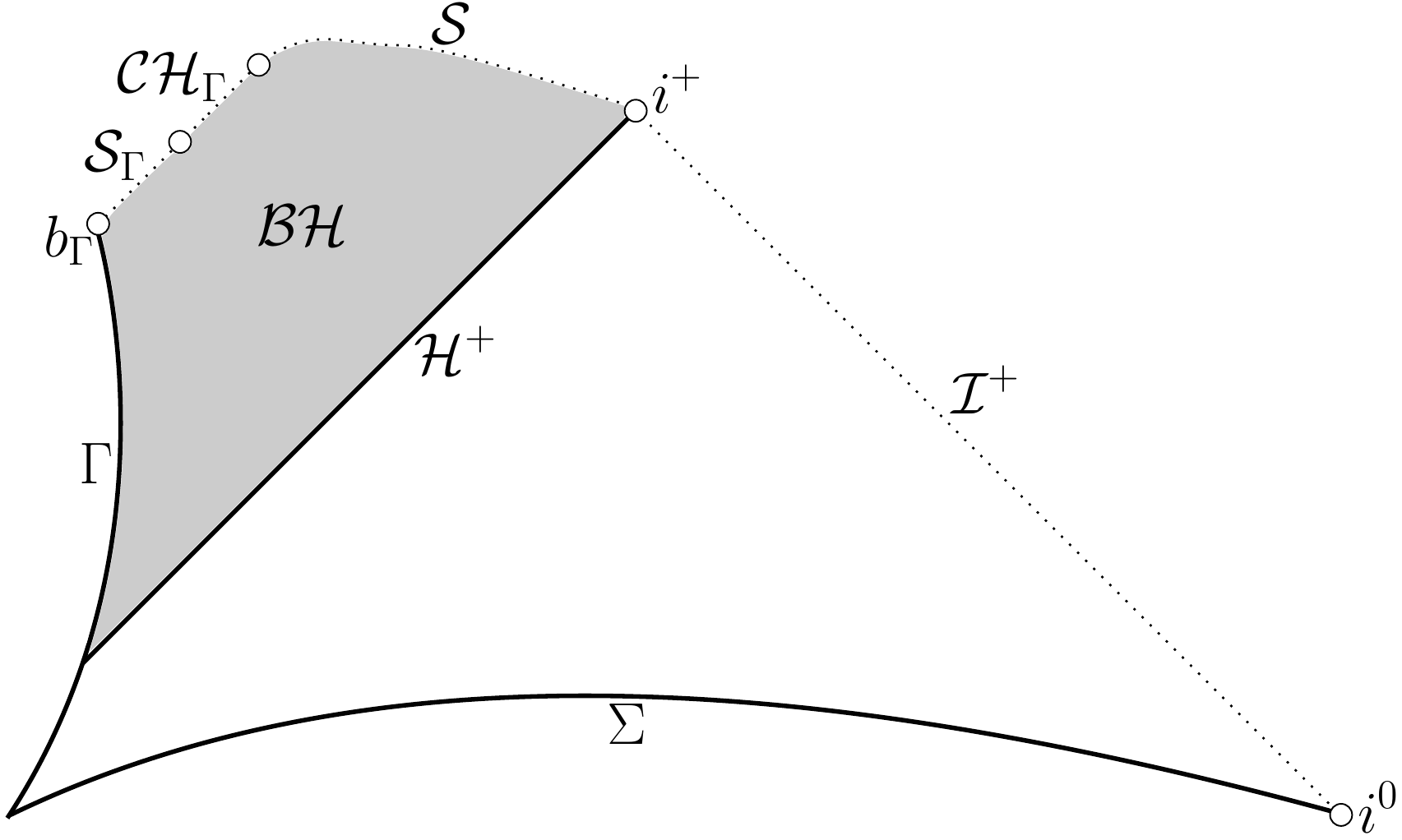}}
\end{figure}

Non-black hole (expanding collapsed) light cone singularity spacetimes, as in diagrams III and VII, \emph{a priori} may, or may not, have a complete future null infinity $\mathcal{I}^+$ and hence, may, or may not, have (in our convention) a `future timelike infinity' $i^+$.  Moreover, such spacetimes may, or may not, be future-extendible beyond $\mathcal{CH}_{\Gamma}$.  In principle, there may exist, in particular, a spacetime as in diagram III and VII where $i^{\square} = i^{\textrm{naked}}$, but for which the solution is future-inextendible. This illustrates why strong cosmic censorship does \emph{not} imply weak cosmic censorship.     

In \cite{DC94}, Christodoulou constructs, in particular, spacetimes as in diagrams III and VII with $\mathcal{I}^+$ incomplete (a non-black hole (expanding collapsed) light cone singularity with $i^{\square} = i^{\textrm{naked}}$, which we, henceforth, call a `naked singularity')  but for which the spacetime is also future-extendible. This demonstrates the necessity of having a genericity assumption in the formulation of Conjectures \ref{conj:weak} and \ref{conj:strong}. We note that the solutions constructed within the context of \cite{DC94} are not smooth but, nonetheless, lie in a `BV' class for which strong well-posedness can still be proven (see the discussion in \S\ref{sec:intro/general/regularity}). They are thus, in every sense, strong solutions.  

In his seminal work \cite{DC99}, Christodoulou shows that the set of solutions given by diagrams III--VIII are, in a suitable sense, `exceptional' (in particular, non-generic), as they form a set of  positive co-dimension in the family of all solutions as above.  This is summarized in
\pagebreak
\begin{thm}[Christodoulou \cite{DC99}]\label{thm:christo}For initial data as in Theorem \ref{thm:main} in the more general BV class with $\m^2=\e=\imp=F_{\mu\nu}=0$, there exists a sub-class for which generically $\mathcal{S}_{\Gamma}\cup \chg=\emptyset$ and the spacetimes are inextendible as $C^0$-Lorentzian metrics.  In particular, Conjectures \ref{conj:weak} and \ref{conj:strong} are \emph{true} in the case of a self-gravitating  real-valued massless scalar field and generic spacetimes are as depicted in diagrams I and II above.\footnote{With respect to Conjecture \ref{conj:strong}, $C^2$-inextendibility of the spacetime would follow by Statement VII of Theorem \ref{thm:main}, but the regularity class considered by Christodoulou in \cite{DC93} is below $C^2$ for the metric, hence the desirability of the stronger $C^0$-formulation, which Christodoulou, indeed, obtains by a separate argument.  For a general Einstein-Maxwell-Klein-Gordon spacetime, however, it is conjectured (v.~Conjecture \ref{conj:nonemptyext}) that such a strong formulation of Conjecture \ref{conj:strong} will not hold (cf.~Theorem \ref{thm:c0strong}\label{foot:christo}).}\end{thm}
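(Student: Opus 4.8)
The plan is to reduce everything to the BV well-posedness theory for the real-valued massless Einstein-scalar field (here $\m^2=\e=\imp=F_{\mu\nu}=0$), cited from Christodoulou's earlier work, and then to split the assertion into two essentially independent pieces: (i) the codimension statement that $\sgo\cup\chg=\emptyset$ away from an exceptional set of data, and (ii) the $C^0$-inextendibility of the solutions that survive, namely those of diagrams I and II. For the structural reduction I would invoke the generalized extension principle together with Statement II of Theorem \ref{thm:main}: in the massless uncharged case items 1--2 of the specialization give $\mathcal{S}_{i^+}\cup\mathcal{CH}_{i^+}=\emptyset$ and $\sgt=\emptyset$, so the only way a non-trivial singular boundary can emanate from the central limit point $b_\Gamma$ is through $\sgo\cup\chg$. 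Thus the entire dichotomy between diagrams I--II and diagrams III--VIII is controlled by whether a `first singularity' forms at the future limit of $\Gamma$, a point at which $r$ extends continuously to zero.

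The heart of the matter is to show that this central first singularity is a critical, positive-codimension phenomenon. I would introduce a scale-invariant formulation near $\Gamma$ and isolate a single scale-invariant quantity $\alpha$, built from the limiting behavior of $r\phi$ and of the Hawking mass ratio $2m/r$ along ingoing cones approaching $b_\Gamma$, whose value governs the trichotomy dispersion / naked / black hole, in direct analogy with critical collapse. The monotonicity supplied by Raychaudhuri's equation and by Statement IV (the regular and trapped regions are preserved to the future along outgoing cones, and $(\mathcal{A}\cup\mathcal{T})\cap J^-(\mathcal{I}^+)=\emptyset$) is precisely what lets me argue that once a trapped sphere forms inside the backward light cone of the center, it necessarily shields the center and forces the solution into the black-hole case (diagram II). The genericity is then obtained by exhibiting, for any datum producing $\sgo\cup\chg\neq\emptyset$, a finite-parameter family of admissible BV deformations across which $\alpha$ varies \emph{transversally}, so that the singular data lie on a set of positive codimension; an arbitrarily small such deformation produces a trapped surface and collapses the central singular segment.

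Once $\sgo\cup\chg=\emptyset$ generically, Conjecture \ref{conj:weak} is immediate from Statement III, since $\chg=\emptyset$ makes condition 2 ($\sup_{\chg}r<\infty$) hold trivially and forces $\mathcal{I}^+$ complete, and $C^2$-inextendibility follows from Statement VII.4 because $\chg\cup\mathcal{CH}_{i^+}=\emptyset$. To upgrade to the $C^0$-statement demanded here I cannot merely invoke curvature blow-up (Statement VII.1 yields only $C^2$), so I would argue directly on the achronal spacelike boundary $\mathcal{S}$ of diagram II: using the monotonicity of the Hawking mass along outgoing and ingoing cones (Statement V.1) together with Raychaudhuri, show that $2m/r\to\infty$ as $\mathcal{S}$ is approached, and that this geometric divergence of the mass aspect is incompatible with the existence of any continuous Lorentzian metric extending across $\mathcal{S}$. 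The dispersive case (diagram I) carries no such singular boundary at all, so the only potential obstruction there is completeness of $\mathcal{I}^+$, already secured above.

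I expect the genuine difficulty to be the codimension argument of the second step. The delicate part is making $\alpha$ precise, controlling the solution uniformly as $b_\Gamma$ is approached in the low-regularity BV topology (where no a priori pointwise curvature bound is available), and proving \emph{transversality} of $\alpha$ under deformation rather than mere variation — this is exactly what separates `positive codimension' from `non-empty complement'. The $C^0$-upgrade in the third step is a secondary but still nontrivial obstacle, since continuous inextendibility is not implied by blow-up of any curvature scalar and must instead be extracted from the divergence of $2m/r$ on the spacelike portion of the boundary.
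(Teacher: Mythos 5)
First, a point of order: the paper does not prove this statement. It is Christodoulou's theorem, quoted from \cite{DC99}, and the paper only records (in \S\ref{sec:intro/cd/c}) the strategy by which it is proven there; the comparison must therefore be against that strategy. Your proposal deviates from it exactly where the real work lies, and the deviation is a gap, not an alternative route. The genericity in \cite{DC99} is not organized around a scale-invariant order parameter $\alpha$ whose transversality one verifies; it rests on two concrete ingredients that your proposal never produces: (i) the sharp trapped surface formation criterion of Theorem \ref{thm:christo_trap_form}, which guarantees a trapped sphere once an annulus $p,p'$ has dimensionless mass content $\eta_0 > c_2\delta_0\log\left(\delta_0^{-1}\right)$, and (ii) the infinite blue-shift along the ingoing null ray terminating at $b_{\Gamma}$ in a naked-singularity spacetime, which supplies the linear instability mechanism. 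Given a solution as in diagrams III--VIII, Christodoulou constructs a $1$-parameter family of BV solutions agreeing with it in the past of $b_{\Gamma}$ and uses the blue-shift to verify, for \emph{every other} member of the family, the hypotheses of Theorem \ref{thm:christo_trap_form} on a sequence of annuli $p, p'\rightarrow q = b_{\Gamma}$; this yields $\mathcal{A}\neq\emptyset$ with limit point $b_{\Gamma}$, and only then does $\mathcal{S}_{\Gamma}\cup\chg = \emptyset$ follow, via the special monotonicity $\pu\pv r<0$ in the trapped region, which is a feature of (\ref{eqn:ruv}) with $Q=\m^2=0$ going \emph{beyond} Raychaudhuri/Statement IV (your ``shielding'' step tacitly needs it too). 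By contrast, your second step is circular as stated: defining $\alpha$ so that its level set is the set of data with $\sgo\cup\chg\neq\emptyset$ and then asserting that $\alpha$ ``varies transversally'' under deformation is a restatement of the theorem, not a proof of it; you supply no mechanism by which an arbitrarily small BV deformation produces a trapped surface, and that mechanism is the entire content of \cite{DC99}.

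The $C^0$ step is a second genuine gap. One cannot pass from $2m/r\rightarrow\infty$ on $\mathcal{S}$ to the nonexistence of a continuous Lorentzian extension: blow-up of a scalar invariant obstructs extensions only at the differentiability needed to define that invariant from the metric, which is precisely why Statement VII of Theorem \ref{thm:main} yields only $C^2$-inextendibility (cf.~footnote \ref{foot:christo}). You concede this for curvature scalars and then assume the same unjustified implication for the mass aspect, which is no better; in \cite{DC99} the $C^0$ statement is obtained by a separate argument, and your one-line incompatibility claim does not replace it. The parts of your proposal that do go through --- the specialization $\mathcal{S}_{i^+}\cup\mathcal{CH}_{i^+}=\sgt=\emptyset$, the deduction of Conjecture \ref{conj:weak} from Statement III once $\chg=\emptyset$, and $C^2$-inextendibility from Statement VII.4 --- are the easy, purely structural consequences of Theorem \ref{thm:main}; the two hard steps above are missing.
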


For a discussion of the significance of the positive resolution of cosmic censorship for the real-valued massless scalar field in the context of other models, in particular Einstein-dust, see \S\ref{sec:intro/general/wcc_dust}.

In proving Theorem \ref{thm:christo}, Christodoulou makes use of the following result, which is also of independent interest.
 
\begin{thm}[Christodoulou \cite{DC91}]\label{thm:christo_trap_form} Let $(\M = \Q^+\times_r \mathbb{S}^2, g_{\mu\nu}, \phi, F_{\mu\nu})$ be the development of initial data as in Theorem \ref{thm:main} with $\m^2=\e=\imp=F_{\mu\nu}=0$.  For $p, p'\in \mathcal{R}$ along an outgoing null ray $C^+_0$ with $p'$ to the future of $p$, suppose the ingoing null ray $C^-_p$ that emanates from $p$ terminates on $q\in\Gamma \cup b_{\Gamma}$.  Let $\delta_0$ and $\eta_0$ be defined by\footnote{The constants $\delta_0$ and $\eta_0$ give the dimensionless size and the dimensionless mass content, respectively, of the enclosed annular region bounded by $p$ and $p'$.}
\beqn\nonumber
\delta_0 = \frac{r(p')}{r(p)}-1\hspace{.75cm}\textrm{and}\hspace{.75cm} \eta_0 = \frac{2\left(m(p')-m(p)\right)}{r(p')},
\eeqn
where $m$ is the Hawking mass function.

There are positive constants $c_1$ and $c_2$ such that if $\delta_0\leq c_1$ and 
\beqn\nonumber
\eta_0 > c_2 \delta_0 \log\left(\delta_0^{-1}\right),
\eeqn
then the region $\mathcal{X}\subset \Q^+$ given by
\beqn\nonumber
\mathcal{X} =\left( J^+(p)\cap \Q^+\right) \backslash \left(J^+(q) \cup J^+(p')\right)
\eeqn
contains a trapped surface $p_*\in \mathcal{X}$ as depicted below.
\end{thm}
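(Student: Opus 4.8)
The plan is to set up double-null coordinates $(u,v)$ on $\mathcal{X}$, writing the metric as $g=-\Omega^2\,\dd u\,\dd v+r^2\,\dd\omega^2$ with $u$ increasing toward the future along ingoing null rays and $v$ along outgoing null rays, and to treat the problem as a characteristic initial value problem with data on the two null segments bounding $\mathcal{X}$ from the past: the outgoing segment of $C^+_0$ from $p$ to $p'$ and the ingoing segment $C^-_p$ from $p$ to $q$. In these coordinates $\mathcal{X}$ is the rectangle $[u_p,u_q)\times[v_p,v_{p'})$, with $q=(u_q,v_p)$ and $r(q)=0$. Setting $\lambda=\pv r$, $\nu=\pu r$, and $\mu=2m/r$, the goal is to produce a point $p_*\in\mathcal{X}$ with $\lambda(p_*)<0$, equivalently $\mu(p_*)>1$; by Statement V.2 of Theorem \ref{thm:main} such a $p_*$ is exactly a trapped surface. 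I would argue by contradiction, assuming $\mu\leq1$ (hence $1-\mu\geq0$) throughout $\mathcal{X}$, which keeps the solution in $\mathcal{R}\cup\mathcal{A}$ and makes available the full monotonicity of that regime.

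The relevant structure equations of the massless, uncharged system are
\[ 1-\frac{2m}{r}=-\frac{4\lambda\nu}{\Omega^2},\qquad \pu\pv r=\frac{2m\lambda\nu}{r^2(1-\mu)},\qquad \pu\pv(r\phi)=(\pu\pv r)\,\phi. \]
Under the no-anti-trapped hypothesis $\nu<0$, and in $\mathcal{R}$ one has $\lambda>0$; the Hawking mass obeys $\pv m\geq0$ and $\pu m\leq0$ (Statement V.1), and Raychaudhuri's equation yields the monotonicities $\pu\!\left(\frac{1-\mu}{\lambda}\right)\geq0$ and $\pv\!\left(\frac{1-\mu}{-\nu}\right)\leq0$. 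The middle equation shows $\pu\lambda<0$ wherever $m>0$: the outgoing expansion strictly decreases as one moves inward along any ingoing ray. This is the engine of the theorem — the mass deposited on $[p,p']$, measured by $\eta_0$, must drive $\lambda$ toward zero and past it into the trapped regime as the ingoing rays issuing from $[p,p']$ are followed toward the center, where $r$ is forced down.

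The quantitative heart is to locate where $\mu$ crosses $1$. First, $\pu\lambda<0$ gives $\lambda(u_q,v)<\lambda(u_p,v)$, so the thin-shell hypothesis controls the radius deep in $\mathcal{X}$: since $r(u_q,v_p)=0$,
\[ r(u_q,v_{p'})=\int_{v_p}^{v_{p'}}\lambda(u_q,v)\,\dd v<\int_{v_p}^{v_{p'}}\lambda(u_p,v)\,\dd v=r(p')-r(p)=\delta_0\,r(p), \]
so $r$ is driven below $\delta_0\,r(p)$ near the inner corner. Along the ingoing ray issuing from a point of $[p,p']$ near $p'$ — where the retained mass is largest — I would show that a definite fraction of the shell mass $\tfrac12\eta_0\,r(p')$ survives down to a radius $r_*$ at which $\mu=2m/r\gtrsim1$, contradicting $\mu\leq1$. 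The only obstruction to retaining this mass is the ingoing flux $\pu m\leq0$; rewriting $\pu m$ as a radial integral via $\nu\,\dd u=\dd r$ and estimating it over the radial range swept out — reaching down to values of order $\delta_0\,r(p)$ — produces a factor bounded by $\log(\delta_0^{-1})$, so that the condition $\eta_0>c_2\,\delta_0\log(\delta_0^{-1})$ is precisely what forces net trapping, while $\delta_0\leq c_1$ keeps the field terms perturbative.

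The main obstacle is this last estimate: controlling the scalar field's back-reaction in the interior of $\mathcal{X}$ so that $\int|\pu m|$ along the chosen ray is only a small fraction of $\eta_0\,r(p)$. This is delicate precisely because the field must be large in the shell — it is what produces $\eta_0$ — so no crude pointwise bound on $\pu\phi$ will suffice; instead one must propagate estimates on the renormalized quantities $\theta=r\pv\phi$ and $\zeta=r\pu\phi$ through $\mathcal{X}$ using the wave equation $\pu\pv(r\phi)=(\pu\pv r)\phi$ together with the geometric monotonicities above, closing a bootstrap that trades the thinness $\delta_0$ against the flux. Extracting the sharp logarithmic dependence, rather than a cruder power of $\delta_0^{-1}$, is the crux of \cite{DC91}, and is where the smallness assumption $\delta_0\leq c_1$ is ultimately spent.
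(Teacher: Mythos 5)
You should know at the outset that the paper contains no proof of this statement: Theorem \ref{thm:christo_trap_form} is quoted verbatim from Christodoulou \cite{DC91} and is used in \S\ref{sec:intro/cd/c} only as a black box in the sketch of how Theorem \ref{thm:christo} is established. So your proposal can only be measured against what a complete proof requires, not against an argument in the text.

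Measured that way, your framing is correct but the proof is not there. The parts you carry out are sound: with $\m^2=\e=\imp=F_{\mu\nu}=0$ one has $T_{uv}=0$, hence $\pu\pv r=\frac{2m\lambda\nu}{r^2(1-\mu)}$, $\pu m\le 0$ and $\pv m\ge 0$ in $\mathcal{R}$; the identification of a trapped sphere with $\mu>1$ matches Statement V.2; and the thin-shell bound $r\le \delta_0\,r(p)$ at the inner corner follows from $\pu\lambda\le 0$. (One fixable omission: identifying $\mathcal{X}$ with the full rectangle $[u_p,u_q)\times[v_p,v_{p'})$ under the contradiction hypothesis $\mu\le 1$ requires the weak extension principle, since \emph{a priori} the maximal development need not cover the rectangle; the paper supplies exactly this tool and you should invoke it explicitly.) The genuine gap is the step you yourself label ``the crux'': the assertion that the mass loss along an ingoing ray issuing from the shell is at most a constant times $\delta_0\log(\delta_0^{-1})\,r(p)$. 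This does not follow from ``rewriting $\pu m$ as a radial integral.'' Along an ingoing ray one has $dm/dr=2\pi(1-\mu)\left(\zeta/\nu\right)^2$ with $\zeta=r\pu\phi$, so integrating over $r\in[\delta_0 r(p),\,r(p')]$ gives a bound \emph{linear} in the radial range, with no logarithm at all, unless one first proves a pointwise bound on $\zeta/\nu$ throughout $\mathcal{X}$ (for instance $\left(\zeta/\nu\right)^2\lesssim \delta_0\,r(p)/r$, in which case the logarithmic measure $\int dr/r\sim\log(\delta_0^{-1})$ produces the stated threshold). Establishing such a field bound uniformly down to the corner -- in a regime where the field on the shell is necessarily large, since it carries the mass $\eta_0$ -- is the entire content of Christodoulou's bootstrap, and it is nowhere attempted in your proposal; indeed, even the \emph{shape} of the threshold ($\delta_0$ versus $\delta_0\log(\delta_0^{-1})$) cannot be determined without knowing what pointwise bound the bootstrap actually delivers. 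As written, your argument presupposes precisely the estimate that the theorem's difficulty resides in, so it is a strategy outline rather than a proof.
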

\begin{center}
\includegraphics[scale=.9]{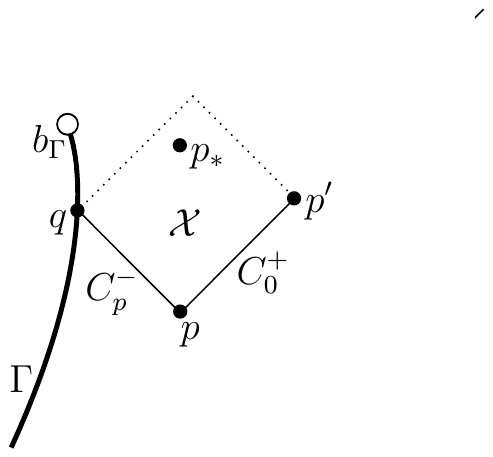}
\end{center}

Christodoulou applies Theorem \ref{thm:christo_trap_form} as an auxiliary lemma in the context of the proof of Theorem \ref{thm:christo}.  One begins with a spacetime as given by diagrams III--VIII, and the goal is to produce a 1-parameter family of spacetimes containing the given one such that \emph{all} other members of the family have $\mathcal{A}\neq \emptyset$ with limit point $b_{\Gamma}\neq i^+$.  The infinite blue-shift along $C^-_p$ plays an important role in the proof of Theorem \ref{thm:christo}, for it provides the linear mechanism for instability.\footnote{Once this property of $\mathcal{A}$ is established, the emptiness of $\mathcal{S}_{\Gamma}\cup \chg$ is a consequence of the special monotonicity $\partial_u\pv r<0$ in the trapped region.}  Using this effect, it is shown that for the perturbed spacetimes, the assumptions of Theorem \ref{thm:christo_trap_form} hold with $q=b_{\Gamma}\neq i^+$ and a sequence of $p, p'\rightarrow q$.
\begin{center}
\includegraphics[scale=.90]{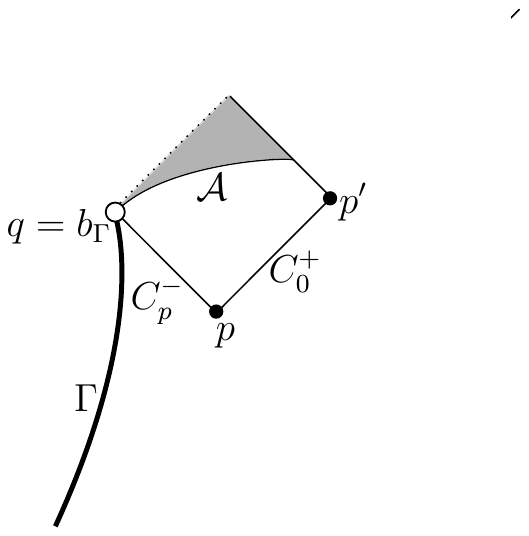}
\end{center}
Thus, Theorem \ref{thm:christo_trap_form} applies to yield $\mathcal{A}$ as desired.
It is interesting that in Christodoulou's construction the 1-parameter family of perturbations coincide with the original spacetime in the past of $q=b_{\Gamma}$.
\subsubsection{Dafermos: the real-valued massless scalar field with topological charge}\label{sec:intro/cd/d}
Dafermos considers the model for which 
$\m^2=\e=\imp=0$ and $F_{\mu\nu}\neq 0$.  Since the scalar field is itself uncharged, $F_{\mu\nu}$ can be non-trivial only if the Cauchy surface $\Sigma$ has two asymptotically flat ends. In this case, however, the electromagnetic field is only `coupled' to the scalar field via its interaction with the geometry. 
 
An analogue\footnote{In this model note that there are, in general, anti-trapped regions.  To prove the analogue of Theorem \ref{thm:main}, it suffices to assume that there exists a point $(u',v')\in \Sigma$ such that $\pv r <0$ in $\Sigma \cap \{v\leq v'\}$ and $\pu r <0$ in $\Sigma \cap \{v\geq v'\}$.\label{ft:dafext}} of Theorem \ref{thm:main}, applied to this class of initial data, yields a Penrose diagram as depicted below.
\begin{center}
\includegraphics[scale=.5]{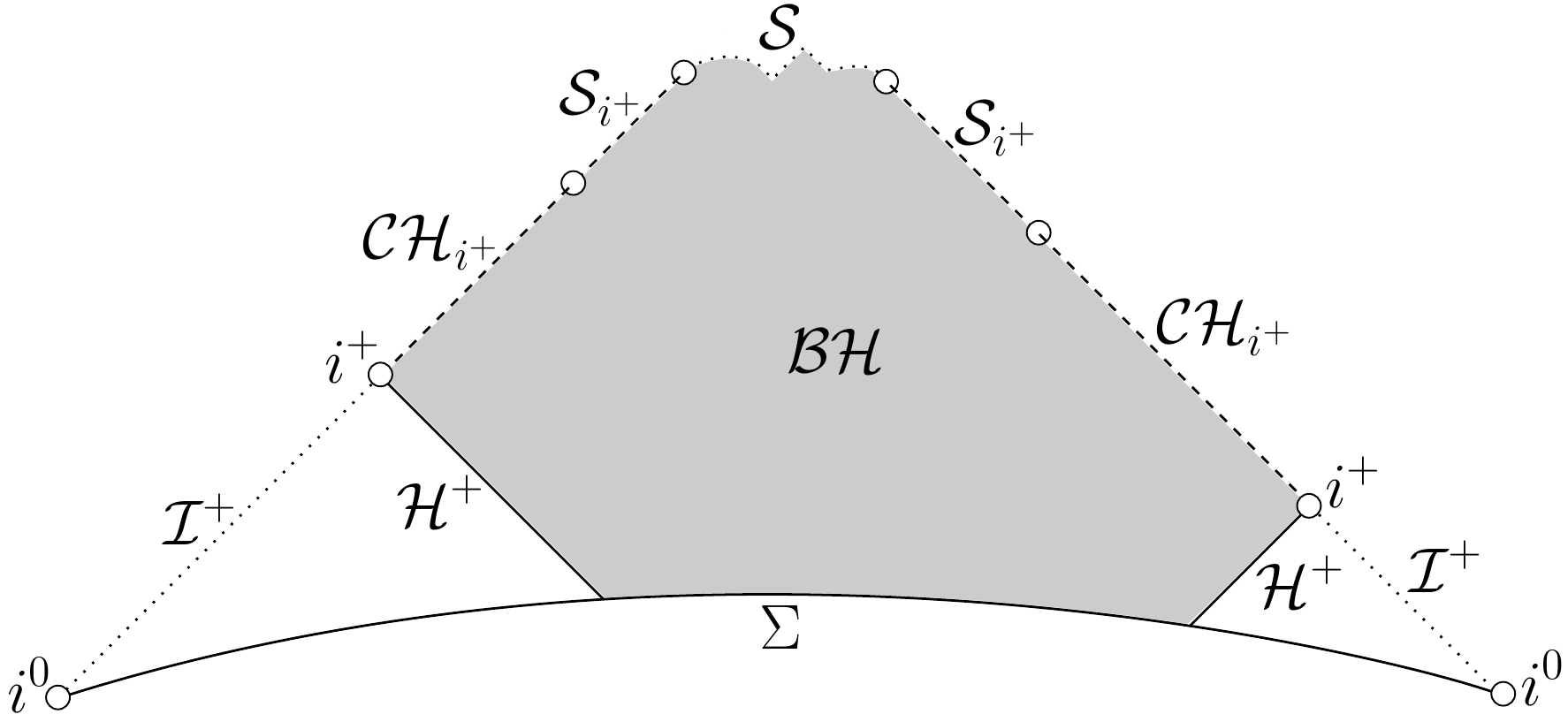}
\end{center}

One easily infers that for a spacetime having two asymptotically flat ends, the black hole region $\mathcal{BH}$ is necessarily non-empty and, therefore, by the analogue of Theorem \ref{thm:main}, both connected components of $\mathcal{I}^+$ are complete.  Thus, weak cosmic censorship is trivially true but not very physically interesting.  On the other hand, this model is well-suited for addressing strong cosmic censorship in a non-trivial context because it admits as a special solution the Reissner-Nordstr\"om family, with mass parameter $M$ and charge parameter $Q$, where, if $0< |Q| < M$, then $\mathcal{CH}_{i^+}$ is non-empty and the maximal future development is future-extendible as a smooth Lorentzian metric.  Thus, for strong cosmic censorship to be true, the Reissner-Nordstr\"om solution, in particular, must be shown to be `unstable'.

In considering this issue of stability, Dafermos shows, however, that whenever a black hole is `sub-extremal in the limit' and the black hole charge is non-vanishing, then $\mathcal{CH}_{i^+}$ is non-empty and the maximal future development is continuously extendible \cite{MD05c}.  Indeed, these assumptions can be shown to hold for solutions arising from arbitrary data in a suitable `open neighborhood' of Reissner-Nordstr\"om initial data; in particular, the spherically symmetric $C^0$-formulation\footnote{where in the analogue of Conjecture \ref{conj:strong}, `suitably regular' means `continuous'} of strong cosmic censorship is \emph{false}!  

Before presenting this result, it will be convenient to discuss asymptotic parameters of black hole solutions, i.e.,~solutions with $\mathcal{BH}\neq \emptyset$, arising when $\m^2=\e=\imp=0$, namely: area-radius, mass, and charge.

The asymptotic area-radius $r_+$ of the black hole (as measured along $\mathcal{H}^+$), given by
\beqn\nonumber
r_+ = \sup_{\mathcal{H}^+} r,
\eeqn 
is well-defined by monotonicity and is finite by Statement VI of the analogue of Theorem \ref{thm:main}. 
Similarly by monotonicity, the asymptotic mass $m_+$ of the black hole (as measured along $\mathcal{H}^+$), given by
\beqn\nonumber
m_+ = \sup_{\mathcal{H}^+} m,
\eeqn
where $m$ is the Hawking mass function, is well-defined and finite.\footnote{Indeed, since $\mathcal{H}^+\subset \mathcal{R}\cup \mathcal{A}$, one has $m_+\leq \frac{1}{2}r_+$.}  

In the case $\imp=0$ (or, more generally, $\e=0$), the scalar invariant $F_{\mu\nu}F^{\mu\nu}$ is given by\footnote{In the case $\e\neq 0$, see \S\ref{energymomentum}.}
\beqn\nonumber
F_{\mu\nu}F^{\mu\nu} = -\frac{2}{r^4}\left(\Qe^2 - \Qm^2\right)
\eeqn 
for constants $\Qe, \Qm\in \R$.  The constant $Q$ such that $Q^2 = \Qe^2 + \Qm^2$, defines, in particular, the asymptotic charge\footnote{Because $\e = 0$, this can be taken to mean, without loss of generality, `as measured along $\mathcal{H}^+$', since $Q$ is \emph{globally} constant (cf.~footnote \ref{foot:subext}).} of the black hole.

For convenience, we also define the asymptotic re-normalized mass $\varpi_+$ by
\beqn\nonumber
\varpi_+ = \sup_{\mathcal{H}^+}\varpi := \sup_{\mathcal{H}^+} \left(m + \frac{Q^2}{2r}\right) = m_+ + \frac{Q^2}{2r_+}.
\eeqn
In the case of Reissner-Nordstr\"om, $\varpi = \varpi_+ = M$.

We now state

\begin{thm}[Dafermos \cite{MD05c}]\label{thm:c0strong}
Let $(\M = \Q^+\times_r \mathbb{S}^2, g_{\mu\nu}, \phi, F_{\mu\nu})$ denote the maximal future development of compactly supported smooth spherically symmetric asymptotically flat initial data with two ends for the Einstein-Maxwell-Klein-Gordon system (for which the analogue of Theorem \ref{thm:main} holds; cf.~footnote \ref{ft:dafext}) with $\m^2=\e=\imp=0$ such that
\beqn\label{subext}
 0<|Q |< \varpi_+.
 \eeqn
Then, $\mathcal{CH}_{i^+}\neq \emptyset$.  Moreover, $(\M, g_{\mu\nu})$ is future-extendible as a $C^0$-Lorentzian manifold $(\widetilde{\M}, \widetilde{g_{\mu\nu}})$, which can be taken to be spherically symmetric, and there exists continuous functions $\widetilde{\phi}$ and $\widetilde{F_{\mu\nu}}$ defined on $\widetilde{\M}$ such that $\widetilde{\phi}$ and $\widetilde{F_{\mu\nu}}$ restricted to $\M$ coincide with $\phi$ and $F_{\mu\nu}$.
  In fact, the conclusions hold for solutions arising from arbitrary initial data in a suitable open neighborhood of Reissner-Nordstr\"om initial data.  In particular, the spherically symmetric $C^0$-formulation of strong cosmic censorship is \emph{false}.
\end{thm}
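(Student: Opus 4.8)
The plan is to reduce the statement to a characteristic initial-value problem in the interior of the black hole region $\mathcal{BH}$, which is necessarily non-empty here by the two-ended topology (cf.~\S\ref{sec:intro/cd/d}). Introduce double-null coordinates $(u,v)$ covering the interior, with the event horizon $\mathcal{H}^+$ realized as the initial outgoing segment $\{u=0\}$ parametrized by advanced time $v\in[v_0,\infty)$, and the conjectured Cauchy horizon $\mathcal{CH}_{i^+}$ approached as $v\to\infty$; the common limit point $u=0$, $v\to\infty$ is $i^+$. Since $\e=\imp=0$, the Maxwell field is rigid, $F_{\mu\nu}F^{\mu\nu}=-2Q^2/r^4$ with $Q$ globally constant, and the renormalized mass $\varpi=m+Q^2/(2r)$ obeys clean Raychaudhuri-driven monotonicity bounding it in terms of its event-horizon value $\varpi_+$ (finite by the analogue of Statement VI). The subextremality hypothesis $0<|Q|<\varpi_+$ guarantees that the Reissner-Nordstr\"om reference geometry with parameters $(\varpi_+,Q)$ has two distinct positive horizon radii $r_\pm=\varpi_+\pm\sqrt{\varpi_+^2-Q^2}$; the positive inner radius $r_->0$ is the target value of $r$ at the Cauchy horizon. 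The essential input, supplied by Price-law decay along $\mathcal{H}^+$ (and, for the open-neighborhood statement, by orbital stability of the exterior), is that along $\{u=0\}$ as $v\to\infty$ one has $r\to r_+$, $\varpi\to\varpi_+$, and $\phi$ together with $\pv\phi$ decaying at a polynomial rate in $v$.

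The first block of estimates is geometric. Raychaudhuri's equation in each null direction, with the no-anti-trapped hypothesis, forces $\pu r<0$ and $\pv r<0$ throughout the interior, so that the monotone limit of $r$ along each outgoing ray as $v\to\infty$ exists; the task is to show this limit is bounded below by a positive constant uniformly for $u$ in a range $[0,U]$, so that $r$ does not reach zero before $\mathcal{CH}_{i^+}$. This is achieved by a bootstrap/continuity argument on the characteristic rectangle $\{0\le u\le U\}\times\{v\ge v_0\}$: one posits a priori bounds (on $r$ from below, on $\varpi$ from above, and on weighted integrals of the field derivatives), integrates the wave equation $\Box_g\phi=0$ and the equations for $r$ and $\varpi$ along null rays, and recovers improved bounds. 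In spherical symmetry the wave equation reads schematically $\pu\pv(r\phi)=(\text{potential})\cdot r\phi$, with the potential controlled by $(1-2m/r)$, whose behavior near the inner horizon mimics its Reissner-Nordstr\"om counterpart. The decay input bounds $\int|\pv\phi|$ and $\int|\pu\phi|$ along rays, which in turn bounds the total variation of $r$ and of $\varpi$, closing the bootstrap and showing that $r$ extends continuously and positively (to a value near $r_-$) to a non-degenerate null segment $\mathcal{CH}_{i^+}$ emanating from $i^+$.

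With $r$ bounded away from zero up to $\mathcal{CH}_{i^+}$, one constructs the continuous extension. After a null reparametrization adapted to the (finite, by subextremality) surface gravity of the inner horizon, the conformal factor $\Omega^2$ is shown to extend continuously and non-degenerately, whence the metric $g=-\Omega^2\,du\,dv+r^2\,d\sigma_{\mathbb{S}^2}^2$ extends as a $C^0$ spherically symmetric Lorentzian metric across $\mathcal{CH}_{i^+}$; the field $\phi$, being bounded with bounded total variation along rays, extends continuously, and $F_{\mu\nu}$ extends since it is determined by $Q$ and the continuous positive function $r$. Non-emptiness of $\mathcal{CH}_{i^+}$ follows from the existence of the genuine $v=\infty$ boundary segment with past endpoint $i^+$ (here condition 1 of Statement III is met, so $i^{\square}=i^+$). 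Finally, because the subextremality condition and the polynomial decay rates are stable under perturbation, the argument persists on an open neighborhood of Reissner-Nordstr\"om data, falsifying the $C^0$ formulation of strong cosmic censorship.

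The main obstacle, and the analytic heart of the argument, is the blue-shift instability at $\mathcal{CH}_{i^+}$: ingoing curves transverse to the Cauchy horizon experience an exponential amplification governed by the inner-horizon surface gravity, so naive estimates on $\pv\phi$ grow exponentially as $v\to\infty$ and threaten to drive $r\to0$ or to destroy continuity of $\Omega^2$. The resolution is the competition between this exponential blue-shift and the polynomial (Price-law) decay of the data on $\mathcal{H}^+$: one must show the decay is fast enough, relative to the surface gravity, that the $C^0$-level quantities $r$ and $\Omega^2$ remain bounded and continuous even though higher derivatives (e.g.\ $\pv\phi$ measured in a regular frame) generically blow up. Capturing this balance quantitatively, via carefully weighted estimates along the outgoing rays, is the step I expect to be the most delicate; it is also precisely the mechanism separating $C^0$-extendibility (which holds) from the conjectured inextendibility at the level relevant to the physically correct formulation of strong cosmic censorship.
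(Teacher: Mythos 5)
Since this theorem is quoted from Dafermos \cite{MD05c} (with the decay input later supplied by \cite{MDIR05}), the paper contains no proof of it; what the paper does make explicit is the \emph{structure} of that proof, namely that Price-law decay along $\mathcal{H}^+$ must be combined with Theorem \ref{thm:trapped_i+} as a necessary first step. Your sketch correctly reproduces the core of the interior analysis --- characteristic data on $\mathcal{H}^+$, bootstrap on a rectangle $[0,U]\times[v_0,\infty)$, the competition between the inner-horizon blue-shift and polynomial decay, lower bound on $r$ near $r_-$, and the $C^0$ extension of $(\Omega^2, r, \phi, F_{\mu\nu})$ --- but it contains a genuine gap exactly at the step the paper flags.

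The gap is your claim that ``Raychaudhuri's equation in each null direction, with the no-anti-trapped hypothesis, forces $\pu r<0$ and $\pv r<0$ throughout the interior.'' This is false as stated, for two reasons. First, in the two-ended setting there \emph{are} anti-trapped regions in general, and the hypothesis is the weaker split condition of footnote \ref{ft:dafext}, not global absence of anti-trapped spheres. Second, and more importantly, Raychaudhuri (equation (\ref{eqn:convv})) only \emph{propagates} $\pv r<0$ to the future along outgoing null rays; it cannot initialize it. The black hole interior a priori may contain points of the regular region $\mathcal{R}$, and nothing in your argument rules this out on the rectangle where the bootstrap runs. The missing ingredient is precisely Theorem \ref{thm:trapped_i+}: using the decay of Theorem \ref{thm:price} and a red-shift argument along $\mathcal{H}^+$, one first shows that the outermost apparent horizon $\mathcal{A}'$ has an asymptotically connected component terminating at $i^+$, with $I^+(\mathcal{A}\cap\mathcal{U})\cap\Q^+\subset\mathcal{T}$ in a neighborhood $\mathcal{U}$ of $i^+$. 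Only after this is the bootstrap region contained in the trapped region, so that $\pv r<0$, the sign of $1-\frac{2m}{r}$, and the resulting monotonicity of $\varpi$ and of quantities like $\frac{\lambda}{1-2m/r}$ are actually available; these signs are what your ``first block of estimates'' silently uses. The paper states this dependence explicitly (``To prove Theorem \ref{thm:c0strong}, it is necessary to first establish Theorem \ref{thm:trapped_i+}''), so a proof omitting it is incomplete rather than merely terse.
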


A solution of Theorem \ref{thm:c0strong} has a Penrose diagram that admits an extension as depicted below.
\begin{center}
\includegraphics[scale=.53]{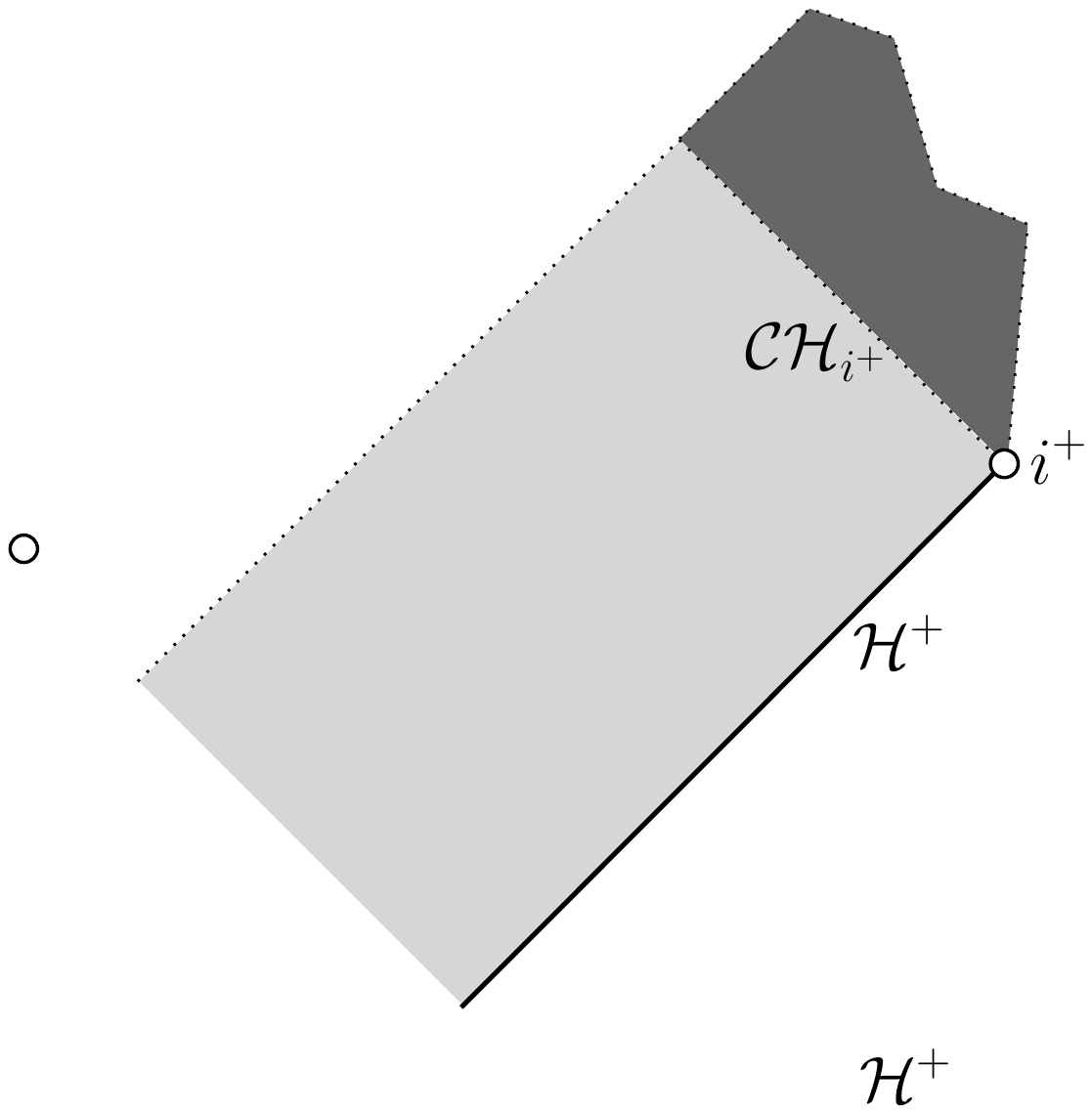}
\end{center}

To prove Theorem \ref{thm:c0strong}, Dafermos relies heavily on the decay properties of the scalar field along $\mathcal{H}^+$. This decay will be discussed in \S \ref{sec:intro/cd/d/price}.  

In order to highlight the importance of trapped surface formation to this discussion, we note that Dafermos also deduces the existence of a non-empty `asymptotically connected' component of the outermost apparent horizon\footnote{The outermost apparent horizon $\mathcal{A}'\subset \mathcal{A}$ is a (possibly empty, not necessarily connected) achronal curve defined by the set of all $p\in \mathcal{A}$ whose past-directed ingoing null segment in $\Q^+$ lies in the regular region $\mathcal{R}$ with at least one $q\in\mathcal{R}\cap J^-(\mathcal{I}^+)$. \label{foot:outer} } $\mathcal{A}'$ that terminates at $i^+$ (cf.~Williams \cite{CW08}).  This is given in
 
\begin{thm}[Dafermos \cite{MD05c}]\label{thm:trapped_i+}Let $(\M=\Q^+\times_r \mathbb{S}^2, g_{\mu\nu}, \phi, F_{\mu\nu})$ denote the maximal future development of initial data as in Theorem \ref{thm:c0strong}.  Then, there exists a non-empty `asymptotically connected' component of the outermost apparent horizon $\mathcal{A}'$ that terminates at $i^+$.  Moreover, in a sufficiently small neighborhood $\mathcal{U}\subset \R^{1+1}$ of $i^+$
\beqn\nonumber
 \mathcal{A}'\cap \mathcal{U} = \mathcal{A}\cap \mathcal{U}
\eeqn
and, in particular,
\beqn\nonumber
I^+(\mathcal{A}\cap \mathcal{U})\cap \Q^+ \subset \mathcal{T}.
\eeqn

\end{thm}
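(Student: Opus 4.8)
The plan is to convert the quantitative decay of $\phi$ along the event horizon $\mathcal{H}^+$ (the Price-law-type statement to be established in \S\ref{sec:intro/cd/d/price}) into the formation of trapped spheres accumulating at $i^+$, using the sub-extremality hypothesis (\ref{subext}) and the monotonicity already recorded in Theorem \ref{thm:main}. In double-null gauge write $\nu = \pu r$, $\lambda = \pv r$, with $u$ ingoing and $v$ outgoing, so that $\nu < 0$ throughout the region of interest (this is propagated from the no-anti-trapped condition of footnote \ref{ft:dafext}). Recalling the identity $1-\mu := 1 - \tfrac{2\varpi}{r} + \tfrac{Q^2}{r^2} = g(\nabla r, \nabla r) = -4\,\Omega^{-2}\nu\lambda$ and that here $Q$ is a genuine constant (since $\e = \imp = 0$), the equivalence $p \in \mathcal{T} \iff \lambda(p) < 0 \iff 1-\mu < 0$ holds wherever $\nu<0$. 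Since $\mathcal{BH}\neq\emptyset$ for two ends, $\mathcal{I}^+$ is complete and $i^{\square}=i^+$, and $\mathcal{H}^+$ has $i^+$ as its future limit point; by Theorem \ref{thm:c0strong}, $\mathcal{CH}_{i^+}\neq\emptyset$, so the interior structure near $i^+$ is of Reissner--Nordstr\"om type.

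Next I would extract the algebraic content of sub-extremality. The roots of $r^2 - 2\varpi_+ r + Q^2$ are $r_{\pm} = \varpi_+ \pm \sqrt{\varpi_+^2 - Q^2}$, and (\ref{subext}) gives $0 < r_- < r_+$, so the function $f(r) := 1 - \tfrac{2\varpi_+}{r} + \tfrac{Q^2}{r^2}$ satisfies $f(r_+) = 0$ and $f'(r_+) = \tfrac{2}{r_+^3}(\varpi_+ r_+ - Q^2) > 0$; hence $f < 0$ on $(r_-, r_+)$. Fixing $\delta>0$ small, $f \le -c < 0$ on a subinterval to the left of $r_+$, and this sign is robust to $O(\delta)$ perturbations of $\varpi$ about $\varpi_+$. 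The decay of $\phi$ along $\mathcal{H}^+$ forces $r|_{\mathcal{H}^+}\to r_+$ and $\varpi|_{\mathcal{H}^+}\to\varpi_+$ with integrably small null fluxes, so I may choose $v_0$ so large that on $\mathcal{H}^+\cap\{v\ge v_0\}$ both $r$ and $\varpi$ lie within $\delta$ of $r_+,\varpi_+$.

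I would then propagate inward along ingoing rays issuing from $\mathcal{H}^+\cap\{v\ge v_0\}$. Here $\nu<0$ makes $r$ strictly decrease, while the renormalized-mass monotonicity (Statement V of Theorem \ref{thm:main}, specialized via Raychaudhuri) gives, in $\mathcal{R}\cup\mathcal{A}$, $\tfrac{d\varpi}{dr} = \tfrac{(1-\mu)\,r^2(\pu\phi)^2}{2\nu^2}\ge 0$ with a rate \emph{proportional to} $1-\mu$; since $1-\mu$ is small near $\mathcal{H}^+$, $\varpi$ stays within $\delta$ of $\varpi_+$ while $r$ descends from near $r_+$ to $r_+-\delta$. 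At that radius $1-\mu \approx f(r) < 0$, so $\lambda<0$ and the point is trapped. By continuity of $\lambda$ (which is $\ge 0$ on $\mathcal{H}^+\subset\mathcal{R}\cup\mathcal{A}$ and negative just inside), each such ingoing ray carries a point where $\lambda=0$, i.e.\ a point of $\mathcal{A}$; as $v_0\to\infty$ these points converge to $i^+$, yielding a non-empty, asymptotically connected component of $\mathcal{A}$ terminating at $i^+$. Because the past-directed ingoing segment through each such point crosses $\mathcal{H}^+$ into $\mathcal{R}\cap J^-(\mathcal{I}^+)$, these points lie on $\mathcal{A}'$ in the sense of footnote \ref{foot:outer}, which gives $\mathcal{A}'\cap\mathcal{U}=\mathcal{A}\cap\mathcal{U}$ near $i^+$. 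The inclusion $I^+(\mathcal{A}\cap\mathcal{U})\cap\Q^+\subset\mathcal{T}$ then follows from the propagation of the trapped region along outgoing rays (Statement IV.2 of Theorem \ref{thm:main}) together with the fact that, within a sufficiently small $\mathcal{U}$, points stay close to $r_+$ so that $1-\mu<0$ persists to the future.

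The main obstacle is the quantitative bootstrap in the previous paragraph: turning the horizon decay into \emph{uniform} control of the pair $(\varpi, r)$ a definite distance inside $\mathcal{H}^+$, so that $1-\mu<0$ holds throughout an open neighborhood accumulating at $i^+$ rather than merely along individual rays. This requires closing a characteristic estimate for $\pu\phi$ (equivalently $r\pu\phi$) in the near-horizon region $\{r\in[r_+-\delta,r_+]\}\cap\{v\ge v_0\}$, exploiting the redshift to keep $1-\mu$ small, and it is precisely where sub-extremality ($r_+>r_-$, i.e.\ strictly positive surface gravity in the limit) is essential. A secondary technical point is verifying that the limit points of the constructed $\mathcal{A}$-component are exactly $i^+$, which uses completeness of $\mathcal{I}^+$ and the identification $i^{\square}=i^+$ together with the boundary enumeration of Theorem \ref{thm:main}.
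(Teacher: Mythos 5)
First, a point of order: the present paper does not prove this theorem at all --- it is quoted from Dafermos \cite{MD05c} (and, for the Cauchy-problem hypotheses, rests on the exterior analysis of \cite{MDIR05}) --- so your sketch can only be judged against what that cited proof must accomplish, not against an argument in this text. In outline your mechanism is indeed the one used there: sub-extremality makes $f(r) := 1 - 2\varpi_+/r + Q^2/r^2$ strictly negative just below its larger root, monotonicity of $\varpi$ and Raychaudhuri control the solution on ingoing rays entering the black hole, and an intermediate-value argument in $\lambda$ produces marginally trapped spheres accumulating at $i^+$; you also correctly flag the interior characteristic bootstrap as the hard quantitative step and correctly locate where positive limiting surface gravity enters.

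There is, however, a genuine gap sitting underneath all of this: your argument needs $r_+ := \sup_{\mathcal{H}^+} r$ to \emph{equal} the larger root $\varpi_+ + \sqrt{\varpi_+^2 - Q^2}$, equivalently $1 - 2m/r \rightarrow 0$ along $\mathcal{H}^+$. The hypotheses only give $1 - 2m/r \geq 0$ on $\mathcal{H}^+$ (since $\mathcal{H}^+ \subset \mathcal{R}\cup\mathcal{A}$), hence only $r_+ \geq$ the larger root (one must also separately exclude $r_+ \leq$ the smaller root). If the limit $L := 1 - 2\varpi_+/r_+ + Q^2/r_+^2$ were strictly positive, your step ``$r$ descends from near $r_+$ to $r_+-\delta$; at that radius $1-\mu \approx f(r) < 0$'' fails outright: since $\partial_u \varpi \leq 0$ in $\mathcal{R}$, every point of the regular region inside the horizon satisfies $1 - 2m/r \geq f(r)$, and $f(r)>0$ for all $r$ above the larger root, which is now a \emph{definite} distance below $r_+$; so dropping $r$ by a small $\delta$ produces no trapping, and nothing in your argument forces $r$ to drop further. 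Worse, the regime $L>0$ is precisely the hypothesis of Statement VII.3 of Theorem \ref{thm:main} ($\sup_{\mathcal{H}^+} m \neq \frac{1}{2}\sup_{\mathcal{H}^+} r$), whose proof runs a bootstrap keeping $r$ bounded below and yields $\mathcal{CH}_{i^+}\neq\emptyset$ --- an extension across a Cauchy horizon, not trapped surfaces. Sub-extremality $0<|Q|<\varpi_+$ does \emph{not} by itself rule out $L>0$: establishing $L=0$, i.e.\ that the event horizon ``settles down'' to its Reissner--Nordstr\"om radius, is a substantive theorem in its own right, and it is exactly where the quantitative exterior decay (Price's law, \cite{MDIR05}) and the characterization of $\mathcal{H}^+$ as the boundary of $J^-(\mathcal{I}^+)$ are genuinely used. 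Your proposal assumes this silently (``$1-\mu$ is small near $\mathcal{H}^+$''), so as written it does not close; everything downstream --- the graph structure of $\mathcal{A}$ near $i^+$, the identification $\mathcal{A}'\cap\mathcal{U}=\mathcal{A}\cap\mathcal{U}$, and the inclusion $I^+(\mathcal{A}\cap\mathcal{U})\cap\mathcal{Q}^+\subset\mathcal{T}$ --- depends on it.
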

\begin{center}
\includegraphics[scale=.90]{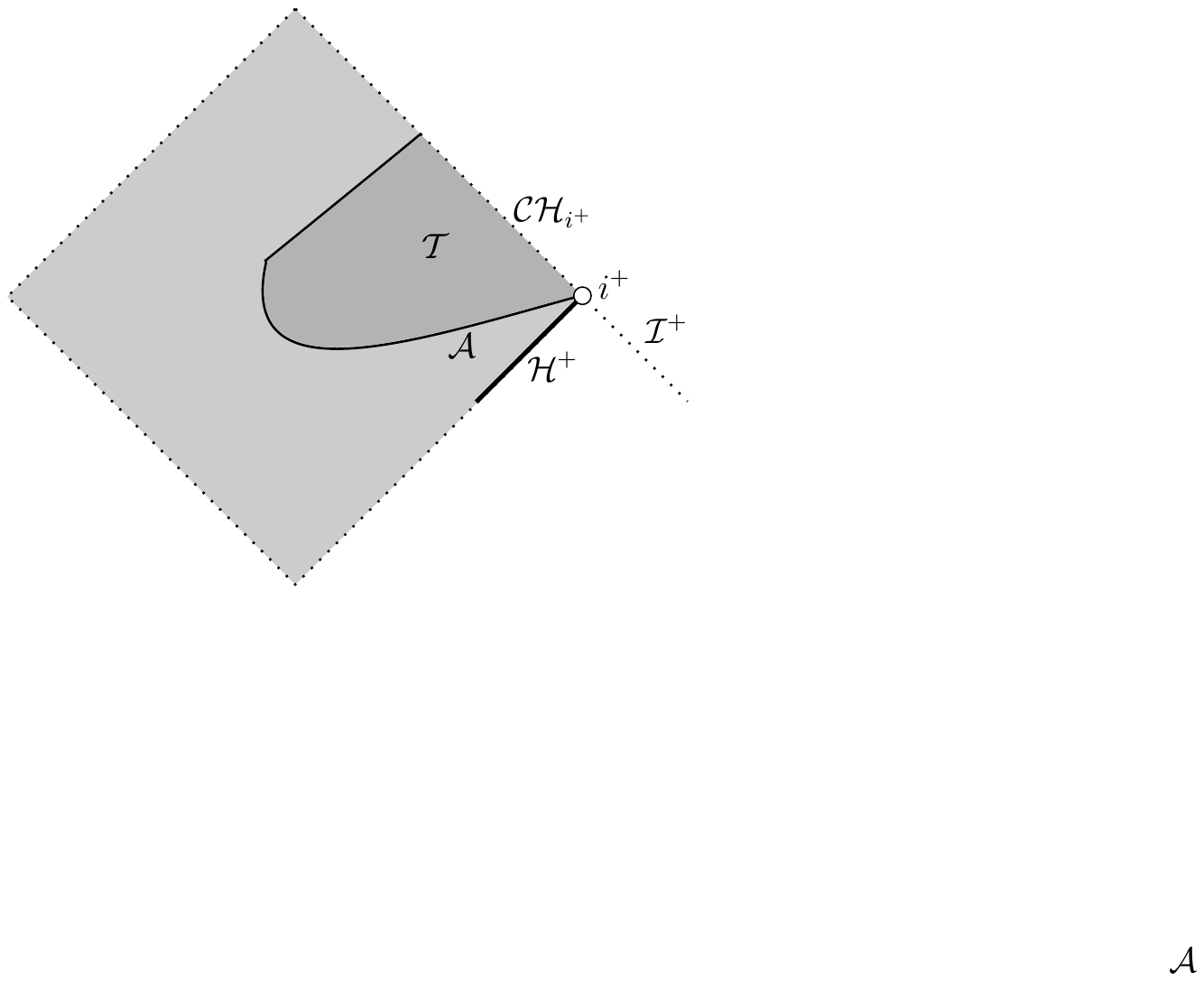}
\end{center} 

To prove Theorem \ref{thm:c0strong}, it is necessary to first establish Theorem \ref{thm:trapped_i+}.  Although the role of trapped surface formation is very different, this should be reminiscent of Theorems \ref{thm:christo} and  \ref{thm:christo_trap_form}: Deducing that $\mathcal{A}$ has a limit point on $i^+$ is necessary to prove the \emph{stability} of the Cauchy horizon $\mathcal{CH}_{i^+}$, as opposed to deducing that $\mathcal{A}$ has a limit point on $b_{\Gamma}$ to prove the \emph{instability} of the central Cauchy horizon $\mathcal{CH}_{\Gamma}$. Of course, the model of Dafermos does not admit central Cauchy horizons, nor does the model of Christodoulou admit $\mathcal{CH}_{i^+}$, but the analogy is interesting.  Within the context of the more general Einstein-Maxwell-Klein-Gordon system, this tantalizing behavior, linking both cosmic censorship conjectures to trapped surface formation, can be further explored since both types of Cauchy horizons can be admitted. Indeed, an analogue of Theorem \ref{thm:trapped_i+} has already been shown \cite{JK10c} by the author when the topology of the initial data has one asymptotically flat end.

\subsubsection{`No-hair theorem' and Price's law}\label{sec:intro/cd/d/price}  

In the study of gravitational collapse, one may ask: What are the possible `end-states' of evolution?

So-called `no-hair theorems', e.g.,~as given, in the present context, by Mayo and Bekenstein in \cite{AMJB96}, assert that if a spherically symmetric Einstein-Maxwell-Klein-Gordon black hole spacetime $(\M, g_{\mu\nu})$ is, in addition, stationary, i.e.,~the spacetime admits a Killing vector field that is asymptotically timelike in a neighborhood of $\mathcal{I}^+$, then $(\M, g_{\mu\nu})$ is a member of the Reissner-Nordstr\"om family.  

For dynamic spacetimes as given in Theorem \ref{thm:main}, if $\mathcal{BH}\neq\emptyset$ and the exterior geometry `settles down' so as to give rise to a black hole spacetime that is asymptotically stationary as $i^+$ is approached, then the above `no-hair theorem' suggests that the spacetime approaches Reissner-Nordstr\"om. The quantitative study of this decay (`settle down') mechanism is associated with the name of Price.

Formulated in \cite{RP72}, Price postulates that (massless) gravitational radiation decays polynomially with respect to the (asymptotically stationary) time co-ordinate along timelike surfaces of constant $r$.  Later, the work of Gundlach et al.~\cite{CGRPJP94} refined the heuristics so as to postulate that (massless) gravitational flux along the event horizon (resp.,~future null infinity) will have polynomial decay with respect to a suitable advanced (resp.,~retarded) time co-ordinate. In and of itself a major open problem, this decay mechanism, which we call here Price's law, is rigorously established by Dafermos and Rodnianski in the case $\m^2=\e=\imp=0$, provided that the black hole is `sub-extremal in the limit' \cite{MDIR05}.  This is summarized in
\begin{thm}[Dafermos and Rodnianski \cite{MDIR05}] \label{thm:price}Let $(\M=\Q^+\times_r \mathbb{S}^2, g_{\mu\nu}, \phi, F_{\mu\nu})$ denote the maximal future development of compactly supported smooth spherically symmetric asymptotically flat  initial data for the Einstein-Maxwell-Klein-Gordon system as in Theorem \ref{thm:c0strong} or Theorem \ref{thm:main} with $\m^2=\e=\imp=F_{\mu\nu}=0$.  Assume that $\mathcal{Q}^+\backslash J^-(\mathcal{I^+})\neq \emptyset$.  If 
\beqn\label{subex2}
0\leq |Q| <\varpi_+, 
\eeqn
then for all $\epsilon >0$ there is a positive constant $C_{\epsilon}<\infty$ such that, for a suitable normalized advanced time co-ordinate $v$, 
\beqn\label{rate}
|r\phi| + |r\pv \phi|\leq C_{\epsilon} v^{-3+\epsilon}
\eeqn
along $\mathcal{H}^+$.\footnote{Decay is also established along $\mathcal{I}^+$ and timelike curves of constant $r$, but only the decay along $\mathcal{H}^+$ is directly relevant for cosmic censorship.  We shall, therefore, only make reference to decay along $\mathcal{H}^+$ in what follows.} 
\end{thm}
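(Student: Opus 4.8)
The plan is to reduce the statement to the study of the nonlinearly-coupled wave equation satisfied by $\psi := r\phi$ and then to run a bootstrap that upgrades a crude decay rate to the sharp one. Here $\m^2 = \e = \imp = 0$, so $\phi$ is a real, massless, uncharged scalar field and the electromagnetic field enters the geometry only through the (globally constant) charge $Q$; by hypothesis $\mathcal{Q}^+\backslash J^-(\mathcal{I}^+)\neq\emptyset$, so $\mathcal{H}^+$ is non-empty and, by Statements V--VI of Theorem \ref{thm:main} (or its two-ended analogue), $r_+ = \sup_{\mathcal{H}^+}r$ and $\varpi_+$ are finite. Normalize the double-null coordinates $(u,v)$ so that $\mathcal{H}^+$ is the outgoing ray $\{u=u_+\}$ with $v$ the advanced time increasing to $\infty$ at $i^+$. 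In the gauge $g = -\Omega^2\,\dd u\,\dd v + r^2\,\dd\omega^2$, the equation (\ref{eqn:kg}) becomes $\pu\pv\psi = \tfrac{\pu\pv r}{r}\,\psi$, and the field equations (\ref{RMN})--(\ref{max2}) (Raychaudhuri together with the evolution equation for $r$) express the potential $\tfrac{\pu\pv r}{r}$ in terms of $\pu r$, $\pv r$, the renormalized mass $\varpi = m + Q^2/2r$ and $Q$; the essential structural fact is that it decays one power faster than $r^{-2}$, i.e.~like $r^{-3}$, in the large-$r$ region. The hypothesis $0\le|Q|<\varpi_+$ is exactly what makes the horizon non-degenerate: the limiting surface gravity $\kappa_+ \sim (r_+-r_-)/r_+^2$ is strictly positive, which is the source of the red-shift.

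I would then proceed in the standard three-region philosophy (near $\mathcal{H}^+$, an intermediate region, and near $\mathcal{I}^+$) in several decay stages. First, establish uniform boundedness of $\psi$ and finiteness of the horizon flux $\int_{\mathcal{H}^+}(\pv\psi)^2\,\dd v$: this follows from the conserved energy for the wave equation together with a red-shift estimate, i.e.~a multiplier/commutation adapted to the transverse ($\pu$) direction at $\{u=u_+\}$ whose favourable sign relies precisely on $\kappa_+>0$. Second, from finiteness of an $r^p$-weighted spacetime energy ($p=1,2$) in the far region, where $\pv r$ is bounded below, a dyadic pigeonhole argument produces ingoing cones $v=v_n\sim 2^n$ on which the local energy is already small, of size $\lesssim v_n^{-1}$. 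Third, propagate this far-region decay down to the horizon along ingoing cones: the red-shift again converts the transverse behaviour into a gain, and integrating the evolution equations in the form $\pu(r^2\pv\phi)=(\ldots)\phi$ and $\pv(r^2\pu\phi)=(\ldots)\phi$ along characteristics transfers the bound to $\mathcal{H}^+$.

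The heart of the argument is an induction on the decay rate. Assuming $|\psi|\lesssim v^{-p}$ (with companion bounds on $r\pv\phi$ and on a suitable constant-$r$ curve), substitute into $\pu\pv\psi = \tfrac{\pu\pv r}{r}\psi$ and integrate: because the potential decays like $r^{-3}$ it is integrable with a power to spare against the $r$-weights, so each passage through the evolution equations gains essentially one power of advanced time, improving the exponent $p\mapsto p+1$ (modulo weights) until one reaches $p = 3-\epsilon$. The small loss $\epsilon$ appears to be intrinsic to the coupled problem: it is the price of estimating the potential and the geometry by the field itself rather than by a fixed background, and it prevents closing the scheme at the integer rate. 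The pointwise bounds on $r\phi$ and $r\pv\phi$ along $\mathcal{H}^+$ asserted in (\ref{rate}) are then read off by integrating $\pv\psi$ and $\pu\pv\psi$ along $\{u=u_+\}$ starting from the decay on the last ingoing cone reached by the iteration.

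The main obstacle I anticipate is the nonlinear coupling: unlike the model linear wave equation on an exactly Reissner-Nordstr\"om background, here $r$, $\Omega$, $m$ and hence the potential are themselves determined by $\phi$ through (\ref{RMN})--(\ref{eqn:kg}), so the field estimates and the geometric estimates must be closed simultaneously. Concretely, the red-shift multiplier requires a \emph{quantitative} non-degeneracy of the horizon, which amounts to showing that the solution approaches a sub-extremal Reissner-Nordstr\"om geometry along $\mathcal{H}^+$ at a definite rate; making this approach quantitative (control of $r-r_+$, of $\Omega^2$, and of $\pv r$ along $\mathcal{H}^+$) and feeding it back into the red-shift and the iteration without destroying the integrable $r^{-3}$ structure of the potential is the crux. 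The sub-extremality hypothesis $|Q|<\varpi_+$ is what keeps $\kappa_+$ bounded away from zero uniformly throughout the bootstrap; in the extremal limit the red-shift degenerates and both the method and the rate would be expected to break down.
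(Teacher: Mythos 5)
The first thing to say is that the paper contains no proof of this statement against which you can be measured: Theorem \ref{thm:price} is quoted as background in the introduction, attributed in its entirety to Dafermos and Rodnianski \cite{MDIR05}, and the body of the paper never revisits it. So the only meaningful question is whether your sketch is a credible reconstruction of the argument of \cite{MDIR05}. At the level of architecture it largely is: the reduction to $\pu\pv(r\phi) = \frac{\pu\pv r}{r}\,(r\phi)$ with a potential decaying like $r^{-3}$ (weighted by the mass), which one checks directly from (\ref{eqn:ruv}) and (\ref{hm}); the red-shift mechanism whose uniformity is exactly the sub-extremality hypothesis $|Q|<\varpi_+$; a pigeonhole argument on dyadic scales producing slices of small energy; the transfer of decay down ingoing cones to $\mathcal{H}^+$; the induction on the decay exponent terminating at $3-\epsilon$; and the recognition that the real difficulty is closing the field estimates and the geometric estimates simultaneously. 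All of these are genuine ingredients of the actual proof. One conflation is worth flagging: the $r^p$-weighted energy hierarchy ($p=1,2$) you invoke in the far region belongs to the later ``new physical-space approach'' of Dafermos and Rodnianski, not to \cite{MDIR05}, which instead runs its pigeonhole directly on the monotone Hawking/Bondi mass flux and on characteristic rectangles, exploiting pointwise and BV-type estimates special to the spherically symmetric, uncharged, massless field.

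That said, what you have written is a program, not a proof, and for a theorem of this scale the distinction is the whole point. Each step you describe --- uniform boundedness coupled to the geometry, the quantitative non-degeneracy of the horizon (control of $r-r_+$, $\pv r$, and $\Omega^2$ along $\mathcal{H}^+$ at a definite rate), and above all the induction step in which the assumed decay of $\phi$ must be fed back through the Einstein equations to re-derive decay of the potential without destroying its $r^{-3}$ structure --- is precisely where the bulk of \cite{MDIR05} is spent, and none of it is carried out here. In particular you assert, but do not show, that ``each passage through the evolution equations gains essentially one power of advanced time''; making that gain honest near the horizon, where $r$ is bounded so the $r$-weights give nothing and only the red-shift can produce the improvement, is the crux of the matter, and your sketch gestures at it without performing it. So: right route, correctly identified difficulties, but no proof.
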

We remark that the `sub-extremal in the limit' condition (\ref{subex2}) is satisfied for all black hole solutions arising in the model of Christodoulou.\footnote{One can deduce this \emph{a posteriori} from the statement of Theorem \ref{thm:christo}.  Since, in this case, $\mathcal{CH}_{i^+}=\emptyset$, if $\mathcal{BH}\neq \emptyset$, then, necessarily, $\mathcal{S}\neq \emptyset$, whence $\mathcal{A}\neq \emptyset$ and $\varpi_+ \geq \inf_{\mathcal{A}} \frac{1}{2}r >0 =Q$.  In establishing Theorem \ref{thm:christo}, however, (\ref{subex2}), i.e.,~$\varpi_+>0$, must first be shown (cf.~\cite{DC87}).}

A generalization of Price's law to the case $\m^2\neq 0$ and $\e\neq 0$ will be discussed in \S \ref{sec:intro/conj/price}.

\subsubsection{`Mass inflation' and strong cosmic censorship}\label{sec:intro/cd/d/mass}

While Theorem \ref{thm:price} gives an upper bound for the decay of a real-valued massless scalar field,
heuristic analysis \cite{LB99, LBAO99, CGRPJP94b, RP72} and numerical studies \cite{BO97, CGRPJP94} suggest that generically there is a similar lower bound.  In fact, the existence of such a generic lower bound may yet, in light of Theorem \ref{thm:c0strong}, prove significant in redeeming the validity of strong cosmic censorship, for Dafermos shows that if, indeed, such a lower bound for decay holds along $\mathcal{H}^+$ for any `sub-extremal in the limit' black hole, then the curvature must blow up along $\mathcal{CH}_{i^+}$ \cite{MD05c}.  This provides mathematical confirmation of the `mass-inflation' scenario of Israel and Poisson \cite{EPWI90}.  This is given in

\begin{thm}[Dafermos \cite{MD05c}] \label{thm:mass}
Let $(\M=\Q^+\times_r \mathbb{S}^2, g_{\mu\nu}, \phi, F_{\mu\nu})$ be as in Theorem \ref {thm:c0strong}.
Suppose along $\mathcal{H}^+$ the scalar field satisfies for some $\epsilon>0$ and some positive constant $C<\infty$
\beqn\label{lowerbound}
|r\pv \phi|\geq Cv^{-9+\epsilon} \hspace{1cm} \textrm{for all}\hspace{.5cm} v\geq V,
\eeqn
where $v$ is a suitable normalized advanced time co-ordinate. Then, the curvature blows up along $\mathcal{CH}_{i^+}$.
\end{thm}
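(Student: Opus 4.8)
The strategy is a mass-inflation argument in the black hole interior: I would show that the renormalized Hawking mass $\varpi = m + \frac{Q^2}{2r}$ diverges on the approach to $\mathcal{CH}_{i^+}$, and then deduce blow-up of the Kretschmann scalar from the fact (guaranteed by Theorem \ref{thm:c0strong}) that $r$ remains bounded away from $0$ and $\infty$ there. Concretely, I would work in double-null coordinates $(u,v)$ on the interior region to the future of $\mathcal{H}^+$, arranged so that $\mathcal{CH}_{i^+}$ is reached in the limit $v\to\infty$ along ingoing rays $\{u=\mathrm{const}\}$, with the corner $i^+$ sitting at $v=\infty$ on $\mathcal{H}^+=\{u=u_0\}$. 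Since the hypotheses here are exactly those of Theorem \ref{thm:c0strong}, we have $\mathcal{CH}_{i^+}\neq\emptyset$ and $r$ extends continuously to a positive value $r_-$, so $0<r_-\leq r\leq r_+<\infty$ in a full neighborhood of $\mathcal{CH}_{i^+}$; moreover the $C^0$-stability estimates underlying Theorem \ref{thm:c0strong} keep the geometry pointwise close to the sub-extremal Reissner-Nordstr\"om interior, whose inner horizon has strictly positive surface gravity $\kappa_->0$.

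The engine is the pair of sign-definite transport equations for $\varpi$ (consequences of Raychaudhuri's equation and the $\m^2=\e=\imp=0$ field equations), of the schematic form $\pv\varpi = 2\pi\,\frac{1-\mu}{\pv r}\,(r\pv\phi)^2$ and $\pu\varpi = 2\pi\,\frac{1-\mu}{\pu r}\,(r\pu\phi)^2$, where $\mu=\frac{2m}{r}$; in the trapped interior both $\pv r<0$ and $1-\mu<0$, so each right-hand side is non-negative and $\varpi$ is monotone increasing in both null directions. Mass inflation is then the assertion that the accumulated flux is infinite, i.e. that $r\pv\phi$ (equivalently $\pv(r\phi)$) fails to be square-integrable in $v$ against the weight $\frac{1-\mu}{\pv r}$ as $\mathcal{CH}_{i^+}$ is approached. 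The plan is: (i) propagate the lower bound $|r\pv\phi|\geq Cv^{-9+\epsilon}$ from $\mathcal{H}^+$ into the interior by integrating the reduced wave equation $\pu\pv(r\phi)=\phi\,\pu\pv r$ along ingoing rays, controlling the coefficient $\pu\pv r$ by its Reissner-Nordstr\"om value via a bootstrap; (ii) quantify the blue-shift, namely that the transverse derivative of $\phi$ is amplified as $v\to\infty$ at the rate set by $\kappa_-$, so that although the data decays only polynomially the amplified interior flux is non-integrable; and (iii) integrate the $\varpi$-transport equation to conclude $\varpi\to\infty$, hence $m\to\infty$ since $\frac{Q^2}{2r}$ stays bounded.

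The main obstacle is precisely step (ii) together with the survival of the lower bound through step (i): one must show, at the nonlinear level, that the exponential blue-shift amplification associated with $\kappa_-$ defeats the polynomial decay of the data, without the lower bound being washed out by backreaction en route. This is where the quantitative threshold enters — a mere non-vanishing hypothesis on $\phi|_{\mathcal{H}^+}$ does not suffice for the self-contained estimate; the exponent $-9+\epsilon$ is a technical sufficient rate, obtained by balancing the decay of the data against the geometric weights (with $1-\mu$ and $\pv r$ each $\sim e^{-\kappa_- v}$ near $\mathcal{CH}_{i^+}$) that are encountered when the flux is propagated inward from $\mathcal{H}^+$ and integrated along $\mathcal{CH}_{i^+}$. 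The bootstrap must simultaneously retain the upper bounds (to keep the metric $C^0$-close to Reissner-Nordstr\"om, as in Theorem \ref{thm:c0strong}) and exploit the lower bound (to force growth), so the two halves of the argument are tightly coupled.

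Finally, I would convert $m\to\infty$ into a curvature singularity by invoking the spherically symmetric expression for the Kretschmann scalar, which for this system contains a Coulomb-type term comparable to $\frac{m^2}{r^6}$ together with charge- and matter-dependent corrections. Because $r$ is pinched between $r_-$ and $r_+$ on the approach to $\mathcal{CH}_{i^+}$, this leading term dominates and diverges as $m\to\infty$, so $R_{\mu\nu\alpha\beta}R^{\mu\nu\alpha\beta}\to\infty$ along $\mathcal{CH}_{i^+}$; in particular the solution admits no $C^2$ extension across it, in accordance with Statement VII of Theorem \ref{thm:main}.
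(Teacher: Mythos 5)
First, a caveat about the comparison itself: the paper contains no proof of this statement. Theorem \ref{thm:mass} is quoted from Dafermos \cite{MD05c}, and the only ingredient the paper itself records is the curvature inequality (\ref{ks}). Measured against the cited argument, your outline identifies the correct mechanism: the sign-definite transport equations for $\varpi$ in the trapped region (your schematic $\pv\varpi = 2\pi\,\frac{1-\mu}{\pv r}(r\pv\phi)^2$, $\pu\varpi = 2\pi\,\frac{1-\mu}{\pu r}(r\pu\phi)^2$ are correct for $\m^2=\e=\imp=0$, and both right-hand sides are indeed non-negative where $\lambda,\nu,1-\mu<0$), the reduced wave equation $\pu\pv(r\phi)=\phi\,\pu\pv r$, the blue-shift amplification governed by the inner-horizon surface gravity, and the final conversion of $m\to\infty$ into Kretschmann blow-up via $R_{\mu\nu\alpha\beta}R^{\mu\nu\alpha\beta}\geq \frac{4}{r^4}\left(\frac{2m}{r}\right)^2$ with $r$ pinched away from $0$ and $Q$ constant. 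That last step is exactly the paper's (\ref{ks}) and is unproblematic.

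The genuine gap is in the logical architecture of your steps (i)--(ii). Your bootstrap asks the interior geometry to remain pointwise close to the sub-extremal Reissner-Nordstr\"om interior (with $1-\mu$ and $\pv r$ behaving like $e^{-\kappa_- v}$, and $\pu\pv r$ controlled by its Reissner-Nordstr\"om value) all the way up to $\mathcal{CH}_{i^+}$, while simultaneously forcing $\varpi\to\infty$ there. These two halves are not merely ``tightly coupled''; they are incompatible. Mass inflation is precisely the statement that $1-\mu=-4\Omega^{-2}\lambda\nu$ diverges, whereas in the Reissner-Nordstr\"om interior $1-\mu\to 0$ at the inner horizon; so the closeness hypothesis you need in order to propagate the lower bound is destroyed by the very conclusion you are driving toward, and a direct bootstrap of this form cannot close. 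The missing idea, which is how Dafermos's proof actually runs, is a contradiction (localization) argument: fix $q\in\mathcal{CH}_{i^+}$ and suppose $\varpi$ remains bounded on $J^-(q)\cap\Q^+$. Under \emph{that} assumption one can legitimately derive the stability-type, Reissner-Nordstr\"om-like estimates in $J^-(q)$, transport the lower bound $|r\pv\phi|\geq Cv^{-9+\epsilon}$ from $\mathcal{H}^+$ inward with the blue-shift amplification intact, and then integrate the $\varpi$-transport equation transversally to the Cauchy horizon to exhibit an infinite flux --- contradicting the assumed bound at $q$. Since $q$ was arbitrary, $\varpi$ (hence the curvature, by your final step) blows up at every point of $\mathcal{CH}_{i^+}$. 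The threshold exponent in (\ref{lowerbound}) is exactly what this balance of polynomial data decay against the conditional exponential weights tolerates, as you surmised; but without the contradiction structure the estimate you describe cannot be made self-consistent.
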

If (\ref{lowerbound}) can be shown to hold for generic initial data, then the spherically symmetric $C^2$-formulation of strong cosmic censorship is \emph{true}!  We state this in
 
\begin{cor}\label{cor:d_strong}Let $(\M=\Q^+\times_r \mathbb{S}^2, g_{\mu\nu}, \phi, F_{\mu\nu})$ be as in Theorem \ref {thm:mass}. If (\ref{lowerbound}) holds for generic initial data, then, in particular, the spherically symmetric $C^2$-formulation of strong cosmic censorship is \emph{true}.
\end{cor}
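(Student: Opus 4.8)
The plan is to obtain the corollary as a formal consequence of three ingredients already available: the curvature blow-up furnished by Theorem \ref{thm:mass}, the inextendibility dichotomy of Statement VII.4 of Theorem \ref{thm:main}, and the topological fact that in the present setting the central Cauchy horizon is empty. Since the hypotheses are inherited from Theorem \ref{thm:mass} (and hence Theorem \ref{thm:c0strong}), I would first record that we work with two asymptotically flat ends, $\m^2=\e=\imp=0$, and $0<|Q|<\varpi_+$. In this topology the Cauchy surface has no center of symmetry, so $\Gamma$ is absent and with it the central boundary components $b_{\Gamma}$, $\sgo$, $\chg$, $\sgt$; in particular $\chg=\emptyset$, and the only candidate Cauchy horizon at each end is $\mathcal{CH}_{i^+}$.

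Next I would fix the generic sub-class of initial data for which the lower bound (\ref{lowerbound}) is assumed to hold along $\mathcal{H}^+$. For any solution in this sub-class, Theorem \ref{thm:mass} applies directly to give that the Kretschmann scalar $R_{\mu\nu\alpha\beta}R^{\mu\nu\alpha\beta}$ diverges along causal curves approaching $\mathcal{CH}_{i^+}$. The genericity quantifier then simply passes through: the very sub-class witnessing the hypothesis will be the sub-class witnessing strong cosmic censorship, so no new genericity analysis is required.

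I would then rule out $C^2$-future-extendibility by contradiction. Suppose $(\M, g_{\mu\nu})$ admits a $C^2$-future-extension $(\widetilde{\M}, \widetilde{g_{\mu\nu}})$. By the two-ended analogue of Statement VII.4 of Theorem \ref{thm:main}, there is a timelike curve $\gamma\subset\widetilde{\M}$ exiting $\M$ whose projection to $\Q^+$ accumulates on $\chg\cup\mathcal{CH}_{i^+}$; since $\chg=\emptyset$, it must accumulate on $\mathcal{CH}_{i^+}$. However, in a $C^2$-Lorentzian metric the Christoffel symbols are $C^1$ and hence the Riemann tensor, and therefore the Kretschmann scalar, is continuous, so it stays bounded along $\gamma$ as the exit point is approached. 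This contradicts the blow-up supplied by Theorem \ref{thm:mass}. Applying the identical argument at each of the two ends yields $C^2$-future-inextendibility of the full maximal development, which is exactly the $C^2$-formulation of Conjecture \ref{conj:strong}.

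The only genuinely delicate point, and the one I expect to require care, is the compatibility of the two boundary statements: one must ensure that the escaping curve $\gamma$ produced by Statement VII.4 exits through a portion of $\mathcal{CH}_{i^+}$ along which Theorem \ref{thm:mass} actually forces divergence (rather than, say, meeting $\mathcal{CH}_{i^+}$ only at its future endpoint), and that the divergence holds along the specific causal approach realized by $\gamma$. Because Theorem \ref{thm:mass} provides blow-up along the whole horizon while VII.4 provides a genuine interior accumulation point, these match; verifying this matching, together with the elementary observation that a $C^2$ metric forces a continuous Kretschmann scalar, is the substance of the otherwise soft deduction.
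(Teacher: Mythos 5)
Your proposal is correct and follows essentially the argument the paper intends (the corollary is stated there without a separate proof): one combines the curvature blow-up of Theorem \ref{thm:mass} with the two-ended analogue of Statement VII.4 of Theorem \ref{thm:main} --- whose exit set reduces to $\mathcal{CH}_{i^+}$ because $\Gamma=\emptyset$ when $\Sigma^{(3)}\simeq\mathbb{S}^2\times\R$, so all central components $b_\Gamma\cup\sgo\cup\chg\cup\sgt$ are absent --- together with the fact that the Kretschmann scalar is a $C^2$-compatible scalar invariant (Proposition \ref{prop:extstand}), hence bounded along any curve exiting into a $C^2$ extension. The ``delicate point'' you flag is resolved exactly as you suspect: the mass inflation of \cite{MD05c} forces the Hawking mass, and therefore via (\ref{ks}) the Kretschmann scalar, to blow up at every point of $\mathcal{CH}_{i^+}$ (with $r$ bounded above and away from zero there), so any accumulation point of $\pi(\gamma|_{\M})$ on $\mathcal{CH}_{i^+}$ yields the desired contradiction.
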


\subsection{Conjectures}\label{sec:intro/conj}
Generalizing the results of Christodoulou and Dafermos to the full Einstein-Maxwell-Klein-Gordon system is, needless to say, no easy task.  We discuss a few conjectures here to put forthcoming results, announced in \S \ref{sec:intro/forth}, into the proper context.  For convenience, we will here formulate all conjectures in the context of smooth developments as in Theorem \ref{thm:main}.  This being said, experience with the model of Christodoulou (cf.~\S \ref{sec:intro/cd/c}) indicates that it may be more natural to consider a larger class of solutions.  The reader may wish to replace the smooth initial data and their maximal development in the statement of the conjectures with less regular initial data and their maximal development for which the conclusion of Theorem \ref{thm:main} can still be shown.  See also the discussion of regularity in \S \ref{sec:intro/general/regularity}. 

\subsubsection{`Sub-extremal in the limit' black holes}\label{sec:intro/conj/geometryH}
The Einstein-Maxwell-Klein-Gordon system admits `extremal' black hole solutions.  In \cite{SA11a, SA11b}, Aretakis shows that the wave equation exhibits both stability and instability properties on extremal Reissner-Nordstr\"om horizon geometries, suggesting that the analysis required to deal with the class of `extremal' black hole spacetimes will be delicate.  That having been said, the following conjecture would imply that `extremal' black hole solutions are non-generic and thus, in particular, one can ignore them in the context of the study of cosmic censorship.

\begin{conj}[\textbf{`Sub-extremality' Conjecture}]\label{conj:subsextremal}Among all the data admissible from Theorem \ref{thm:main}, there exists a generic sub-class for which if the maximal future development has $\mathcal{Q}^+\backslash J^-(\mathcal{I^+})\neq \emptyset$, then the black hole is `sub-extremal in the limit'.\footnote{We emphasize that `sub-extremal in the limit' is to be understood in some neighborhood of $i^+$ in $J^-(i^+)$. \label{foot:subext}}
\end{conj}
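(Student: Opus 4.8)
The plan is to recast the conjecture as the statement that the \emph{extremal} configuration is a non-generic threshold, and then to exclude it by combining the monotonicity built into Theorem~\ref{thm:main} with an instability mechanism of Aretakis type. First I would reduce ``sub-extremal in the limit'' to a statement purely about the limiting values of the local charge and renormalised Hawking mass along the horizon near $i^+$. By Statement~V.2, every point of the apparent horizon $\mathcal{A}$ satisfies $2m/r=1$, so, writing $\varpi = m + Q^2/(2r)$ for the local charge $Q$ of~(\ref{max2}), one has the pointwise identity $r^2 - 2\varpi r + Q^2 = 0$ on $\mathcal{A}$; hence $|Q|\le \varpi$ there, with equality precisely when $r=\varpi$. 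Feeding in the one-ended analogue of Theorem~\ref{thm:trapped_i+} (already available in~\cite{JK10c}), so that $\mathcal{A}$ accumulates at $i^+$, this identity passes to the limit along $\mathcal{A}$ and yields $|Q_+|\le\varpi_+$ automatically, with $|Q_+|=\varpi_+$ iff $r_+=\varpi_+$. Thus super-extremal limits are excluded for \emph{every} admissible black-hole development, and the conjecture collapses to showing that the threshold set $\{|Q_+|=\varpi_+\}$ is non-generic.

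Next I would build the genericity framework following the scheme Christodoulou used for Theorem~\ref{thm:christo}. The object to control is the limiting extremality deficit $\delta_\infty := \sqrt{\varpi_+^2 - Q_+^2}$, a proxy for the surface gravity of the limiting horizon, and the goal is to realise $\{\delta_\infty = 0\}$ as a subset of positive codimension. For any datum whose development is extremal in the limit I would construct a finite-parameter family of nearby admissible data — for instance by rescaling the charge coupling $\e$ or the amplitude of the incoming pulse — transversal to the threshold, and show that $\delta_\infty$ varies continuously along the parameter. If $\delta_\infty$ can be shown to depend monotonically and non-degenerately on such a modulation parameter, then the extremal set is thin in the appropriate topology and generic data land in $\{\delta_\infty>0\}$, i.e.\ strictly sub-extremal.

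The dynamical input that should force transversality — and hence supply the ``instability'' that makes extremality non-generic — is the Aretakis phenomenon invoked in the paragraph preceding the conjecture (\cite{SA11a,SA11b}). On a degenerate (extremal) horizon the Klein--Gordon and Maxwell data along $\mathcal{H}^+$ carry a conserved ``Aretakis charge,'' whose non-vanishing forces polynomial \emph{growth} of transverse derivatives and therefore obstructs convergence to a stationary extremal Reissner--Nordstr\"om end state. The plan is to show that an extremal-in-the-limit development requires this conserved charge to vanish — a single scalar constraint on the asymptotic data — so that extremality is a fine-tuned, positive-codimension condition, precisely mirroring how Christodoulou's blue-shift mechanism forced $\mathcal{A}$ to accumulate at $b_{\Gamma}$ for a codimension family in Theorem~\ref{thm:christo}.

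The hard part will be twofold. First, even defining $Q_+$ and $\varpi_+$ as honest limits, and running the stability/instability dichotomy, requires sharp decay of $\phi$ and $F_{\mu\nu}$ along $\mathcal{H}^+$ near $i^+$: a Price's-law analogue for the charged, \emph{massive}, fully coupled system, which is itself a major open problem (cf.~the discussion in \S\ref{sec:intro/conj/price}) and is not supplied by Theorem~\ref{thm:price}. Second, the genericity must be made precise in an infinite-dimensional data space, and exactly at extremality the standard red-shift estimates degenerate, so the transversality and codimension count are most delicate in the very regime one needs to control; the Aretakis growth is a feature here, but converting the linear conserved-charge heuristic into a genuine non-degeneracy statement for the nonlinear coupled evolution is where the real difficulty lies.
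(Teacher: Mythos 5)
First, a point of orientation: the statement you were asked to prove is labelled a \emph{Conjecture} in the paper, and the paper offers no proof of it --- it is left open precisely because the ingredients you identify (a Price's-law analogue for the charged, massive field, and a genericity/instability mechanism at extremality) are themselves open. So there is no proof in the paper to compare against, and your proposal has to stand on its own as a complete argument; it does not, and you candidly say as much in your final paragraph. What you have written is a plausible research outline, not a proof.

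Beyond the openly conceded gaps, there are two concrete flaws in the parts you present as established. (a) The reduction step is circular: to pass your pointwise identity $r^2 - 2\varpi r + Q^2 = 0$ on $\mathcal{A}$ to a statement about $Q_+$ and $\varpi_+$ you invoke the one-ended analogue of Theorem~\ref{thm:trapped_i+} so that $\mathcal{A}$ accumulates at $i^+$; but Theorem~\ref{thm:trapped_i+} is proved for data as in Theorem~\ref{thm:c0strong}, whose hypotheses include the sub-extremality condition (\ref{subext}), and the corresponding one-ended statement (Conjecture~\ref{conj:outermost}) likewise carries `sub-extremal in the limit' as a \emph{hypothesis}. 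You cannot use it as the first step in a derivation of sub-extremality. Moreover, your identity involves the \emph{local} charge and renormalised mass along $\mathcal{A}$, whereas $Q_+$ and $\varpi_+$ are asymptotic quantities along $\mathcal{H}^+$; when $\e\neq 0$ the paper stresses that $Q_+$ is not even \emph{a priori} well-defined, and identifying the two limits requires exactly the decay you defer to the open Price's-law problem. So ``super-extremal limits are excluded for every admissible development'' is not established by your argument. (b) The core of the conjecture --- excluding the \emph{equality} case $|Q_+|=\varpi_+$ generically --- rests on an Aretakis-charge heuristic that is a statement about linear waves on an exactly extremal, stationary Reissner--Nordstr\"om horizon. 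No construction is given of a conserved quantity for a dynamical spacetime that is merely `extremal in the limit', nor of why its vanishing is a single transversal scalar constraint on infinite-dimensional initial data; without that, the claimed positive-codimension structure of the threshold set is an analogy with Theorem~\ref{thm:christo}, not an argument. In short: the one step you present as automatic is circular, and the step that carries the content of the conjecture is heuristic.
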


It should be noted that the asymptotic charge $Q_+$ (and hence the re-normalized mass $\varpi_+$) of the black hole is not \emph{a priori} well-defined when $\e\neq 0$. Moreover, in this case, unlike the case $\e=0$ in which (the topological) charge $Q$ is globally constant, it may be possible that the charge radiates completely to infinity.  We, however, make the following reasonable conjecture.

\begin{conj}[\textbf{Non-vanishing Charge Conjecture}]\label{conj:charge} Among all the data admissible from Theorem \ref{thm:main} with $\e\neq 0$, there exists a generic sub-class for which if the maximal future development has $\mathcal{Q}^+\backslash J^-(\mathcal{I^+})\neq \emptyset$, then the asymptotic charge $Q_+$ of the black hole is non-zero.   
\end{conj}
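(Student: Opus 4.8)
The plan is to attack this via the charge transport structure of the Maxwell equation \eqref{max2} in spherical symmetry together with the decay of the charged scalar field along the event horizon. Fixing double-null coordinates $(u,v)$ on $\Q^+$ and writing the metric as $-\Omega^2\,\dd u\,\dd v + r^2 \dd\sigma_{\mathbb{S}^2}$, I would first encode the single non-trivial component of $F_{\mu\nu}$ by a charge aspect function $Q(u,v)$ (proportional to $r^2 \Omega^{-2} F_{uv}$), so that \eqref{max2} becomes the pair of transport equations
\[
\partial_u Q \;\propto\; \e\, r^2\, \mathfrak{Im}\!\left(\phi^\dagger \D_u \phi\right), \qquad \partial_v Q \;\propto\; \e\, r^2\, \mathfrak{Im}\!\left(\phi^\dagger \D_v \phi\right),
\]
exhibiting that $Q$ is locally constant exactly where the charge current vanishes and is otherwise sourced by a quantity quadratic in $(\phi,\D\phi)$. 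Crucially, the current $J_\mu = \nabla^\nu F_{\mu\nu}$ is automatically conserved, $\nabla^\mu J_\mu = 0$, so the total charge read off the $r^{-2}$ falloff of $F_{uv}$ at $i^0$ is a conserved quantity $Q_{\textrm{tot}}$. The asymptotic charge of the black hole would be defined as $Q_+ = \lim Q$ along $\mathcal{H}^+$, and by integrating the transport equation and bookkeeping the charge flux one expects the conservation law to take the schematic form $Q_+ = Q_{\textrm{tot}} - Q_{\mathcal{I}^+}$, where $Q_{\mathcal{I}^+}$ is the charge radiated to null infinity; the conjecture then becomes the assertion that generically not all of $Q_{\textrm{tot}}$ radiates away.

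The analytic core is to make both $Q_+$ and $Q_{\mathcal{I}^+}$ well-defined functionals of the data, which is \emph{a priori} unclear when $\e\neq0$. For this I would need a quantitative decay statement for the charged field along $\mathcal{H}^+$ and $\mathcal{I}^+$ — a generalization of the Dafermos--Rodnianski Price's law (Theorem \ref{thm:price}) to $\e\neq0$, $\m^2\geq0$, the subject of \S\ref{sec:intro/conj/price}. Granting polynomial decay of $|r\phi|$ and $|r\D_v\phi|$ along the horizon, the flux $\int_{\mathcal{H}^+}\partial_v Q\,\dd v$ converges absolutely, $Q_+$ exists, and it depends on the initial data through a convergent (and plausibly real-analytic) functional $\mathcal{D}\mapsto Q_+(\mathcal{D})$. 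Since the very existence of this limit and the stabilization of the horizon geometry presuppose a `sub-extremal in the limit' black hole, I would run the entire argument under the standing assumption that Conjecture \ref{conj:subsextremal} has already been invoked to place us in a neighborhood of $i^+$ on which the decay mechanism operates.

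With $Q_+$ realized as a functional on the admissible data set, the genericity claim reduces to showing that the complete-discharge locus $\{Q_+ = 0\}$ is non-generic. I would first dispatch the trivial obstruction by restricting to the (open, generic) condition $Q_{\textrm{tot}}\neq0$. For non-triviality I would then produce an open set of black-hole-forming data with $Q_+\neq0$ by working in a prompt-collapse regime: using a charged analogue of the trapped-surface-formation mechanism (Theorem \ref{thm:christo_trap_form}), arrange that a trapped surface enclosing a definite amount of charge forms before the bulk of that charge can radiate to $\mathcal{I}^+$, so that $Q_{\mathcal{I}^+}$ is a controllably small fraction of $Q_{\textrm{tot}}$ and hence $Q_+\neq0$; continuous dependence on data then gives an open neighborhood on which $Q_+\neq0$. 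Finally, since $\{Q_+=0\}$ is the zero set of a single real-analytic functional that is \emph{not} identically zero, it is of positive codimension (respectively meager), which one upgrades to genericity in the sense appropriate to the cosmic censorship formulation.

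The hard part — by a wide margin — is the decay input of the second paragraph. Everything downstream (existence of the limit $Q_+$, convergence of the radiated charge $Q_{\mathcal{I}^+}$, smooth dependence on data) hinges on a Price-type law for the charged, massive field, which is itself a formidable open problem for which not even the sharp decay rate is known; the nonlinear coupling of $\phi$ to $F_{\mu\nu}$ through both \eqref{max2} and the gauge derivative $\D = \dd + \e\ii A$ means the field no longer satisfies a clean covariant wave equation on a \emph{fixed} charged background, so the linear blue-shift intuition underlying Price's law must be substantially reworked. Absent this, the proposal can only establish the openness and non-emptiness of $\{Q_+\neq0\}$, which is precisely why the statement is recorded here as a conjecture rather than a theorem.
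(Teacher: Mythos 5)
There is nothing in the paper to compare your argument against: Conjecture \ref{conj:charge} is stated as an open conjecture, and the paper's entire discussion of it consists of the motivating observation that when $\e\neq 0$ the charge $\Qe$ is no longer globally constant, so $Q_+$ is not \emph{a priori} well-defined and could in principle radiate away entirely, together with the remark that the conjecture matters because of the sub-extremality hypothesis (\ref{subext}) needed for Cauchy horizon formation. So the only question is whether your proposal closes this gap, and it does not; to your credit, you largely say so yourself.

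Concretely, the chain of dependencies you set up is: (i) the well-definedness of $Q_+$ and $Q_{\mathcal{I}^+}$ rests on a charged/massive Price's law, i.e.\ Conjecture \ref{conj}, which is open (Theorem \ref{thm:price} covers only $\m^2=\e=\imp=0$); (ii) you take Conjecture \ref{conj:subsextremal} as a standing assumption, so even granting (i) you would prove only a conditional statement; (iii) the prompt-collapse construction needs a charged analogue of Theorem \ref{thm:christo_trap_form}, which is only announced as forthcoming in \cite{JK10c, JK10a}; and (iv) the final genericity step --- that $\{Q_+=0\}$ is the zero set of a non-trivial real-analytic functional and hence of positive codimension --- is unsupported, since nothing in the paper or in your argument gives real-analyticity, or even continuity, of the map from initial data to $Q_+$. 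That map is defined through a limit at $i^+$, so continuity in the data requires a global asymptotic-stability statement, not Cauchy stability on compact sets; the same objection applies to your claim that $Q_+\neq0$ propagates to an open neighborhood ``by continuous dependence on data.'' Separately, your schematic conservation law $Q_+=Q_{\textrm{tot}}-Q_{\mathcal{I}^+}$ is too naive when $\m^2\neq0$: as noted in \S\ref{sec:intro/conj/generic}, massive fields need not radiate to $\mathcal{I}^+$ at all, and charge can persist indefinitely in the exterior (star-like configurations) without ever crossing $\mathcal{H}^+$, so a third term is required and the two-term bookkeeping does not even formally identify $Q_+$. In short, the proposal is a sensible research program consistent with the paper's own web of conjectures in \S\ref{sec:intro/conj/web}, but every load-bearing step is either an open conjecture or an unjustified regularity/genericity claim, so no part of Conjecture \ref{conj:charge} is actually established.
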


This conjecture is relevant in view of (\ref{subext}).

\subsubsection{Price's law}\label{sec:intro/conj/price}

With respect to the decay rate (\ref{rate}) in Theorem \ref{thm:price}, heuristics and numerical work \cite{LBGK04,SHTP98} suggest that the charged scalar massive `hairs' of a black hole will decay slower than the neutral (real) massless ones.  In particular, the following conjecture of (a version of) Price's law appears reasonable.
\begin{conj}[\textbf{Price's Law Conjecture}]\label{conj}Let $(\M=\Q^+\times_r \mathbb{S}^2, g_{\mu\nu}, \phi, F_{\mu\nu})$  be as in Theorem \ref{thm:main}.  Assume that $\mathcal{Q}^+\backslash J^-(\mathcal{I^+})\neq \emptyset$. If the black hole is `sub-extremal in the limit' and has asymptotic charge $Q_+$, then for all $\epsilon>0$ there is a positive constant $C_{\epsilon}<\infty$ such that, for a suitable normalized advanced time co-ordinate $v$, 
 \beqn\nonumber
|r\phi| + |r{\rm D}_v\phi|\leq C_{\epsilon} v^{-p + \epsilon}
\eeqn
along $\mathcal{H}^+$, where the decay rate exponent $p =p(\e , \m^2)$ satisfies\footnote{$\e Q_+$ is a dimensionsless quantity; v.~\S \ref{sec:scale}.}
\beqna\nonumber
p(\e , 0)&=&\left\{ 
  \begin{array}{l l}
    1 + \mathfrak{Re}[\sqrt{1- 4(\e Q_+)^2}], & \quad \text{if $\e \neq 0$;}\\
    3 & \quad \text{otherwise;}\\
  \end{array} \right.\\
  p(0, \m^2)&=& \left\{ 
  \begin{array}{l l}
   5/6, & \quad \text{if $\m^2 \neq 0$;}\\
    3 & \quad \text{otherwise}.\\
  \end{array} \right.\nonumber
\eeqna
\end{conj}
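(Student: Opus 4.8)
The plan is to extend the Dafermos--Rodnianski proof of Price's law (Theorem \ref{thm:price}) from the uncharged massless case to the full charged, massive setting. Working in the double-null gauge on $\Q^+$ adapted to the exterior region $J^-(\mathcal{I}^+)$, where by Statement IV the monotonicity of $r$ is available and there are no trapped spheres, I would first establish the requisite a priori boundedness and integrated-decay estimates for the gauge-covariant field $\phi$ using the redshift effect near the event horizon $\mathcal{H}^+$, which persists precisely because of the `sub-extremal in the limit' hypothesis. As in the uncharged case, the scheme would proceed by (i) a uniform boundedness statement for a suitable gauge-covariant energy, (ii) a Morawetz-type integrated local energy decay, and (iii) an $r^p$-weighted hierarchy producing polynomial decay of the flux through $\mathcal{H}^+$. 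The renormalized mass $\varpi$ and its monotonicity would play the role that the Hawking mass plays in \S\ref{sec:intro/cd/d}, now with the charge contribution $Q^2/2r$ controlled via the Maxwell equation (\ref{max2}).

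The novel feature, and the source of the modified exponents, is the effective radial potential. After separating variables on a sub-extremal Reissner-Nordstr\"om-like asymptotic geometry, the gauge coupling $\D = \dd + \e \ii A$ shifts the frequency $\omega \mapsto \omega - \e A_t$, so that near the horizon the field behaves as a long-range Coulombic perturbation rather than a short-range one. The exponent $p(\e,0)=1+\mathfrak{Re}[\sqrt{1-4(\e Q_+)^2}]$ should emerge as $1$ plus the real part of the indicial root of the near-horizon radial ODE, in which $(\e Q_+)^2$ enters as an effective (tachyonic) mass; this is why decay is slower than the uncharged $v^{-3}$ rate and degenerates further as $|\e Q_+|\to 1/2$. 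For the massive case $\e=0$, $\m^2\neq 0$, I would instead analyze the branch point of the resolvent at the mass threshold $\omega=\pm\m$: the oscillatory late-time tail is governed by a stationary-phase contribution about this threshold, and a Koyama--Tomimatsu--type computation yields the anomalous rate $p(0,\m^2)=5/6$.

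The most serious obstacle is that the problem is genuinely nonlinear and coupled, at a deeper level than in Theorem \ref{thm:price}. Unlike the $\e=0$ case, where $Q$ is a global constant fixed by the topology, here the charge is dynamical: the source in (\ref{max2}) is quadratic in $\phi$, so charge may radiate and the asymptotic charge $Q_+$ is not even known a priori to be well-defined (this is the content of Conjecture \ref{conj:charge}). One must therefore run a bootstrap that simultaneously controls the decay of $\phi$, the convergence of $Q$ to a limit $Q_+$, and the `settling down' of the exterior metric to the asymptotic Reissner-Nordstr\"om geometry against which the effective potential is computed. Superradiance for the charged field, i.e.\ the absence of a positive conserved gauge-covariant energy when $\e\neq 0$, compounds this, since the redshift multiplier estimate underlying step (i) must be shown to survive the indefinite energy flux. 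Finally, establishing that the `sub-extremal in the limit' hypothesis is stable under the evolution, rather than merely assumed, is itself tied to Conjecture \ref{conj:subsextremal} and would have to be propagated along $\mathcal{H}^+$ as part of the same bootstrap.
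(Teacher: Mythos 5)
You should note first that the statement you are trying to prove is, in the paper, precisely a \emph{conjecture} (Conjecture \ref{conj}): the paper offers no proof of it, only heuristic motivation drawn from the numerical and heuristic literature (Hod--Piran and related work), together with the remark that the anomalous exponents reflect a different physical mechanism than in the neutral massless case. There is therefore no paper proof to compare against, and your submission does not close that gap: it is a research program, not an argument. Each of your steps (i)--(iii) is named but not executed, and the genuinely hard points --- the dynamical nature of the charge (so that $Q_+$ is not even a priori well-defined, cf.\ Conjecture \ref{conj:charge}), the absence of a positive conserved gauge-covariant energy when $\e\neq 0$ (charged superradiance), the nonlinear coupling of the decay of $\phi$ to the convergence of the exterior geometry to Reissner--Nordstr\"om, and the propagation of the `sub-extremal in the limit' condition --- are exactly the ones you defer to an unspecified bootstrap. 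As written, the proposal establishes nothing beyond what the conjecture already asserts.

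There is also a concrete error in the mechanism you propose for the exponent $p(\e,0)=1+\mathfrak{Re}[\sqrt{1-4(\e Q_+)^2}]$. You derive it as the indicial root of a \emph{near-horizon} radial ODE, with $(\e Q_+)^2$ entering as an effective tachyonic mass. The paper's own explanation (in the discussion following the conjecture, based on the heuristics of \cite{HP98b}) is the opposite: the late-time tail of charged hairs is dominated by scattering off the long-range Coulomb potential at \emph{large} $r$ in essentially \emph{flat} spacetime, i.e.\ the relevant indicial shift occurs at infinity, where the gauge potential $A_t\sim Q_+/r$ modifies the effective angular momentum by $-(\e Q_+)^2$; the near-horizon region, governed by the surface gravity and the redshift, does not produce this exponent. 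This matters analytically: a scheme organized around a near-horizon indicial analysis would look for the obstruction to faster decay in the wrong region, and would also fail to explain the discontinuity of $p(\e,0)$ at $\e=0$, which in the paper's heuristic arises because the flat-space scattering amplitude, proportional to $\Gamma(\tfrac{1}{2}p)/\Gamma(1-\tfrac{1}{2}p)$, vanishes as $p\rightarrow 2^-$, leaving the curvature-induced $M_f v^{-(p+1)}$ term dominant. Your treatment of the massive case via the resolvent branch point at $\omega=\pm\m$ and a Koyama--Tomimatsu-type threshold analysis is, by contrast, consistent with the expected origin of the $5/6$ exponent, though again it is only a pointer to a computation, not a proof.
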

In the massless case, it may be surprising that $p(\e,0)$ is not continuous at $\e=0$.  The reason for this is best explained by the heuristics of \cite{HP98b}.  In short, the late-time behavior of neutral scalar `hairs' is determined by spacetime curvature ($\sim M_f$), whereas the late-time behavior of charged scalar `hairs' is dominated by scattering due to electromagnetic interactions in \emph{flat} spacetime. In particular, to (sub-)leading order, the scalar field decays like
\beqn
\frac{\Gamma(\frac{1}{2}p)}{\Gamma(1- \frac{1}{2}p)}  v^{-p} + M_f v^{-(p+1)}\nonumber
\eeqn
along the event horizon, where $\Gamma(x)$ is the (complete) Gamma function.
As $\e \rightarrow 0^+$, i.e., $p\rightarrow 2^-$, the first term tends to zero. Thus, when the scalar field is uncharged (and massless), one would expect to have  sharper decay ($p=3$) along the event horizon (cf.~Theorem \ref{thm:price}).
\subsubsection{Trapped surface formation}\label{sec:intro/conj/trapped}
As discussed in \S\ref{sec:intro/cd/c}, trapped surface formation is central to establishing that, in particular, generically $\chg=\emptyset$ (cf.~Statement IV.5d of Theorem \ref{thm:main}).  Given the nature of the argument sketched in \S \ref{sec:intro/cd/c}, Christodoulou was led to a trapped surface conjecture in \cite{DC98}, which, in the context of spherical symmetry, takes the form of

\begin{conj}[\textbf{Spherical Trapped Surface Conjecture}]\label{conj:trapped_surface_conj_christo}
Among all the data admissible from Theorem \ref{thm:main}, there exists a generic sub-class for which the maximal future development has either $b_{\Gamma} = i^+$ or $\mathcal{A}\neq \emptyset$ and $\mathcal{A}$ has a limit point on $b_{\Gamma}$ (whence \emph{a fortiori} $\chg= \emptyset$).
\end{conj}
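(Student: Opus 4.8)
The plan is to mimic the instability mechanism that Christodoulou exploits in the proof of Theorem \ref{thm:christo}, replacing the real massless field by the full charged, massive one and the Hawking mass $m$ by the renormalized mass $\varpi = m + \frac{Q^2}{2r}$. The first and most substantial step is to secure a charged analogue of the trapped surface formation criterion of Theorem \ref{thm:christo_trap_form}: given $p, p' \in \mathcal{R}$ along an outgoing null ray with the ingoing ray $C^-_p$ terminating on $\Gamma \cup b_{\Gamma}$, one seeks constants $c_1, c_2$ so that smallness of the dimensionless size $\delta_0 = r(p')/r(p) - 1$ together with a lower bound $\eta_0 > c_2 \delta_0 \log(\delta_0^{-1})$ on the dimensionless renormalized mass content $\eta_0 = 2(\varpi(p')-\varpi(p))/r(p')$ forces a trapped surface in the enclosed region $\mathcal{X} = (J^+(p)\cap\Q^+)\backslash(J^+(q)\cup J^+(p'))$, with $q=b_{\Gamma}$. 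The derivation rests on the Raychaudhuri equation and the monotonicity of $\varpi$ guaranteed by the dominant energy condition (hence the hypothesis $\m^2 \geq 0$); the chief new difficulty relative to the real case is that the Coulomb term $\frac{Q^2}{2r}$ contributes against gravitational focusing, so one must ensure the scalar-field flux $\int |\D_v\phi|^2$ across the thin annuli dominates the electromagnetic back-reaction. I expect this quantitative focusing estimate, and not the global argument built on top of it, to be the main obstacle.

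Second, I would isolate the exceptional configurations to be shown non-generic: by Theorem \ref{thm:main} these are precisely the spacetimes with $b_{\Gamma} \neq i^+$ for which $\mathcal{A}$ has no limit point on $b_{\Gamma}$, equivalently (near the center) those possessing a central Cauchy horizon $\chg$ or a non-trapped light cone singularity emanating from $b_{\Gamma}$. For such a spacetime the ingoing ray $C^-_p$ issuing from a point $p$ near $b_{\Gamma}$ and terminating at $b_{\Gamma}$ experiences an infinite blue-shift as $p \to b_{\Gamma}$; this is the linear instability to be converted into collapse. Following Christodoulou's construction I would fix the solution on $I^-(b_{\Gamma})$ and introduce a one-parameter family of characteristic perturbations of the data, supported near $b_{\Gamma}$, on the outgoing cone through $p$. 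The aim is to show that for each nonzero value of the perturbation parameter the blue-shift amplifies the infalling flux enough that the hypotheses $\delta_0 \leq c_1$ and $\eta_0 > c_2\delta_0\log(\delta_0^{-1})$ are met along a sequence $p, p' \to b_{\Gamma}$, producing trapped surfaces that accumulate at $b_{\Gamma}$. Invoking the criterion then gives $\mathcal{A} \neq \emptyset$ with a limit point on $b_{\Gamma}$, whence by Statement IV.5d of Theorem \ref{thm:main} one obtains $\chg = \emptyset$ \emph{a fortiori}; genericity follows, as in \cite{DC99}, by verifying that the complementary exceptional data form a set of positive codimension in a suitable (BV or otherwise well-posed) topology.

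The principal dangers beyond the focusing estimate are threefold. First, the charged/massive system lacks the extra monotonicity $\pu\pv r < 0$ in the trapped region that Christodoulou uses directly, so the passage from trapped surface formation to control of $\chg$ must route entirely through Statement IV.5d rather than through pointwise monotonicity. Second, the charge may radiate, so one must keep $Q$ (or $Q_+$) under control along the amplifying ingoing rays and guard against the repulsive Coulomb term defeating collapse near a (near-)extremal configuration; here an appeal to a sub-extremality hypothesis in the spirit of Conjecture \ref{conj:subsextremal} may be unavoidable. Third, and most delicate, making the blue-shift amplification quantitative enough to beat the logarithmic threshold $c_2\delta_0\log(\delta_0^{-1})$ uniformly as $p \to b_{\Gamma}$ demands sharp control of the geometry along $C^-_p$ as the singularity is approached, which is the technical heart of the instability and the reason the conjecture remains open.
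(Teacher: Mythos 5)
The statement you are trying to prove is not a theorem of the paper at all: it is stated (and labeled) as an open \emph{conjecture}, the spherically symmetric analogue for Einstein-Maxwell-Klein-Gordon of what Christodoulou actually proved only in the special case $\m^2=\e=\imp=F_{\mu\nu}=0$ (Theorem \ref{thm:christo}, via Theorem \ref{thm:christo_trap_form}). The paper offers no proof to compare against; it only explains, in \S\ref{sec:intro/cd/c}, the mechanism of Christodoulou's argument and, in \S\ref{sec:intro/forth}, announces forthcoming work extending the trapped surface formation result to the charged setting. Your proposal is a faithful and well-informed reconstruction of exactly that intended strategy --- charged analogue of the formation criterion with $\varpi=m+\frac{Q^2}{2r}$ in place of $m$, blue-shift instability along $C^-_p$ as $p\rightarrow b_{\Gamma}$, a one-parameter family of perturbations fixed in $I^-(b_{\Gamma})$, and the correct observation that, lacking the monotonicity $\pu\pv r<0$ of the uncharged case, the conclusion $\chg=\emptyset$ must be routed through Statement IV.5d --- but it is a research program, not a proof.

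Concretely, every load-bearing step is asserted rather than established: (1) the charged trapped-surface formation criterion (the analogue of Theorem \ref{thm:christo_trap_form}) is only conjectured, and the competition between scalar-field flux and the repulsive Coulomb term is precisely where it could fail; (2) the quantitative blue-shift amplification needed to beat the threshold $c_2\delta_0\log(\delta_0^{-1})$ uniformly as $p\rightarrow b_{\Gamma}$ is left entirely open, and you say so yourself; (3) the genericity step (positive codimension of the exceptional set in a topology where well-posedness holds) is invoked by analogy with \cite{DC99} without any argument that Christodoulou's codimension count survives the coupling to charge and mass; and (4) your fallback appeal to a sub-extremality hypothesis ``in the spirit of Conjecture \ref{conj:subsextremal}'' would make the result conditional on yet another open conjecture, which is not permitted by the statement being proved. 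Since you concede in your final sentence that ``the conjecture remains open,'' the proposal cannot be accepted as a proof of the statement; it is, however, an accurate account of why the conjecture is plausible and of where the genuine difficulties lie.
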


By Statement III of Theorem \ref{thm:main}, Conjecture \ref{conj:weak} follows from Conjecture \ref{conj:trapped_surface_conj_christo}.  Moreover, in the case $\m^2=\e=\imp=F_{\mu\nu}=0$, by Statement VII of Theorem \ref{thm:main},  Conjecture \ref{conj:trapped_surface_conj_christo} also implies (the $C^0$-formulation\footnote{cf.~the discussion of regularity in footnote \ref{foot:christo}.} of) Conjecture \ref{conj:strong} since $\mathcal{CH}_{i^+}=\emptyset$. More generally (see Statement VII of Theorem \ref{thm:main}), if Conjecture \ref{conj:trapped_surface_conj_christo} were true, then the problem of strong cosmic censorship completely reduces to understanding the behavior of the solution near $\mathcal{CH}_{i^+}$.  In short, implicit in Conjecture \ref{conj:trapped_surface_conj_christo} is a partial result concerning strong cosmic censorship. 

If, however, we consign ourselves to just resolving weak cosmic censorship, then Theorem \ref{thm:main} actually allows us to state a \emph{weaker} trapped surface conjecture, from which weak cosmic censorship would also follow.  Indeed, since the presence of a single (marginally) trapped surface indicates\footnote{The converse is not true.  A black hole region need not contain a trapped surface (e.g., a spacetime with $\sgt\cup \mathcal{S}\cup \mathcal{S}_{i^+}= \emptyset$).  Note, however, Conjecture \ref{conj:outermost}. } that a spacetime has a non-empty black hole region, Theorem \ref{thm:main} immediately gives \emph{a fortiori}
\begin{cor}[Dafermos \cite{MD05b}]\label{cor:trap_w} Under the assumptions of Theorem \ref{thm:main}, if $\mathcal{A}\neq \emptyset$, then $\mathcal{I}^+$ is complete.
\end{cor}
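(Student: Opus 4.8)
The plan is to derive Corollary~\ref{cor:trap_w} directly from Statement~III of Theorem~\ref{thm:main}, which I may invoke freely. The key observation is that the hypothesis $\mathcal{A} \neq \emptyset$ furnishes a single marginally trapped sphere of symmetry, and I must show that the mere existence of such a point forces the black hole region $\mathcal{BH} = \Q^+ \backslash J^-(\mathcal{I}^+)$ to be non-empty; once that is established, condition~1 of Statement~III applies verbatim to give completeness of $\mathcal{I}^+$.

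First I would fix a point $p \in \mathcal{A}$, so that by definition the outgoing null derivative of $r$ vanishes at $p$. The crucial input is Statement~IV.2: consider $p$ together with any point $p'$ to its future along the outgoing null ray through $p$. By IV.2(a), since $p \in \mathcal{A}$, every such $p'$ lies in $\mathcal{A} \cup \mathcal{T}$; in particular, the entire future outgoing null ray from $p$ is contained in the closure of the trapped region and never returns to $\mathcal{R}$. The second sentence of IV.2 is exactly the payoff I need: it asserts $(\mathcal{A} \cup \mathcal{T}) \cap J^-(\mathcal{I}^+) = \emptyset$. Since $p \in \mathcal{A} \subset \mathcal{A} \cup \mathcal{T}$, this immediately yields $p \notin J^-(\mathcal{I}^+)$, whence $p \in \Q^+ \backslash J^-(\mathcal{I}^+) = \mathcal{BH}$. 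Therefore $\mathcal{BH} \neq \emptyset$.

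With $\mathcal{BH} \neq \emptyset$ in hand, I would simply appeal to Statement~III, condition~1, which states that $\mathcal{BH} \neq \emptyset$ implies $\mathcal{I}^+$ is complete in the sense of Christodoulou. This closes the argument. The attribution of the corollary to Dafermos~\cite{MD05b} reflects that the substantive content—the relationship $(\mathcal{A} \cup \mathcal{T}) \cap J^-(\mathcal{I}^+) = \emptyset$ together with completeness of $\mathcal{I}^+$ outside the causal shadow of the trapped region—is precisely the weak extension principle machinery, and the corollary is genuinely a \emph{fortiori} once Theorem~\ref{thm:main} is granted.

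I do not expect any real obstacle here, since the corollary is explicitly flagged as following \emph{a fortiori} from Theorem~\ref{thm:main}; the only point requiring a word of care is verifying that a \emph{marginally} trapped sphere (a point of $\mathcal{A}$, not of $\mathcal{T}$) already suffices to place a point in $\mathcal{BH}$. This is handled uniformly because Statement~IV.2 treats $\mathcal{A} \cup \mathcal{T}$ as a single unit in the disjointness relation with $J^-(\mathcal{I}^+)$, so no separate treatment of the marginal case is needed. The remark that the converse fails—a black hole region need not contain any trapped surface, as when $\sgt \cup \mathcal{S} \cup \mathcal{S}_{i^+} = \emptyset$—is consistent with this one-directional implication and requires no additional proof for the corollary itself.
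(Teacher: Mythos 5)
Your proof is correct and is essentially the paper's own argument: the paper obtains the corollary \emph{a fortiori} from Theorem \ref{thm:main} by observing that a single (marginally) trapped sphere lies outside $J^-(\mathcal{I}^+)$ by the disjointness relation of Statement IV.2, so that $\mathcal{BH}\neq\emptyset$, and then invoking condition 1 of Statement III to conclude completeness of $\mathcal{I}^+$.
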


Conjecture \ref{conj:weak} then follows from

\begin{conj}[\textbf{Weak Spherical Trapped Surface Conjecture}]\label{conj:trappedsurfaceconj}Among all the data admissible from Theorem \ref{thm:main}, there exists a generic sub-class for which the maximal future development has either $b_{\Gamma} = i^+$ or $\mathcal{A}\neq \emptyset$.  
\end{conj}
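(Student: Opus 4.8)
The plan is to argue by contrapositive, following the strategy Christodoulou employs for the real massless field (cf.~the discussion around Theorems \ref{thm:christo} and \ref{thm:christo_trap_form}), and first to use Theorem \ref{thm:main} to isolate exactly which configuration the conjecture must generically exclude. Suppose $\mathcal{A}=\emptyset$. Since $\Gamma\subset\mathcal{R}$ by Statement IV.1, the set $\mathcal{R}$ is non-empty; as $\mathcal{R},\mathcal{A},\mathcal{T}$ are the level sets $\pv r>0,\ \pv r=0,\ \pv r<0$ of the continuous function $\pv r$ on the connected $\Q^+$, the emptiness of the zero set $\mathcal{A}$ forces $\mathcal{T}=\emptyset$ as well. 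Then Statement IV.3 gives $\sgt\cup\mathcal{S}\cup\mathcal{S}_{i^+}=\emptyset$, so $\Q^+=\mathcal{R}$ and the only way the dichotomy can fail is a light-cone (naked) central singularity: $\sgo\neq\emptyset$ with $b_{\Gamma}\neq i^+$ (the analogues of diagrams III/VII). Thus the content of the conjecture is precisely that such trapped-surface-free central singularities are non-generic.

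Second, I would establish a trapped surface formation criterion for the full Einstein-Maxwell-Klein-Gordon system, i.e.~an analogue of Theorem \ref{thm:christo_trap_form}. Working in double-null coordinates on $\Q^+$, I would seek constants $c_1,c_2$ so that if along an outgoing ray two regular spheres $p,p'$ with $q=C^-_p\cap(\Gamma\cup b_{\Gamma})$ satisfy $\delta_0\leq c_1$ and $\eta_0>c_2\delta_0\log(\delta_0^{-1})$ (with $\delta_0,\eta_0$ the dimensionless size and mass content as in Theorem \ref{thm:christo_trap_form}), then a trapped sphere forms in the enclosed annulus. The dominant energy condition and Raychaudhuri's equation supply the needed monotonicity; the new features are that the charge enters through $\varpi=m+Q^2/(2r)$ rather than through $m$ alone, and that the complex, charged, possibly massive field must be controlled via gauge-covariant energy estimates for $\D\phi$ in place of the scalar flux $\pv(r\phi)$.

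Third, and this is the heart of the matter, comes the genericity step. For the \emph{weak} conjecture one needs strictly less than Christodoulou does for Conjecture \ref{conj:trapped_surface_conj_christo}: it suffices to produce a single marginally trapped sphere anywhere, without the stronger requirement that $\mathcal{A}$ limit to $b_{\Gamma}$. I would therefore try to recast the problem as a dispersion-versus-concentration dichotomy: given a hypothetical development with $b_{\Gamma}\neq i^+$ and $\mathcal{A}=\emptyset$, construct a one-parameter family of data coinciding with the original in the past of $q=b_{\Gamma}$ and exploit the infinite blue-shift along the ingoing cone $C^-_p$ as $p\to b_{\Gamma}$ as the linear mechanism for instability, aiming to show that for generic members the mass-concentration inequality $\eta_0>c_2\delta_0\log(\delta_0^{-1})$ holds for a sequence $p,p'\to b_{\Gamma}$ and hence $\mathcal{A}\neq\emptyset$ by the criterion above; the exactly balancing (critical) configurations should form a positive-codimension set.

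I expect the third step to be the main obstacle. Christodoulou's instability argument leans delicately on the special monotonicity $\pu\pv r<0$ available only for the real massless field, which the general system lacks, and the complex charged field admits electromagnetic back-reaction through (\ref{max2}) together with (for $\m^2\neq0$) altered late-time asymptotics that weaken every estimate entering the blue-shift analysis. Making the blue-shift quantitative enough to force the mass-concentration inequality while simultaneously tracking the charge contribution to $\varpi$ and establishing that the critical threshold is genuinely of positive codimension is exactly the difficulty that keeps this statement a conjecture rather than a theorem.
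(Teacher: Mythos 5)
The statement you were asked about is an open conjecture: the paper contains no proof of Conjecture \ref{conj:trappedsurfaceconj}, only the observations that it implies Conjecture \ref{conj:weak} (via Corollary \ref{cor:trap_w} and Statement III of Theorem \ref{thm:main}) and that its special case $\m^2=\e=\imp=F_{\mu\nu}=0$ is settled by Christodoulou's Theorem \ref{thm:christo}. So there is nothing in the paper to compare your argument against, and your proposal—by your own admission in the final paragraph—is a research program, not a proof: the genericity/instability step you defer is not a technical loose end but the entire mathematical content of the conjecture. A sketch that reduces an open problem to "the hard part" has not proved it, however reasonable the reduction.

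Beyond that, your first (reduction) step contains a genuine error. From $\mathcal{A}=\emptyset$ you correctly get $\mathcal{T}=\emptyset$ (connectedness of $\Q^+$ and continuity of $\pv r$) and then, by Statement IV.3, $\sgt\cup\mathcal{S}\cup\mathcal{S}_{i^+}=\emptyset$; but it does not follow that the only failure mode is $\sgo\neq\emptyset$ with $b_{\Gamma}\neq i^+$. With $\mathcal{A}\cup\mathcal{T}=\emptyset$ the remaining boundary between $b_{\Gamma}$ and $i^{\square}$ can consist of $\chg$ alone (the collapsed light cone configurations, diagrams V/VII) and, more importantly, of $\chg\cup\mathcal{CH}_{i^+}$ with $\mathcal{BH}\neq\emptyset$: the paper explicitly notes (footnote to Corollary \ref{cor:trap_w}, and diagram IX) that a black hole region need not contain any trapped or marginally trapped sphere. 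In that case $i^{\square}=i^+$ by Statement III, $b_{\Gamma}\neq i^+$, and the dichotomy fails even though there is no naked singularity. Your proposed mechanism—blue-shift instability along ingoing cones terminating at $b_{\Gamma}$, modeled on the discussion surrounding Theorems \ref{thm:christo} and \ref{thm:christo_trap_form}—is tailored to the naked-singularity configurations and says nothing about ruling out (or producing $\mathcal{A}\neq\emptyset$ in) these trapped-surface-free black hole spacetimes, where the relevant instability would instead live at $\chg\cup\mathcal{CH}_{i^+}$. So even as a program, your plan addresses only part of the non-generic set it must exclude.
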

In delimiting the geometry of the trapped region (cf.~Theorem \ref{thm:trapped_i+}), we also state

\begin{conj}[\textbf{Outermost Apparent Horizon Conjecture}]\label{conj:outermost}For initial data as in Theorem \ref{thm:main}, if the maximal future development has $\mathcal{Q}^+\backslash J^-(\mathcal{I}^+)\neq \emptyset$ and the black hole is `sub-extremal in the limit', then there exists a non-empty `asymptotically connected' component of the outermost\footnote{See footnote \ref{foot:outer} for a definition.} apparent horizon $\mathcal{A}'$ that terminates at $i^+$.  Moreover,
\beqn\nonumber
\mathcal{A}' \cap \mathcal{U}= \mathcal{A}\cap \mathcal{U}
\eeqn
in a sufficiently small neighborhood $\mathcal{U}\subset \R^{1+1}$ of $i^+$ and, in particular,
\beqn\nonumber
I^+(\mathcal{A}\cap \mathcal{U}) \cap \Q^+  \subset \mathcal{T}.
\eeqn
\end{conj}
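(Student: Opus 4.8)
The plan is to establish this as the one-ended analogue of Dafermos's Theorem~\ref{thm:trapped_i+}, following the strategy already carried out by the author in \cite{JK10c}: a sub-extremal-in-the-limit black hole must \emph{settle down} near $i^+$ to a Reissner--Nordstr\"om geometry, whose region just interior to the event horizon is trapped. First I would fix the event horizon $\mathcal{H}^+$, which by Statement~VI is a non-empty outgoing null segment terminating at $i^+$, and record its asymptotic invariants: by the monotonicity of Statement~V the area-radius and renormalized mass converge to $r_+=\sup_{\mathcal{H}^+}r$ and $\varpi_+=\sup_{\mathcal{H}^+}\varpi$ (both finite by Statement~VI), and the sub-extremality hypothesis reads $|Q_+|<\varpi_+$. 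The organizing identity is that the Reissner--Nordstr\"om potential $\mu=1-2\varpi/r+Q^2/r^2$ equals the Hawking-mass expression $1-2m/r=g(\nabla r,\nabla r)$ of Statement~V; hence, since the no-anti-trapped assumption forces the ingoing null derivative of $r$ to be negative, a sphere is trapped precisely when $\mu<0$. Sub-extremality guarantees that the limiting potential $1-2\varpi_+/r+Q_+^2/r^2$ has two distinct roots $r_-<r_+$ and is strictly negative on $(r_-,r_+)$.

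The analytic core is to upgrade the convergence of $\varpi$, $Q$, $r$ from $\mathcal{H}^+$ to a two-dimensional neighborhood of $i^+$ lying to the future of $\mathcal{H}^+$. Here I would invoke Price's law: granting the decay of Conjecture~\ref{conj} (equivalently Theorem~\ref{thm:price} when $\e=\m^2=0$) for $|r\phi|+|r\D_v\phi|$ along $\mathcal{H}^+$, one propagates these bounds off the horizon by integrating the Klein--Gordon equation~\eqref{eqn:kg} together with the two Raychaudhuri equations and the charge-transport equation~\eqref{max2}, exploiting the red-shift at $\mathcal{H}^+$ as the favorable direction of integration. This forces the matter and charge fluxes through the neighborhood to be small and hence $\varpi\to\varpi_+$, $Q\to Q_+$ and $\mu\to 1-2\varpi_+/r+Q_+^2/r^2$ uniformly there. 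Consequently, for any $r$ with $r_-<r<r_+$, all spheres sufficiently close to $i^+$ with that area-radius satisfy $\mu<0$ and are therefore trapped; taking such spheres just inside $\mathcal{H}^+$ along a sequence approaching $i^+$ yields trapped surfaces accumulating at $i^+$, so by continuity $\mathcal{A}$ has a limit point at $i^+$.

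The remaining conclusions are then geometric. The identification $\mathcal{A}'\cap\mathcal{U}=\mathcal{A}\cap\mathcal{U}$ follows from the definition of the outermost apparent horizon (footnote~\ref{foot:outer}): every $p\in\mathcal{A}\cap\mathcal{U}$ has a past-directed ingoing null segment crossing $\mathcal{H}^+$ into the regular region $\mathcal{R}$, and as $\mathcal{H}^+\subset \overline{J^-(\mathcal{I}^+)}$ this segment meets $\mathcal{R}\cap J^-(\mathcal{I}^+)$, so $p\in\mathcal{A}'$; the uniform bound $\mu<0$ between $\mathcal{H}^+$ and $\mathcal{A}$ in $\mathcal{U}$ excludes detached inner components. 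Finally $I^+(\mathcal{A}\cap\mathcal{U})\cap\Q^+\subset\mathcal{T}$ combines the outgoing-null monotonicity of Statement~IV.2 (once in $\mathcal{T}$, one stays in $\mathcal{T}$ to the future) with the interior estimates of Theorem~\ref{thm:c0strong}: the spacetime remains uniformly close to the sub-extremal Reissner--Nordstr\"om interior between the limiting horizon and $\mathcal{CH}_{i^+}$, where $1-2m/r<0$ holds throughout.

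The main obstacle is the dependence on Price's law, which in the charged or massive regime ($\e\neq 0$ or $\m^2\neq 0$) is itself open (Conjecture~\ref{conj}). Even granting the sharp decay along $\mathcal{H}^+$, the genuinely new difficulty relative to Dafermos's setting is that for $\e\neq 0$ the charge $Q$ is no longer globally constant but is sourced dynamically through~\eqref{max2} (cf.\ Conjecture~\ref{conj:charge}), so one must separately prove that $Q_+$ is well-defined and, for the sub-extremality comparison to be meaningful, non-vanishing. Controlling $Q$ and $\varpi$ simultaneously throughout a two-dimensional neighborhood of $i^+$---rather than merely along the one-dimensional horizon---via red-shift energy estimates that must survive as one approaches $\mathcal{CH}_{i^+}$ is the quantitatively delicate heart of the argument.
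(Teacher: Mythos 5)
There is no proof in the paper to compare yours against: the statement is Conjecture \ref{conj:outermost}, posed as an open problem. The paper supports it only by analogy with the two-ended result of Dafermos (Theorem \ref{thm:trapped_i+}, \cite{MD05c}) and by the announcement in \S\ref{sec:intro/forth} that a one-ended analogue appears in \cite{JK10c}, where it is proved under decay assumptions along $\mathcal{H}^+$ compatible with Conjecture \ref{conj}. Your proposal is exactly this program, and as a program it is the natural one. But it is a program, not a proof. Its core is conditional on inputs that are themselves open conjectures in the paper: Price's law decay (Conjecture \ref{conj}, open whenever $\e\neq 0$ or $\m^2\neq 0$, which is precisely the regime in which a one-ended black hole can carry charge) and the control of the asymptotic charge (cf.\ Conjecture \ref{conj:charge}), which for $\e\neq 0$ is dynamically sourced through (\ref{max2}) rather than constant. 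The genuinely hard analytic step --- propagating the horizon decay into a full two-dimensional neighborhood of $i^+$ while simultaneously controlling $\varpi$ and $Q$ there --- is described but never carried out; likewise, your final geometric step borrows the interior estimates of Theorem \ref{thm:c0strong}, which is a two-ended result whose one-ended extension is part of what is being conjectured. You say as much yourself in your last paragraph, which is honest, but it means the proposal establishes nothing beyond what the paper already asserts without proof.

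One concrete slip is worth flagging even at the level of the sketch. To identify $\mathcal{A}'\cap\mathcal{U}$ with $\mathcal{A}\cap\mathcal{U}$ you invoke ``the uniform bound $\mu<0$ between $\mathcal{H}^+$ and $\mathcal{A}$ in $\mathcal{U}$.'' That is backwards. By the definition of the outermost apparent horizon (footnote \ref{foot:outer}), a point $p\in\mathcal{A}$ belongs to $\mathcal{A}'$ only if its past-directed ingoing null segment lies in the regular region $\mathcal{R}$, i.e.\ precisely where $\mu=1-\frac{2m}{r}>0$; the set where $\mu<0$ is the trapped region, which in the conjectured picture lies to the \emph{future} of $\mathcal{A}\cap\mathcal{U}$. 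If $\mu<0$ held between $\mathcal{H}^+$ and $\mathcal{A}$, those spheres would be trapped and the identification would fail. What the argument actually requires --- and what Dafermos proves quantitatively in the two-ended case --- is the ordering statement that ingoing segments from $\mathcal{H}^+$ near $i^+$ remain in $\mathcal{R}$ until they first meet $\mathcal{A}$ and lie in $\mathcal{T}$ thereafter; this ordering is an output of the estimates, not an input one may assume.
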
 

By Statement IV of Theorem \ref{thm:main}, Conjecture \ref{conj:outermost}, in particular, implies that $r$ extends continuously to $\mathcal{CH}_{i^+}$ in a sufficiently small neighborhood of $i^+$.

If, in addition to Conjecture \ref{conj:outermost},  Conjecture \ref{conj:trapped_surface_conj_christo} holds, then $\mathcal{B}^+\backslash\left( i^{+}\cup \mathcal{I}^+\cup i^0\right)$ is always `preceded' by a trapped region.  This scenario should be compared with the assumptions of the trapped surface conjecture given by Christodoulou in \cite{DC98}.  We see that, in the terminology of \cite{DC98}, under Conjecture \ref{conj:outermost}, the terminal indecomposable past sets $I^-(p)\cap \Q^+$ for $p\in \mathcal{S}_{i^+}\cup \mathcal{CH}_{i^+}$ would correspond to sets whose trace on $\Sigma$ do \emph{not} have compact closure, but which would nonetheless satisfy Christodoulou's condition for containing a trapped surface.

\subsubsection{Cauchy horizon conjectures}\label{sec:intro/conj/strong}
In light of the results discussed in \S \ref{sec:intro/cd/d}, it seems reasonable to conjecture the following.

\begin{conj}[\textbf{Continuous Extendibility Conjecture}]\label{conj:nonemptyext} For the development of initial data as in Theorem \ref{thm:main}, if $\Q^+\backslash J^-(\mathcal{I}^+)\neq \emptyset$, the black hole is `sub-extremal in the limit', and the asymptotic charge $Q_+\neq 0$, then $\mathcal{CH}_{i^+}\neq \emptyset$ and the solution is continuously extendible beyond $\mathcal{CH}_{i^+}$.
\end{conj}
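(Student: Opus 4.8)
The plan is to adapt the strategy by which Dafermos established Theorem \ref{thm:c0strong} in the topological (uncharged) model to the genuinely charged Einstein-Maxwell-Klein-Gordon system, working throughout in a double-null gauge $(u,v)$ on the black hole interior $\Q^+ \setminus J^-(\mathcal{I}^+)$. First I would set up a characteristic initial value problem in a rectangle $\mathcal{D}$ whose past-left boundary is the event horizon $\mathcal{H}^+$ (an outgoing null segment) and past-right boundary is an ingoing null segment, with future corner at $i^+$; the goal is to show that the solution extends continuously across a future null boundary $\mathcal{CH}_{i^+}$ on which $r$ tends to a strictly positive limit $r_- > 0$. The dynamical quantities are $r$, the renormalized Hawking mass $\varpi = m + \tfrac{Q^2}{2r}$, the charge aspect $Q$, and the charged scalar field $\phi$ together with its gauge derivatives $\D_u\phi$, $\D_v\phi$; their evolution is governed by the gauge-covariant wave equation (\ref{eqn:kg}), by Raychaudhuri's equation, and by the Maxwell equation (\ref{max2}).

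The essential external input is a decay estimate for the scalar field along $\mathcal{H}^+$. Here I would invoke the conjectural Price's law, Conjecture \ref{conj}, in the quantitative form $|r\phi| + |r\D_v\phi| \le C_\epsilon\, v^{-p+\epsilon}$ along $\mathcal{H}^+$ for the appropriate charged-and-massive decay exponent $p = p(\e,\m^2)$. Given such decay, the first genuinely new step is to control the charge. Unlike the topological setting of Theorem \ref{thm:c0strong}, where $Q$ is globally constant, the Maxwell equation (\ref{max2}) carries a source $2\pi\e\ii\,(\phi\,(\D_\mu\phi)^{\dagger} - \phi^{\dagger}\D_\mu\phi)$ quadratic in $\phi$ and $\D\phi$, so $Q$ evolves. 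I would integrate (\ref{max2}) along ingoing rays and use the decay of $\phi$ to show that $Q$ converges, as $i^+$ is approached, to a well-defined asymptotic charge $Q_+$, and that the hypotheses $Q_+ \neq 0$ and sub-extremality $|Q_+| < \varpi_+$ persist into a full null neighborhood of $\mathcal{CH}_{i^+}$.

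With sub-extremality in hand, the RN-type mass function $f(r) = 1 - \tfrac{2\varpi_+}{r} + \tfrac{Q_+^2}{r^2}$ has two distinct positive roots $r_- < r_+$, and the core of the argument is to propagate a priori bounds showing that $r$ remains in a compact subinterval of $(r_-, r_+)$ throughout $\mathcal{D}$, that $\varpi$ and $Q$ stay bounded, and---crucially---that the metric coefficient $\Omega^2$ in the null gauge extends continuously and is bounded above and below by positive constants up to $\mathcal{CH}_{i^+}$. The blue-shift near the inner horizon, governed by the surface gravity $\kappa_- = \tfrac{1}{2}|f'(r_-)|$, amplifies ingoing radiation; one must verify that the polynomial decay rate $p$ renders the relevant fluxes $\int |\D_v(r\phi)|^2 \,\dd v$ integrable (which holds since $p \ge 1 > \tfrac12$ in the massless charged case and $p = 5/6 > \tfrac12$ in the neutral massive case) and that, in the chosen gauge, $\Omega^2$ neither degenerates nor blows up despite mass inflation. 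Continuous extension of $\phi$ and $F_{\mu\nu}$ then follows from uniform control of their first derivatives in this gauge, yielding the $C^0$ extension and hence $\mathcal{CH}_{i^+} \neq \emptyset$.

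The main obstacle is the dynamical charge, compounded by the gauge coupling $\D = \dd + \e\ii A$, which destroys the pointwise monotonicity in $Q$ available when $\e = 0$; consequently the charge-convergence and the $\Omega^2$ estimates cannot be closed by the sharp monotonicity used in the topological case and must instead be obtained through a delicate bootstrap/continuity argument in which the evolving $Q$, the evolving $\varpi$, and the gauge potential are all estimated simultaneously. A second, equally serious difficulty is that the decay input itself---Conjecture \ref{conj}---is an unproven major open problem whose charged and massive exponents are genuinely slower than the rate $p=3$ that Dafermos and Rodnianski established and exploited in Theorem \ref{thm:price}; making the present conjecture unconditional would require first resolving the sharp late-time asymptotics of a \emph{charged} (and possibly massive) scalar field along the horizon, including the subtle flat-spacetime scattering contribution to the leading-order decay.
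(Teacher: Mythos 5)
There is a genuine gap here, and it is structural rather than a matter of detail: the statement you set out to prove is not a theorem of this paper at all. It is stated as Conjecture \ref{conj:nonemptyext} and left open; the paper contains no proof to compare yours against. Your proposal is, in outline, exactly the strategy the paper itself announces in \S\ref{sec:intro/forth} for \emph{forthcoming} work --- extend Dafermos's Cauchy horizon stability result (Theorem \ref{thm:c0strong}) from the two-ended topological-charge model to the genuinely charged Einstein-Maxwell-Klein-Gordon system, taking as input decay of $\phi$ along $\mathcal{H}^+$ compatible with Conjecture \ref{conj} --- so as a research program it is well aimed. But it does not prove the conjecture, for two distinct reasons.

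First, your argument is conditional on Conjecture \ref{conj} (the charged/massive Price's law), which is itself a major unproven open problem, as you concede in your final paragraph; Theorem \ref{thm:price} covers only $\m^2=\e=\imp=0$, and the conjecture as stated carries no decay hypothesis, so a genuine proof must either establish that decay or circumvent it. Even the paper's announced forthcoming result is weaker than Conjecture \ref{conj:nonemptyext}: it \emph{imposes} horizon decay compatible with Conjecture \ref{conj} as a hypothesis. Second, even granting the decay input, the analytic core of your sketch --- convergence of the dynamical charge $Q$ to a limit $Q_+$ along $\mathcal{H}^+$, persistence of sub-extremality into a null neighborhood of $i^+$, and the bootstrap that keeps $r$ pinned between the Reissner-Nordstr\"om roots while controlling $\Omega^2$ above and below up to $\mathcal{CH}_{i^+}$, all in the presence of the gauge coupling $\D = \dd + \e\ii A$ that destroys the monotonicity of $Q$ available when $\e=0$ --- is precisely where the entire difficulty of the problem lives. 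You describe these steps and correctly identify them as delicate, but you do not carry any of them out; asserting that a bootstrap ``can be closed'' is not closing it. What you have written is a plausible and well-informed plan, consistent with the paper's own stated intentions, but it leaves the conjecture exactly as open as the paper leaves it.
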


If the set of initial data for which the assumptions of Conjecture \ref{conj:nonemptyext} hold has non-empty interior, then the
 $C^0$-formulation of Conjecture \ref{conj:strong} is \emph{false}! Not all hope is lost for the fate of strong cosmic censorship, though.  For, if Conjecture \ref{conj:trapped_surface_conj_christo} can be established, then the $C^2$-formulation of Conjecture \ref{conj:strong} reduces to showing that the curvature blows up along $\mathcal{CH}_{i^+}$. 
 
 Because it is expected that a complex-valued scalar field will have late-time oscillatory behavior along $\mathcal{H}^+$ (suggested heuristically in \cite{SHTP98}), it seems unlikely, as a result, that the lower bound (\ref{lowerbound}) of Theorem \ref{thm:mass} will hold.  Heuristics, nonetheless, suggest that `mass inflation' still occurs and it is reasonable to expect that the conclusion of Theorem \ref{thm:mass} is true.  We, therefore, state

\begin{conj}[\textbf{$\mathcal{CH}_{i^+}$ Curvature Blow-up Conjecture}]\label{conj:strong/EMKG} Among all the data admissible from Theorem \ref{thm:main}, there exists a generic sub-class for which if the maximal future development has $\mathcal{Q}^+\backslash J^-(\mathcal{I^+})\neq \emptyset$, then the curvature blows up along $\mathcal{CH}_{i^+}$.
\end{conj}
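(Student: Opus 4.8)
The plan is to follow the \emph{mass-inflation} strategy underlying Theorem \ref{thm:mass}, but to replace the pointwise lower bound (\ref{lowerbound}) — which cannot survive the late-time oscillatory behavior of a charged complex scalar field — with an integrated, energy-type lower bound along $\mathcal{H}^+$, and to propagate this into divergence of the renormalized Hawking mass $\varpi = m + Q^2/(2r)$ at $\mathcal{CH}_{i^+}$. The genericity asserted in the statement is to be located entirely in the behavior of $\phi$ on $\mathcal{H}^+$: for a generic sub-class one expects both the conjectural Price upper bound of Conjecture \ref{conj} and a matching lower bound, the latter phrased so as to be insensitive to the oscillations; I will refer to this as the \emph{horizon-flux input}.

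First I would fix a black-hole solution with $\Q^+\backslash J^-(\mathcal{I}^+)\neq\emptyset$ that is `sub-extremal in the limit' with asymptotic charge $Q_+\neq 0$; by Conjecture \ref{conj:nonemptyext} — whose hypotheses are generic in view of Conjectures \ref{conj:subsextremal} and \ref{conj:charge} — such a solution has $\mathcal{CH}_{i^+}\neq\emptyset$, along which $r\to r_-\in(0,\infty)$ by the monotonicity of Statement II of Theorem \ref{thm:main} (away from the future endpoint). I would then pose the characteristic initial value problem in a neighborhood of $i^+$ in the interior $J^-(i^+)$, with data on a segment of $\mathcal{H}^+$ (governed by the horizon-flux input) and on a transversal ingoing ray, using the monotonicity of Theorem \ref{thm:main} — Raychaudhuri, the trapped-region structure of Statement IV, and the lower bound $r\geq r_-$ — to control the geometry throughout.

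The analysis then splits into a red-shift region adjacent to $\mathcal{H}^+$, where the red-shift effect propagates the horizon decay into the interior and drives the solution toward an inner-horizon Reissner-Nordstr\"om geometry of surface gravity $\kappa_-$, and a blue-shift region adjacent to $\mathcal{CH}_{i^+}$, where the ingoing radiation that crossed $\mathcal{H}^+$ is amplified by a factor growing exponentially in advanced time at the rate $\kappa_-$. Deriving the evolution of $\varpi$ — which, unlike $\phi$ itself, is driven by the non-negative energy flux $r^2|\D_v\phi|^2$ — I would show that the product of this blue-shift amplification with the polynomially-bounded-below horizon flux forces $\int e^{2\kappa_- v}\, r^2|\D_v\phi|^2\, \dd v = \infty$, whence $\varpi\to\infty$ along $\mathcal{CH}_{i^+}$. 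Because the driving quantity is a \emph{squared} flux rather than $\phi$ itself, the oscillatory cancellations that defeat any pointwise lower bound do not occur. Finally, in spherical symmetry the Kretschmann scalar contains a term $\sim\varpi^2/r^6$, so $\varpi\to\infty$ with $r$ bounded below by $r_->0$ forces $R_{\mu\nu\alpha\beta}R^{\mu\nu\alpha\beta}\to\infty$ on $\mathcal{CH}_{i^+}$; being a curvature invariant, its blow-up obstructs any $C^2$-extension across $\mathcal{CH}_{i^+}$, which — together with the generic emptiness of $\chg$ from Conjecture \ref{conj:trapped_surface_conj_christo} and Statement VII of Theorem \ref{thm:main} — is exactly the step needed to recover the $C^2$-formulation of Conjecture \ref{conj:strong}.

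The hard part will be the horizon-flux input together with the concluding estimate of the blue-shift analysis: establishing a horizon-flux lower bound that is simultaneously \emph{generic} and robust against oscillation, and then verifying that the oscillatory phase of the charged field ($\e\neq 0$) does not conspire — through resonance with the blue-shift frequency or through the precise coefficients in the $\varpi$-equation — to produce delicate cancellations in the amplified integral. For the real massless field Dafermos's monotone pointwise argument suffices; here one must instead work in an energy/dyadic framework and control the coupling between the Maxwell charge-transport equation (\ref{max2}) and the wave equation (\ref{eqn:kg}), whose cross terms are precisely the novelty of the charged model. Even granting the (itself open) sharp Price asymptotics of Conjecture \ref{conj}, extracting an unconditional lower bound for \emph{generic} data — rather than merely a generic upper bound — remains the central obstacle, exactly as in the uncharged case, where only the sufficient condition (\ref{lowerbound}) is known and the genericity of its hypothesis is itself conjectural.
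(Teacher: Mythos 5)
You have not been asked to prove a theorem here: the statement is Conjecture~\ref{conj:strong/EMKG}, and the paper offers no proof of it. It is posed as an open problem precisely because the only rigorous mechanism available, Theorem~\ref{thm:mass}, hinges on the pointwise lower bound (\ref{lowerbound}), which is not expected to survive the late-time oscillations of a charged (complex) scalar field; the paper's forthcoming results (\S\ref{sec:intro/forth}) extend the \emph{stability/continuous-extendibility} side (Theorem~\ref{thm:c0strong}) to the charged model, not the blow-up side. So there is no proof in the paper to compare yours against; what you have written is a research program, and to your credit you flag its hard parts yourself.

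As a proof, however, it has genuine gaps, and they are exactly the open problems the paper isolates. First, your argument is conditional on Conjectures~\ref{conj:nonemptyext}, \ref{conj:subsextremal}, and \ref{conj:charge} (and on Conjecture~\ref{conj:trapped_surface_conj_christo} for the final reduction of strong cosmic censorship), none of which is proven; this mirrors the paper's own web of implications in \S\ref{sec:intro/conj/web}, but it means the argument cannot close unconditionally. Second, the decisive ingredient --- your ``horizon-flux input,'' i.e.\ a \emph{generic} lower bound on the scalar-field flux along $\mathcal{H}^+$ --- is open even in the uncharged case: Corollary~\ref{cor:d_strong} is conditional on precisely such a genericity statement for (\ref{lowerbound}), and no one has established it; replacing the pointwise bound by an integrated one does not remove this obstruction, it only reformulates it. Third, the blue-shift step --- that polynomially-bounded-below horizon flux forces $\int e^{2\kappa_- v}\, r^2\left|\D_v\phi\right|^2\, \dd v = \infty$ and hence $\varpi\to\infty$ on $\mathcal{CH}_{i^+}$ --- is asserted rather than derived: Dafermos's proof of Theorem~\ref{thm:mass} exploits monotonicity tied to the real field, whereas for $\e\neq 0$ the gauge coupling through $A_u$, the charge-transport equations (\ref{betaa})--(\ref{beta}), and the wave equation (\ref{eqn:kg}) produce cross terms with no established analogue, which is the very reason the paper demotes this statement from theorem to conjecture. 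In short: your sketch is consistent with the heuristics the paper cites and is a sensible plan of attack, but each of its three pillars is itself an unsolved problem, so it does not constitute a proof.
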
  

Although not relevant to weak and strong cosmic censorship, one might expect that null boundary components on which $r=0$, if they occur at all, to be unstable.  As such, the following conjecture seems reasonable.

\begin{conj}[\textbf{Spacelike $r=0$ Singularity Conjecture}]\label{conj:wandstrong}Among all the data admissible from Theorem \ref{thm:main}, there exists a generic sub-class for which if the maximal future development has $\mathcal{Q}^+\backslash J^-(\mathcal{I^+})\neq \emptyset$, then $\sgo\cup\sgt\cup \mathcal{S}_{i^+}=\emptyset$, $\mathcal{S}\neq \emptyset$, and $\mathcal{S}$ is $C^1$-spacelike.
\end{conj}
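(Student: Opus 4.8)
The plan is to run, inside the single EMKG framework of Theorem~\ref{thm:main}, the two genericity mechanisms already understood in the special models of \S\ref{sec:intro/cd}: the central blue-shift trapped-surface-formation argument of Christodoulou (Theorems~\ref{thm:christo_trap_form} and~\ref{thm:christo}) to eliminate the central null $r=0$ components, and the Cauchy-horizon analysis of Dafermos (Theorems~\ref{thm:c0strong} and~\ref{thm:trapped_i+}), together with its one-end counterpart \cite{JK10c}, to control the structure near $i^+$. All three assertions live downstream of the trapped-region geometry of Statement~IV and the mass monotonicity of Statement~V, so the spine of the argument is to locate the apparent horizon $\mathcal{A}$ precisely and then exploit the focusing governed by Raychaudhuri; genericity itself must be organized as a positive-codimension statement in a suitable function space, in analogy with Theorem~\ref{thm:christo}.

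First I would show $\sgo\cup\sgt=\emptyset$ generically by forcing $\mathcal{A}$ to acquire a limit point on $b_{\Gamma}$. Following Christodoulou, the infinite blue-shift along the ingoing ray focusing at $b_{\Gamma}$ serves as the linear instability: one constructs a one-parameter family of data agreeing with the given solution to the past of $b_{\Gamma}$ and drives the dimensionless quantities $\delta_0,\eta_0$ into the trapped-surface regime of Theorem~\ref{thm:christo_trap_form}. Once $\mathcal{A}\neq\emptyset$ with a limit point on $b_{\Gamma}\neq i^+$, Statement~IV.5d gives $\chg=\emptyset$, and the special monotonicity $\partial_u\partial_v r<0$ in the trapped region collapses the entire central null chain, so that the interior terminates in a spacelike $r=0$ front rather than a central null singularity. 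The new difficulty relative to the real massless field is precisely this monotonicity: the sign of $\partial_u\partial_v r$ is governed by the evolution equation~(\ref{eqn:ruv}) and is no longer definite once the charge current~(\ref{max2}) and the mass term $\m^2\phi\phi^{\dagger}$ enter, so one must recover an effective version of $\partial_u\partial_v r<0$ in the trapped interior, or replace it outright.

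Next, for $\mathcal{S}_{i^+}=\emptyset$ I would pass to the Cauchy-horizon side. Under the generic hypotheses that the black hole is sub-extremal in the limit with $Q_+\neq 0$ (Conjectures~\ref{conj:subsextremal} and~\ref{conj:charge}), the one-end analogue \cite{JK10c} of Theorem~\ref{thm:trapped_i+}, fed into a Dafermos-type argument, should yield $\mathcal{CH}_{i^+}\neq\emptyset$ with $r$ bounded away from zero along it; the task is then to propagate this strictly positive lower bound up to the future endpoint of $\mathcal{CH}_{i^+}$, thereby excluding any subsequent $r=0$ null segment $\mathcal{S}_{i^+}$. With the null $r=0$ pieces removed, $\mathcal{BH}\neq\emptyset$ together with generic trapped-surface formation (Conjecture~\ref{conj:trappedsurfaceconj}) gives $\mathcal{T}\neq\emptyset$, and I would argue that the fully nonlinear interior focusing drives $r\to 0$ along a spacelike front, yielding $\mathcal{S}\neq\emptyset$. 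For the $C^1$-spacelike regularity one uses that in $\mathcal{T}$ both $\partial_u r<0$ and $\partial_v r<0$, so the level sets $\{r=\text{const}\}$ are $C^1$ curves with $dv/du=-\partial_u r/\partial_v r<0$, hence spacelike; achronality and non-intersection with the null rays from $b_{\Gamma}$ and $i^{\square}$ are already supplied by Theorem~\ref{thm:main}, and one upgrades to $C^1$ up to $\mathcal{S}$ by passing to the limit $r\to 0$ with the $r^{-4}$ Kretschmann blow-up of Statement~VII.1 controlling the conformal factor.

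The hard part will be establishing $\mathcal{S}\neq\emptyset$ — that generic charged collapse produces an honest $r=0$ spacelike singularity rather than concealing everything behind $\mathcal{CH}_{i^+}$, as exact Reissner-Nordstr\"om does — since this demands global control of the nonlinear interior evolution past any Cauchy horizon. There the two simplifications exploited in \S\ref{sec:intro/cd} fail simultaneously: the special monotonicity of the real field is unavailable, and the expected late-time oscillation of the complex charged field along $\mathcal{H}^+$, which already obstructs the clean lower bound~(\ref{lowerbound}) behind Theorem~\ref{thm:mass}, undermines the sharp decay and blow-up estimates that a quantitative description of the trapped interior would require.
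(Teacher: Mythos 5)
The statement you are addressing is not a theorem of the paper: it is Conjecture \ref{conj:wandstrong}, which the paper explicitly leaves open (and whose resolution is only deferred to conjectured mechanisms and forthcoming work). There is therefore no proof in the paper to compare against, and your proposal does not supply one either --- it is a research program whose load-bearing steps are precisely the open problems that led the author to state this as a conjecture. Concretely: (i) your first step invokes Christodoulou's blue-shift/trapped-surface mechanism (Theorems \ref{thm:christo_trap_form} and \ref{thm:christo}) to kill $\sgo\cup\sgt$, but, as you yourself concede, both the instability argument and the subsequent collapse of the central null chain rest on the special monotonicity $\pu\pv r<0$, which is available only when $\m^2=\e=\imp=F_{\mu\nu}=0$ (cf.~the discussion following Theorem \ref{thm:christo_trap_form}); saying one must ``recover an effective version \ldots or replace it outright'' names the obstruction without overcoming it, so nothing is established. (ii) Your second step is conditional on Conjectures \ref{conj:subsextremal}, \ref{conj:charge} and \ref{conj:trappedsurfaceconj}, on the announced but not yet available work \cite{JK10c}, and on a charged analogue of the Dafermos interior analysis whose key quantitative input --- a lower bound of the type (\ref{lowerbound}) --- you concede is expected to \emph{fail} because of late-time oscillation of the complex field. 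A proof conditional on other open conjectures is not a proof of the statement as posed. (iii) The assertion $\mathcal{S}\neq\emptyset$, which you correctly identify as the hard part, is left entirely unaddressed; note that any viable mechanism must use the one-ended topology in an essential way, since Dafermos \cite{MD12} shows the corresponding statement is \emph{false} for two-ended data, so an appeal to ``fully nonlinear interior focusing'' cannot work as a topology-independent heuristic.

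On a technical point, your route to the $C^1$-spacelike regularity of $\mathcal{S}$ also fails as stated. That level sets $\{r=\mathrm{const}\}$ are spacelike curves in $\mathcal{T}$ (where $\pu r<0$ and $\pv r<0$) is standard, but $\mathcal{S}$ is a boundary limit of such sets, and nothing in Theorem \ref{thm:main} guarantees that the limit curve is $C^1$, nor even that it is approached by level sets in a controlled fashion; in the model of \S\ref{sec:intro/cd/c} the $C^1$-spacelike property (item 3 there) is again a consequence of the extra monotonicity particular to the real massless field, not of the trapped-region signs alone. Moreover, your appeal to Statement VII.1 is a non sequitur: blow-up of the Kretschmann scalar at rate $r^{-4}$ gives no control whatsoever on the conformal factor $\Omega$ or on the regularity of the limiting curve. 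In short, your proposal maps the landscape of what a proof would require --- and in this respect it is consistent with how the paper situates the conjecture among Theorems \ref{thm:christo}, \ref{thm:c0strong} and \ref{thm:mass} --- but each of its three pillars is either an open problem or conditional on other conjectures of the paper, so it establishes none of $\sgo\cup\sgt\cup\mathcal{S}_{i^+}=\emptyset$, $\mathcal{S}\neq\emptyset$, or the $C^1$-spacelike property.
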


We should remark that although black hole solutions without a spacelike singularity, as depicted in diagram IX below, do not serve (in view of Conjecture \ref{conj:strong/EMKG}) as counter-examples to weak and strong cosmic censorship, it is reasonable to conjecture, as above, that they would be ruled out, nonetheless, by genericity, hence we have included the statement $\mathcal{S}\neq \emptyset$ in the above Conjecture.

It may be worthwhile to note that Conjecture \ref{conj:wandstrong} is \emph{false} in the case of two-ended data, as shown by Dafermos \cite{MD12}. 

\setcounter{subfigure}{8}
\begin{figure}[h!]
  \centering
   \subfloat[\scriptsize black hole with no spacelike singularity]{\includegraphics[scale=1.5]{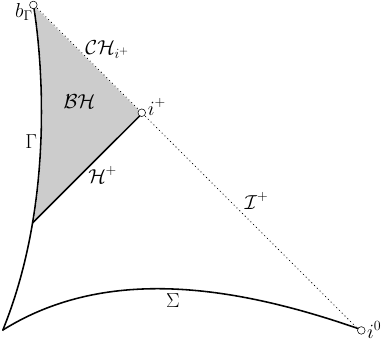}}  
\end{figure}
\vspace{-.35cm}
\pagebreak
\subsubsection{Web of implications: cosmic censorship}\label{sec:intro/conj/web}

For convenience and clarity, we collect the various conjectures and their implications in regards to weak and strong cosmic censorship below.

\begin{center}
\includegraphics[scale=.95]{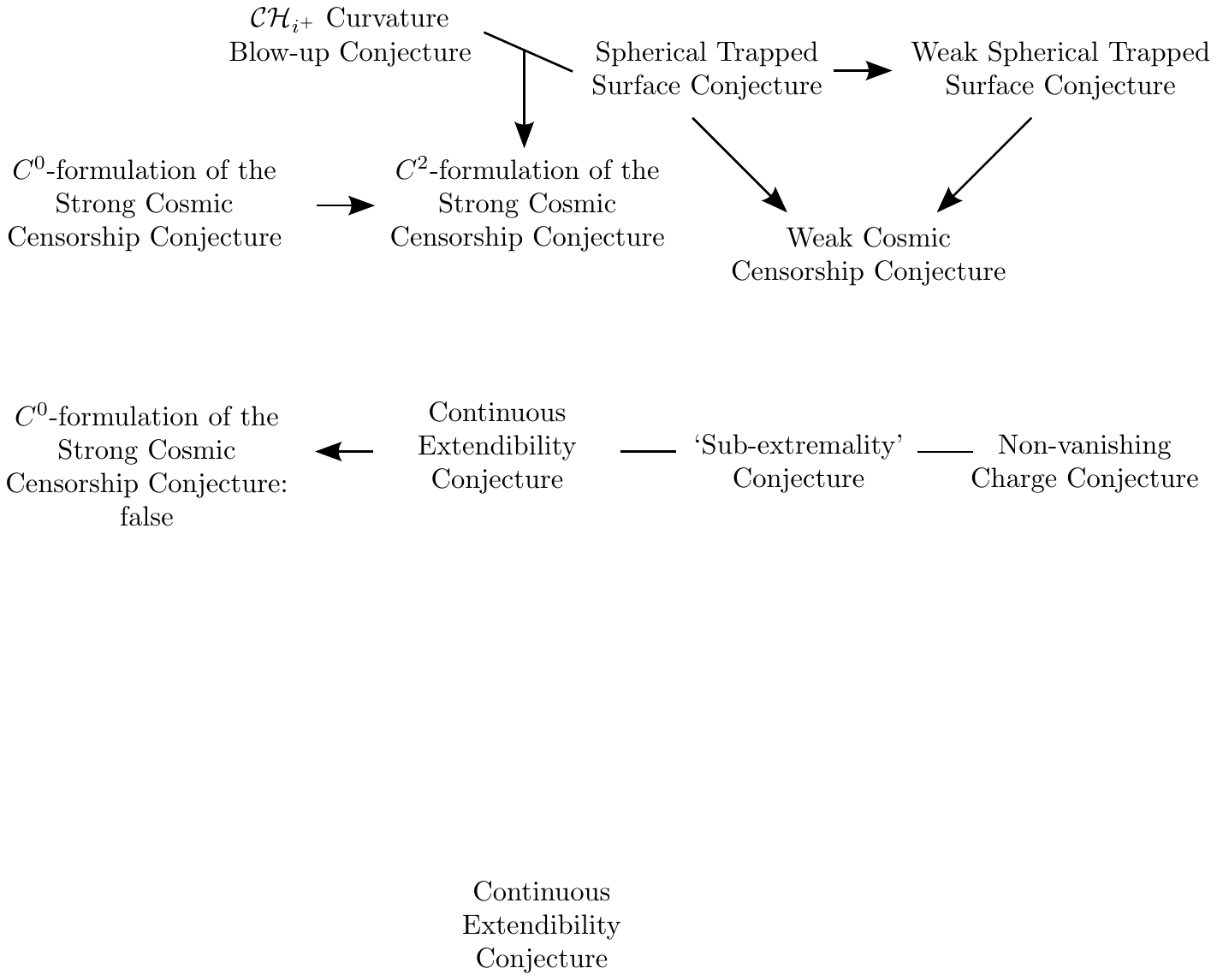}\vspace{.25cm}
\vspace{-3mm}
\includegraphics[scale=.95]{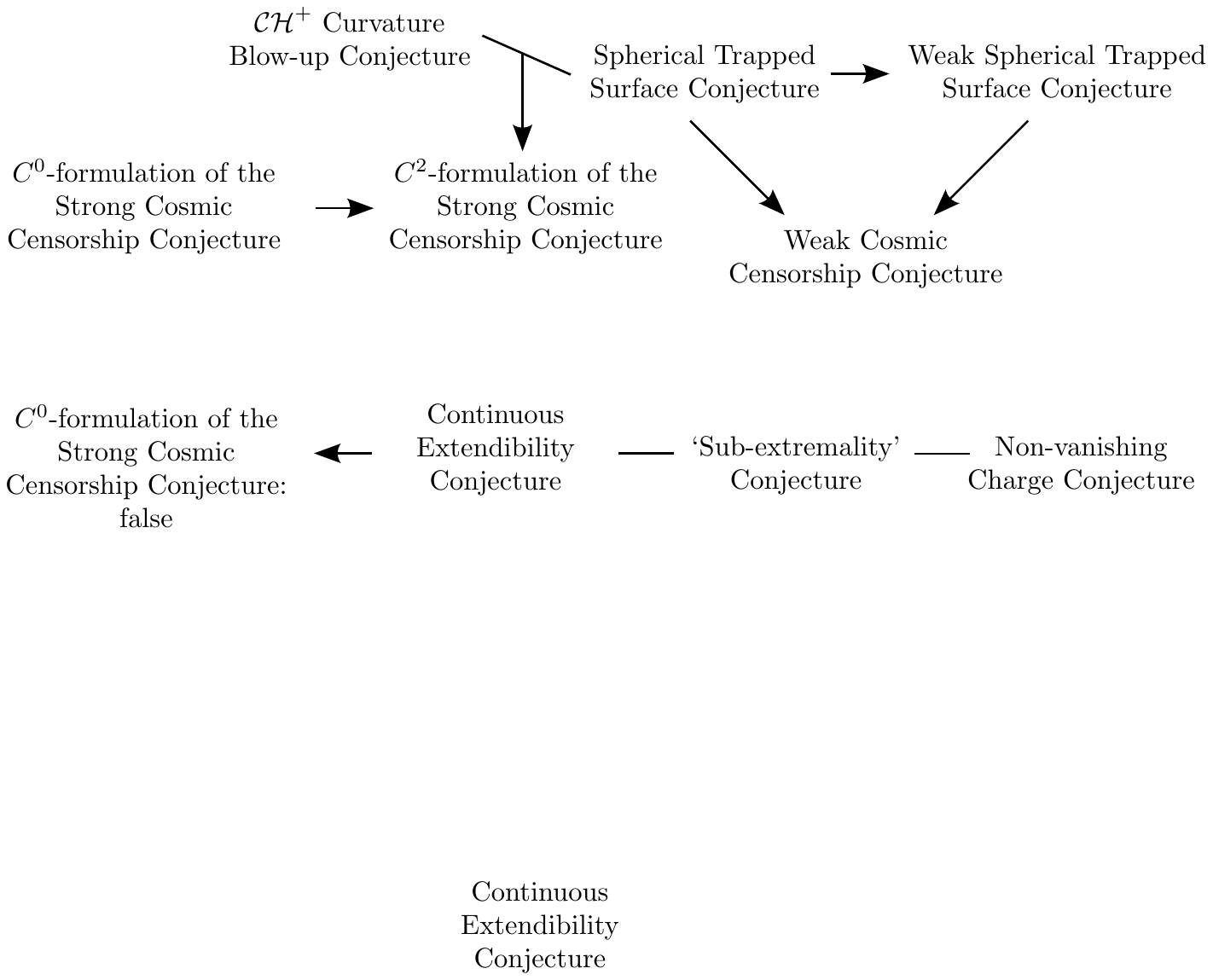}
\end{center}
\vspace{-2.9mm}\hspace{27.7mm}\footnote{In showing that the $C^0$-formulation of strong cosmic censorship is \emph{false}, it suffices to show a weaker version of Conjectures \ref{conj:subsextremal} and \ref{conj:charge}, for the statements of the latter would only need to hold on an open set of solutions not for generic solutions. }

\subsubsection{Generic Einstein-Maxwell-Klein-Gordon spacetimes}\label{sec:intro/conj/generic}
We wish to conclude this section with a summarized description of \emph{generic} spherically symmetric Einstein-Maxwell-Klein-Gordon spacetimes, as would follow from a positive resolution to Conjectures \ref{conj:subsextremal}--\ref{conj:wandstrong}. The resulting two classes of generic solutions, which we shall call the black hole case and the non-black hole case, have Penrose diagrams as depicted below.  
\setcounter{subfigure}{9}
\begin{figure}[h!]
\centering
\subfloat[\scriptsize black hole with a non-`first singularity'-emanating Cauchy horizon]{\includegraphics[scale=1]{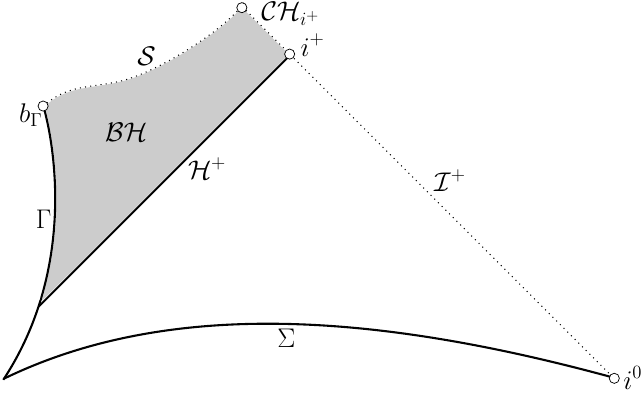}}
\quad\quad
\subfloat[\scriptsize dispersive or star-like]{\includegraphics[scale=.95]{fig_I.pdf}}
\end{figure}

\subsubsection*{The black hole case}
Conjectured generic Einstein-Maxwell-Klein-Gordon black hole solutions would have Penrose diagram depicted in diagram X and would have the following properties:

\begin{itemize}
\item [1.] the black hole is `sub-extremal in the limit'; 
\item [2.] the asymptotic charge $Q_+$ is well-defined and $Q_+\neq 0$;
\item [3.] in a neighborhood of $i^+$ in $J^-(\mathcal{I}^+)$ the spacetime asymptotically approaches Reissner-Nordstr\"om at a rate given by Price's law;
\item [4.] $\mathcal{S}\neq \emptyset$ (hence $\mathcal{A}\neq\emptyset$) and $\mathcal{S}$ is $C^1$-spacelike;
\item [5.] $\mathcal{A}$ has limit points on $b_{\Gamma}$ and $i^+$;
\item [6.] $\sgo\cup\sgt\cup\chg\cup \mathcal{S}_{i^+} = \emptyset$; and,
\item [7.] $\mathcal{CH}_{i^+}\neq \emptyset$ and the curvature blows up on $\mathcal{CH}_{i^+}$.
\end{itemize}  
\subsubsection*{The non-black hole case}
Conjectured generic Einstein-Maxwell-Klein-Gordon non-black hole solutions would have Penrose diagram as depicted in diagram XI and are of two possible types, one of which we shall call dispersive and the other `star-like'.

In the case $\m^2=0$, it is expected, as in the model of Christodoulou, that all non-black hole spacetimes will have vanishing final Bondi mass $M_f$.  These solutions will be called dispersive.  By the results of Chae \cite{DChae03}, there is an `open' set of initial data containing trivial data whose developments are dispersive.

On the other hand, when $\m^2\neq0$, the Einstein-Maxwell-Klein-Gordon system can, in general, admit charged (boson) star solutions \cite{JVDB89, FSEM03}, i.e.,~when $\m^2\neq 0$, the scalar field may not be `dispersive'; we shall call such solutions `star-like'.  Let us also note, however, that the final Bondi mass $M_f$ is not a suitable measure of dispersive phenomena, as massive scalar fields do not radiate to $\mathcal{I}^+$ (cf.~\cite{AH93}).  

The rich possibility of solutions in the $\m^2\neq 0$ case will complicate the analysis of non-black hole solutions, but its scope lies outside the context of this paper. Indeed, one may formulate a host of conjectures, as we did in the black hole case, regarding the properties of generic non-black hole solutions, but expounding on such properties would here take us too far afield.

\subsection{Generalized extension principle}\label{sec:intro/gext}

The main content of Theorem \ref{thm:main} consists of establishing an extension principle, characterizing `first singularities', considerably stronger than that proposed by Dafermos in \cite{MD05b}.  While useful for weak cosmic censorship, the extension principle of \cite{MD05b}, which concerns only the closure of the regular region $\mathcal{R}$ of spacetime, is insufficient to delve into the inner reaches of the black hole region where there are potentially trapped surfaces. Not only will we prove, in particular, the extension principle of \cite{MD05b} for our system (\ref{RMN})--(\ref{eqn:kg}), but we will give a stronger result: A `first singularity' must emanate from a spacetime boundary to which the area-radius function $r$ extends continuously to zero. 

A practical result in its own right, we wish to include this generalized extension principle  as a stand-alone statement. In view of applications to cosmological topologies or to the case of two asymptotically flat ends, it is useful to add an assumption on the finiteness of the spacetime volume, which, as we shall see (cf.~Proposition \ref{prop:finite_volume}), can be retrieved under the assumptions of Theorem \ref{thm:main}.  We thus formulate the extension principle as follows:

\begin{thm}\label{thm:gext} Let $(\M = \mathcal{Q}^+\times_r \mathbb{S}^2, g_{\mu\nu}, \phi, F_{\mu\nu})$ denote the maximal future development of smooth spherically symmetric initial data for the Einstein-Maxwell-Klein-Gordon system. For $p\in \overline{\mathcal{Q}^+}$ and $q\in \left(I^-(p)\cap \Q^+\right)\backslash \{p\}$ such that 

\[\Dm = \left(J^+(q)\cap J^-(p)\right)\backslash\{p\}\subset \Q^+,
\]
if
\begin{itemize}
\item [1.]  $\Dm$ has finite spacetime volume; and,
\item [2.] there are constants $r_0$ and $R$ such that
\begin{center}
$0 < r_0 \leq r (p')\leq R <\infty$\hspace{5mm} for all\quad$p'\in \Dm$,
\end{center}
\end{itemize}
then $p\in \Q^+$.
\end{thm}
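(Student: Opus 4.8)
The plan is to argue by a characteristic continuation argument. Suppose, for contradiction, that $p\notin\Q^+$, so that $p$ is a genuine future boundary point of the maximal development. Fix global bounded double-null coordinates $(u,v)$ (cf.\ \S\ref{sec:pre/max}) in which $q$ and $p$ are the past and future corners of the causal rectangle, so $\Dm$ is a coordinate rectangle with only the corner $p$ deleted, and the two null segments through $p$ into the past lie entirely in $\Q^+$. The whole game is to derive a priori bounds on $r$, the conformal factor $\Omega^2$, the charge $Q$, and the scalar field $\phi$, together with all their derivatives, that are \emph{uniform} as one approaches $p$. The continuation criterion for the reduced Einstein-Maxwell-Klein-Gordon system (local characteristic well-posedness, cf.\ \S\ref{sec:intro/model/wellposed}) then produces a smooth solution on a neighborhood of $p$; gluing it to the given solution and invoking the maximality of $\Q^+$ forces $p\in\Q^+$, a contradiction.

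First I would extract the geometric quantities. The dominant energy condition yields monotonicity of the renormalized Hawking mass $\varpi=m+\tfrac{Q^2}{2r}$ (cf.\ Statement V of Theorem~\ref{thm:main}); with $r_0\le r\le R$ this bounds $\varpi$, hence $m$ and $Q$, uniformly on $\Dm$. Consequently $1-2m/r=g(\nabla r,\nabla r)$ is bounded, which pins the product $\Omega^{-2}\,\pu r\,\pv r$. Raychaudhuri's equation, $\pu(\Omega^{-2}\pu r)=-r\,\Omega^{-2}|\D_u\phi|^2\le 0$ and its $v$-analogue, shows the null expansions $\Omega^{-2}\pu r$ and $\Omega^{-2}\pv r$ are monotone; combined with the mass bound this gives finite one-sided limits, so $r$ extends continuously to $p$ with $r(p)\in[r_0,R]$. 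Integrating Raychaudhuri moreover produces the scalar-field flux bounds for $\int r\,\Omega^{-2}|\D_u\phi|^2\,\dd u$ and $\int r\,\Omega^{-2}|\D_v\phi|^2\,\dd v$, and the Maxwell (charge-transport) equations, sourced by $\sim r^2\e\,\mathfrak{Im}[\overline{\phi}\,\D\phi]$, combine with these to control $Q$ along each characteristic up to $p$.

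The crux is the conformal factor. Writing the wave equation for $\log\Omega^2$, whose right-hand side is a sum of a geometric self-interaction term $\sim \Omega^2\big(1+Q^2/r^2+\m^2|\phi|^2\big)/r^2$, a term $\sim \pu r\,\pv r/r^2$, and a matter term $\sim\mathfrak{Re}[\D_u\phi\,\overline{\D_v\phi}]$, I would show this right-hand side lies in $L^1(\Dm)$. Here the finite-volume hypothesis is decisive: it is precisely the statement $\int_\Dm \Omega^2 r^2\,\dd u\,\dd v<\infty$, which with $r\ge r_0$ gives $\Omega^2\in L^1(\Dm)$ and tames the geometric term; the $\pu r\,\pv r/r^2$ term is controlled by the bounded mass; and the matter term is handled by Cauchy--Schwarz against the flux bounds. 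This closes a short bootstrap in which an assumed sup-bound on $\Omega^2$ is recovered. A function whose mixed derivative is $L^1$ and whose restrictions to the two incoming segments are controlled extends continuously to the corner, so $\log\Omega^2$ (hence $\Omega^2$, bounded away from $0$ and $\infty$) extends to $p$; $r\phi$ and $Q$ extend likewise. A commutation-and-induction argument then upgrades these to uniform $C^k$ bounds, after which local existence fills in a neighborhood of $p$.

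I expect the conformal-factor estimate --- ruling out degeneration of $\Omega^2$ up to the corner --- to be the main obstacle, and the finite-volume hypothesis to be exactly the input that controls it. The essential difference from the weak extension principle of \cite{MD05b} is that $\Dm$ need not lie in the closure of the regular region $\mathcal{R}$: inside the trapped region $\pv r$ changes sign, so the sign-based monotonicity that renders the weak principle nearly elementary is unavailable here, and one must run the entire bootstrap off the bounded-$r$, bounded-mass, and finite-volume controls alone. Verifying that these close the estimates uniformly, with no favorable causal sign to lean on, is the heart of the argument.
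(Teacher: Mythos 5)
There is a genuine gap, in fact two, and they sit at precisely the points the paper identifies as the core difficulty. First, your a priori bounds on $m$ and $Q$ rest on monotonicity of $\varpi = m + \tfrac{Q^2}{2r}$ "from the dominant energy condition," but that monotonicity (Statement V of Theorem \ref{thm:main}) holds only in $\mathcal{R}\cup\mathcal{A}$: one has $\pv m = 8\pi r^2\Omega^{-2}(T_{uv}\lambda - T_{vv}\nu)$, so the favorable signs require $\lambda\geq 0$ and $\nu\leq 0$, which is exactly what fails in the trapped (and possibly anti-trapped) portions of $\Dm$ that Theorem \ref{thm:gext} must cover — indeed, the generalized extension principle is stated without any no-anti-trapping hypothesis. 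Your own closing paragraph concedes that "the sign-based monotonicity \ldots is unavailable here," contradicting the second paragraph where the mass bound is derived from it; the paper makes the same point explicitly in \S\ref{sec:intro/gext} ("one cannot use energy conservation due to a lack of monotonicity of the Hawking mass in the trapped region"). The paper's substitute is elementary but different: integrating the Einstein equation (\ref{eqn:ruv}) over the rectangle, using only $r_0\leq r\leq R$ and the finite volume $W$, gives the spacetime bound (\ref{tuvest}) on $\iint r^2 T_{uv}$, which (this is the single use of $\m^2\geq 0$, cf.~footnote \ref{foot:dominant}) bounds $\iint \Omega^2 r^{-2}Q^2$ and $\iint \Omega^2|r\phi|^2$ with no reference to any mass function.

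Second, your conformal-factor step is circular. You propose to control the matter term $\mathfrak{Re}[\D_u\phi\,(\pv\phi)^{\dagger}]$ in the $\log\Omega^2$ equation (\ref{logO}) "by Cauchy--Schwarz against the flux bounds," where the flux bounds come from integrating Raychaudhuri (\ref{eqn:conuu})--(\ref{eqn:convv}). But those fluxes carry the weight $\Omega^{-2}$ (and their finiteness requires a lower bound on $\Omega^{-2}\lambda$, $\Omega^{-2}\nu$, which in the paper is extracted only \emph{after} the two-sided pointwise bound on $\Omega^2$ and the one-sided bounds on $r\lambda$, $r\nu$). Converting $\int r\Omega^{-2}|\pv\phi|^2\,\dd v$ into anything that pairs against the unweighted integrand in (\ref{logO}) costs exactly a factor of $\Omega^2$ — the quantity being estimated. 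The paper's proof breaks this circle by never using Cauchy--Schwarz on this term: it first obtains the sup bound on $r\phi$ by a partition-and-induction argument (\S\ref{sec:general_emkg/proof_gext/rphi}) using only the integral estimates above, and then exploits the symmetrization $\D_u\phi(\pv\phi)^{\dagger}+\pv\phi(\D_u\phi)^{\dagger} $, integration by parts, and substitution of the wave equation, so that the gauge terms cancel and the bulk reduces to $-r^{-1}(\nu\pv|\phi|^2+\lambda\pu|\phi|^2)-\tfrac{1}{2}\m^2\Omega^2|\phi|^2$ (\S\ref{sec:general_emkg/proof_gext/omega}); this is the "null structure" (\ref{nullstructure}) that the theorem actually hinges on, and the $L^2$ flux and charge bounds (\S\ref{sec:tuv}) only come afterwards. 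Your overall skeleton — continuation criterion, a priori bounds, maximality, with $\Omega^2$ as the crux and finite volume as the decisive hypothesis — matches the paper, but without these two mechanisms the estimates do not close.
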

We should re-iterate that there is neither an assumption on the global (say, asymptotically flat or hyperboloidal) geometry nor the topology of the initial data in Theorem \ref{thm:gext}.  This generality of the extension principle is made possible by the fact that the proof of Theorem \ref{thm:gext} does \emph{not} rely on any form of coercive energy integral arising from the Hawking mass\footnote{one cannot use energy conservation due to a lack of monotonicity of the Hawking mass in the trapped region}, but that it directly exploits the special null structure in the Einstein-Maxwell-Klein-Gordon system.  This null structure manifests itself as follows: To control the metric and matter fields in $C^k$ $(k\geq 2)$, it suffices to give spacetime integral estimates
\beqn\label{nullstructure}
\int\int r^2T_{uv}~\dd u\dd v \hspace{1cm}\textrm{and}\hspace{1cm}\int\int \D_u \phi\left(\D_v\phi\right)^{\dagger} + \D_v \phi \left(\D_u\phi\right)^{\dagger}~\dd u\dd v.
\eeqn
In particular, that potentially `bad' $uu$- and $vv$-components do not appear in the integrands of (\ref{nullstructure}) is a consequence of the null structure (both of the coupling of the matter equations to gravity and the matter equations themselves, respectively). This allows us to integrate by parts (\ref{nullstructure}) so as to always exploit one of the `good' ingoing or outgoing directions.  The symmetrization in 
\beqn\nonumber
\D_u \phi\left(\D_v\phi\right)^{\dagger} + \left(\D_u\phi\right)^{\dagger}\D_v \phi
\eeqn
plays an important role in being able to make use of the null structure. See \S\ref{sec:general_emkg/proof_gext/integral}--\ref{sec:tuv}.

\subsection{General spherically symmetric Einstein-matter systems}\label{sec:intro/general}
Because of the importance of a suitable extension principle in providing a global characterization of spacetime, we wish to cast the contents of \cite{MD05b} and Theorem \ref{thm:gext} in a much greater context. 
\subsubsection{Weak and generalized extension principles}\label{sec:intro/general/weak}

 We begin with the following definitions, recalling the notation introduced in Theorem \ref{thm:main}.

\begin{wext}The \emph{weak extension principle} is satisfied for an Einstein-matter system if the following condition holds: 

Let $(\M = \Q^+\times_r \mathbb{S}^2, g_{\mu\nu}, \ldots)$ denote the maximal future development of spherically symmetric asymptotically flat initial data with one end containing no anti-trapped regions.
Suppose $p\in \overline{\mathcal{R}}\backslash{\overline{\Gamma}}\subset \overline{\Q^+}$ and $q\in \left( I^-(p)\cap\overline{\mathcal{R}}\right)\backslash \{p\}$ are such that
$\left(J^-(p)\cap J^+(q)\right)\backslash \{p\}\subset \mathcal{R}\cup \mathcal{A}.$
Then, $p\in \mathcal{R}\cup \mathcal{A}$. 
\end{wext}

We emphasize that the closure and causal-geometric constructions are with respect to the topology of the ambient $\R^{1+1}$.  The weak extension principle states that a `first singularity' emanating from the closure of the regular region can only do so from the center. 

\begin{gext}The \emph{generalized extension principle} is satisfied for an Einstein-matter system if the following condition holds: 

Let $(\M = \Q^+\times_r \mathbb{S}^2, g_{\mu\nu}, \ldots)$ denote the maximal future development of spherically symmetric initial data.  For $p\in \overline{\Q^+}$ and $q\in \left(I^-(p)\cap \Q^+\right)\backslash \{p\}$ such that $\Dm = \left(J^+(q)\cap J^-(p)\right)\backslash\{p\}\subset \Q^+$, suppose that
\begin{itemize}
\item [1.]  $\Dm$ has finite spacetime volume; and,
\item [2.] there are constants $r_0$ and $R$ such that
\begin{center}
$0 < r_0 \leq r (p')\leq R <\infty$\hspace{5mm} for all\quad$p'\in \Dm$.
\end{center}
\end{itemize}
Then, $p\in \Q^+$.
\end{gext}
The generalized extension principle states that given a `first singularity', either it must emanate from a spacetime boundary to which the area-radius function $r$ can be extended to zero, or else the causal past of the `first singularity' will have infinite spacetime volume.  

While \emph{a priori} logically independent statements (nonetheless supporting our naming convention), the generalized extension principle implies the weak extension principle if the matter model obeys the null energy condition (cf.~Proposition \ref{prop: antitrapped} in \S\ref{sec:pre/noanti} and  Proposition \ref{prop:finite_volume} in \S \ref{sec:pre/volume}). See also Proposition \ref{imply} in 
\S \ref{sec:intro/general/tame}.

We also emphasize, as in Theorem \ref{thm:gext}, the generalized extension principle is stated without reference to the topology or geometry of the initial data and can be applied, for example, to the cosmological setting or the case with two asymptotically flat ends.

\subsubsection{`Tame' matter models}\label{sec:intro/general/tame}
In accordance with the above extension principles, we introduce the following notions of `tame' Einstein-matter systems.
\begin{defn} A spherically symmetric Einstein-matter system is called \emph{weakly tame} with respect to a suitable notion of maximal development if (1) the matter obeys the dominant energy condition; and, (2) the weak extension principle holds.
\end{defn}
\begin{defn} A spherically symmetric Einstein-matter system is called \emph{strongly tame} with respect to a suitable notion of maximal development if (1) the matter obeys the dominant energy condition; and, (2) the generalized extension principle holds.
\end{defn}
With this classification we have
\begin{prop}\label{imply}
A strongly tame Einstein-matter system is weakly tame.
\end{prop}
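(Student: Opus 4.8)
The plan is to reduce the claim to a single implication between the two extension principles. Both \emph{weakly tame} and \emph{strongly tame} require the dominant energy condition, so that hypothesis is shared by the two sides; and since the dominant energy condition implies the null energy condition, the monotonicity furnished by Raychaudhuri's equation is at our disposal. It therefore suffices to prove that, for a system obeying the null energy condition, the \emph{generalized} extension principle implies the \emph{weak} extension principle. I would assume the generalized extension principle and verify the hypothesis of the weak extension principle directly.

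Concretely, let $(\M = \Q^+ \times_r \mathbb{S}^2, g_{\mu\nu}, \ldots)$ be a maximal future development of asymptotically flat data with one end and no anti-trapped regions, and let $p \in \overline{\mathcal{R}} \setminus \overline{\Gamma}$ and $q \in (I^-(p) \cap \overline{\mathcal{R}}) \setminus \{p\}$ satisfy $(J^-(p) \cap J^+(q)) \setminus \{p\} \subset \mathcal{R} \cup \mathcal{A}$. The goal is to produce a point $\tilde q \in I^-(p) \cap \Q^+$ for which the causal diamond $\Dm = (J^+(\tilde q) \cap J^-(p)) \setminus \{p\}$ meets both hypotheses of the generalized extension principle, so that the principle yields $p \in \Q^+$; a short continuity argument then upgrades this to $p \in \mathcal{R} \cup \mathcal{A}$.

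The two hypotheses are checked as follows. The inclusion $\Dm \subset \mathcal{R} \cup \mathcal{A} \subset \Q^+$ is immediate from the assumption of the weak extension principle (and, after replacing $q$ by a point slightly to its future, we may take $\tilde q \in \mathcal{R} \cup \mathcal{A} \subset \Q^+$). For the bounds on $r$, I invoke Proposition \ref{prop: antitrapped}: the absence of anti-trapped surfaces propagates, giving $\pu r < 0$ throughout, while $\pv r \geq 0$ on $\mathcal{R} \cup \mathcal{A}$. Hence $r$ is monotone in each null direction on the diamond, so its supremum and infimum over $\overline{\Dm}$ are attained at the two lateral corners, each a fixed point of $\Q^+$ distinct from $p$. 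The corner farthest from the centre gives a finite upper bound $R$. The delicate point is the positive lower bound $r_0 > 0$: the near-centre corner could \emph{a priori} lie on $\Gamma$, where $r = 0$, since $\Gamma \subset \mathcal{R} \cup \mathcal{A}$. This is precisely where $p \notin \overline{\Gamma}$ enters: choosing a neighbourhood $\mathcal{U} \subset \R^{1+1}$ of $p$ disjoint from $\overline{\Gamma}$ and taking $\tilde q$ close enough to $p$ that $\Dm \subset \mathcal{U}$, the near-centre corner avoids $\Gamma$ and hence has $r > 0$, yielding $r_0 > 0$. Finiteness of the spacetime volume of $\Dm$ is then supplied by Proposition \ref{prop:finite_volume}.

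With both hypotheses verified, the generalized extension principle gives $p \in \Q^+$. Since $p \notin \overline{\Gamma}$ we have $r(p) > 0$, so $p$ is an interior point at which the solution is smooth; and since $\pv r \geq 0$ throughout $\Dm$ and the outgoing null ray into $p$ lies in $\Dm$, continuity forces $\pv r(p) \geq 0$, i.e.\ $p \in \mathcal{R} \cup \mathcal{A}$ and not $\mathcal{T}$. This establishes the weak extension principle, and hence the proposition. The main obstacle is the uniform lower bound $r \geq r_0 > 0$ on the diamond: keeping it off the centre $\Gamma$ by exploiting $p \notin \overline{\Gamma}$ together with the monotonicity of Proposition \ref{prop: antitrapped}. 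By contrast, the finite-volume hypothesis is imported wholesale from Proposition \ref{prop:finite_volume}, and the remaining steps are routine.
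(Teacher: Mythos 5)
Your proof is correct and follows essentially the same route as the paper's own argument in \S\ref{sec:proof_main/first}: shrink the causal diamond so that it avoids $\overline{\Gamma}$, use the monotonicity $\nu<0$ (Proposition \ref{prop: antitrapped}) and $\lambda\geq 0$ in $\mathcal{R}\cup\mathcal{A}$ to get $0<r_0\leq r\leq R$ on the diamond, invoke Proposition \ref{prop:finite_volume} (which only needs the null energy condition) for the finite-volume hypothesis, and apply the generalized extension principle. Your final continuity step upgrading $p\in\Q^+$ to $p\in\mathcal{R}\cup\mathcal{A}$ is left implicit in the paper, but it is exactly the intended argument.
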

For a proof of this statement, see \S \ref{sec:proof_main/first}.

\subsubsection{Generalization of Theorem \ref{thm:main} to strongly tame matter models}\label{sec:intro/general/strong}

The proof of Theorem \ref{thm:main}, after the conclusion of Theorem \ref{thm:gext} has been established, follows from a series of monotonicity arguments governed by the dominant energy condition.\footnote{Much of Theorem \ref{thm:main}, in fact, uses the monotonicity governed by Raychaudhuri's equation, which just needs the null energy condition (cf.~the proof of Theorem \ref{thm:main} in \S\ref{sec:proof_main}).}  No structure particular to the Einstein-Maxwell-Klein-Gordon system is used.  As a result, we can state

\begin{thm}\label{thm:s_t_general}Let $(\M = \Q^+\times_r\mathbb{S}^2, g_{\mu\nu},\ldots)$ denote the maximal future development of smooth spherically symmetric asymptotically flat initial data with one end for a strongly tame Einstein-matter system containing no anti-trapped spheres of symmetry.  Then, the conclusion of Theorem \ref{thm:main} holds for this system.
\end{thm}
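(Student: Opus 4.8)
The plan is to prove Theorem \ref{thm:s_t_general} not by redoing any computation but by \emph{auditing} the forthcoming proof of Theorem \ref{thm:main} (carried out in \S\ref{sec:proof_main}) and isolating precisely which features of the Einstein-Maxwell-Klein-Gordon system it actually invokes. The claim I would establish is that, once the generalized extension principle (Theorem \ref{thm:gext}) is in hand, every subsequent step uses \emph{only} two model-independent inputs: (i) the conclusion of the generalized extension principle itself, namely that any `first singularity' must emanate from a boundary to which $r$ extends continuously to zero; and (ii) monotonicity identities that, in spherical symmetry, are consequences of the null and dominant energy conditions. Since a strongly tame system satisfies the generalized extension principle and obeys the dominant energy condition \emph{by definition}, and since by Proposition \ref{imply} it is also weakly tame, each assertion of Theorem \ref{thm:main}---boxed (strongly tame) and unboxed (weakly tame) alike---would then transfer verbatim.

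Concretely, I would trace the two inputs through the statement of Theorem \ref{thm:main}. The delimitation of the regular, apparent-horizon, and trapped regions $\mathcal{R},\mathcal{A},\mathcal{T}$ (Statement IV), the propagation of the sign of the null derivatives of $r$ along outgoing rays, and the completeness criterion for $\mathcal{I}^+$ (Statement III) all rest on Raychaudhuri's equation, which in spherical symmetry yields the desired monotonicity from $T_{uu},T_{vv}\geq 0$ alone. The Hawking-mass monotonicity $\pv m\geq 0$, $\pu m\leq 0$ in $\mathcal{R}\cup\mathcal{A}$ (Statement V), the finiteness of the final Bondi mass, and the Penrose inequality (Statement VI) follow from the spherically symmetric form of the dominant energy condition, again with no appeal to the specific form of $T_{\mu\nu}$ or to the matter equations of motion. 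The characterization of the boundary components on which $r\to 0$ (Statement II, together with the boxed curvature and extendibility assertions of Statement VII) is exactly where the generalized extension principle enters: it is what excludes `first singularities' off of $\sgo\cup\chg\cup\sgt\cup\mathcal{S}\cup\mathcal{S}_{i^+}$ and forces the $r=0$ behavior. The construction of bounded double-null coordinates and of the Penrose diagram (Statement I) likewise uses only the geometry and these monotonicity properties.

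The main obstacle is the bookkeeping itself: one must verify, step by step, that no hidden use is made of the Maxwell equation (\ref{max2}), the wave equation (\ref{eqn:kg}), or the detailed structure of $T^{\textrm{EM}}_{\mu\nu}$ and $T^{\textrm{KG}}_{\mu\nu}$, beyond the inequalities $T_{uu},T_{vv},T_{uv}\geq 0$ that encode the energy conditions. In particular, at each boundary component one must confirm that the continuous (or monotone) extension of $r$ and $m$, and the ruling out of spurious first singularities, are deduced solely from Theorem \ref{thm:gext} and Raychaudhuri-type monotonicity; the arguments that extend $r$ continuously to zero on $\mathcal{S}$ and that yield Kretschmann blow-up of order $r^{-4}$ are the most delicate in this respect, since they are precisely the places where the full strength of the generalized extension principle (rather than merely the weak extension principle) is required. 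Once this audit is complete, the proof of Theorem \ref{thm:s_t_general} reduces to the observation that a strongly tame system supplies exactly the two inputs the argument consumes, so that the entire chain of reasoning establishing Theorem \ref{thm:main} applies unchanged.
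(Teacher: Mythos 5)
Your proposal is correct and is essentially identical to the paper's own proof: the paper's \S\ref{sec:proof_general} likewise observes that model-specific structure is exploited only in \S\ref{sec:general_emkg} (the proof of the generalized extension principle), with all remaining arguments in \S\ref{sec:proof_main} resting solely on the dominant (or null) energy condition, so that the proof of Theorem \ref{thm:main} can be reproduced verbatim for any strongly tame system. Your statement-by-statement audit is simply a more explicit rendering of that same observation.
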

See the comment in \S\ref{sec:proof_general} regarding the proof of this statement.
\subsubsection{A version of Theorem \ref{thm:main} for weakly tame matter models}\label{sec:intro/general/weakly_tame}

One can deduce from the proof of Theorem \ref{thm:main} that the weak extension principle, in fact, recovers the boundary characterization of Statement II except for the characterization that $r$ vanishes on $\mathcal{S}$.  In other words, establishing the weak extension principle is not sufficient to rule out the possibility that $r$ has non-zero limit values on (part of) $\mathcal{S}$.  

To establish many of the statements of Theorem \ref{thm:main}, however, it is not important to have a characterization of $r$ on $\mathcal{S}$; these results, consequently, hold \emph{mutatis mutandis} for weakly tame matter models. In particular, we state     

\begin{thm}\label{thm:w_t_general}Let $(\M = \Q^+\times_r\mathbb{S}^2, g_{\mu\nu},\ldots)$ denote the maximal future development of smooth spherically symmetric asymptotically flat initial data with one end  for a weakly tame Einstein-matter system containing no anti-trapped spheres of symmetry.  Then, except for the statements enclosed in boxes, the conclusion of Theorem \ref{thm:main} holds for this system.
\end{thm}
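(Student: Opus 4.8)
The plan is to prove Theorem \ref{thm:w_t_general} not by a fresh argument but by auditing the proof of Theorem \ref{thm:main} (equivalently Theorem \ref{thm:s_t_general}), isolating precisely where the \emph{generalized} extension principle is invoked in a way that cannot be replaced by the \emph{weak} extension principle. By Proposition \ref{imply} a strongly tame system is already weakly tame, so the hypotheses of Theorem \ref{thm:main} separate cleanly into (i) the dominant energy condition, entering only through the monotonicity of $r$ and of the Hawking mass $m$ driven by Raychaudhuri's equation, and (ii) an extension principle. The claim to be verified is that, once this monotonicity backbone is in place, every conclusion of Theorem \ref{thm:main} except those enclosed in boxes follows from the weak extension principle alone; a weakly tame system, which by definition satisfies the dominant energy condition and the weak extension principle, then yields all non-boxed conclusions.

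First I would assemble the purely monotonic part of the argument, which uses no extension principle at all and hence transfers verbatim: the decomposition $\Q^+=\mathcal{R}\cup\mathcal{A}\cup\mathcal{T}$ together with the propagation laws of Statement IV.1--IV.2 (in particular $(\mathcal{A}\cup\mathcal{T})\cap J^-(\mathcal{I}^+)=\emptyset$, so that $J^-(\mathcal{I}^+)\cap\Q^+\subset\mathcal{R}$), the sign relations for $2m/r$ in V.2, the monotonicity and finiteness of the Bondi mass in V.1, and the Penrose inequality in VI. Next I would recover the global causal skeleton and the boundary decomposition (\ref{bp}) of Statements I--II. Because the outer region $J^-(\mathcal{I}^+)\cap\Q^+$ lies entirely in $\mathcal{R}$, the closure of the regular region governs $\mathcal{I}^+$, $i^0$, $i^{\square}$, and $\mathcal{CH}_{i^+}$, and here the weak extension principle is exactly what is needed, since it asserts that a first singularity on $\overline{\mathcal{R}}\backslash\overline{\Gamma}$ can only emanate from the center. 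The characterizations of $b_{\Gamma}$, $\sgo$, $\chg$, and — importantly — the vanishing of $r$ on the \emph{null} singular segments $\sgt$ and $\mathcal{S}_{i^+}$ follow from the weak extension principle together with the monotonicity of $r$ along outgoing null rays: each such segment is reached along null directions from a limit point ($b_{\Gamma}$, or the future endpoint of $\mathcal{CH}_{i^+}\cup i^{\square}$) at which $r$ is already controlled, so no foothold in the deep trapped region is required. The completeness criterion of Statement III and the non-boxed parts of IV.5(b),(c),(d) and of VII.2 are obtained the same way.

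The decisive point, and the main obstacle, is to confirm that the generalized extension principle is genuinely needed \emph{only} for the boxed conclusions. The paradigmatic case is the achronal component $\mathcal{S}$: the weak extension principle still produces $\mathcal{S}$ as an achronal boundary curve disjoint from the null rays through $b_{\Gamma}$ and $i^{\square}$, which is a causal-geometric and monotonic fact, but its points may be approached only from within the trapped region $\mathcal{T}$, where $\overline{\mathcal{R}}$ gives no information. To force $r\rightarrow 0$ on $\mathcal{S}$ one must exclude a finite-volume terminal diamond inside $\mathcal{T}$ on which $r$ stays bounded below and above, and this is exactly the content of Theorem \ref{thm:gext}, unavailable to a weakly tame system. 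I would then trace the downstream dependence of the remaining boxed statements on this same trapped-region control: IV.3 and IV.5(a),(e) assert the existence or location of $\mathcal{A}\cup\mathcal{T}$ forced by a nonempty $\sgt\cup\mathcal{S}\cup\mathcal{S}_{i^+}$; VII.1 asserts curvature blow-up at rate $r^{-4}$ on $\sgt\cup\mathcal{S}\cup\mathcal{S}_{i^+}$, which rests on $r\rightarrow 0$ there; and VII.3--VII.4 concern the Hawking mass and $C^2$-extendibility, again resting on the finer interior control. The bulk of the work is thus careful bookkeeping: showing each boxed statement factors through the $r\rightarrow 0$-on-$\mathcal{S}$ mechanism, hence through the generalized extension principle, while verifying that the non-boxed statements — in particular the $r\rightarrow 0$ characterizations on the null segments $\sgo,\sgt,\mathcal{S}_{i^+}$ and the blow-up statements VII.2 — never invoke more than monotonicity and the weak extension principle.
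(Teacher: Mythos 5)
Your proposal is correct and follows essentially the same route as the paper: the paper's own proof of Theorem \ref{thm:w_t_general} is precisely the observation that model-specific structure enters only through the generalized extension principle of \S\ref{sec:general_emkg}, while everything else in \S\ref{sec:proof_main} rests on monotonicity from the dominant (or null) energy condition together with whichever extension principle is available, so that the non-boxed conclusions survive under the weak extension principle alone. Your audit — in particular pinpointing that $r\rightarrow 0$ on $\mathcal{S}$ is exactly where one must exclude a finite-volume terminal diamond in $\mathcal{T}$ with $r$ bounded above and below (Theorem \ref{thm:gext}), and tracing the boxed statements' dependence on that control — is the same bookkeeping the paper performs implicitly via the boxed/unboxed markings in Theorem \ref{thm:main} and the discussion in \S\ref{sec:intro/general/weakly_tame}.
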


It should be noted, moreover, that many of the enclosed `boxed' statements can be (trivially) re-worked as to apply even in the weakly tame case:\footnote{In the case of Statement VII.3, it is presumed that we can extend the solution into a neighborhood of $i^+$.  Since \emph{a priori} this neighborhood will contain trapped spheres, we must appeal to the generalized extension principle. Moreover, because we need to establish a positive (non-zero) lower bound on $r$ in this neighborhood, although there is no explicit reference to $\mathcal{S}$, Statement VII.3 requires, indeed, that a characterization of $r$ be given along $\mathcal{S}$.}\\

\noindent\textbf{Statement IV.3{*}}
If $\sgt\cup\mathcal{S}_{i^+}\neq \emptyset$, then $\mathcal{A}\cup \mathcal{T}\neq \emptyset$. (If $\sgt\cup\mathcal{S}_{i^+}=\emptyset$, then $\mathcal{A}\cup\mathcal{T}$ is possibly empty.)\\

\noindent\textbf{Statement IV.5a{*}}
If $\mathcal{A}\neq \emptyset$, then all limit points of $\mathcal{A}$ that lie on the boundary $\overline{\mathcal{Q}^+}\backslash\mathcal{Q}^+$ lie on $\mathcal{CH}_{i^+}\cup i^+$ and on a (possibly degenerate) closed, necessarily connected interval of $b_{\Gamma}\cup\sgo\cup\chg\cup \mathcal{S}$. 
\\

\noindent\textbf{Statement IV.5e{*}}
If $\sgt\cup\mathcal{S}_{i^+}\neq\emptyset$, then $\mathcal{A}$ has a limit point on $b_{\Gamma}\cup\sgo\cup\chg\cup \mathcal{S}$.\\

\noindent\textbf{Statement VII.1{*}}
The Kretschmann scalar $ R_{\mu\nu\alpha\beta}R^{\mu\nu\alpha\beta}$ is a continuous $[0,\infty]$-valued function on $\Q^+\cup\sgt\cup \mathcal{S}_{i^+}$ that yields $\infty$ on $\sgt\cup\mathcal{S}_{i^+}$. The rate of blow-up is no slower than $r^{-4}$.\\

\noindent\textbf{Statement VII.4{*}}
 If $(\mathcal{M},g_{\mu\nu})$ is future-extendible as a $C^2$-Lorentzian manifold $(\widetilde{\mathcal{M}}, \widetilde{g_{\mu\nu}})$, then there exists a timelike curve $\gamma\subset \widetilde{\M}$ exiting the manifold $\mathcal{M}$ such that the closure of the projection of $\gamma|_{\M}$ to $\Q^+$ intersects $\chg\cup\mathcal{S}\cup \mathcal{CH}_{i^+}$.\\

Since in a weakly tame model we know nothing \emph{a priori} about the behavior of the metric at $\mathcal{S}$, we note that, in turn, Statement VII.4{*}, in practice, tells us very little about inextendibility properties. For this reason, establishing that an Einstein-matter system is strongly tame is a crucial first step in understanding strong cosmic censorship.

\subsubsection{Examples of weakly and strongly tame models}\label{sec:intro/examples}
We now give examples of known tame Einstein-matter systems.
\subsubsection*{Strongly tame}
In the language of \S\ref{sec:intro/general/tame}, Theorem \ref{thm:gext} shows that Einstein-Maxwell-Klein-Gordon is strongly tame; Dafermos and Rendall show\footnote{Previously, Dafermos and Rendall had shown that Einstein-Vlasov is weakly tame in \cite{MDAR05}. In view of Proposition \ref{imply}, however, this is immediate from the subsequent work of \cite{MDAR07}.} in \cite{MDAR07} that Einstein-Vlasov is strongly tame. In each of these proofs, one heavily exploits relevant null structure (arising from the coupling of the matter equations to gravity and/or the matter equations themselves).  We make the following imprecise conjecture.
\begin{conj}\label{conj:null}If a spherically symmetric Einstein-matter system satisfies a suitable `null condition',  then the system is strongly tame.
\end{conj}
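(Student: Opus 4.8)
The plan is first to \emph{formalize} the conjecture, since as stated the ``null condition'' is left deliberately vague. I would begin by distilling from the proof of Theorem \ref{thm:gext} the precise structural input that makes the generalized extension principle work, and then promote that input to a definition. The proof of Theorem \ref{thm:gext} never uses coercivity of the Hawking mass; it uses only that, in double null coordinates, the geometry and matter on $\Dm$ are controlled once the two spacetime integrals displayed in (\ref{nullstructure}) are finite, and that these integrands contain \emph{no} pure $uu$ or $vv$ contributions (the ``bad'' directions). A suitable null condition should therefore be defined as the requirement that, in an arbitrary double null frame, (i) the only component of $T_{\mu\nu}$ entering the wave equation for $r$ and the equation for the conformal factor through an integrand one cannot integrate by parts is $T_{uv}$ (equivalently, $r^2 T_{uv}$ is the sole ``undifferentiated'' matter source), and (ii) the matter self-interactions appear in a symmetrized bilinear form of the schematic type $\D_u(\cdot)(\D_v(\cdot))^\dagger + \D_v(\cdot)(\D_u(\cdot))^\dagger$, so that every quadratic term pairs a good ingoing derivative with a good outgoing one. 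I would then take the target ``strongly tame'' together with ``null condition'' to mean this structural hypothesis, joined to the dominant energy condition.

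Granting such a definition, the proof proceeds by abstracting the argument of Theorem \ref{thm:gext}. Given $p$, $q$ and the null rectangle $\Dm = (J^+(q)\cap J^-(p))\setminus\{p\}\subset\Q^+$ with finite spacetime volume and $0< r_0\le r\le R<\infty$ on $\Dm$, the steps are: (a) Raychaudhuri's equation together with the bounds on $r$ and the null energy condition (a consequence of the dominant energy condition) yields control of the null derivatives $\pu r$ and $\pv r$ and, via the finite-volume assumption with $r$ bounded below, of the conformal factor $\Omega$; (b) the null condition guarantees that the two integral quantities analogous to (\ref{nullstructure}) are finite over $\Dm$, because the only undifferentiated source is $r^2 T_{uv}$ (bounded $r$ times finite volume) and the matter flux is built purely from the symmetrized bilinear; (c) one integrates by parts these integrals along whichever of the two good null directions is available, propagating $L^2$- and then $L^\infty$-control of the first derivatives of the matter fields across $\Dm$ up to $p$; (d) a standard bootstrap upgrades this to $C^k$ control for every $k$; (e) finally the characteristic local existence theorem, fed with the now-bounded data on the two null segments terminating at $p$, extends the solution to a neighborhood of $p$, forcing $p\in\Q^+$. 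Since the dominant energy condition is assumed outright, both defining properties of ``strongly tame'' hold.

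The hard part, and the reason the conjecture is only imprecise, is \emph{finding the right abstract null condition}. The difficulty is twofold. First, the condition must be frame-covariant and geometric: the good/bad null decomposition is tied to the dynamical double null foliation, which is itself coupled to the unknown metric, so one cannot simply import Klainerman's fixed-background null forms; the condition has to survive the gravitational coupling. Second, and more seriously, it must be broad enough to include matter whose structure is not of wave type at all. Einstein-Vlasov is strongly tame by \cite{MDAR07}, yet there the relevant ``null structure'' is carried by the transport/characteristic structure of the Vlasov equation and by the algebraic form of the velocity-space moments defining $T_{\mu\nu}$, not by any symmetrized derivative bilinear. Formulating a single hypothesis that specializes correctly both to the Maxwell-Klein-Gordon bilinear structure of (\ref{nullstructure}) and to the Vlasov moment structure is the genuine obstacle; I would expect the resolution to be phrased directly on the pointwise null components of $T_{\mu\nu}$ and on the \emph{a priori} integrability of the matter flux in the good directions, rather than on the detailed form of the matter Lagrangian. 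Accordingly I would first isolate and prove the clean implication ``structural integral estimates $\Rightarrow$ generalized extension principle'' (essentially steps (a)--(e)), and only afterwards seek the most general null condition implying those estimates, testing each candidate formulation against the two known strongly tame systems.
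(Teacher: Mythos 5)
The statement you were asked to prove is, in the paper, an explicitly open and deliberately \emph{imprecise} conjecture---the paper itself introduces it with ``We make the following imprecise conjecture'' and cites Klainerman \cite{SK86} only for background---so there is no proof in the paper to compare against, and the real question is whether your proposal closes the gap between vagueness and theorem. It does not. The entire mathematical content of the conjecture is the existence of a precise, \emph{a priori} checkable, model-independent formulation of the ``null condition'' from which strong tameness follows; your proposal defers exactly this. The formalization you float---defining the null condition to be precisely the structural hypotheses under which the proof of Theorem \ref{thm:gext} runs (the only undifferentiated matter source is $r^2T_{uv}$, and matter self-interaction enters through the symmetrized bilinear of (\ref{nullstructure}))---renders the statement true by construction, but then it is not a theorem about a class of Einstein-matter systems; it is the proof of Theorem \ref{thm:gext} with its Maxwell-Klein-Gordon-specific inputs relabeled as axioms. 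Worse, as you yourself observe, this candidate definition provably fails the conjecture's purpose: Einstein-Vlasov is strongly tame \cite{MDAR07}, yet its null structure lives in the transport equation and the velocity-space moments, not in any symmetrized derivative bilinear, so your condition excludes one of the two known systems any correct formulation must cover. A conjecture whose difficulty is ``find the right definition'' is not proven by a proposal whose concluding admission is that the right definition remains to be found.

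There is also a concrete error of order inside your abstracted steps (a)--(e). In step (a) you claim that $\Omega$ is controlled by the finite-volume hypothesis together with the lower bound on $r$. Finite volume gives only the integral bound (\ref{Wbound}) on $\Omega^2$; the pointwise bound on $\Omega^2$ is the hardest step in the paper's proof of Theorem \ref{thm:gext} (\S\ref{sec:general_emkg/proof_gext/omega}), and it is obtained by integrating the equation for $\log\Omega$, which requires the wave equation, the symmetrization, and the integration by parts---that is, it is an \emph{output} of the null-structure estimates you place in steps (b)--(c), not an input available before them. Likewise the uniform bound on $r\phi$ needs the partition-and-bootstrap argument of \S\ref{sec:general_emkg/proof_gext/rphi}, which your sketch compresses away. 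These defects are repairable, but they show that even the clean implication you propose to isolate first (``structural integral estimates $\Rightarrow$ generalized extension principle'') has not actually been established by your outline: as written, the dependency graph of your steps does not close.
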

For a discussion of the `null condition', see Klainerman \cite{SK86}.
\subsubsection*{Weakly tame}

Dafermos shows in \cite{MD05a} that Einstein-Higgs with non-negative potential $V(\phi)$ is weakly tame. 

Narita \cite{MN08} considers `first singularity' formation in the Einstein-wave map system with target $\mathbb{S}^3$ and $H^2$.  In the language of the present paper, these models are weakly tame. 

\subsubsection{Exotic models}\label{sec:intro/examples/exotic}

The definitions of weakly and strongly tame are tailored specifically so as to apply to classical, self-gravitating matter models.  One often encounters in the physics literature, however, models that are `exotic' in some respect.  In the sequel, we will show that suitable notions of weakly and strongly tame can still be introduced for such systems. 

\subsubsection*{Exotic matter}

Immediate from the proof of Theorem \ref{thm:gext}, it follows that Einstein-Klein-Gordon ($\e=F_{\mu\nu}=0$) with $\m^2 <0$ satisfies the generalized extension principle, but because this non-classical matter model does not obey the dominant energy condition, it is not, according to our definition, strongly tame.  The matter model, however, does obey the null energy condition.  As the proof of Theorem \ref{thm:s_t_general} will make clear, many statements that hold for strongly tame Einstein-matter systems, in fact, follow from monotonicity governed by Raychaudhuri's equation, which just needs the null energy condition. As a result, much can still be said of the global structure of  Einstein-Klein-Gordon spacetimes when $\m^2 <0$. See the recent work of Holzegel and Smulevici who consider spherically symmetric asymptotically AdS Einstein-Klein-Gordon spacetimes \cite{GHJS11,GHJS11b}.

In the case of Einstein-Higgs, the proof of the weak extension principle in \cite{MD05a} can be established with the help of the flux provided by the Hawking mass.  The weak extension principle, consequently, can be given more generally for a potential that is bounded from below by a (possibly negative) constant: $V(\phi) \geq -C$.  Allowing for such a lower negative bound, \cite{MD05a} was able to disprove a scenario of `naked singularity' formation that had appeared in the high energy physics literature.
  Unless the potential is non-negative, however, this non-classical matter model is not weakly tame, by our definition, because it does not obey the dominant energy condition.

Although we will not discuss further such non-classical exotic matter, one could consider systems such as Einstein-Klein-Gordon (arbitrary $\m^2$) as being `quasi-strongly tame' and Einstein-Higgs with $V(\phi)\geq -C$ as being `quasi-weakly tame', \emph{et cetera}.

\subsubsection*{Higher dimensional case}

Dafermos and Holzegel show that the analogue of the weak extension principle holds for the maximal future development of asymptotically flat Einstein-vacuum initial data having triaxial Bianchi IX symmetry \cite{MDGH06}.  Here, spacetime $(\M, g_{\mu\nu})$ is $5$-dimensional and $SU(2)$ acts by isometry.  The Einstein-vacuum equations under this symmetry assumption can be written as a system of non-linear PDEs on a 2-dimensional Lorentzian manifold $\Q^+ = \M /SU(2)$ (possibly with boundary), whose `two dynamical degrees of freedom' correspond to the possible deformations of the group orbit $3$-spheres.  This vacuum model shares a formal similarity with a spherically symmetric $4$-dimensional Einstein-matter system, whose matter obeys the dominant energy condition.  We can, as a result, formally view triaxial Bianchi IX as being weakly tame in our present context.

One can consider more straightforward generalizations to higher dimensions by considering spherically symmetric $n$-dimensional ($n\geq 4$) spacetimes $(\M, g_{\mu\nu})$ whereby $SO(n-1)$ acts by isometry  (cf.~\cite{LR05}).

\subsubsection*{Modified gravity}
Narita \cite{MN09} has considered `first singularity' formation in the spherically symmetric Einstein-Gau\ss-Bonnet-Klein-Gordon system ($n\geq 5$) with $\m^2=\e=\imp=F_{\mu\nu}=0$.  In the language of the present paper, this system is weakly tame.
 
\subsubsection{Regularity of the maximal future development}\label{sec:intro/general/regularity}
Our notion of `tameness' attempts to classify the type of `first singularities' a given spherically symmetric Einstein-matter system will exhibit.  This classification is naturally regularity-dependent.  The discussion of weakly and strongly tame models above has been restricted to the case of smooth maximal future developments.  It will often be convenient, even necessary, to consider developments that are non-smooth.

\subsubsection*{Solutions of bounded variation for the scalar field model}

As we have mentioned before (cf.~\S \ref{sec:intro/cd/c}), in order to initiate the study of a large class of solutions sufficiently flexible to exploit the genericity assumption inherent in the formulation of cosmic censorship, Christodoulou introduces a notion of bounded variation (BV) solutions for the spherically symmetric Einstein-Klein-Gordon system with $\m^2=\e=\imp=F_{\mu\nu}=0$. In \cite{DC93}, Christodoulou establishes the well-posedness of an initial value formulation of this system for given BV initial data.  Christodoulou is, moreover, able to establish that the system is, in the language of the present paper, strongly tame.

\subsubsection*{Shell-crossing singularity formation in Einstein-dust}

Consider the Einstein-Euler system with equation of state $p = 0$, i.e.,~a pressure-free fluid.  This system is also known as Einstein-dust.  The spherically symmetric (infinite dimensional family of asymptotically flat) solutions of the Einstein-dust system were first given by Tolman in \cite{RT34} based on the work of Lema\^itre \cite{GL33}.  Beginning with the work of Oppenheimer and Snyder \cite{OS39}, which discussed in detail the gravitational collapse of a uniform density `ball of dust', this matter model had (and still does) spawn great interest in the physics community. 

In Yodzis et al.~\cite{YSM73}, it is shown that, in general, the Einstein-dust system forms (`naked') `first singularities' away from the center (in the non-trapped region), commonly referred to as shell-crossings, in the class of smooth maximal future developments.  In the language of the present paper, this shows that Einstein-dust is \emph{not} weakly tame (and hence not strongly tame) in the smooth sense.  In a way, this result is unsurprising; shell-crossings occur already in the absence of gravity, e.g.,~on a fixed Minkowski background.  It turns out, however, that the solution obtained by extending beyond the shell-crossings makes physical sense (the metric is, in particular, still continuous \cite{PH67} as long as $r>0$).  One can thus view Einstein-dust as being strongly tame in a suitable class of rough solutions.  We note, however, the negative resolution of cosmic censorship for the Einstein-dust system, even in this more appropriate class, shown by Christodoulou (cf.~\S\ref{sec:intro/general/wcc_dust}).

\subsubsection*{Shock formation in Einstein-Euler}
The breakdown of smooth solutions of the Euler system has been studied extensively in \cite{JC48, DC07,TS85}.  In coupling to gravity, Rendall and St{\aa}hl \cite{ARFS08} show that under assumption of plane symmetry, smooth solutions still break down in (arbitrarily short) finite time. It is believed that, as in the classical Euler system, the breakdown of Einstein-Euler again is a result of the discontinuities of the fluid flow, commonly referred to as shocks.  In the language of the present paper, Einstein-Euler (for general equation of state) is \emph{not} weakly tame (and hence not strongly tame) for smooth maximal future developments.  Understanding the global properties of solutions to the Euler system, in the context of less regular developments, remains a long-standing (and very difficult!) open problem, where a large-data theory is unavailable even in $1+1$-dimensions.

\subsubsection*{The two-phase fluid model of Christodoulou}  

Christodoulou sought to understand a two-phase fluid model that would capture many of the features of actual stellar gravitational collapse and at the same time would be mathematically tractable.  In \cite{DC95}, Christodoulou considers the spherically symmetric Einstein-Euler system with a two-phase barotropic equation of state given by
\[ p = \left\{ \begin{array}{ll}
         0 & \mbox{if $\rho \leq 1$};\\
        \rho -1 & \mbox{if $\rho > 1$},\end{array} \right. \]
i.e.,~the soft phase of the two-phase model coincides with that of dust ($p=0$) while the hard phase coincides with that of a massless real-valued scalar field, where
 \beqn\nonumber
p = \frac{1}{2}\left(1-g^{\mu\nu}\partial_{\mu}\phi\partial_{\nu}\phi\right)  - 1
 \eeqn
with the restriction that $-g^{\mu\nu} \partial_{\nu}\phi$ be a future-directed timelike vector field.\footnote{This requirement, i.e.,~$\phi$ is a time function, is necessary to ensure that the scalar field has a hydrodynamic interpretation.}  This model seeks to capitalize on the insight gained via his study of dust (see below)  and his subsequent work on scalar fields.  Here, shocks develop from the development of smooth initial data in the form of the boundary between the phases.  Their dynamics can then be understood as a free-boundary problem,\footnote{for which the timelike components of the phase boundary correspond to shocks} which is studied extensively in \cite{DC96, DC96b}\hspace{-1.07mm}{\color{white}\textbf{]}}\hspace{-1.1mm}, \ldots].\footnote{Part of this series by Christodoulou is unpublished.}  In this collection of work, Christodoulou shows, in particular, in the language of the present paper, that, in the context of a suitably rough maximal future development, this two-phase model is strongly tame.

\subsubsection{Cosmic censorship for Einstein-dust and the two-phase model}\label{sec:intro/general/wcc_dust}

In \S\ref{sec:intro/general/regularity}, we saw that the shell-crossing singularities of \cite{YSM73} are not `naked singularity' solutions from the point of view of the correct concept of a solution. On the other hand, that, indeed, true `naked singularities' generically form for the Einstein-dust model is later proven by Christodoulou \cite{DC84}.  Specifically, in the the collapse of an inhomogeneous ball of dust, there exists an open set of initial mass densities $\rho_0$ such that the mass density $\rho$ will become infinite at some central point before the formation of a trapped surface occurs, as opposed to the shell-crossing `first singularities' of \cite{YSM73} that are inessential.  In the language of the present paper, the Penrose diagram of such a spacetime is given by diagram III in \S\ref{sec:intro/cd/c} (light cone singularity) with $i^{\square} = i^{\textrm{naked}}$, where $\rho(b_{\Gamma}) = \infty$ and for which generically the metric is extendible across $\chg$.  In particular, Christodoulou shows that Conjectures \ref{conj:weak} and \ref{conj:strong} are \emph{false} for any suitable notion of an Einstein-dust solution.

It should be noted, however, that the work of Christodoulou on Einstein-dust is not the death knell of cosmic censorship.  The equation of state $p=0$ is a very special one, one which becomes less and less plausible as $\rho\rightarrow \infty$.  If one wishes to consider the problem of cosmic censorship for Einstein-Euler, then more realistic equations of state must be allowed.  In this sense, the two-phase model introduced by Christodoulou is perhaps the most tractable realistic model improving the pure dust case.  Because of the scalar field structure of the hard phase, and in light of Theorem \ref{thm:christo}, it is reasonable to expect that cosmic censorship will be true; this conjecture, however, remains open.

\subsubsection{Table of weakly tame and strongly tame models}\label{sec:intro/general/table}
We conclude this section with a summary of our discussion. 

Unless otherwise noted by a box, the following hold for smooth maximal future developments. In the case of rough developments, the usual caveats about uniqueness apply, as the well-posedness statement, which is formulated `downstairs', can only be discussed assuming the symmetry of the development. 

 For exotic models, a * indicates that weakly and strongly tame are to be understood in a suitable sense, e.g.,~in view of the failure of the energy condition (cf.~\S \ref{sec:intro/examples/exotic}).
\vspace{1.5mm}
{\small
\begin{center}
  \begin{tabular}{l@{\hspace{-.4cm}}c@{~~}c@{\hspace{-1.3mm}}r}
Matter model & Weakly tame & Strongly tame & Remarks     \\
     \hline  
     Maxwell-Klein-Gordon & Yes& Yes& Theorem \ref{thm:gext}\\
     Vlasov & Yes & Yes&\cite{MDAR05,MDAR07}\\
     Klein-Gordon, $\m^2=\e=\imp=F_{\mu\nu}=0$, \fbox{BV} & Yes& Yes& \cite{DC93} \\
     Higgs, $V(\phi)\geq 0$ & Yes & ? & \cite{MD05a} \\
     Wave maps: target $\mathbb{S}^3$, $H^2$ & Yes & ? & \cite{MN08} \\
     Yang-Mills, et al. & ? & ?&null structure?\\
     Euler ($p=0$) & No & No & \cite{YSM73}, shell-crossings\\
     Euler ($p=0$), \fbox{suitably rough} & Yes & Yes & \cite{PH67}, W.C.C. still \emph{false} \cite{DC84}\\
     Euler & No & No & \cite{DC07, ARFS08, TS85}, shocks\\
     Euler, \fbox{suitably rough}  & ? & ?& difficult open problem!\\
    Two-phase fluid, \fbox{free-boundary} & Yes & Yes & \cite{DC95, DC96, DC96b}\hspace{-.99mm}{\color{white}\textbf{]}}\hspace{-1.03mm}, \ldots]\\
    Vacuum, triaxial Bianchi IX & Yes* & ?& \cite{MDGH06}, $n=5$, $SU(2)$-action\\
    Klein-Gordon, $\m^2 <0$& Yes* & Yes* & Theorem \ref{thm:gext}\\
    Higgs, $V(\phi)\geq -C$& Yes* & ? & \cite{MD05a} \\
    Gau\ss-Bonnet-Klein-Gordon, $\m^2=\e=\imp=F_{\mu\nu}=0$ & Yes*& ?& \cite{MN09} 
      \end{tabular}
 \end{center}
 }
\subsection{`Sharpness' of the boundary characterization}
We briefly discuss here the `sharpness' of the boundary characterization as given in Theorem \ref{thm:main}. 

For the purpose 
of this discussion, it is helpful to define a coarser description\footnote{The set $\sgt\cup \mathcal{S}\cup \mathcal{S}_{i^+}$ can be understood as a unit in the statement of Theorem \ref{thm:main}.  See Statements IV.3, IV.5.e and VII.1.   Indeed, $\sgt\cup \mathcal{S}\cup \mathcal{S}_{i^+}$ corresponds to the set $\mathcal{B}_s$ in \cite{MDAR07}.  The reason we choose to separate the components is to highlight the fact that $r$ can be zero on the null components that emanate from $b_{\Gamma}$ and $i^{\square}$.} of the boundary.  Let us define
\beqn\nonumber
\widetilde{\mathcal{S}} = \sgt \cup \mathcal{S}\cup \mathcal{S}_{i^+}.
\eeqn
The spacetime boundary (\ref{bp}) is then given by
\beqn\label{bp2}
\mathcal{B}^+= b_{\Gamma}\cup \sgo \cup \chg \cup \widetilde{\mathcal{S}}\cup\mathcal{CH}_{i^+}\cup i^{\square}\cup \mathcal{I}^+\cup i^0.
\eeqn 
The sets $b_{\Gamma}$, $i^{\square}$, $\mathcal{I}^+$, and $i^0$ are always non-empty.  The set $\widetilde{\mathcal{S}}$ is non-empty for every black hole solution in the case $\m^2=\e=\imp=F_{\mu\nu}=0$ (cf.~\S\ref{sec:intro/cd/c}). Christodoulou \cite{DC94} constructs explicit examples in the BV class for which $\sgo$ is non-empty and examples for which $\chg$ is non-empty. On the other hand, an easy pasting argument involving the Reissner-Nordstr\"om solution yields examples in which $\mathcal{CH}_{i^+}$ is non-empty. Thus, we have the following theorem.
\begin{thm}\label{thm111} For each type of boundary component in (\ref{bp2}), there exists a development of data as in Theorem \ref{thm:main} (possibly in the more general BV class) for which that component-type is non-empty. 
\end{thm}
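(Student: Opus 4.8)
The plan is to verify the assertion separately for each of the eight component-types appearing in the coarse decomposition (\ref{bp2}), exploiting the fact that the statement is purely existential and disjunctive: each component-type need only be shown non-empty in \emph{some} single development, and distinct components may be realized by distinct spacetimes. I would group the components by the difficulty of producing a witness. The components $b_{\Gamma}$, $i^{\square}$, $\mathcal{I}^+$, and $i^0$ are non-empty for \emph{every} admissible development by Statement II of Theorem \ref{thm:main}: $\Gamma$ and $\mathcal{I}^+$ are always non-empty, so their respective future limit points $b_{\Gamma}$ and $i^{\square}$ exist, and $i^0$ is by definition the limit point of the (non-empty) initial slice $\Sigma$. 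Thus any development whatsoever --- for concreteness the trivial (Minkowskian) solution, or a small dispersive solution whose existence follows from Chae \cite{DChae03} --- simultaneously realizes all four of these component-types.

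For $\widetilde{\mathcal{S}} = \sgt\cup\mathcal{S}\cup\mathcal{S}_{i^+}$ I would specialize to the real-valued massless model $\m^2=\e=\imp=F_{\mu\nu}=0$ and take initial data whose maximal development contains a black hole, $\mathcal{BH}\neq\emptyset$; such data exists, for instance by arranging the hypotheses of Theorem \ref{thm:christo_trap_form} so that a trapped surface (hence $\mathcal{A}\neq\emptyset$) forms, equivalently by the generic black-hole solutions (diagram II) of \cite{DC99}. In this model $\widetilde{\mathcal{S}} = \mathcal{S}$ by the refinement of \S\ref{sec:intro/cd/c}, and Statement IV.3 together with $\mathcal{A}\neq\emptyset$ forces $\mathcal{S}\neq\emptyset$. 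For the central null components $\sgo$ and $\chg$ I would invoke the explicit BV constructions of Christodoulou \cite{DC94}, which produce `naked singularity' spacetimes of light-cone type (diagram III, with $\sgo = \mathcal{S}_{\Gamma}\neq\emptyset$) and of central-Cauchy-horizon type (with $\chg\neq\emptyset$); these lie in precisely the more general BV class the statement permits.

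The only step requiring a genuine construction is $\mathcal{CH}_{i^+}$, where one must leave the uncharged regime, since one-ended data with $\e=\imp=0$ forces $F_{\mu\nu}=0$ and hence $\mathcal{CH}_{i^+}=\emptyset$. I would take $\e\neq 0$ and build one-ended charged-scalar-field data whose development contains an exact \emph{sub-extremal} Reissner--Nordstr\"om region: prescribe characteristic data along an outgoing cone on which $\phi$ vanishes and the geometry together with $F_{\mu\nu}$ is that of Reissner--Nordstr\"om with $0<|Q|<M$. Since Reissner--Nordstr\"om with $\phi\equiv 0$ is an exact solution of (\ref{RMN})--(\ref{eqn:kg}), a domain-of-dependence argument then identifies a neighborhood of the resulting Cauchy horizon with the Reissner--Nordstr\"om interior, whose non-empty Cauchy horizon is exactly a non-empty $\mathcal{CH}_{i^+}$ (cf. the extension statement of Dafermos, Theorem \ref{thm:c0strong}). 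The main obstacle is precisely this pasting: one must arrange the data so that it (a) extends to \emph{globally admissible} one-ended data --- with a regular center on which the charge, sourced by $\e\neq 0$, decays to zero (via, e.g., a compactly supported charged pulse away from $\Gamma$) and containing no anti-trapped spheres of symmetry --- and (b) genuinely persists to the future as an isometric copy of the sub-extremal interior, so that $\mathcal{CH}_{i^+}$ survives in the \emph{maximal} development rather than being truncated by an earlier boundary component. Once the admissibility and persistence of the pasted region are verified, each component-type in (\ref{bp2}) has been exhibited as non-empty, completing the proof.
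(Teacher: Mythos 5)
Your case-by-case strategy is exactly the paper's own proof of Theorem \ref{thm111}: the paper likewise notes that $b_{\Gamma}$, $i^{\square}$, $\mathcal{I}^+$, $i^0$ are non-empty in every development, obtains $\widetilde{\mathcal{S}}\neq\emptyset$ from black hole solutions of the model $\m^2=\e=\imp=F_{\mu\nu}=0$, cites Christodoulou's explicit BV constructions \cite{DC94} for $\sgo$ and for $\chg$, and handles $\mathcal{CH}_{i^+}$ by ``an easy pasting argument involving the Reissner-Nordstr\"om solution'' --- which is precisely your $\e\neq 0$ construction, and you correctly identify why one must leave the uncharged regime to do it with one-ended data.

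There is, however, one step that is a genuine logical error as written: you deduce $\mathcal{S}\neq\emptyset$ from ``$\mathcal{A}\neq\emptyset$ together with Statement IV.3.'' Statement IV.3 asserts the implication in the \emph{opposite} direction --- if $\sgt\cup\mathcal{S}\cup\mathcal{S}_{i^+}\neq\emptyset$ then $\mathcal{A}\cup\mathcal{T}\neq\emptyset$ --- and gives no information when one starts from $\mathcal{A}\neq\emptyset$; indeed the paper warns that a black hole region need not contain any (marginally) trapped sphere (see the footnote preceding Corollary \ref{cor:trap_w}). The repair uses material you already cite. By Statement IV.2, $\mathcal{A}\neq\emptyset$ implies $\mathcal{BH}\neq\emptyset$; and in the model $\m^2=\e=\imp=F_{\mu\nu}=0$ one has $\mathcal{CH}_{i^+}\cup\mathcal{S}_{i^+}=\emptyset$ and $\sgt=\emptyset$ (\S\ref{sec:intro/cd/c}), whence $\mathcal{BH}\neq\emptyset$ forces $\mathcal{S}\neq\emptyset$ --- this is the implication recorded in the footnote following Theorem \ref{thm:price}. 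Alternatively, invoke Theorem \ref{thm:christo} directly: generic black hole solutions in this model are as in diagram II, for which $\mathcal{S}\neq\emptyset$. Finally, on the $\mathcal{CH}_{i^+}$ pasting, the mechanism behind the ``persistence'' you flag should be made explicit: the Reissner-Nordstr\"om portion of the data must extend strictly \emph{inside} the event horizon (the charged pulse sits in the black hole region on the initial slice), so that points on the Reissner-Nordstr\"om Cauchy horizon near $i^+$ have causal past meeting the data only in its Reissner-Nordstr\"om portion; uniqueness then identifies a one-sided neighborhood of that segment with the Reissner-Nordstr\"om interior, where $r\rightarrow r_- >0$, placing it in $\mathcal{CH}_{i^+}$. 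If the data were exactly Reissner-Nordstr\"om only outside the horizon, domain of dependence alone would not prevent the infalling pulse from destroying the Cauchy horizon.
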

 
It may be useful to introduce the following nomenclature:
\begin{defn}
A strongly tame Einstein-matter model for which the conclusion of Theorem \ref{thm111} holds is called \emph{fully general}.
\end{defn}
In this language, the Einstein-Maxwell-Klein-Gordon system is fully general, in fact, the `simplest' model that is known to be fully general.\footnote{One could, of course, define alternative notions of `fully general' with respect to the original decomposition (\ref{bp}), or requiring that various component-types be simultaneously non-empty, \emph{et cetera}.  We shall not pursue this here.} This is one way to understand the importance of the charged scalar field model in studying spherically symmetric formulations of cosmic censorship.

\subsection{Is it possible to `super-charge' a black hole?}\label{sec:intro/super}

Of relevance to a discussion in the physics community, Theorem \ref{thm:main} addresses whether one can  `super-charge' a black hole.  By this we mean:  \begin{tiny}
•
\end{tiny}Consider a black hole solution that is (nearly) `extremal'.  Can one, by some (semi-)classical process, add enough additional charge to the system (e.g., by throwing in charged particles) so that the final mass and charge have a `super-extremal' relation?

This notion of `destroying' the event horizon by means of such a process has been entertained in the literature, e.g.,~\cite{BCK10, CLS10, FdFYY01, VH99,TJTS09, MS07,RS08,SS11,RW72}, suggesting that one can `transform' a black hole into a `naked singularity'. If this were possible, then weak cosmic censorship would be false.  These constructions, however, share the feature that $\mathcal{A}\neq \emptyset$.  One is thus to imagine, for example, a Penrose diagram as depicted below:
\begin{center}
\includegraphics[scale=.5]{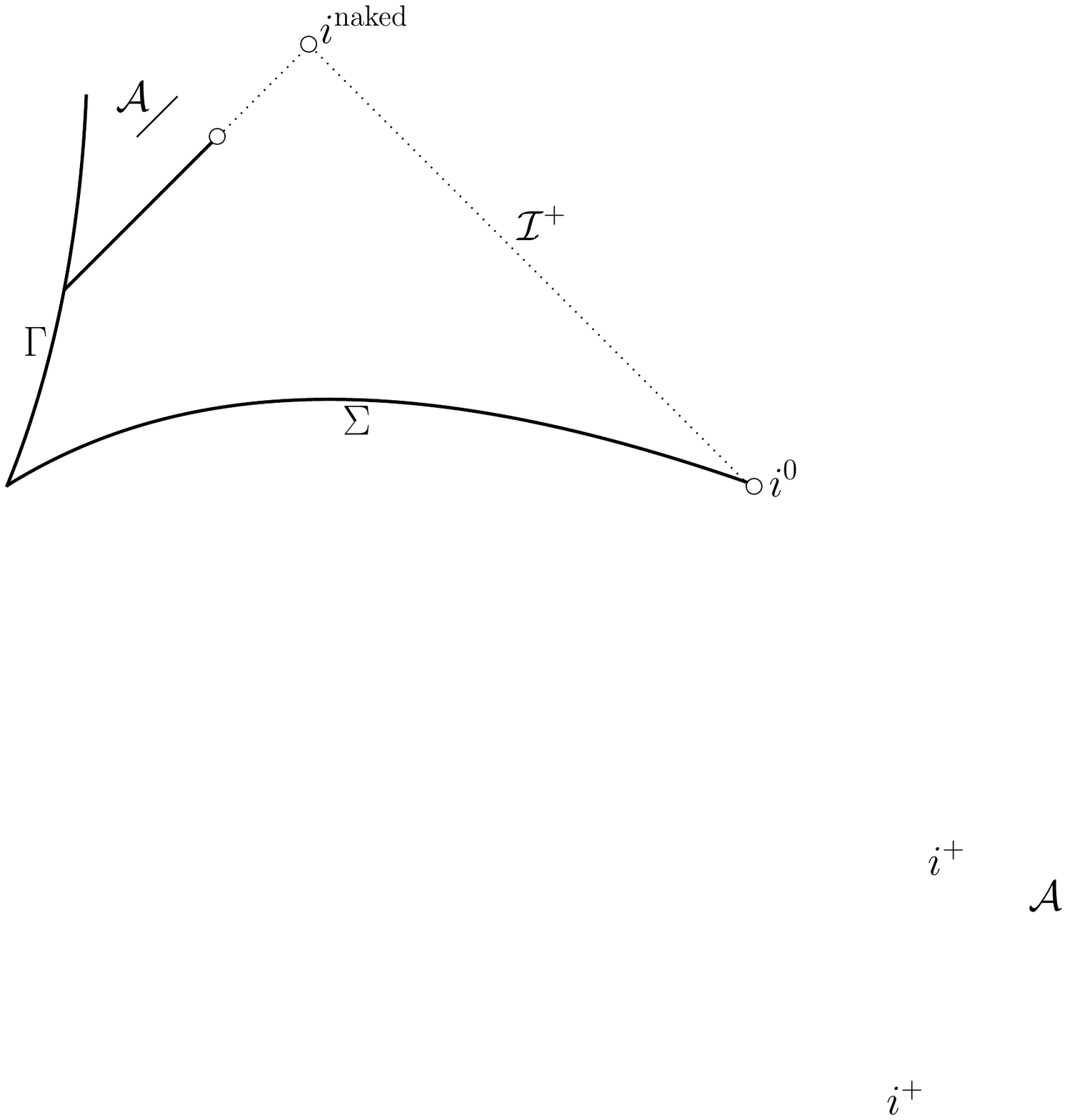}
\vspace{-.1cm}
\end{center}
In view of Theorem \ref{thm:main}, such spactimes simply do not exist.  Moreover, since $\mathcal{A}\neq \emptyset$, by Statement III of Theorem \ref{thm:main}, $\mathcal{I}^+$ is complete and the correct Penrose diagram is as depicted in Theorem \ref{thm:main} with $\mathcal{BH}\neq \emptyset$ and $i^{\square} = i^+$.  Interestingly, one need not understand the precise long-time behavior of $\mathcal{H}^+$ to infer this.  
In particular,  no `naked singularity' can be created in the present model by `super-charging' a black hole.  This confirms the original intuition of Wald \cite{RW72}.
\subsection{Forthcoming results}\label{sec:intro/forth} Lastly, we would like to announce that in a series of forthcoming papers \cite{JK10c, JK10a}, we will complement the result of Theorem \ref{thm:main} by giving criteria on initial data sufficient to yield the non-emptiness of certain boundary components. We will show, in particular, that
the trapped surface formation result of Christodoulou (Theorem \ref{thm:christo}) and the Cauchy horizon stability result of Dafermos (Theorem \ref{thm:c0strong}) can be extended to the full Einstein-Maxwell-Klein-Gordon system ($\e\in \R, \m^2\geq 0$), the latter generalization imposing decay along the event horizon compatible with Conjecture \ref{conj}.\\

\noindent\textbf{Acknowledgements.} I thank my Ph.D. advisor Mihalis Dafermos for providing invaluable insight and guidance.  I would also like to thank Jared Speck and Willie Wong for helpful discussions regarding this project and two anonymous referees for carefully reading the manuscript and providing many insightful remarks and suggestions.  I am supported through the generosity of the Cambridge Overseas Trust.

\section{Preliminaries}\label{sec:pre}
We begin by introducing a few mathematical preliminaries. In what follows, causal-geometric constructions, e.g.,~the causal future $J^+$, the causal past $J^-$, the chronological future $I^+$, the chronological past $I^-$, etc., will refer to the underlying flat Minkowski metric $\eta_{\mu\nu}$ and its induced topology on $\R^{1+1}$.\footnote{We take the convention $p\in J^+(p)\cap J^-(p)$, but $p\not\in I^+(p)\cup I^-(p)$.} 

\subsection{Spacetime geometry of the maximal future development}\label{sec:pre/max}
We consider those spacetimes that, as in Theorems \ref{thm:main} and \ref{thm:gext}, are the maximal future developments of spherically symmetric initial data. In light of the $U(1)$ gauge freedom highlighted in \S\ref{prelimbundle}, we need to be explicit in what we mean by `spherically symmetric' initial data. Let us recall that initial data consists of a Riemannian 3-manifold $(\Sigma^{\scriptsize{(3)}}, h_{ij})$, a symmetric 2-tensor $K_{ij}$ (extrinsic curvature), two 1-forms $E_i$ and $B_i$ (electric and magnetic field, respectively), a connection 1-form $A_i$ (local gauge potential), and functions $\phi$ and $\phi'$ (scalar field and its normal derivative) satisfying the constraint equations. 

\begin{defn}\label{defnspherical} Let $(\Sigma^{\scriptsize{(3)}}, h_{ij}, K_{ij}, E_i, B_i, A_i, \phi, \phi')$ be a smooth initial data set for the Einstein-Maxwell-Klein-Gordon system.  We say the initial data are \emph{spherically symmetric} if
\begin{itemize}
\item [1.] the Lie group $SO(3)$ acts by isometry on $(\Sigma^{\scriptsize{(3)}}, h)$;
\item [2.] the $SO(3)$-action preserves $K$, $|\phi|^2$, $E$, and $B$;
\item [3.] $\Sigma = \Sigma^{\scriptsize{(3)}}/SO(3)$ inherits the structure of a connected 1-dimensional Riemannian manifold (possibly with boundary); and,
\item [4.] when $\e\neq 0$, the $SO(3)$-action additionally preserves $\phi$ and $A$.
\end{itemize}
\end{defn}

In constructing our developments, a local existence result is needed.  We apply an easy generalization of a theorem of Choquet-Bruhat and Geroch \cite{CBG69}, together with standard preservation of symmetry arguments, to obtain 

\begin{prop}\label{choquet}Let $(\Sigma^{\scriptsize{(3)}}, \ldots)$ be a smooth spherically symmetric initial data set for the Einstein-Maxwell-Klein-Gordon system, where, topologically, $\Sigma^{\scriptsize{(3)}}$ is homeomorphic to either $\R^3$, $\mathbb{S}^2\times \R$, or $\mathbb{S}^3$. Then, there exists a unique (up to diffeomorphism of the base manifold and vertical automorphism of its principal $U(1)$-bundle) smooth collection $(\M, g_{\mu\nu}, \phi, F_{\mu\nu})$ such that

\begin{itemize}
\item [1.] $g_{\mu\nu}$, $\phi$, and $F_{\mu\nu}$ satisfy the Einstein-Maxwell-Klein-Gordon equations (\ref{RMN})--(\ref{eqn:kg});
\item [2.] $(\M, g_{\mu\nu})$ is globally hyperbolic and $\Sigma^{\scriptsize{(3)}}$ is a Cauchy surface;
\item [3.] $(\M, g_{\mu\nu}, \phi, F_{\mu\nu})$ induces the initial data set $(\Sigma^{\scriptsize{(3)}}, \ldots)$; and,
\item[4.] for any other collection $(\widetilde{\M}, \widetilde{g_{\mu\nu}}, \widetilde{\phi}, \widetilde{F_{\mu\nu}})$ satisfying Properties 1--3, there is an isometry of $(\widetilde{\M}, \widetilde{g_{\mu\nu}})$ onto a subset of $(\M, g_{\mu\nu})$ that preserves the matter fields and fixes the Cauchy surface $\Sigma^{\scriptsize{(3)}}$.
\end{itemize}
Moreover, $SO(3)$ acts smoothly by isometry on $\M$, preserves $|\phi|^2$ (in fact, $\phi$ itself) and $F_{\mu\nu}$, and the quotient manifold $\M/SO(3)$ inherits the structure of a 2-dimensional, time-oriented Lorentzian manifold (possibly with boundary) that can be conformally embedded as a bounded subset of $(\R^{1+1}, \eta_{\mu\nu})$.  

Let $\Gamma$ denote the projection to $\M/SO(3)$ of the set of fixed points of the $SO(3)$-action on $\M$.  $\Gamma$ is then the (possibly empty, not necessarily connected) timelike boundary of $\M/SO(3)$, called the center of symmetry.

If $\Sigma^{\scriptsize{(3)}}\simeq \R^3$, then $\Gamma$ is non-empty and has one connected component. 

If  $\Sigma^{\scriptsize{(3)}}\simeq \mathbb{S}^2\times \R$, then $\Gamma$ is empty. 

 If $\Sigma^{\scriptsize{(3)}}\simeq \mathbb{S}^3$, then $\Gamma$ is non-empty and has two connected components. 
\end{prop}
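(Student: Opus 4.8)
The plan is to obtain the result as an essentially standard consequence of the Choquet-Bruhat--Geroch construction \cite{CBG69}, adapted to accommodate the gauge-theoretic structure of the matter, followed by a preservation-of-symmetry argument and an analysis of the $SO(3)$-quotient. First I would establish local well-posedness for the system (\ref{RMN})--(\ref{eqn:kg}): fixing a wave (e.g.\ harmonic) gauge for $g_{\mu\nu}$ and a Lorenz-type gauge $\nabla^\mu A_\mu = 0$ for the potential $A$ reduces the Einstein--Maxwell--Klein--Gordon equations to a quasilinear system of wave equations for $(g_{\mu\nu}, A_\mu, \phi)$ whose principal part is governed by $g^{\mu\nu}\partial_\mu\partial_\nu$. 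Standard energy estimates then yield a smooth local development from the constrained data $(\Sigma^{\scriptsize{(3)}}, h_{ij}, K_{ij}, E_i, B_i, A_i, \phi, \phi')$, and the usual argument shows that the Hamiltonian, momentum, and Gauss (Maxwell) constraints, together with the wave-gauge conditions, propagate, so that a solution of the reduced system is a genuine solution of (\ref{RMN})--(\ref{eqn:kg}).

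With local existence and domain-of-dependence uniqueness in hand, the maximal future development is produced by the Choquet-Bruhat--Geroch machinery: one partially orders the set of smooth globally hyperbolic developments inducing the given data by (gauge-respecting) isometric embedding, patches any two developments along their common domain using local uniqueness, and applies Zorn's lemma to extract a maximal element whose uniqueness up to isomorphism follows from the patching. The only real departure from the vacuum case is bookkeeping: a ``development'' must now be understood as a collection $(\M, g_{\mu\nu}, \phi, F_{\mu\nu})$ carrying the principal $U(1)$-bundle, connection, and section of \S\ref{prelimbundle}, and the natural notion of isomorphism is a bundle isomorphism covering an isometry of $(\M, g_{\mu\nu})$, so that uniqueness holds only modulo vertical automorphisms of the bundle, i.e.\ gauge transformations. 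This is the source of the ``up to \ldots vertical automorphism'' clause and is exactly what must be tracked throughout.

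For the $SO(3)$-symmetry I would argue by uniqueness. The reduced equations are covariant under the $SO(3)$-action (which by hypothesis is an isometry of $h_{ij}$ preserving $K$, $|\phi|^2$, $E$, $B$, and, when $\e\neq0$, also $\phi$ and $A$); hence for each $\sigma\in SO(3)$ the pushforward of the maximal development is again a development of the same data. By the uniqueness just established there is a unique isomorphism identifying the two, and these isomorphisms compose correctly, defining a smooth $SO(3)$-action on $\M$ by isometry that lifts to the bundle and preserves $\phi$ (not merely $|\phi|^2$) and $F_{\mu\nu}$. Passing to the quotient, $\Q^+ := \M/SO(3)$ inherits a $2$-dimensional time-oriented Lorentzian metric away from the fixed-point set, and since $\M$ is globally hyperbolic so is the quotient; choosing global double-null coordinates $(u,v)$ and composing with bounded monotone functions realizes $\Q^+$ as a bounded conformal subset of $(\R^{1+1}, \eta_{\mu\nu})$, with $\Gamma$ the image of the fixed-point set, a timelike boundary on which the area-radius $r$ vanishes.

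The topology of $\Gamma$ then reduces to the fixed-point structure of the $SO(3)$-action on $\Sigma^{\scriptsize{(3)}}$, which propagates to timelike curves in $\M$: on $\R^3$ the rotation action fixes only the origin, giving one component; on $\mathbb{S}^2\times\R$ the action has no global fixed points (acting transitively on each $\mathbb{S}^2$ slice), giving $\Gamma=\emptyset$; and on $\mathbb{S}^3$, viewed as the suspension of $\mathbb{S}^2$, there are exactly two fixed points (the poles), giving two components. I expect the main obstacle to lie not in any single estimate but in the gauge-theoretic consistency of the whole scheme: one must choose the hyperbolic reduction gauge, the notion of development-isomorphism, and the lift of the $SO(3)$-action to the bundle so that they are mutually compatible, and verify that the symmetry lift can always be arranged to be compatible with a single global gauge on the spherically symmetric quotient (so that, as asserted, $\phi$ itself and not merely $|\phi|^2$ descends). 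Reconciling the freedom ``up to vertical automorphism'' with the rigidity required to define a genuine group action is the delicate point.
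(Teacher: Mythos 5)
Your proposal is correct and takes essentially the same route as the paper: the paper gives no detailed argument for this proposition, stating only that it follows from ``an easy generalization of a theorem of Choquet-Bruhat and Geroch \cite{CBG69}, together with standard preservation of symmetry arguments,'' which is precisely the scheme (hyperbolic/Lorenz gauge reduction, constraint propagation, CBG patching and Zorn's lemma, symmetry via uniqueness, quotient and fixed-point analysis) that you spell out. The gauge-theoretic bookkeeping you flag as the delicate point is indeed the only genuine departure from the vacuum case, and the paper absorbs it implicitly through its definition of spherically symmetric data and the ``up to vertical automorphism'' clause in the uniqueness statement.
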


To extend Proposition \ref{choquet} to the case in which $\Sigma^{\scriptsize{(3)}}\simeq\mathbb{S}^2\times \mathbb{S}^1$, we consider the universal cover of $\Sigma^{\scriptsize{(3)}}$, which has topology $\mathbb{S}^2\times \R$. 

Since our spacetime manifold $\M$ admits a spherically symmetric spacelike Cauchy hypersurface $\Sigma^{\scriptsize{(3)}}$, if $\pi: \M \rightarrow \M/SO(3)$ is the standard projection map, then $\Sigma = \pi(\Sigma^{\scriptsize{(3)}})$  is a connected spacelike curve `downstairs' in $\M/SO(3)$, as $\Sigma^{\scriptsize{(3)}}$ is preserved under the $SO(3)$-action.

Let $\Q^+=D_{\M}^+(\Sigma^{\scriptsize{(3)}})/SO(3)$, i.e., the quotient by $SO(3)$ of the future domain of dependence of $\Sigma^{\scriptsize{(3)}}$ in $\M$. For the asymptotically flat initial data with one end  considered in Theorem \ref{thm:main}, we have $\Sigma^{\scriptsize{(3)}}\simeq \R^3$.  In this case, past-directed causal curves issuing from a point $p \in \Q^+$ will terminate at a single point on $\Sigma\cup \Gamma$.  Thus, $\Sigma$ is the past (spacelike) boundary of $\Q^+$ as depicted below.

\begin{figure}[h]
\begin{center}
\includegraphics[scale=.5]{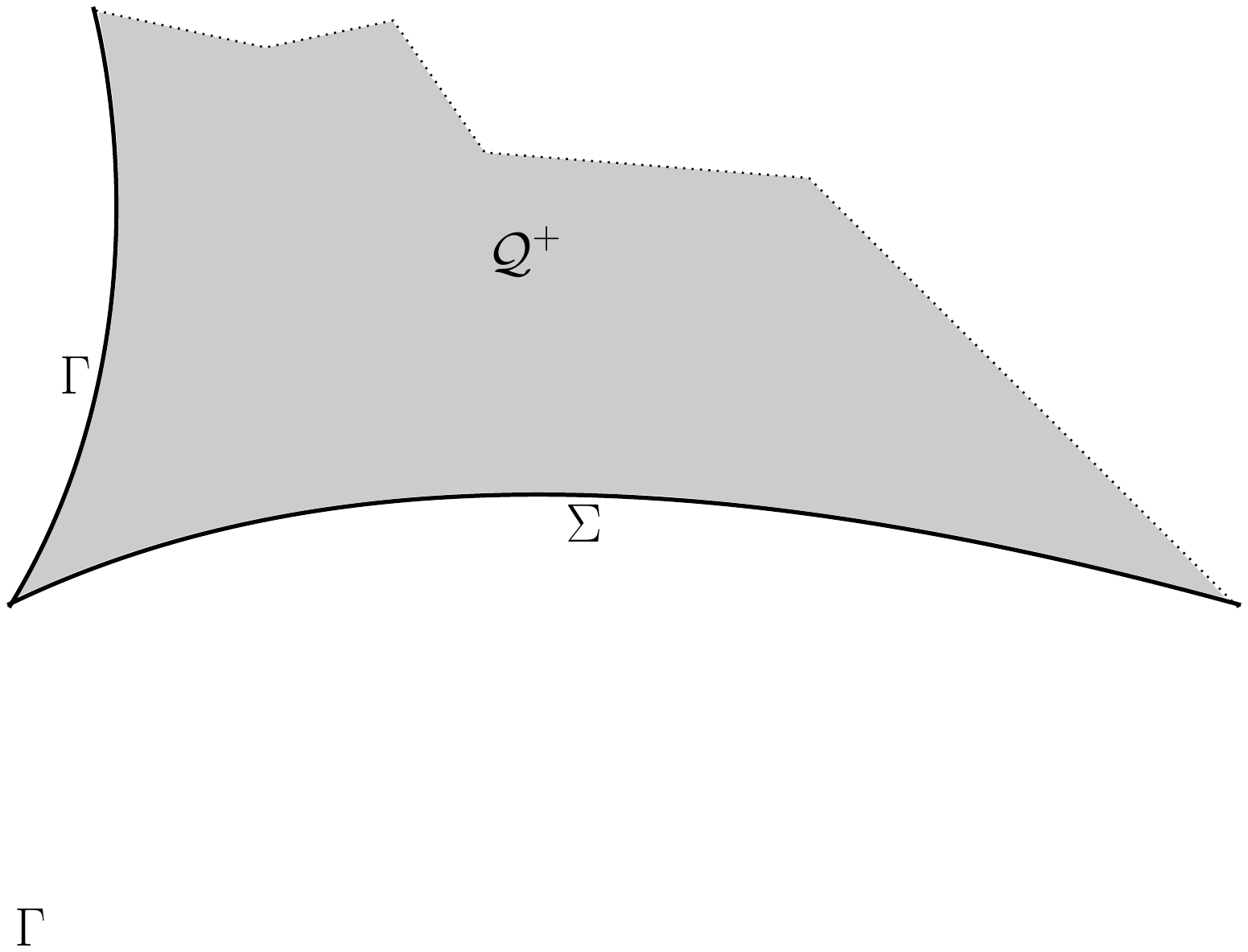}
\end{center}
\end{figure}

As conformal embeddings preserve causal structure, the standard double null co-ordinates $(u,v)$ of the ambient $\R^{1+1}$ provide a global chart on $\Q^+$,  and its metric can be written $g_{ab}\dd x^a\dd x^b = -\Omega^2\dd u\dd v$.  In particular, the full metric on $\M$ is given by
\beqn\label{metric}
g_{\mu\nu}\dd x^{\mu}\dd x^{\nu} = -\Omega^2\dd u\dd v + r^2 \dd h^2
\eeqn
where $\dd h^2= \dd \theta^2 + \sin^2\theta\dd \varphi^2$ is the standard metric on $\mathbb{S}^2$ and $r:\Q^+\rightarrow [0, \infty)$ is the area-radius function given by
\beqn\nonumber
r(p) = \sqrt{\frac{\textrm{Area}(\pi^{-1}(p))}{4\pi}}.
\eeqn

\noindent The embedding into $\R^{1+1}$ shall be chosen, in the case that $\Gamma$ is non-empty and connected, such that $\partial/\partial u$ points `inward' towards $\Gamma$ and $\partial/\partial v$ points `outward' away from $\Gamma$.

In what follows, $\Q^+$ will be identified with its image under the embedding into $\R^{1+1}$.

\subsection{Determined system}\label{sec:pre/determined}
In order to work with a determined Einstein-Maxwell-Klein-Gordon system of equations, we must fix the inherent gauge freedom that arises.

\subsubsection{Choice of gauge}\label{sec:pre/determined/gauge}

For $\e\neq 0$, as seen from \S\ref{prelimbundle}, the system of equations (\ref{RMN})--(\ref{eqn:kg}) is invariant under local $U(1)$ gauge transformations
\beqna\nonumber
\phi &\rightarrow& e^{-\e\ii \chi} \phi\\
A_{\mu} &\rightarrow& A_{\mu} + \partial_{\mu}\chi,\nonumber
\eeqna
for any smooth real-valued function $\chi$. Fixing a gauge amounts to fixing $\chi$. 

Let us recall that the electromagnetic field strength $2$-form $F$ can be expressed as
\beqn\nonumber
F_{\mu\nu} = \partial_{\mu}A_{\nu} - \partial_{\nu}A_{\mu}.
\eeqn 
As the $SO(3)$-action preserves $F$, there can be only two non-vanishing components:
\beqn\nonumber
F = F_{uv} \dd u\wedge \dd v + F_{\theta\varphi}\dd \theta\wedge \dd\varphi.
\eeqn
Since every $2$-form is proportional to the volume element on $\mathbb{S}^2$, it then follows from the fact that $\dd F = 0$, i.e., $F$ is closed, that
\beqn\label{defmagn}
F_{\theta\varphi} = \Qm \sin\theta
\eeqn
for some constant $\Qm$ such that (cf.~the Dirac quantization condition)
\beqn\nonumber
\e\Qm  \in \frac{1}{2}\mathbb{Z},
\eeqn
which is to be interpreted as the (topological) magnetic charge of $\M$.  In the case, moreover, that $A$ is spherically symmetric, then $A$ must be of the form
\beqn\nonumber
A = A_u(u,v) \dd u + A_v(u,v) \dd v,
\eeqn
from which it follows that $F_{\theta\varphi} = 0$, i.e., $\Qm= 0$. We deduce that if $\e \neq 0$, then $F$ has trivial (de Rham) cohomology, i.e., there is a globally defined 1-form $A$ (on $\M$) such that $\dd A=F$, i.e., $F$ is exact.  Indeed, given $F$, we define uniquely a gauge, i.e., define $\chi$, on $\M$ by imposing the following conditions:
\beqna
A_u(u,v)\big|_{\pi^{-1}(\Sigma\cup\Gamma)} &=&0\nonumber\\
A_v(u,v) &=&0\nonumber.
\eeqna 

\subsubsection{Quantization condition}\label{quantize}
In view of the fact that spherical symmetry requires that $\Qm=0$ whenever $\e\neq 0$, it is no longer necessary to impose the restriction $\e\in \mathbb{Z}$.  In particular, as a consequence of spherical symmetry, we may view electromagnetism not as a \emph{compact} $U(1)$ gauge theory (for which charge is quantized) but as a \emph{non-compact} $\R$ gauge theory (for which charge is unquantized). In the sequel, we will take $\e \in \R$.   

\subsubsection{Energy-momentum tensor}\label{energymomentum}
Let us note that the energy-momentum tensor (\ref{TEM})--(\ref{TKG}) immediately gives
\beqna
T_{uu} &=& \D_u\phi\left(\D_u\phi\right)^{\dagger}\nonumber\\
T_{vv} &=& \pv\phi\left(\pv\phi\right)^{\dagger}\nonumber.\nonumber
\eeqna
To compute the other components of the energy-momentum tensor, let us consider the scalar invariant $F_{\mu\nu}F^{\mu\nu}$. We note that in spherical symmetry,
\beqna\nonumber
F_{\mu\nu}F^{\mu\nu} &=& g^{\mu\alpha}g^{\nu\beta}F_{\mu\nu}F_{\alpha\beta}\\
&=&2g^{\theta\theta}g^{\varphi\varphi}\left(F_{\theta\varphi}\right)^2 - 2\left(g^{uv}\right)^2\left(F_{uv}\right)^2\nonumber\\
&=&\frac{2}{r^4}\Qm^2 - 8\Omega^{-4}\left(F_{uv}\right)^2.\nonumber
\eeqna
Let us define the quantity $\Qe$ (called the electric charge) by
\beqn\label{defnbeta}
\Qe =2r^2\Omega^{-2}F_{uv}. 
\eeqn
Noting the identity
\beqn\nonumber
F_{\mu\nu}F^{\mu\nu} = -\frac{2}{r^4}\left(\Qe^2 - \Qm^2\right),
\eeqn
the $(u,v)$-component of (\ref{TEM})--(\ref{TKG}) then reads
\beqn\label{tuv}
T_{uv} = \frac{1}{4}\Omega^2 \left(\m^2\phi\phi^{\dagger} + \frac{\Qe^2 + \Qm^2}{4\pi r^4}\right).
\eeqn
Lastly, we note that the spherical components of the energy-momentum tensor are given by
\beqn\nonumber
T_{\varphi\varphi}\sin^{-2}\theta = T_{\theta\theta} = r^2\Omega^{-2}\left(\D_u\phi \left(\pv \phi\right)^{\dagger} + \pv\phi\left(\D_u\phi\right)^{\dagger} - \frac{1}{2}\m^2\Omega^2\phi\phi^{\dagger}\right) + \frac{\Qe^2+\Qm^2}{8\pi r^2}.
\eeqn

Since $T_{uu}\geq0$ and $T_{vv}\geq0$, the Einstein-Maxwell-Klein-Gordon system obeys the null energy condition.  If, in addition, $\m^2\geq0$, then $T_{uv}\geq0$ and thus the model obeys the dominant energy condition.  We shall assume $\m^2\geq 0$ henceforth.\footnote{In the case of the Einstein-Klein-Gordon system ($\e = F_{\mu\nu}=0$), the assumption that $\m^2\geq0$ can be dropped in the proof of Theorem \ref{thm:gext}. See footnote \ref{foot:dominant}.}

\subsubsection{Reduced system of equations}\label{sec:pre/determined/eqns}
The spherically symmetric Einstein-Maxwell-Klein-Gordon system reduces (in the gauge of \S \ref{sec:pre/determined/gauge}) to the following equations on $\Q^+$ for $C^2$-functions $(r, \Omega, \phi, A_u)$:

 \beqn\label{eqn:ruv}
r\pv\pu r= -\frac{1}{4}\Omega^2 -\pv r\pu r+ \m^2\pi r^2 \Omega^2\phi\phi^{\dagger} + \frac{1}{4} \Omega^2 r^{-2}Q^2
 \eeqn
 \beqn\label{logO}
r^2 \pu\pv \log \Omega = -2\pi r^2\left(\D_u\phi\left(\pv\phi\right)^{\dagger}+\pv\phi\left(\D_u\phi\right)^{\dagger}\right)- \frac{1}{2}\Omega^{2}r^{-2}Q^2+\frac{1}{4}\Omega^2 + \pu r\pv r
 \eeqn
 \beqn\label{eqn:conuu}
  \partial_u(\Omega^{-2}\partial_ur) = -4\pi r  \Omega^{-2}\D_u\phi\left(\D_u\phi\right)^{\dagger} 
 \eeqn
 \beqn\label{eqn:convv}
 \partial_v(\Omega^{-2}\partial_vr) = -4\pi r  \Omega^{-2}\pv\phi\left(\pv\phi\right)^{\dagger} \eeqn
 
\beqn\label{betaa}
\partial_u\Qe = 2\pi\e\ii r^2 \left(\phi\left(\D_u\phi\right)^{\dagger}-\phi^{\dagger}\D_u\phi\right)
\eeqn
\beqn\label{beta}
\partial_v\Qe = 2\pi\e\ii r^2\left(\phi^{\dagger}\pv\phi-\phi\left(\pv\phi\right)^{\dagger}\right)
\eeqn
 \beqn\label{KG}
 \partial_u\partial_v \phi +r^{-1}( \pu r\partial_v\phi + \pv r\partial_u\phi ) + \e\ii \Psi(A)
= -\frac{1}{4}\m^2\Omega^{2}\phi
\eeqn
\beqn\label{last}
\Psi (A)=A_u\left(\phi r^{-1}\pv r + \pv \phi\right) - \frac{1}{4}\Omega^2r^{-2}\Qe\phi
\eeqn
\beqn\label{totalcharge}
Q^2 = \Qe^2+\Qm^2
\eeqn
\beqn\label{fuv}
\Qe = -2 r^2 \Omega^{-2} \pv A_u.
\eeqn

\subsubsection{Scaling properties}\label{sec:scale}  Lastly, we consider the scaling properties of the reduced system (\ref{eqn:ruv})--(\ref{fuv}). The choice of co-ordinates induces a natural scaling of the metric (\ref{metric}), whereby we take
\beqn\nonumber
[\Omega] = [L]^0\quad\quad \textrm{and}\quad\quad [r] = [u]=[v] = [L]^1.
\eeqn
The wave equation (\ref{KG}) then requires
\beqn\nonumber
[\m] = [L]^{-1},
\eeqn
and (\ref{eqn:ruv}), in turn, gives
\beqn\nonumber
[\phi] = [L]^0\quad\quad \textrm{and}\quad\quad [Q]= [\Qe] = [\Qm] = [L].
\eeqn
From (\ref{betaa}) and (\ref{fuv}) we then have
\beqn\nonumber
[A_u] = [L]^0 \quad\quad\textrm{and}\quad\quad [\e] = [L]^{-1}.
\eeqn
Note, in particular, that
\beqn\nonumber
[\e Q] = [L]^0.
\eeqn

It may seem somewhat surprising that the coupling parameter $\e$ (resp., $\m$) does not have the same dimension of charge (resp., mass). This is a consequence of imposing natural units ($c = G= 1$) at the level of the equations of motions.  While this is a commonly employed mathematical convenience (within the context of general relativity), at the level of the Einstein-Hilbert action, however, $G$ is a dimension\emph{ful} quantity. 
\section{Rudimentary boundary characterization}\label{spaceboundary}\label{sec:rud}
Let $\Q^+$ be as in Theorem \ref{thm:main}.
The maximal future development $\mathcal{Q}^+$ will have boundary $\mathcal{B}^+  = \overline{\Q^+}\backslash\Q^+$, which we aim to now characterize.\footnote{This boundary is induced by the conformal embedding of $\Q^+$ into $\R^{1+1}$ and is \emph{not} the boundary in the sense of a manifold-with-boundary, which is $\Sigma\cup \Gamma$.}  

The following description of $\mathcal{B}^+$ holds very generally for globally hyperbolic spherically symmetric spacetimes with one asymptotically flat end where the matter model obeys the null energy condition. The boundary $\mathcal{B}^+$ can be decomposed into two components: the boundary `emanating from spacelike infinity' and the boundary `emanating from first singularities'.

We begin with a few preliminary results.

\subsection{Nowhere anti-trapped initial data}\label{sec:pre/noanti}
The statement of Theorem \ref{thm:main} assumes that initial data are prescribed such that no anti-trapped regions are present, i.e.,~
\beqn\nonumber
\pu r <0
\eeqn
along $\Sigma$.  We thus assume this  throughout \S \ref{sec:rud}. Data satisfying this property, motivated by Christodoulou in \cite{DC95}, preclude anti-trapped regions from forming in their future development; that is, anti-trapped regions are non-evolutionary. This is given in
\begin{prop}\label{prop: antitrapped} In the notation of Theorem \ref{thm:main}, if $\pu r<0$ along $\Sigma$, then $\pu r<0$ everywhere in $\Q^+$.
\end{prop}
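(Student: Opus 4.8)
The plan is to propagate the sign of $\pu r$ off the initial surface $\Sigma$ using the Raychaudhuri equation (\ref{eqn:conuu}) in the ingoing direction. The key structural input is that the right-hand side of (\ref{eqn:conuu}),
\[
\pu\!\left(\Omega^{-2}\pu r\right) = -4\pi r\,\Omega^{-2}\,\D_u\phi\left(\D_u\phi\right)^{\dagger},
\]
is non-positive, since $r\geq 0$, $\Omega^{-2}>0$, and $\D_u\phi(\D_u\phi)^{\dagger}=|\D_u\phi|^2\geq 0$ (this is exactly the non-negativity of $T_{uu}$, i.e.\ the null energy condition). Hence the quantity $\Omega^{-2}\pu r$ is non-increasing along each ingoing null ray $\{v=\mathrm{const}\}$ as $u$ increases, that is, toward the future.

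First I would fix an arbitrary $p=(u_p,v_p)\in\Q^+$ and trace the ingoing null ray $\{v=v_p\}$ into the past (decreasing $u$). Since $\Q^+=D^+_{\M}(\Sigma^{(3)})/SO(3)$ is a future domain of dependence, this past-directed null curve must exit $\Q^+$ through its past boundary $\Sigma\cup\Gamma$. By the orientation convention of \S\ref{sec:pre/max} that $\partial/\partial u$ points inward toward $\Gamma$, moving in the $-\partial_u$ direction is moving \emph{outward}, away from $\Gamma$; consequently the ray cannot terminate on $\Gamma$ and must instead reach a single point $q=(u_q,v_p)\in\Sigma$ with $u_q\leq u_p$, the entire segment $\{(u,v_p):u_q\leq u\leq u_p\}$ lying in $\overline{\Q^+}$ with interior in $\Q^+$.

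Then I would integrate (\ref{eqn:conuu}) in $u$ from $q$ to $p$. At $q\in\Sigma$ the hypothesis gives $\pu r<0$, hence $\Omega^{-2}\pu r<0$ there. Because $\Omega^{-2}\pu r$ is non-increasing in $u$, it remains strictly negative for all $u\in[u_q,u_p]$; in particular $\Omega^{-2}\pu r(p)\leq \Omega^{-2}\pu r(q)<0$, so $\pu r(p)<0$. As $p$ was arbitrary, $\pu r<0$ throughout $\Q^+$, which is the assertion.

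The routine monotonicity of the last step is not where the difficulty lies; the main obstacle is the geometric claim that the past-directed ingoing null ray from $p$ terminates on $\Sigma$ rather than on $\Gamma$. I expect this to follow cleanly from the characterization of $\Q^+$ as a future domain of dependence together with the sign convention fixing the inward/outward orientation of $\partial_u$ and $\partial_v$ (so that $\Gamma$ is encountered only in the future-ingoing direction). One should also check the borderline case in which the ray meets the corner $\Sigma\cap\Gamma$; this point lies on $\Sigma$, so the hypothesis (or its continuous extension to the endpoint) still applies and the conclusion is unaffected.
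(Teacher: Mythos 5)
Your proposal is correct and follows essentially the same route as the paper: trace the past-directed constant-$v$ null ray from an arbitrary point of $\Q^+$ back to $\Sigma$ (the paper invokes global hyperbolicity for this; your orientation argument ruling out termination on $\Gamma$ is the same fact spelled out), then integrate (\ref{eqn:conuu}) to propagate the strict negativity of $\Omega^{-2}\pu r$ forward in $u$. The extra care you take with the corner $\Sigma\cap\Gamma$ is a harmless refinement of the same argument.
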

\begin{proof}
By global hyperbolicity, the past-directed null segment of constant-$v$ issuing from $(u,v)\in \Q^+$ will terminate on $\Sigma$.  In particular, there exists $u' \leq u$ such that $\pu r(u', v) <0$, by assumption.  The result follows by integrating (\ref{eqn:conuu}) along $[u', u]\times\{v\}$.
\end{proof}
We emphasize that Proposition \ref{prop: antitrapped} only relies on the null energy condition.
\subsection{Spacetime volume}\label{sec:pre/volume}
Let us define the sets
\beqna\nonumber
\mathcal{U}&=& \{u: \exists v ~~\textrm{s.t.}  ~~(u,v)\in \Q^+\}\\
\mathcal{V} &=& \{v: \exists u ~~\textrm{s.t.}  ~~(u,v)\in \Q^+\}.\nonumber
\eeqna
Since $\Q^+$ is a bounded subset of $\R^{1+1}$, we have that
\beqna\nonumber
U &=& \sup \mathcal{U} <\infty\nonumber \\
V &=& \sup \mathcal{V}<\infty\nonumber,
\eeqna
and
\beqna
U_0 &=& \inf \mathcal{U} > -\infty\nonumber \\
V_0 &=& \inf \mathcal{V}> -\infty\nonumber.
\eeqna

For any $u'\leq U$ and $v'< V$ such that $(u', v')\in \Q^+$, it follows by Proposition \ref{prop: antitrapped} and monotonicity (\ref{eqn:conuu}) that the spacetime volume of the region
\beqn\nonumber
\Q^+  (u', v') = \Q^+\cap \{u\leq u'\}\cap\{v\leq v'\}
\eeqn
satisfies
\beqna\nonumber
\int_{\Q^+(u', v')}2~\dd \textrm{Vol} &= &\int_{\mathcal{Q}^+\cap\{v\leq v'\}}\int_{\mathcal{Q}^+\cap \{u\leq u'\}}(-\Omega^{-2}\pu r)^{-1}(-\pu r)~\dd u \dd v\nonumber\\
&\leq& \sup_{\Sigma\cap \{v\leq v'\}}\Omega^2(-\pu r)^{-1}\int_{\Q^+\cap\{v\leq v'\}}\int_{\Q^+\cap \{u\leq u'\}}(-\pu r)~\dd u\dd v\nonumber\\
&\leq& \sup_{\Sigma\cap \{v\leq v'\}}\Omega^2(-\pu r)^{-1}\cdot \sup_{\Sigma\cap \{v\leq v'\}} r\int_{\Q^+\cap\{v\leq v'\}}~\dd v\nonumber\\
&\leq&\left(\sup_{\Sigma\cap \{v\leq v'\}}\Omega^2(-\pu r)^{-1}\cdot \sup_{\Sigma\cap \{v\leq v'\}} r \right)\left(V - V_0\right)<\infty.\nonumber
\eeqna
We have thus established, in particular, 
\begin{prop}\label{prop:finite_volume} If $p\in \overline{\Q^+}\backslash\{v=V\}$, then the region $J^-(p)\cap \Q^+$ has finite spacetime volume. 
\end{prop}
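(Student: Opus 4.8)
The plan is to reduce the claim to the volume estimate already carried out above for the regions $\Q^+(u',v')$. Write $p=(u_p,v_p)$ in the global double null chart on $\Q^+$; since $p\notin\{v=V\}$ we have $v_p<V$. Because past-directed causal curves in the ambient $\R^{1+1}$ decrease both null coordinates, one has $J^-(p)\cap\Q^+\subset\Q^+\cap\{v\le v_p\}$, so it suffices to bound the spacetime volume of the latter region. I would fix any $v'$ with $v_p\le v'<V$ (for instance $v'=v_p$) and estimate the volume of $\Q^+\cap\{v\le v'\}$.

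For this I would run the monotonicity argument displayed above essentially verbatim. Writing the volume form as $\tfrac12\Omega^2\,\dd u\,\dd v$ and using the identity $\Omega^2=(-\Omega^{-2}\pu r)^{-1}(-\pu r)$, I would first integrate in $u$ along each ingoing segment of constant $v$. By Proposition \ref{prop: antitrapped} we have $\pu r<0$ throughout $\Q^+$, and by (\ref{eqn:conuu}) the quantity $\Omega^{-2}\pu r$ is non-increasing in $u$, so $(-\Omega^{-2}\pu r)^{-1}$ is largest at the initial point of the segment on $\Sigma$; this produces the factor $\sup_{\Sigma\cap\{v\le v'\}}\Omega^2(-\pu r)^{-1}$. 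The remaining $u$-integral $\int(-\pu r)\,\dd u$ telescopes to a difference of $r$-values and is therefore bounded by $\sup_{\Sigma\cap\{v\le v'\}}r$, while the final $v$-integral contributes at most $v'-V_0\le V-V_0$. Altogether this yields
\[
\int_{\Q^+\cap\{v\le v'\}}2\,\dd\textrm{Vol}\ \le\ \left(\sup_{\Sigma\cap\{v\le v'\}}\Omega^2(-\pu r)^{-1}\right)\left(\sup_{\Sigma\cap\{v\le v'\}}r\right)\left(V-V_0\right).
\]

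The only point requiring care, and the crux of why the hypothesis $p\notin\{v=V\}$ is needed, is the finiteness of the two suprema on the right. Since $v'<V$, the set $\Sigma\cap\{v\le v'\}$ is a closed, bounded sub-arc of $\Sigma$ that stays a definite coordinate-distance away from the spatial-infinity limit point $i^0$, which is approached along $\Sigma$ only as $v\to V$ and on which $r\to\infty$. On this compact arc the smooth functions $r$, $\Omega^2$, and $-\pu r>0$ are bounded, with $-\pu r$ bounded below by a positive constant, so both suprema are finite and the displayed bound is finite. Were $p$ allowed on $\{v=V\}$ the arc would run out to $i^0$ and $\sup r$ would diverge, which is precisely why the restriction is imposed. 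Finally, since $J^-(p)\cap\Q^+\subset\Q^+\cap\{v\le v_p\}$, the desired conclusion follows.
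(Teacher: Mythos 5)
Your proof is correct and is essentially the paper's own argument: the paper establishes the Proposition by exactly this computation, writing $\Omega^2=(-\Omega^{-2}\pu r)^{-1}(-\pu r)$, using Proposition \ref{prop: antitrapped} together with the monotonicity (\ref{eqn:conuu}) to pull $(-\Omega^{-2}\pu r)^{-1}$ out as a supremum over $\Sigma\cap\{v\le v'\}$, telescoping the $u$-integral of $-\pu r$ against $\sup_{\Sigma\cap\{v\le v'\}} r$, and bounding the $v$-integral by $V-V_0$. Your only addition is making explicit the compactness of $\Sigma\cap\{v\le v'\}$ (hence finiteness of the suprema), which the paper leaves implicit.
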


In fact, we note that the domain of dependence of any compact subset of  $\Sigma^{\scriptsize{(3)}}$ has finite spacetime volume `upstairs' as well. To see this, we simply compute that
\beqn\nonumber
\int_{\M} 2~\dd \textrm{Vol}_{\M} = \int_{\M}r^2\Omega^2 \sin\theta~\dd u\dd v\dd \theta\dd\varphi= 8\pi \int_{\Q^+}r^2~\dd \textrm{Vol}_{\Q^+}.
\eeqn
Since $\pu r<0$ in $\Q^+$, we have for any $v' <V$
\beqn\nonumber
\sup_{\Q^+\cap \{v\leq v'\}} r\leq \sup_{\Sigma\cap \{v\leq v'\}} r< \infty.
\eeqn
This establishes the claim.

\subsection{Boundary `emanating from spacelike infinity'}\label{sec:rud/inf}
The initial data curve $\Sigma$ acquires a unique limit point $i^0\in\mathcal{B}^+$ called \emph{spacelike infinity}.  

Recall the notation of \S \ref{sec:pre/volume}.  Let $\mathcal{U}_{\infty}$ denote the set of all $u$ given by
\beqn\nonumber
\mathcal{U}_{\infty} = \left\{u: \sup_{\mathcal{V}} r(u,v) = \infty\right\},
\eeqn
which may be \emph{a priori} empty (even if $r\rightarrow \infty$ along $\Sigma$).
For each $u\in \mathcal{U}_{\infty}$, there exists a unique $v_{\infty}(u)$ such that
\beqn\nonumber
(u, v_{\infty}(u)) \in \mathcal{B}^+.
\eeqn
Define \emph{future null infinity} $\mathcal{I}^+$ by
\beqn\nonumber
\mathcal{I}^+ = \bigcup_{u\in \mathcal{U}_{\infty}} (u, v_{\infty}(u)).
\eeqn
For data of compact support along $\Sigma$, an application of Birkhoff's theorem (cf.~\cite{MH96}) and the domain of dependence property will ensure that $\mathcal{I}^+$ is non-empty (for the geometry of $\Sigma$ coincides with Reissner-Nordstr\"om outside the support of $\phi$), allowing us then to appeal to 
\begin{prop}\label{iplus} If non-empty, $\mathcal{I}^+$ is a connected ingoing null segment with past limit point $i^0$.
\end{prop}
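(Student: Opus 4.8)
The plan is to read off the structure of $\mathcal{I}^+$ from the two monotonicity properties already in hand: the Raychaudhuri equation (\ref{eqn:convv}), which controls $r$ in the outgoing direction, and the no-anti-trapping bound $\pu r<0$ of Proposition \ref{prop: antitrapped}, which controls $r$ in the ingoing direction. Throughout I use the convention that $\partial_v$ points outward and $\partial_u$ inward, so that outgoing rays are the constant-$u$ curves and ingoing rays the constant-$v$ curves.

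First I would show that on each outgoing ray indexed by $u\in\mathcal{U}_\infty$ the blow-up of $r$ is \emph{monotone} and occurs exactly at the future endpoint $v_\infty(u)$. Equation (\ref{eqn:convv}) gives $\pv(\Omega^{-2}\pv r)\le 0$, so $\Omega^{-2}\pv r$ is non-increasing in $v$, and since $\Omega^2>0$ the sign of $\pv r$ is that of $\Omega^{-2}\pv r$. Were this quantity non-positive at some $v$ on the fiber over $u$, it would stay non-positive, forcing $r(u,\cdot)$ to be bounded above and contradicting $\sup_v r(u,v)=\infty$. Hence $\pv r>0$ on the entire fiber, $r(u,\cdot)$ is strictly increasing, and $r(u,v)\to\infty$ as $v\to v_\infty(u)$, the interior values of $r$ being finite.

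Next — and this is the crux — I would show that all these endpoints lie on a single ingoing null ray $\{v=V\}$, where $V=\sup\mathcal{V}$. Fix $u_0\in\mathcal{U}_\infty$ and suppose $v_\infty(u_0)<V$, so there is $(u',v')\in\Q^+$ with $v'>v_\infty(u_0)$. Global hyperbolicity forces the past-directed ingoing null ray from $(u',v')$ (constant $v'$, decreasing $u$, hence increasing $r$ since $\pu r<0$) to remain in $\Q^+$ until it meets $\Sigma$; tracking which $u$-interval this segment occupies, together with the comparison $r(u,\cdot)\ge r(u_0,\cdot)$ for $u\le u_0$ coming from $\pu r<0$, produces a point that is at once an interior point of $\Q^+$ (where $r$ is finite) and a point at which $r\to\infty$ — a contradiction. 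The same comparison rules out two fibers over $\mathcal{U}_\infty$ having distinct endpoints. I expect this step to be the main obstacle: Raychaudhuri alone says nothing about how $v_\infty(u)$ varies with $u$, so one must reconcile the outgoing monotonicity of (\ref{eqn:convv}) with the ingoing monotonicity $\pu r<0$ in order to pin the $r=\infty$ boundary to the \emph{maximal} null level $v=V$ rather than to some lower, or non-null, curve.

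Finally, connectedness and the identification of the past limit point follow from the same two monotonicities. If $u_1<u_3$ lie in $\mathcal{U}_\infty$ and $u_1<u_2<u_3$, then for $v$ near $V$ the point $(u_2,v)$ lies on the ingoing ray from $(u_3,v)$ back to $\Sigma$, hence in $\Q^+$, and $\pu r<0$ gives $r(u_2,v)\ge r(u_3,v)\to\infty$, so $u_2\in\mathcal{U}_\infty$ and $\mathcal{U}_\infty$ is an interval. For the past limit point I would use that $i^0$ is the limit of $\Sigma$ at $r=\infty$, with coordinate $(\,\inf_\Sigma u,\,V\,)$, and that the outgoing rays issuing from points of $\Sigma$ sufficiently close to $i^0$ — where the data are asymptotically flat, in fact exactly Reissner--Nordstr\"om outside the support of $\phi$ — reach $r=\infty$. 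This places $\inf\mathcal{U}_\infty$ at the $u$-coordinate of $i^0$ and exhibits $i^0$ as the (excluded) past limit point of the connected ingoing null segment $\mathcal{I}^+\subset\{v=V\}$.
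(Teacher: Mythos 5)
Your plan runs on the same two monotonicities as the paper's own proof --- Proposition \ref{prop: antitrapped} for the ingoing comparison, termination of past-directed constant-$v$ rays on $\Sigma$, and Raychaudhuri (\ref{eqn:convv}) --- and two of your three steps are sound. Step 1 is correct (the paper leaves it implicit, and can in fact avoid it, since $\mathcal{U}_{\infty}$ is defined by a supremum rather than a limit), and your convexity argument for $\mathcal{U}_{\infty}$ is correct. Note, however, that the paper gets connectedness and the past endpoint in one stroke and more cheaply: from $(u_0,V)\in \mathcal{I}^+$ and any $u\in\mathcal{U}$ with $u<u_0$, the bound $\pu r<0$ gives $(u,V)\in\mathcal{I}^+$ directly, i.e.\ $\mathcal{U}_{\infty}$ is \emph{downward closed} in $\mathcal{U}$, not merely convex; hence the past limit point is $(\inf\mathcal{U},V)=i^0$ with no appeal to the Reissner--Nordstr\"om exterior. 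That ingredient is needed only for the non-emptiness of $\mathcal{I}^+$, which the proposition assumes.

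The step that does not close as written is exactly the one you flagged as the crux. Your mechanism needs either (i) the past-directed constant-$v'$ ray from the witness $(u',v')$ to cross $\{u=u_0\}$, which forces $u'\ge u_0$, or (ii) the point $(u',v_\infty(u_0))$ to lie in $\Q^+$, so that $\pu r<0$ can transfer the blow-up of $r(u_0,\cdot)$ to a point where $r$ is finite; (ii) requires the point of $\Sigma$ with $u$-coordinate $u'$ to sit at or below the level $v=v_\infty(u_0)$. But $v_\infty(u_0)<V=\sup\mathcal{V}$ only guarantees \emph{some} witness, and any witness can be slid down its constant-$v'$ ray to one lying on $\Sigma$ itself; since $u$ decreases outward along $\Sigma$, that witness has $u'<u_0$ and sits at $v$-level $v'>v_\infty(u_0)$, so both (i) and (ii) fail for it --- indeed the outgoing ray through $u'$ then contains no points at all with $v<v_\infty(u_0)$ (such points are not even in $J^+(\Sigma)$), so there is nothing for the comparison $r(u',\cdot)\ge r(u_0,\cdot)$ to act on, and ``tracking the $u$-interval'' of the ray from $(u',v')$ produces nothing, since that interval stays strictly below $u_0$. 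To rescue the argument you would need a well-placed witness, with $u'$ strictly between the $u$-coordinate of $\Sigma$ at level $v_\infty(u_0)$ and $u_0$, and proving such a witness exists is an extra (openness-of-$\Q^+$-near-$\Sigma$) argument. The paper's formulation avoids witnesses entirely: for any $v_0<V$ the set $\Sigma\cap\{v\le v_0\}$ is compact, so $r$ is bounded on it, and integrating $\pu r<0$ along the constant-$v$ ray from any $(u,v)\in\Q^+\cap\{v\le v_0\}$ back to its foot-point on $\Sigma$ gives $\sup_{\Q^+\cap\{v\le v_0\}}r\le\sup_{\Sigma\cap\{v\le v_0\}}r<\infty$. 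This single uniform bound forbids unbounded $r$ along any ray ending at $v$-coordinate below $V$, which is the whole content of the crux step: the comparison must be made with the \emph{moving} foot-point on $\Sigma$, not with a fixed auxiliary point.
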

\begin{proof} Let $i^0 = (U,V)$.  Since, by Proposition \ref{prop: antitrapped}, $\pu r<0$ in $\Q^+$, for all $v_0 <V$ we have
\beqn\nonumber
\sup_{\Q^+ \cap \{v \leq v_0\}} r \leq \sup_{\Sigma\cap \{v \leq v_0\}} r.
\eeqn
Thus, $\mathcal{I}^+ \cap \{v = v_0\} = \emptyset$.  In particular, $\mathcal{I}^+ \subset \{v= V\}$.

Consider $(u_0, V) \in \mathcal{I}^+$, and let $u <u_0$ be such that $u\in \mathcal{U}$.  Using, again, the fact that $\pu r<0$ in $\Q^+$, we obtain
\beqn\nonumber
\lim_{v\rightarrow V} r(u,v)\geq \lim_{v\rightarrow V} r(u_0, v) = \infty.
\eeqn
That is, $(u, V) \in \mathcal{I}^+$.
\end{proof}

Alternatively, the non-emptiness of $\mathcal{I}^+$ easily follows if the fall-off rate of data at spacelike infinity is suitably tame.\footnote{We will not discuss the weakest possible notion of an admissible data set here, but the reader may wish to consider the issue further.  It is certainly sufficient for our purpose to consider the case where $\phi$ is compactly supported.}  An appropriate notion, therefore, of `asymptotically flat'  in Theorem \ref{thm:main} can \emph{ab initio} ensure that $\mathcal{I}^+$ is non-empty.  We will assume that we have built in the requirement  $\mathcal{I}^+\neq\emptyset$ into our definition of asymptotically flat in Theorem \ref{thm:main}.

We define $i^{\square}\in \mathcal{B}^+$ as the future limit point of $\mathcal{I}^+$, noting \emph{a priori} we could have  $i^{\square}\in \mathcal{I}^+$.  

By causality, we have a (possibly empty) half-open null segment\footnote{We choose the notation $\mathcal{N}_{i^+}$ as this set can be non-empty only if $i^{\square} = i^+$ (cf.~Statement III of Theorem \ref{thm:main}).} $\mathcal{N}_{i^+}$ emanating from (but not including) $i^{\square}$.
\begin{center}
\includegraphics[scale = .55]{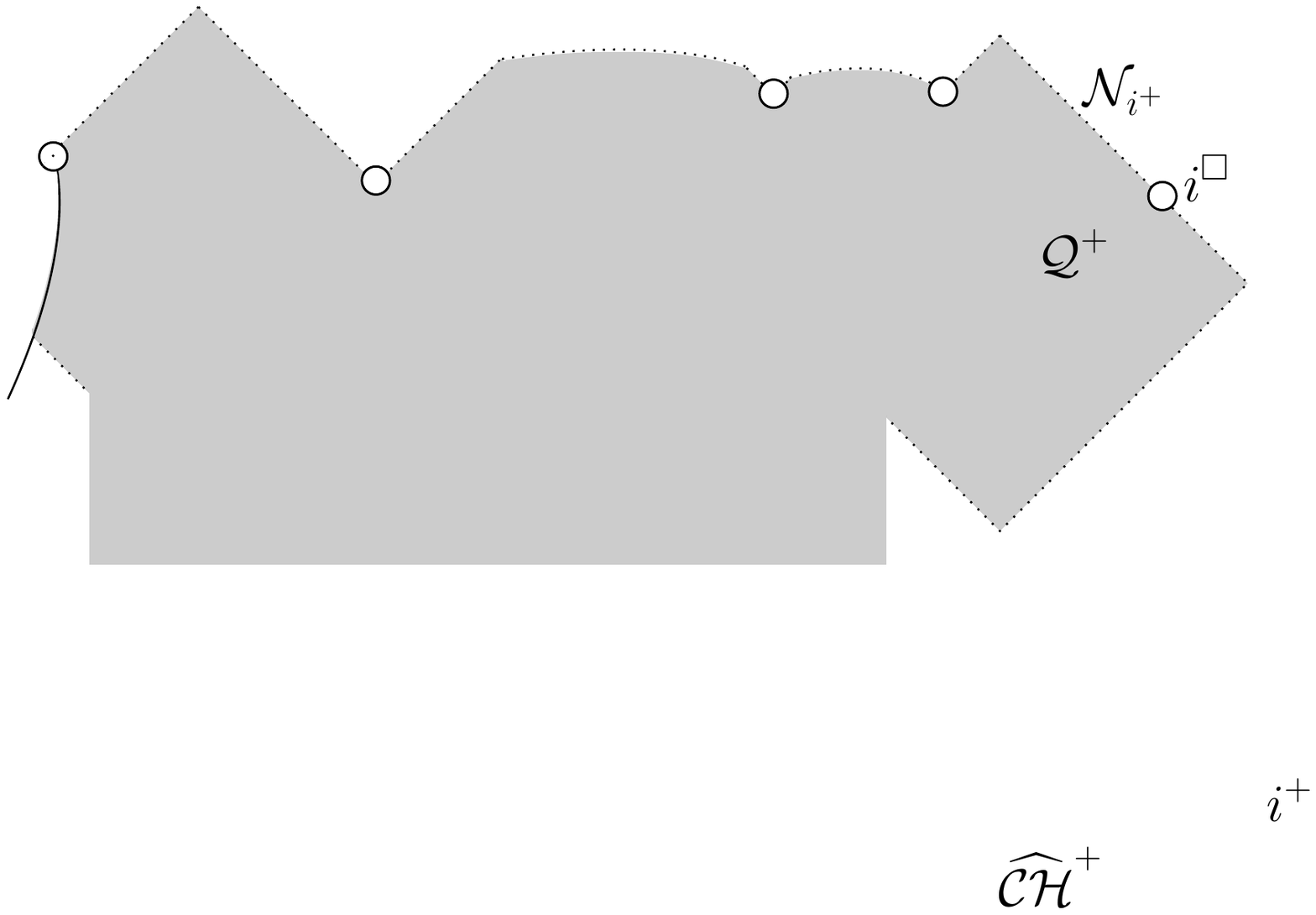} 
\end{center}

\subsection{Boundary `emanating from first singularities'}\label{sec:rud/finite}

We now introduce the notion of a point $p \in \overline{\Q^+}\backslash \overline{\mathcal{I}^+}$ being a `first singularity'.

\begin{defn} Let $p \in \overline{\Q^+}$.  The causal set $J^-(p)\cap \Q^+\subset \Q^+$ is said to be \emph{compactly generated} if there exists a compact subset $X\subset \Q^+$ such that
\beqn\nonumber
J^-(p) \subset \pi(D^+_{\M}(\pi^{-1}(X))) \cup J^-(X).
\eeqn
\end{defn}

If $p\in \overline{\mathcal{Q}^+}\backslash{\overline{\Gamma}}$, then for $J^-(p)\cap \Q^+$ to be compactly generated means that there exists a causal rectangle
\beqn\nonumber
\Dm = \left(J^-(p)\cap J^+(q)\right)\backslash\{p\} \subset \Q^+
\eeqn
for some $q\in\left( I^-(p)\cap \Q^+\right)\backslash\{p\}$.

If $p\in \Gamma$, $J^-(p)\cap \Q^+$ is always compactly generated.  Let $b_{\Gamma}$ denote the unique future limit point of $\Gamma$ in $\overline{\Q^+}\backslash \Q^+$.  If $p = b_{\Gamma}$, then $J^-(p)\cap \Q^+$ is compactly generated as long as $b_{\Gamma}\not\in \mathcal{N}_{i^+}\cup i^{\square}$.

\begin{defn}
$p\in \mathcal{B}^+$ is said to be a \emph{first singularity} if the causal set $J^-(p)\cap \Q^+$ is compactly generated and if any compactly generated proper causal subset of $J^-(p)\cap \Q^+$ is of the form $J^-(q)$ for a $q\in \Q^+$.  
\end{defn}

\begin{defn} Let $\mathcal{B}_1^+\subset \mathcal{B}^+$ be the set of all first singularities.
A first singularity $p\in \mathcal{B}_1^+ \backslash b_{\Gamma}$ will be called \emph{non-central}.  
\end{defn}

If $\mathcal{B}_1^+=\emptyset$, then $\mathcal{B}^+$ is given by
\beqn\nonumber
\mathcal{B}^+ = b_{\Gamma}\cup \mathcal{N}_{i^+}\cup i^{\square} \cup \mathcal{I}^+\cup i^0,
\eeqn
where $b_{\Gamma}$ and the future endpoint of $\mathcal{N}_{i^+}\cup i^{\square}$ coincide.

We note that if $\mathcal{B}_1^+\neq \emptyset$, then $b_{\Gamma}\in \mathcal{B}_1^+$. We then define the boundary `emanating from first singularities' as the union of

\begin{itemize}
\item [1.] $b_{\Gamma}$ and a (possibly empty) half-open null segment $\mathcal{N}_{\Gamma}$ emanating from (but not including) $b_{\Gamma}$; and,
\begin{center}
\includegraphics[scale=.55]{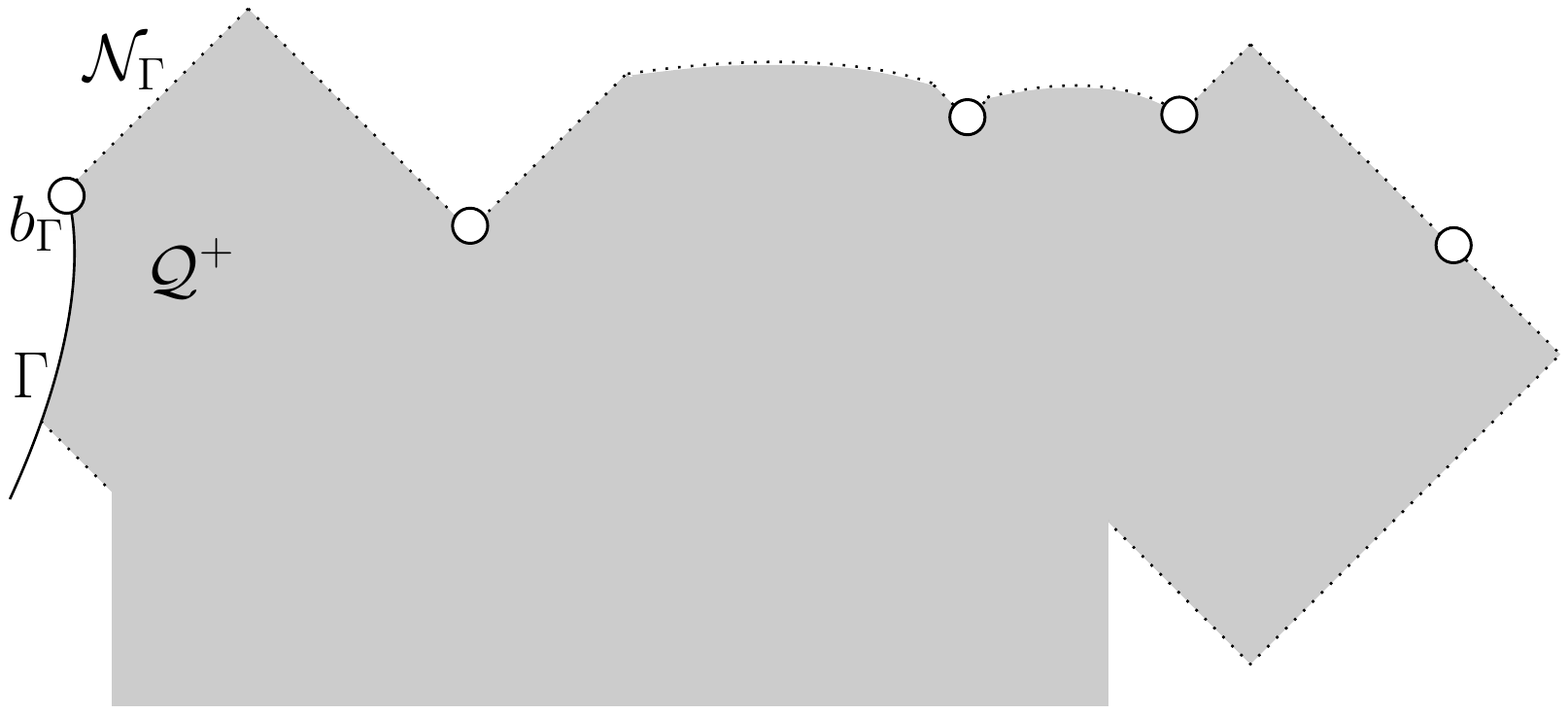}
\end{center}
\end{itemize}
\begin{itemize}
\item [2.] the set $\mathcal{B}_1^+\backslash b_{\Gamma}$ and for every $p\in \mathcal{B}_1^+\backslash b_{\Gamma}$, two (possibly empty) half-open null segments $\mathcal{N}_p^j$ emanating from (but not including) $p$.
\begin{center}
\includegraphics[scale = .55]{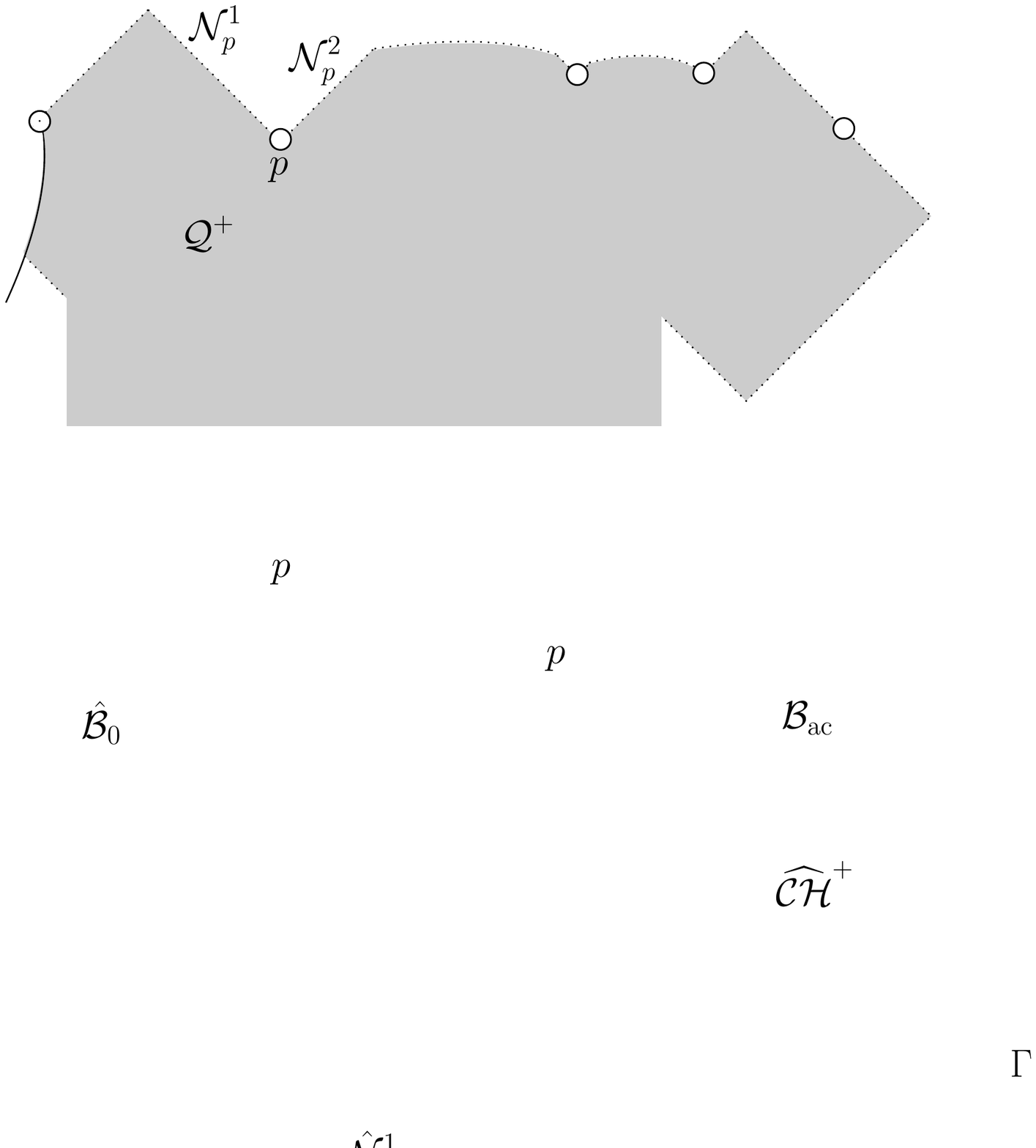} 
\end{center}
\end{itemize}

\subsection{Summary of preliminary boundary decomposition}\label{sec:rud/sum}
We summarize the boundary decomposition thus far in
\begin{prop}\label{rudb}   Let $(\M = \Q^+\times_r \mathbb{S}^2, g_{\mu\nu}, \phi, F_{\mu\nu})$ denote the maximal future development of initial data as in Theorem \ref{thm:main}. The Penrose diagram has boundary $\mathcal{B}^+$ admitting the (not necessarily disjoint) decomposition
\beqn\nonumber
\mathcal{B}^+ = b_{\Gamma} \cup \mathcal{N}_{\Gamma}\cup \left(\bigcup_{p\in \mathcal{B}_1^+\backslash b_{\Gamma}} \{p\}\cup \mathcal{N}_p^1 \cup \mathcal{N}_p^2\right)\cup \mathcal{N}_{i^+} \cup i^{\square} \cup \mathcal{I}^+ \cup i^0,
\eeqn
where the sets are as described previously. Moreover, the following hold:
\begin{itemize}
\item [1.] If $\mathcal{B}_1^+ = \emptyset$, then $\mathcal{N}_{\Gamma} = \emptyset$.
\item [2.] If $\mathcal{B}_1^+\backslash b_{\Gamma} = \emptyset$, then the future endpoint of $b_{\Gamma}\cup \mathcal{N}_{\Gamma}$ coincides with the future endpoint of $\mathcal{N}_{i^+}\cup i^{\square}$.
\end{itemize}
\end{prop}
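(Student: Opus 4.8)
The plan is to prove the nontrivial inclusion $\mathcal{B}^+\subseteq$ (the stated union), the reverse inclusion being immediate because each listed piece was constructed in \S\ref{sec:rud/inf}--\S\ref{sec:rud/finite} as a subset of $\mathcal{B}^+$; the two numbered assertions will then be read off from the resulting classification. The structural fact I would lean on throughout is that $\mathcal{B}^+=\overline{\Q^+}\setminus\Q^+$ is \emph{achronal} in the ambient $\R^{1+1}$: by \S\ref{sec:pre/max} every past-directed causal curve issued from a point of $\Q^+$ terminates on $\Sigma\cup\Gamma$, so no two future-boundary points can be timelike related. I will use this repeatedly in the form: whenever $p,p'\in\mathcal{B}^+$ with $p'\in J^+(p)\setminus\{p\}$, the point $p'$ lies on one of the two null rays emanating from $p$. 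The core of the argument is then a dichotomy on each $p\in\mathcal{B}^+$ according to whether $J^-(p)\cap\Q^+$ is compactly generated.

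First I would treat the case in which $J^-(p)\cap\Q^+$ is \emph{not} compactly generated. The guiding observation is that compact generation can fail only if the causal past of $p$ accumulates at future null infinity, where $r\to\infty$, so that no compact $X\subset\Q^+$ can serve as a generator. Using Proposition~\ref{iplus} (so that $\mathcal{I}^+\subset\{v=V\}$ is a connected null segment with future endpoint $i^{\square}$) together with the monotonicity $\pu r<0$ of Proposition~\ref{prop: antitrapped}, one shows that the generators of $J^-(p)$ approach $\mathcal{I}^+$ and hence $i^{\square}\in\overline{J^-(p)}$. Achronality then forces $p$ either onto $\overline{\mathcal{I}^+}=\{i^0\}\cup\mathcal{I}^+\cup\{i^{\square}\}$ or onto a null ray emanating from $i^{\square}$, i.e. $p\in\mathcal{N}_{i^+}$. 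This disposes of the boundary "emanating from spacelike infinity".

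In the complementary case $J^-(p)\cap\Q^+$ is compactly generated (in particular $p\notin\{v=V\}$, so by Proposition~\ref{prop:finite_volume} this past has finite spacetime volume). Then either $p\in\mathcal{B}_1^+$ and is already listed, or I would descend to a first singularity $p_0\in\left(J^-(p)\cap\mathcal{B}^+\right)\setminus\{p\}$ strictly to its past; achronality then places $p$ on the null rays $\mathcal{N}_{p_0}^1\cup\mathcal{N}_{p_0}^2$ emanating from $p_0$ (or on $\mathcal{N}_{\Gamma}$ when $p_0=b_{\Gamma}$), which lie in the stated union. Producing $p_0$ is what I expect to be the main obstacle: if $p$ is not a first singularity, the definition of \S\ref{sec:rud/finite} yields a proper compactly generated causal subset of $J^-(p)\cap\Q^+$ not of the form $J^-(q)$, equivalently the past of some singular $p'\in J^-(p)\cap\mathcal{B}^+$ with $p'\neq p$, and one must descend through such sets to a minimal one. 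I would realize the minimal element as the past of the intersection of a maximal descending chain, arguing that this intersection is again the past of a single boundary point which, being singular (else the chain would stabilize at some $J^-(q)$), is a first singularity. Controlling this limiting step—keeping the intersection compactly generated and identifying it with a genuine boundary point—is precisely where the finite-volume bound and the one-dimensionality of the null foliation must be used.

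Finally, the two numbered claims follow from this cover. For item~1, if $\mathcal{N}_{\Gamma}\neq\emptyset$ then an interior point of $\mathcal{N}_{\Gamma}$ is a boundary point with $r$ finite and with compactly generated past, so the previous paragraph furnishes a first singularity and $\mathcal{B}_1^+\neq\emptyset$; contrapositively $\mathcal{B}_1^+=\emptyset$ gives $\mathcal{N}_{\Gamma}=\emptyset$ (and, by the cover, $b_{\Gamma}$ must then be the future endpoint of $\mathcal{N}_{i^+}\cup i^{\square}$, since $r\to\infty$ fails at $b_{\Gamma}$ so it cannot lie on $\{i^0\}\cup\mathcal{I}^+\cup\{i^{\square}\}$). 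For item~2, suppose $\mathcal{B}_1^+\setminus b_{\Gamma}=\emptyset$; viewing $\mathcal{B}^+$ as a connected achronal curve from $i^0$ to $b_{\Gamma}$, the "spacelike infinity" part terminates at the future endpoint of $\mathcal{N}_{i^+}\cup i^{\square}$ and the "first singularity" part terminates at the future endpoint of $b_{\Gamma}\cup\mathcal{N}_{\Gamma}$. If these endpoints were distinct, any boundary point strictly between them would, by the cover just established, have to be a non-central first singularity or lie on a null ray emanating from one—contradicting $\mathcal{B}_1^+\setminus b_{\Gamma}=\emptyset$. Hence the two endpoints coincide.
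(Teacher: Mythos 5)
Your proposal has a genuine structural error: it inverts the relationship between compact generation and the null boundary segments. Take $p$ an interior point of a null segment $\mathcal{N}_{p_0}^1$ emanating from a first singularity $p_0$ (or of $\mathcal{N}_{\Gamma}$). According to the paper's own reformulation of compact generation for $p\in\overline{\Q^+}\backslash\overline{\Gamma}$, $p$ is compactly generated iff some causal rectangle $\left(J^-(p)\cap J^+(q)\right)\backslash\{p\}$ lies in $\Q^+$; but every such rectangle contains, on its null edge adjacent to $p$, a portion of the boundary segment $\mathcal{N}_{p_0}^1$ itself, so \emph{no} rectangle lies in $\Q^+$ and $p$ is \emph{not} compactly generated — even though $r$ is finite there and its past never goes near $\mathcal{I}^+$. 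This falsifies your claim that "compact generation can fail only if the causal past of $p$ accumulates at future null infinity": followed literally, your non-compactly-generated case would force all of $\mathcal{N}_{\Gamma}$ and the $\mathcal{N}_{p}^j$ into $\overline{\mathcal{I}^+}\cup\mathcal{N}_{i^+}$, which is wrong. Conversely, if a non-central boundary point \emph{is} compactly generated, then the rectangle edges lie in $\Q^+$, and repeated use of the past-termination property (past-directed null rays from points of $\Q^+$ stay in $\Q^+$ until they meet $\Sigma\cup\Gamma$, and one fills in the enclosed rectangles) shows its entire deleted causal past lies in $\Q^+$; hence there are no boundary points strictly to its past at all, and it is already a first singularity. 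So the case you flag as the main difficulty — descending through a maximal chain of compactly generated proper subsets to a minimal first singularity — is vacuous, while the case that actually requires work is absent: one must show that a non-compactly-generated point at finite $r$ sits at the future end of a null segment of boundary points, and that following this segment into the past (a "staircase" of alternating null segments being excluded by achronality) terminates at a compactly generated point, at $b_{\Gamma}$, or at $i^{\square}$; that is what produces the decomposition and what the paper's construction of $\mathcal{N}_{\Gamma}$, $\mathcal{N}_p^j$, $\mathcal{N}_{i^+}$ as maximal null boundary segments encodes.

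Two downstream consequences. First, your derivation of item 1 rests on the false assertion that an interior point of $\mathcal{N}_{\Gamma}$ has compactly generated past; the correct route is that $\mathcal{N}_{\Gamma}\neq\emptyset$ forces $b_{\Gamma}\not\in\mathcal{N}_{i^+}\cup i^{\square}$, whence $b_{\Gamma}$ is compactly generated and (by the filling argument above) $b_{\Gamma}\in\mathcal{B}_1^+$, so $\mathcal{B}_1^+\neq\emptyset$. Second, your achronality step, while the statement is true and is indeed what the paper uses implicitly, does not follow in one line from past-termination: you need the intermediate "no holes" lemma — for $x\in\Q^+$, every point of $J^-(x)$ strictly above $\Sigma$ and on the $\Q^+$ side of $\Gamma$ lies in $\Q^+$, proved by concatenating two past-directed null segments and invoking past-termination for each — and only then does a boundary point in $I^-$ of another yield a contradiction. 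This is a lesser gap than the first, but since achronality carries the entire load of your classification, it should be proved, not asserted.
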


\section{The generalized extension principle}\label{sec:general_emkg}
In this section, we shall briefly shift gears and prove the generalized extension principle of Theorem \ref{thm:gext}. Here, $\Q^+$ will denote the maximal development of the more general initial data as in the statement of Theorem \ref{thm:gext}. In \S\ref{sec:proof_main}, we will return to considering the development of initial data as in the statement of Theorem \ref{thm:main}.

\subsection{Another local existence result}\label{sec:general_emkg/local}
The extension criterion, given in the next subsection, relies on an auxiliary (to Proposition \ref{choquet}) local existence statement, which is stated directly at the level of the spherically symmetric reduction. Formulated as a double characteristic initial value problem, the following result is suitable for our purpose.   
    
\begin{prop}\label{extension} Let $k\geq 0$ and consider a set $X = [0, d]\times\{0\} \cup \{0\}\times[0, d]$.  On $X$, let $r$ be a positive function and $A_u$ be a function that are $C^{k+2}$; let $\Omega$ be a positive function and $\phi$ be a function that are $C^{k+1}$.  Suppose that equations (\ref{eqn:conuu}) and (\ref{eqn:convv}) hold initially on $[0, d]\times\{0\}$ and $\{0\}\times[0, d]$, respectively.  Let $|\cdot|_{n, u}$ denote the $C^n(u)$-norm on $[0, d]\times\{0\}$; similarly, let $|\cdot|_{n, v}$ denote the $C^n(v)$-norm on $\{0\}\times[0, d]$. Define
\beqn\nonumber
N = \sup \{|\Omega|_{1, u}, |\Omega|_{1, v} |\Omega^{-1}|_0, |r|_{2, u},|r|_{2, v},  |r^{-1}|_0, |\phi|_{1, u}, |\phi|_{1, v}, |A_u|_0, |A_u|_{2,v}, |\pv A_u|_{1, u} \}.
\eeqn
Then, there exists a $\delta>0$ depending only on $N$, $C^{k+2}$-functions (unique amongst $C^2$-functions) $r$ and $A_u$, and $C^{k+1}$-functions (unique amongst $C^1$-functions) $\Omega$ and $\phi$ satisfying equations (\ref{eqn:ruv})--(\ref{fuv}) in $[0, \delta^*]\times [0, \delta^*]$, where $\delta^* = \min\{d, \delta\}$, such that the restriction of these functions to $X$ is as prescribed.   
\end{prop}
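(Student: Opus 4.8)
The plan is to solve the double characteristic initial value problem by a contraction-mapping (Picard iteration) argument applied to a first-order reformulation of the reduced system, then to propagate the constraints into the interior, and finally to bootstrap the regularity. First I would recast (\ref{eqn:ruv})--(\ref{fuv}) in a form adapted to iteration. The wave equations (\ref{eqn:ruv}), (\ref{logO}), (\ref{KG}), each schematically of the form $\pu\pv f = F$ after dividing through by the nonvanishing factor $r$ or $r^2$, are designated as evolution equations for $r$, $\log\Omega$, $\phi$; the transport pair (\ref{beta}) and (\ref{fuv}) is designated to determine $\Qe$ and $A_u$ in the $v$-direction (the datum $\Qe|_{v=0}$ being read off from (\ref{fuv}), which explains the appearance of $|\pv A_u|_{1,u}$ in $N$). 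Integrating each equation once in each null direction starting from the prescribed data on $X$ turns the system into a closed set of integral equations for the collection $(r, \pu r, \pv r, \log\Omega, \phi, \pu\phi, \pv\phi, \Qe, A_u)$.

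Next I would run the fixed-point argument in the Banach space of continuous functions on the square $[0,\delta^*]\times[0,\delta^*]$ with the sup norm, restricted to the closed ball on which each component of the collection is bounded by $2N$ and on which $r$ and $\Omega$ are bounded below by $\tfrac12 N^{-1}$; the latter bounds keep the factors $r^{-1}, r^{-2}, \Omega^{-1}, \Omega^{-2}$ occurring in the nonlinearities under control. On this ball the integrands are Lipschitz in the unknowns with constants depending only on $N$, so each application of the integral map gains a factor of $\delta^*$ from the length of the domain of integration. Choosing $\delta$ in terms of $N$ alone makes the map both a self-map of the ball --- the lower bounds on $r,\Omega$ persist because their total variation across the square is $O(\delta)$ --- and a contraction; its unique fixed point is a solution with $r,A_u\in C^2$ and $\Omega,\phi\in C^1$, and uniqueness among such functions follows by applying the same Lipschitz/Gr\"onwall estimate to the difference of two solutions.

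The step I expect to be the main obstacle is showing that this fixed point satisfies the \emph{full} system, not merely the evolution equations used to build it: the two Raychaudhuri constraints (\ref{eqn:conuu}), (\ref{eqn:convv}) and the second charge equation (\ref{betaa}) must be shown to hold off $X$. My plan is to introduce the constraint-defect quantities $E_u := \pu(\Omega^{-2}\pu r) + 4\pi r\Omega^{-2}\D_u\phi(\D_u\phi)^{\dagger}$, its analogue $E_v$, and $C := \pu\Qe - 2\pi\e\ii r^2(\phi(\D_u\phi)^{\dagger} - \phi^{\dagger}\D_u\phi)$, and to differentiate them, substituting the evolution equations (\ref{eqn:ruv}), (\ref{logO}), (\ref{KG}), (\ref{beta}), (\ref{fuv}). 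Exploiting the contracted-Bianchi-type structure of the system, one finds that $E_u, E_v, C$ satisfy a closed linear homogeneous system of transport equations along the null directions. Since (\ref{eqn:conuu}), (\ref{eqn:convv}) are assumed on $X$ and (\ref{fuv}) holds there by construction, these defects vanish on $X$, hence by uniqueness for linear transport equations (Gr\"onwall) they vanish throughout the square, so all of (\ref{eqn:ruv})--(\ref{fuv}) hold.

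Finally, the higher regularity is obtained by a bootstrap. Since the nonlinearities are smooth functions of the unknowns away from $\{r=0\}\cup\{\Omega=0\}$, differentiating the integral equations up to order $k$ and repeating the fixed-point estimates propagates the $C^{k+2}$ regularity of $r,A_u$ and the $C^{k+1}$ regularity of $\Omega,\phi$ from the data on $X$ into the whole square, with no further shrinking of $\delta$. This yields the asserted solution and completes the proposal.
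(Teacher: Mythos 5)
Your overall architecture --- integral reformulation of the $\pu\pv$ equations plus a $v$-transport pair, contraction in $C^0$ on a ball keeping $r$ and $\Omega$ bounded below, constraint propagation, then a regularity bootstrap --- is exactly the ``standard argument'' the paper has in mind: the paper omits the proof of Proposition \ref{extension} altogether and refers to the appendix of \cite{MDAR05}. The contraction step, the Gr\"onwall uniqueness step, and the bootstrap are fine as sketched. The genuine gap is at the step you yourself flagged as the main obstacle, and it sits in the Maxwell sector, not the Raychaudhuri sector. In your scheme $\Qe$ and $A_u$ are both obtained by integrating (\ref{beta}) and (\ref{fuv}) in $v$ from the segment $[0,d]\times\{0\}$. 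Then (i) nothing forces the resulting $A_u$ to agree with the prescribed $A_u$ on the other initial segment $\{0\}\times[0,d]$: agreement there is equivalent to the prescribed data satisfying the combined constraint $\pv\bigl(-2r^2\Omega^{-2}\pv A_u\bigr) = 2\pi\e\ii r^2\bigl(\phi^{\dagger}\pv\phi - \phi\left(\pv\phi\right)^{\dagger}\bigr)$ along $u=0$; and (ii) your defect $C$ for (\ref{betaa}) does satisfy $\pv C = 0$ once charge conservation is extracted from (\ref{KG}), but its initial value $C|_{v=0}$ involves $\pu\bigl(r^2\Omega^{-2}\pv A_u\bigr)$ of the prescribed data and is \emph{not} zero ``by construction'' --- the fact that (\ref{fuv}) holds on $X$ by fiat says nothing about it. Both are compatibility conditions over and above (\ref{eqn:conuu}) and (\ref{eqn:convv}) and do not follow from them: take $\e\neq 0$, $\phi$ with non-vanishing charge current $\phi^{\dagger}\pv\phi - \phi\left(\pv\phi\right)^{\dagger}$, and $A_u\equiv 0$ on $\{0\}\times[0,d]$; equation (\ref{eqn:convv}) can still be arranged since it does not involve $A_u$ at all. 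So the assertion ``these defects vanish on $X$'' is unjustified, and without it your fixed point either fails to satisfy the full system or fails to restrict to the prescribed data.

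To be fair, this gap mirrors an implicit assumption in the proposition as stated: the Maxwell constraint along $\{0\}\times[0,d]$ is \emph{necessary} for the conclusion (any solution of the full system that is regular up to the boundary restricts there to data satisfying it, since that equation involves only tangential derivatives along that segment), yet it is not listed among the hypotheses. It is silently available in the paper's sole application, Proposition \ref{naext}, where the data on $X^*$ arise by restricting a genuine solution, so that \emph{every} equation of (\ref{eqn:ruv})--(\ref{fuv}) --- in particular both charge equations --- already holds on the initial set, and then your propagation argument closes. Your write-up should therefore either add the Maxwell compatibility conditions to the hypotheses (harmless for the intended application) or verify them in the context in which the proposition is invoked. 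A last, minor imprecision: (\ref{eqn:conuu}) is assumed only on $[0,d]\times\{0\}$ and (\ref{eqn:convv}) only on $\{0\}\times[0,d]$, not on all of $X$ as you wrote; this is harmless precisely because the transport equation for each Raychaudhuri defect runs in the transversal direction.
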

The above result is quite general in the sense that it does not depend on the structure of the non-linear terms in (\ref{RMN})--(\ref{eqn:kg}).  The proof is omitted, but can be obtained via standard arguments  (cf.~the appendix of \cite{MDAR05}).

\subsection{An extension criterion}\label{sec:general_emkg/ext}
Let $p=(U,V)\in \overline{\Q^+}$, $q = (U',V')\in \left(I^-(p)\cap\Q^+\right)\backslash \{p\}$ and 
\beqn\nonumber
\mathcal{D} = \left(J^+(q) \cap J^-(p)\right)\backslash \{p\} \subset \Q^+
\eeqn 
be as in the statement of Theorem \ref{thm:gext}.  Then, the compact set
\beqn\label{x}
X =  [\ue, U] \times \{\ve\}\cup \{\ue\}\times [\ve, V] 
\eeqn
satisfies $X\subset \Q^+$.

\begin{center}
\includegraphics[scale = .55]{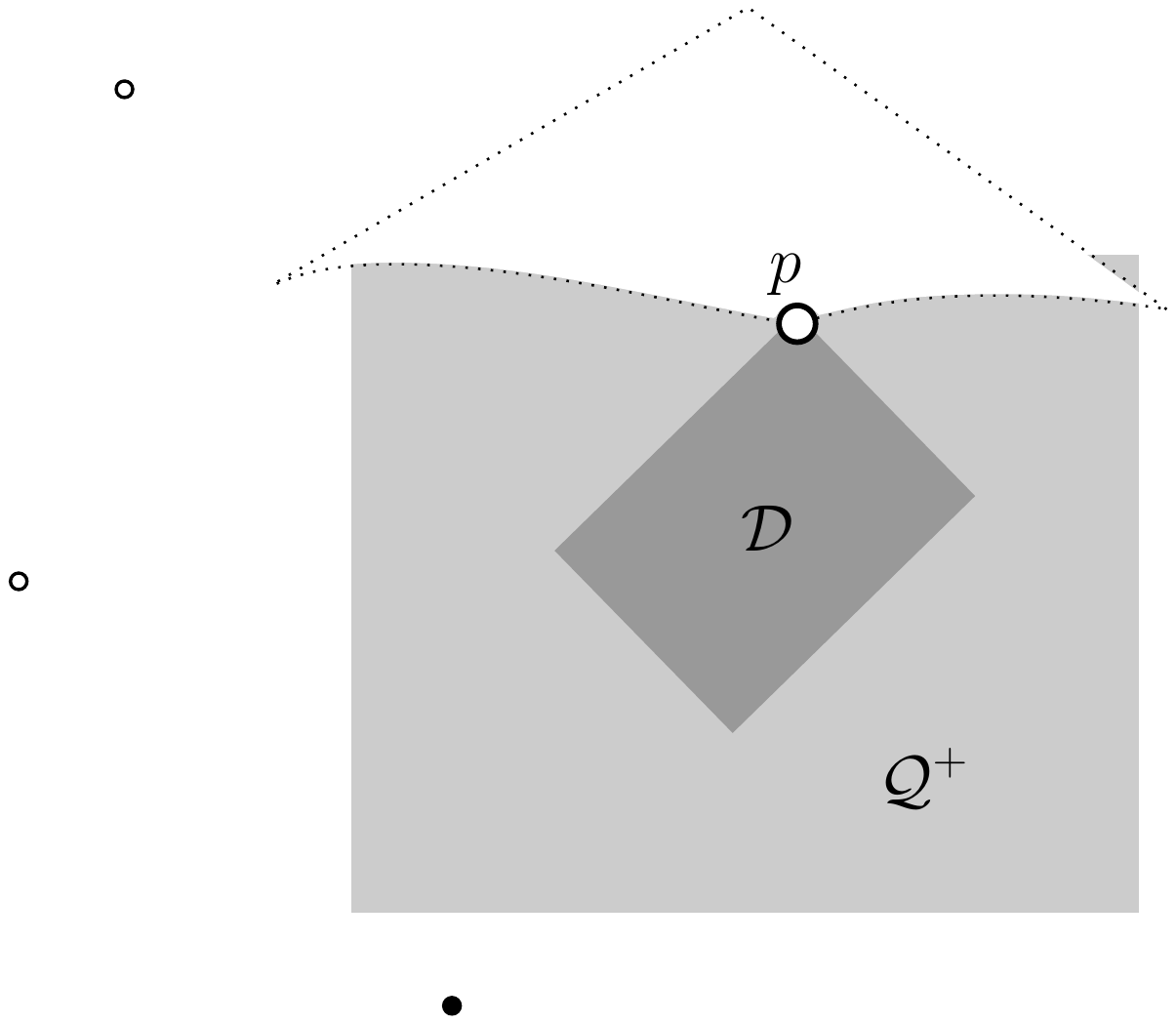}
\end{center}

Given a subset $Y \subset \Q^+$, we define a `norm' $N: 2^{\Q^+}\rightarrow [0, \infty]$ given by
\beqn\nonumber
N(Y) = \sup \{|\Omega|_{1}, |\Omega^{-1}|_0, |r|_{2}, |r^{-1}|_0, |\phi|_{1},  |A_u|_{0}, |\Qe|_0  \},
\eeqn
where $|f|_n$ denotes the restriction of the $C^n$-norm on $\Q^+$ to $Y$.

In view of Proposition \ref{extension}, we formulate an extension criterion in

\begin{prop}\label{naext}
Let $p = (U,V)\in \overline{\Q^+}\backslash \Q^+$ and  $q = (U',V')\in \left(I^-(p)\cap\Q^+\right)\backslash \{p\}$ be such that
\beqn\nonumber
\mathcal{D} = \left(J^+(q) \cap J^-(p)\right)\backslash \{p\} \subset \Q^+.
\eeqn
Then, 
\beqn\nonumber
N(\Dm) = \infty.
\eeqn 
\end{prop}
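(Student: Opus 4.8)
The plan is to prove the contrapositive: assuming $N(\Dm)<\infty$, I will show $p\in\Q^+$, which contradicts the hypothesis $p\in\overline{\Q^+}\backslash\Q^+$. Write $N_0=N(\Dm)$. The first task is to upgrade the pointwise bounds encoded in $N_0$ to precisely the collection of bounds needed to invoke the local existence result, Proposition \ref{extension}. The subtlety is that $N(\Dm)$ controls $A_u$ only in $C^0$ (together with $|\Qe|_0$), whereas Proposition \ref{extension} asks for $A_u\in C^2$ in the $v$-direction and $\pv A_u\in C^1$ in the $u$-direction. These are recovered directly from the constraint (\ref{fuv}), which gives $\pv A_u=-\tfrac{1}{2}\Qe\,\Omega^2 r^{-2}$, differentiated using (\ref{betaa})--(\ref{beta}) for $\pu\Qe,\pv\Qe$ together with the bounds on $\Omega,r,r^{-1},\phi$; all resulting quantities are bounded in terms of $N_0$. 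Likewise, the evolution equations (\ref{eqn:ruv}), (\ref{logO}) and the wave equation (\ref{KG})--(\ref{last}) express the mixed derivatives $\pu\pv r$, $\pu\pv\log\Omega$ and $\pu\pv\phi$ through already-bounded quantities (every term on their right-hand sides involves only $r,r^{-1},\Omega,\partial r,\partial\phi,\phi,A_u,\Qe$), yielding a uniform bound $N_0'$ on the full set of fields and derivatives appearing in Proposition \ref{extension}.

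Next I would establish that the solution, together with these derivatives, extends continuously to the closed rectangle $\overline{\Dm}$, in particular to the two future null segments $\{U\}\times[V',V]$ and $[U',U]\times\{V\}$ and to their common endpoint $p$. Since all mixed second derivatives are uniformly bounded on $\Dm$, this follows by integrating along null rays over finite coordinate ranges: for instance $\pv\phi(u,v)=\pv\phi(U',v)+\int_{U'}^{u}\pu\pv\phi$, whose integrand is bounded, so the limits defining the extension to $u=U$ exist and are continuous; the analogous integrations control $\pu r,\pv r,\Omega$ and the remaining quantities. Thus $(r,\Omega,\phi,A_u)$ extends to a solution on $\overline{\Dm}$ with $r\ge r_0>0$ and $\Omega\ge\Omega_0>0$ there.

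Finally I would extend the solution to a genuine neighborhood of $p$ and appeal to maximality. The geometric point to be careful about is that the two future null segments of $\Dm$ meet at $p$ \emph{from the past}, so they cannot serve as forward characteristic data emanating from $p$; instead one re-solves from the incoming null segments $[U',U]\times\{V'\}$ and $\{U'\}\times[V',V]$ of $\Dm$. Because $(U,V')$ and $(U',V)$ are interior points of $\Q^+$ (they are distinct from $p$), openness of $\Q^+$ lets these segments be extended slightly, to $[U',U+\epsilon_1]\times\{V'\}$ and $\{U'\}\times[V',V+\epsilon_2]$, where the data remain smooth with norm $\le 2N_0'$. Posing the double-null characteristic initial value problem on these segments and invoking Proposition \ref{extension}, whose existence time $\delta=\delta(2N_0')>0$ depends only on the norm, a continuation (open--closed) argument — the bound $N\le 2N_0'$ persists in a neighborhood of $\overline{\Dm}$ by the continuous extension of the previous step, so definite-size steps can always be taken — produces a solution on a rectangle $[U',U+\epsilon_1']\times[V',V+\epsilon_2']$ with $\epsilon_i'>0$ that contains $p$ in its interior. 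By uniqueness (Proposition \ref{extension}, Proposition \ref{choquet}) this solution agrees with the given one on $\Dm$ and strictly extends it, so maximality of $\Q^+$ forces $p\in\Q^+$.

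I expect this last step to be the main obstacle: arranging the re-solved region to actually \emph{overshoot} $p$ (rather than merely approach it, since the future edges pinch precisely at the missing point $p$) and verifying that the uniform bound is preserved over the definite-size steps of the continuation argument. By contrast, the first two steps are essentially bookkeeping with the reduced equations (\ref{eqn:ruv})--(\ref{fuv}); in particular no energy estimate or the delicate null-structure integration of \S\ref{sec:intro/gext} is required here, because the pointwise control is hypothesized outright in $N(\Dm)<\infty$ — that null-structure analysis is what will later be needed to \emph{verify} this hypothesis under the volume- and $r$-bounds of Theorem \ref{thm:gext}.
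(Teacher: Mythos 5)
Your overall frame (contrapositive, upgrading the control encoded in $N(\Dm)$ to the derivative bounds required by Proposition \ref{extension} via the constraint (\ref{fuv}) and the evolution equations, then local existence plus maximality) is the right one, and your first step is correct bookkeeping that the paper itself glosses over. The gap is in the final step. Your continuation argument needs the bound $N\le 2N_0'$ to hold on the \emph{newly solved} region beyond $\overline{\Dm}$, and you justify this ``by the continuous extension of the previous step.'' That justification fails for two reasons. First, the hypothesis $N(\Dm)<\infty$ controls nothing outside $\Dm$: points of $\Q^+$ arbitrarily close to $\overline{\Dm}$ but outside it (say $(U+s,v)$ with $s>0$ small and $v$ near $V$) are not covered, and the solution there could a priori degenerate as the boundary of $\Q^+$ is approached; continuity of the extension to the compact set $\overline{\Dm}$ gives bounds \emph{on} $\overline{\Dm}$, not on any neighborhood of it. Indeed, uniform boundedness on a full neighborhood of $p$ is essentially the conclusion being proved, so assuming it is circular. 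Second, Proposition \ref{extension} as stated gives no quantitative bound on the norm of the solution it produces in terms of the norm of the data, so iterated applications do not come with uniformly positive step sizes: each newly solved square has finite but uncontrolled norm, the next existence time can shrink, and the continuation may fail to reach a rectangle of definite size containing $p$ in its interior.

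The missing idea --- which is how the paper's proof works and which makes any continuation unnecessary --- is to exploit the fact that the existence time $\delta=\delta(2N_0')$ of Proposition \ref{extension} depends only on the data norm, and to re-solve not from the far corner $q$ but from the point $\left(U-\tfrac{1}{2}\delta,\, V-\tfrac{1}{2}\delta\right)$. The two null segments of length $\delta^*$ emanating from this point, for $\delta^*$ slightly larger than $\tfrac{1}{2}\delta$, lie in $\Q^+$ (their portions of length $\tfrac{1}{2}\delta$ lie in $\Dm\subset\Q^+$, and openness of $\Q^+$ around these compact segments permits a slight enlargement) and carry data of norm at most $2N_0'$, by your step 1 together with continuity at the segments' endpoints $(U,V-\tfrac{1}{2}\delta)$ and $(U-\tfrac{1}{2}\delta,V)$, which belong to $\Dm$. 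A \emph{single} application of Proposition \ref{extension} then yields a solution on the square $[0,\delta^*]\times[0,\delta^*]$ in translated coordinates, which contains $p=(\tfrac{1}{2}\delta,\tfrac{1}{2}\delta)$ in its interior; uniqueness and maximality finish the proof exactly as you say. This one-step argument never requires bounds on the solution outside $\Dm$, which is precisely what your continuation cannot supply.
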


\begin{proof} We prove the contrapositive.  Suppose that $N = 2N(\Dm) < \infty$.  We will show that $p = (U,V)\in \Q^+$. Corresponding to the value $N$, let $\delta>0$ be given as in Proposition \ref{extension}.  Consider the point $(U - \frac{1}{2}\delta, V - \frac{1}{2}\delta)$.  Taking $\delta$ suitably small, we can assume that this point is in $\Q^+$.  Translate the co-ordinates so that this point is called $(0,0)$.  Since $\Q^+$ is, by definition, an open set, there is by continuity a $\delta^*\in (\frac{1}{2}\delta, \delta)$ such that
\beqn\nonumber
X^* = \{0\}\times[0, \delta^*] \cup [0, \delta^*]\times \{0\}\subset \Q^+.
\eeqn
Moreover, the assumptions of Proposition \ref{extension} hold on this set $X^*$.  Hence, there exists a unique solution in
\beqn\nonumber
\tilde{\Dm} = [0, \delta^*]\times [0, \delta^*],
\eeqn  
which coincides with the previous solution on $\Dm\cap \tilde{\Dm}$ by uniqueness.

\begin{center}
\includegraphics[scale=.55]{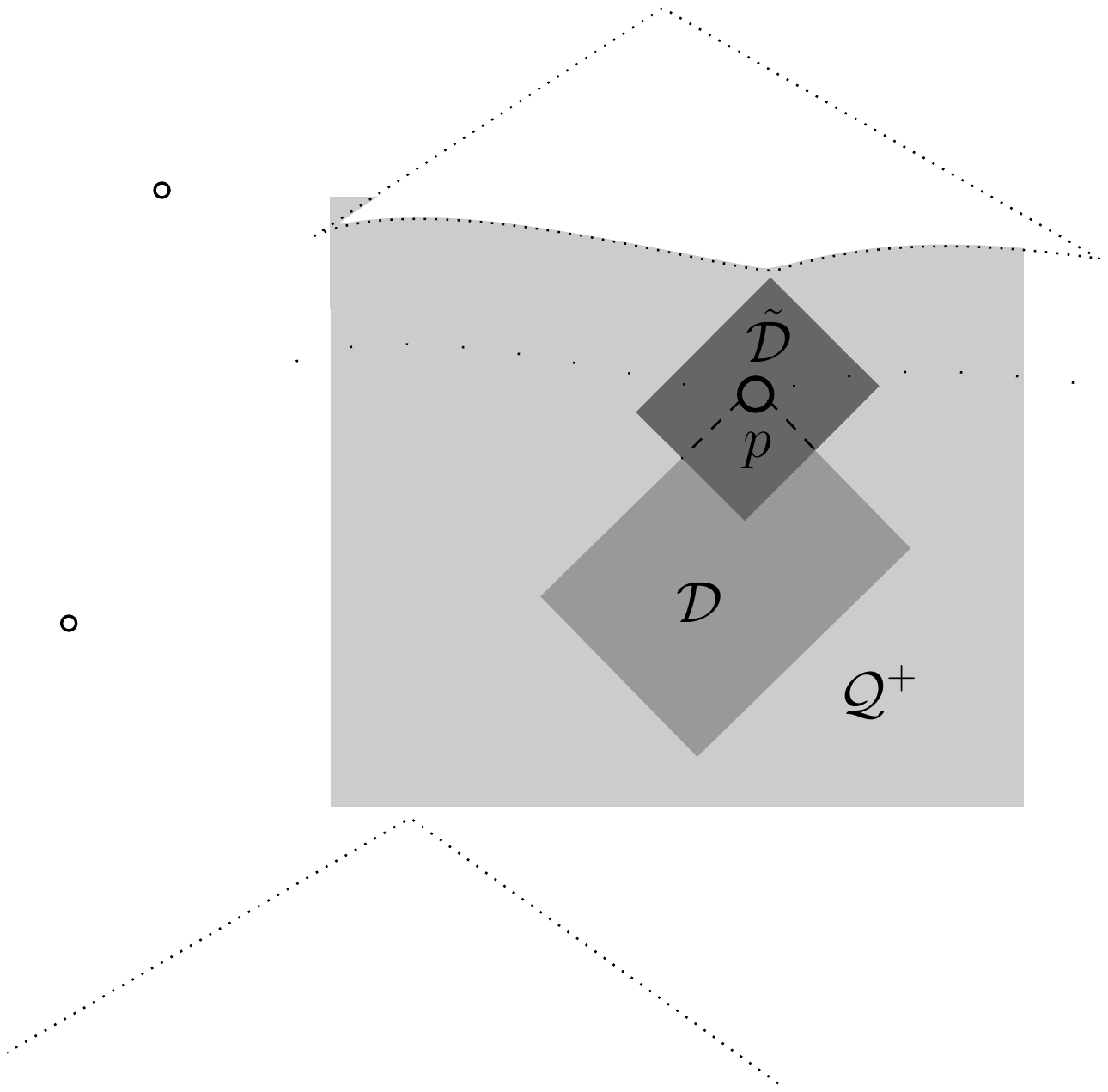}
\end{center}
As $\tilde{\Dm}\cup \Q^+$ is the quotient of a development of initial data, we must have, by maximality of $\Q^+$, that $\tilde{\Dm}\cup \Q^+ \subset \Q^+$.  Thus, in particular, $p\in \Q^+$.
\end{proof}

\subsection{Proof of Theorem \ref{thm:gext}: generalized extension principle}\label{sec:general_emkg/proof_gext}

Define
\beqn\label{Wbound}
W = \int_{\ve}^V\int_{\ue}^U\Omega^2~\dd u\dd v<\infty
\eeqn
and let $r_0$ and $R$ be constants such that
\beqn\nonumber
0<r_0 \leq r(u,v)\leq R<\infty
\eeqn
for all $(u,v)\in\Dm$.

Let us recall (\ref{defnbeta}) in which
\beqn\nonumber
\Qe =2r^2\Omega^{-2}F_{uv}. 
\eeqn
We introduce further the notation
\beqna
\lambda &=& \pv r\nonumber\\
\nu &= & \pu r\label{nu}\nonumber\\
\theta_A &=& r\pv \phi\label{theta}\nonumber\\
\zeta_A &=& r\D_u \phi\label{zeta}\nonumber.
\eeqna
By compactness, and the regularity of a solution as given in Theorem \ref{thm:gext}, there exists finite constants $N, \Lambda, \Phi, \Theta, Z, B, A, H$ such that along $X = [\ue, U] \times \{\ve\}\cup \{\ue\}\times [\ve, V]$ the following bounds hold: 
\beqn\nonumber
| r\nu| \leq N
\eeqn
\beqn\nonumber
|r\lambda| \leq \Lambda
\eeqn
\beqn\nonumber
|r\phi|\leq \Phi
\eeqn
\beqn\nonumber
|\theta_A|\leq\Theta
\eeqn
\beqn\nonumber
|\zeta_A| \leq Z
\eeqn
\beqn\nonumber
|\Qe|\leq B
\eeqn
\beqn\nonumber
|A_u|\leq A
\eeqn
\beqn\nonumber
|\pu \nu|, |\pv\lambda|, |\pu \Omega|, |\pv\Omega|, |\log\Omega^2| \leq H.
\eeqn
In view of the extension criterion in Proposition \ref{naext}, it suffices to show that similar uniform bounds hold in $\Dm$ in order to prove Theorem \ref{thm:gext}.
\label{sec:proof_gext}

\subsubsection{\emph{a priori} integral estimates}\label{sec:general_emkg/proof_gext/integral}
Recalling the notation of (\ref{tuv}), we begin by integrating (\ref{eqn:ruv}) in $u$ and $v$ to obtain
\beqna\nonumber
\int_{\ve}^{V} \int_{\ue}^{U} 4\pi r^2 T_{uv}~\dd u\dd v &=& \int_{\ve}^{V} \int_{\ue}^{U} \pu (r\lambda)~\dd u\dd v + \frac{1}{4}\int_{\ve}^{V} \int_{\ue}^{U} \Omega^2~\dd u\dd v\\
&\leq& (R^2 - r_0^2) + \frac{1}{4}W=: C_0\label{tuvest},
\eeqna
in view of the fact that $\pu (r\lambda) = \frac{1}{2}\pu\pv r^2$.

Using (\ref{tuvest}), we obtain \emph{a priori} spacetime integral bounds on the matter fields.  In particular, we have\footnote{It is here, and only here, where we use the dominant energy condition in the proof of Theorem \ref{thm:gext}, i.e.,~the assumption that $\m^2\geq0$.  In the case of the Einstein-Klein-Gordon system ($\e= F_{\mu\nu}=0$), we note that (\ref{totalb2}) holds on account of $T_{uv}$ having only one component (since $Q = 0$).\label{foot:dominant}}
\begin{subequations}
\beqna
\pi\m^2 \int_{\ve}^V \int_{\ue}^U \Omega^2 |r\phi|^2~\dd u\dd v&\leq& C_0\label{b2a}\\
\frac{1}{4}\int_{\ve}^V \int_{\ue}^U \Omega^2 r^{-2}Q^2~\dd u\dd v&\leq& C_0\label{b2}.
\eeqna\label{totalb2}
\end{subequations}
Note that (\ref{b2}) holds with $Q^2$ replaced by $\Qe^2$ on account of (\ref{totalcharge}).

We now integrate (\ref{eqn:ruv}) in $u$ to yield the pointwise estimate
\beqn\label{boundla}
\sup_{\ue \leq u\leq U}|r\lambda|\leq \Lambda + \frac{1}{4}\int_{\ue}^U \Omega^2~\dd u + \int_{\ue}^U 4\pi r^2T_{uv}~\dd u.
\eeqn
Upon integration in $v$, we then have
\beqn\label{c1}
\int_{\ve}^V \sup_{\ue \leq u\leq U} |r\lambda|~\dd v\leq \Lambda (V-V') + \frac{1}{4}W + C_0 =:C_1.
\eeqn
Similarly, we obtain the estimate
\beqn\label{c2}
\int_{\ue}^U \sup_{\ve \leq v\leq V} |r\nu|~\dd u \leq N(U-U') + \frac{1}{4}W + C_0 = : C_2.
\eeqn

A Cauchy-Schwarz inequality and (\ref{b2}) give
\beqna\nonumber
\int_{\ue}^{U}\int_{\ve}^{V} |F_{uv}|~\dd v\dd u &\leq& \left(\int_{\ue}^{U}\int_{\ve}^{V}\frac{1}{4} \Omega^2r^{-2}\Qe^2~\dd v\dd u\right)^{\frac{1}{2}}\left(\int_{\ue}^{U}\int_{\ve}^{V}\Omega^{2}r^{-2}~\dd v\dd u\right)^{\frac{1}{2}}\\
&\leq&\sqrt{C_0}r_0^{-1} \sqrt{W}\label{fuvbound}. 
\eeqna
We immediately then obtain
\beqna\nonumber
\int_{\ue}^{U}\sup_{\ve \leq v \leq V}|A_u|~\dd u &\leq& A(U-U')+ \int_{\ue}^{U}\int_{\ve}^{V} |\pv A_u|~\dd v\dd u\\
&\leq&A(U-U')+ \sqrt{C_0}r_0^{-1} \sqrt{W} =: C_3.\label{c3}
\eeqna
\subsubsection{Uniform bound on $r\phi$}\label{sec:general_emkg/proof_gext/rphi}
Given $\epsilon >0$, partition the region of spacetime $\Dm$ into smaller subregions $\Dm_{jk}$ given by
\beqn\nonumber
\Dm_{jk} = \left([u_j, u_{j+1}]\times[v_k, v_{k+1}]\right)\cap \Dm, \hspace{.3cm} j,k = 0, \ldots, I
\eeqn
with $u_0 = \ue$, $v_0 = \ve$, $u_{I+1}= U$ and $v_{I+1} = V$ such that 
\beqna\label{bnd1}
\int_{v_k}^{v_{k+1}}\int_{u_j}^{u_{j+1}}\Omega^2~\dd u\dd v &< &\epsilon\\
\int_{v_k}^{v_{k+1}}\int_{u_j}^{u_{j+1}}4\pi r^2 T_{uv}~\dd u\dd v &< &\epsilon\\
\int_{v_k}^{v_{k+1}}\sup_{u_j\leq u\leq u_{j+1}}|r\lambda|~\dd v &<& \epsilon\label{bnd4}\\
\int_{u_j}^{u_{j+1}}\sup_{v_k\leq v\leq v_{k+1}}|r\nu|~\dd u &<& \epsilon\label{bnd2}\\
\int_{u_j}^{u_{j+1}} \sup_{v_k\leq v\leq v_{k+1}}|A_u|~\dd u &<& \epsilon\label{bnd3}\\
\int_{v_k}^{v_{k+1}}\int_{u_j}^{u_{j+1}}|F_{uv}|~\dd u\dd v &< &\epsilon\label{bnd5}
\eeqna
for all $j$ and $k$.  This is clearly possible since we have shown that each quantity is \emph{uniformly} bounded in $\Dm$. Note that the cardinality of $I$ for the partition depends on $\epsilon$.

Re-write the wave equation (\ref{KG}) to read
\beqn\label{wavenew}
\pu\pv (r\phi) = \phi\pu\lambda - \e\ii r\Psi(A) - \frac{1}{4}\m^2\Omega^2 r\phi.
\eeqn
By integrating in $u$ and $v$ we now show that the right-hand side of (\ref{wavenew}) can be suitably bounded.

Begin by noting that (\ref{eqn:ruv}) gives
\beqn\label{spacerl}
|\pu \lambda|\leq \frac{1}{4}\Omega^2 r^{-1} + |r\nu||r\lambda| r^{-3} + 4\pi r^2 T_{uv} r^{-1}.
\eeqn
Upon integration, making use of (\ref{bnd1})--(\ref{bnd2}), we have
\beqn\label{pulam}
\int_{v_k}^{\vs}\int_{u_j}^{\us}|\pu \lambda|~\dd u\dd v\leq \frac{1}{4}r_0^{-1} \epsilon + r_0^{-3} \epsilon^2 + r_0^{-1} \epsilon
\eeqn
for all $(\us, \vs)\in \Dm_{jk}$.  
Define
\beqn\nonumber
P_{jk} = \sup_{\Dm_{jk}} |r\phi|.
\eeqn
These constants, with the possible exception of $P_{II}$, are finite. We use (\ref{spacerl}) to estimate
\beqna\nonumber
\left|\int_{v_k}^{\vs}\int_{u_j}^{\us} \phi\pu \lambda~\dd u\dd v\right|&\leq& r_0^{-1}P_{jk}\int_{v_k}^{\vs}\int_{u_j}^{\us}|\pu \lambda|~\dd u\dd v\\
&\leq& r_0^{-1}P_{jk}\left(\frac{1}{4}r_0^{-1} \epsilon + r_0^{-3} \epsilon^2 + r_0^{-1} \epsilon\right).\label{p1}
\eeqna
Using (\ref{bnd4}) and (\ref{bnd3}) we also have
\beqna\nonumber
\left|\int_{v_k}^{\vs}\int_{u_j}^{\us} A_u\phi \lambda~\dd u\dd v\right|&\leq& r_0^{-2}P_{jk}\int_{v_k}^{\vs}\sup_{u_j\leq u\leq \us} |r\lambda|~\dd v\int_{u_j}^{\us} \sup_{v_k\leq v\leq \vs}|A_u|~\dd u\\
&\leq&r_0^{-2}P_{jk} \epsilon^2\nonumber.
\eeqna
Integrating by parts gives
\beqn\nonumber
\int_{v_k}^{\vs} A_u \pv\phi (u,v) ~\dd v = \phi A_u (u, \vs) - \phi A_u(u, v_k) -  \int_{v_k}^{\vs} \phi \pv A_u(u,v)~\dd v.
\eeqn
Thus, upon integration in $u$ we obtain
\beqna\nonumber
\left|\int_{u_j}^{\us} \int_{v_k}^{\vs} A_u r\pv \phi~\dd v\dd u\right|&\leq& 2 r_0^{-1} P_{jk}\int_{u_j}^{\us} \sup_{v_k\leq v\leq \vs}|A_u|~\dd u + r_0^{-1}P_{jk} \int_{u_j}^{\us} \int_{v_k}^{\vs} |F_{uv}|~\dd v\dd u\\
&\leq& 3r_0^{-1}P_{jk}\epsilon,
\eeqna
where we made use of (\ref{bnd3}) and (\ref{bnd5}). Similarly, the bound (\ref{bnd5}) gives
\beqn
\left|\int_{v_k}^{\vs}\int_{u_j}^{\us} \frac{1}{2}\phi F_{uv}~\dd u\dd v\right|\leq \frac{1}{2}r_0^{-1}P_{jk} \epsilon.
\eeqn
Lastly, we note that
\beqn\label{pl}
\left|\int_{v_k}^{\vs}\int_{u_j}^{\us}\frac{1}{4}\m^2 \Omega^2 r\phi~\dd u\dd v \right|\leq \frac{1}{4}\m^2 P_{jk}\epsilon
\eeqn
follows from (\ref{bnd1}). Upon integration of (\ref{wavenew}), it then follows from (\ref{p1})--(\ref{pl}) that for a sufficiently small $\epsilon$, i.e.,~for a sufficiently fine partition of $\Dm$,
\beqn\nonumber
P_{jk}\leq 2 \left(\sup_{[u_j, u_{j+1}]\times\{v_k\}} |r\phi| + \sup_{\{u_j\}\times [v_k, v_{k+1}]}|r\phi|\right) \leq 2\left(P_{j k-1} + P_{j-1 k}\right).
\eeqn
By induction, we have that $P_{jk} \leq C_{jk} <\infty$, where the constant $C_{jk}$ depends only on the initial data.  In particular, there exists a constant $C({\epsilon})<\infty$ such that
\beqn\nonumber
\sup_{[u_I, u_{I+1}]\times\{v_{I}\}} |r\phi| + \sup_{\{u_I\}\times [v_I, v_{I+1}]}|r\phi| \leq C({\epsilon}) \Phi.
\eeqn
By a straightforward continuity argument, one has, for suitably small $\epsilon$, the bound $P_{II} \leq 5 C(\epsilon)\Phi$.  To see this, let $\Dm^*\subset \Dm_{II}$ be the set of $p = (u,v)\in D_{II}$ such that
\beqn\label{bootstrap2}
|r\phi|(\ut, \vt) < 5 C({\epsilon})\Phi
\eeqn
for all $(\ut, \vt)\in (J^-(p)\backslash \{p\})\cap \Dm_{II}$. It is clear that $\Dm^*$ is a connected non-empty open set (in the topology of $\Dm_{II}$).  The claim follows once we show that $\Dm^*$ is also closed (in the topology of $\Dm_{II}$).  That is, for $p\in \overline{\Dm^*}$ (in the topology of $\Dm_{II}$), we must show that $p\in \Dm^*$.  Note by continuity we have $|r\phi| \leq 5 C({\epsilon})\Phi$ in $J^-(p)\cap \Dm_{II}$. Upon integration of (\ref{pulam}), making use of the bootstrap estimate (\ref{bootstrap2}), we have, for some constant $C(r_0) <\infty$,
\beqn\nonumber
\sup_{\Dm_{II}} |r\phi| \leq 3 C({\epsilon})\Phi + C(r_0)\epsilon C(\epsilon)\Phi.
\eeqn
Taking $\epsilon$ such that $C(r_0)\epsilon <1$ gives
\beqn\nonumber
\sup_{\Dm_{II}} |r\phi|\leq 4 C(\epsilon)\Phi,
\eeqn
establishing the result. 

\subsubsection{Uniform bound on $\Omega^2$}\label{sec:general_emkg/proof_gext/omega}

Consider the evolution equation (\ref{logO}).  We recall (\ref{defmagn}) and  (\ref{totalcharge}) that the magnetic charge $\Qm\in \R$.  To uniformly bound $\Omega^2$, it suffices, given $\emph{a priori}$ estimates (\ref{b2}), (\ref{c1}), and (\ref{c2}), to bound
\beqn\label{sym}
\left|\int_{\ue}^{\us}\int_{\ve}^{\vs}\D_u\phi\left(\pv\phi\right)^{\dagger}+\pv\phi\left(\D_u\phi\right)^{\dagger}~\dd v\dd u \right|.
\eeqn
Integrating by parts, we note that
\beqn\nonumber
\int_{\ve}^{\vs}\D_u\phi \left(\pv \phi\right)^{\dagger} (u,v)~\dd v = \phi^{\dagger}\D_u\phi (u,\vs) - \phi^{\dagger}\D_u\phi(u, \ve) - \int_{\ve}^{\vs} \phi^{\dagger}\pv\D_u\phi(u,v)~\dd v.
\eeqn
Similarly, we have
\beqn\nonumber
\int_{\ve}^{\vs}\left(\D_u\phi\right)^{\dagger} \pv\phi(u,v)~\dd v = \phi\left(\D_u\phi\right)^{\dagger} (u,\vs) - \phi\left(\D_u\phi\right)^{\dagger}(u, \ve) - \int_{\ve}^{\vs} \phi\pv\left(\D_u\phi\right)^{\dagger}(u,v)~\dd v.
\eeqn
Since
\beqna
\phi\left(\D_u\phi\right)^{\dagger} &=& \phi \left(\pu\phi\right)^{\dagger} - \e\ii A_u |\phi|^2\nonumber\\
\phi^{\dagger}\D_u\phi &=& \phi^{\dagger} \pu\phi + \e\ii A_u |\phi|^2,\nonumber
\eeqna
it follows that
\beqn\nonumber
\phi^{\dagger}\D_u\phi +  \phi\left(\D_u\phi\right)^{\dagger} = \phi\left(\pu\phi\right)^{\dagger} + \phi^{\dagger}\pu\phi = \pu|\phi|^2.
\eeqn
Thus, the symmetrization in (\ref{sym}) yields the estimate
\beqna\nonumber
\left|\int_{\ue}^{\us}\int_{\ve}^{\vs}\D_u\phi\left(\pv\phi\right)^{\dagger}+\pv\phi\left(\D_u\phi\right)^{\dagger}~\dd v\dd u \right|\leq \hspace{5cm}\\
 2\sup_{V' \leq v\leq \vs}\left|\int_{\ue}^{\us}\pu|\phi|^2 ~\dd u\right|+\left|\int_{\ue}^{\us}\int_{\ve}^{\vs}\phi^{\dagger}\pv\D_u\phi + \phi\pv\left(\D_u\phi\right)^{\dagger}~\dd v\dd u\right| \label{boundcancel}.
\eeqna
Using the bound on $\phi$, we can immediately bound the first term on the right-hand side of (\ref{boundcancel}). Thus, it remains to consider the second term.

  Let us note that 
\beqna\nonumber
\phi^{\dagger}\pv\D_u\phi &=& \phi^{\dagger}\pv\pu\phi + \e\ii \phi^{\dagger}\pv(A_u\phi)\\
\phi\pv\left(\D_u\phi\right)^{\dagger} &=& \phi\pv\left(\pu\phi\right)^{\dagger} - \e\ii \phi\pv(A_u\phi^{\dagger}).\nonumber
\eeqna
Using the wave equation (\ref{KG}), we obtain
\beqna
\phi^{\dagger}\pu\pv \phi =  -r^{-1}\left(\nu \phi^{\dagger}\pv\phi + \lambda \phi^{\dagger}\pu\phi\right)\hspace{5.4cm}\nonumber\\
- \e\ii\left(A_u |\phi|^2 r^{-1}\pv r + A_u \phi^{\dagger}\pv\phi - \frac{1}{2}F_{uv}|\phi|^2\right)- \frac{1}{4}\m^2\Omega |\phi|^2\nonumber
\eeqna
and similarly, taking the complex conjugate of (\ref{KG}), we have
\beqna
\phi\pu\left(\pv \phi\right)^{\dagger} = -r^{-1}\left(\nu \phi\left(\pv\phi\right)^{\dagger} + \lambda \phi\left(\pu\phi\right)^{\dagger}\right) \hspace{5.5cm} \nonumber\\
+ \e\ii\left(A_u |\phi|^2 r^{-1}\pv r + A_u \phi\left(\pv\phi\right)^{\dagger} - \frac{1}{2}F_{uv}|\phi|^2\right)- \frac{1}{4}\m^2\Omega^2 |\phi|^2.\nonumber
\eeqna
In particular,
\beqna\label{cancel1}
\phi^{\dagger}\pu\pv \phi + \phi\pu\left(\pv \phi\right)^{\dagger} = \hspace{8cm}\nonumber\\
-r^{-1}\left(\nu \pv|\phi|^2 + \lambda \pu|\phi|^2\right) +\e\ii A_u\left(\phi\left(\pv\phi\right)^{\dagger} - \phi^{\dagger}\pv\phi\right)- \frac{1}{2}\m^2\Omega^2 |\phi|^2.
\eeqna
Lastly, we note that
\beqn\label{cancel2}
\phi^{\dagger}\pv(A_u\phi) -\phi \pv (A_u\phi^{\dagger})=   A_u \left(\phi^{\dagger} \pv \phi - \phi \left(\pv\phi\right)^{\dagger}\right).
\eeqn
It follows from (\ref{cancel1}) and (\ref{cancel2}) that
\beqn\nonumber
\phi^{\dagger}\pv\D_u\phi + \phi\pv\left(\D_u\phi\right)^{\dagger} = -r^{-1}\left(\nu \pv|\phi|^2 +\lambda \pu|\phi|^2\right) - \frac{1}{2}\m^2\Omega^2 |\phi|^2.
\eeqn
Using the uniform bound on $\phi$ and the finiteness of the spacetime volume, we can uniformly control
\beqn\nonumber
\left|\int_{U'}^{\us}\int_{V'}^{\vs}\frac{1}{2}\m^2\Omega^2 |\phi|^2~\dd v\dd u\right|.
\eeqn
We now bound
\beqn\nonumber
\left|\int_{U'}^{\us}\int_{V'}^{\vs}r^{-1}\left(\nu \pv|\phi|^2 + \lambda \pu|\phi|^2\right)~\dd v \dd u\right|.
\eeqn

Let us integrate by parts to first note that
\beqn\nonumber
\int_{V'}^{\vs}r^{-1}\nu \pv|\phi|^2(u,v)~\dd v = r^{-1}\nu |\phi|^2 (u, \vs) - r^{-1}\nu |\phi|^2(u,V') - \int_{V'}^{\vs} \pv (r^{-1}\nu) |\phi|^2(u,v)~\dd v.
\eeqn
Making use of the uniform bound on $\phi$ and the \emph{a priori} estimate (\ref{c2}), we now integrate in $u$ and obtain
\beqn\nonumber
\left|\int_{U'}^{\us}\int_{V'}^{\vs}r^{-1}\nu \pv|\phi|^2~\dd v\right|\leq C + C\int_{U'}^{\us}\int_{V'}^{\vs}\left|\pv (r^{-1}\nu)\right|~\dd v\dd u.
\eeqn
From (\ref{eqn:ruv}), since
\beqn\nonumber
|\pv \nu| \leq  \frac{1}{4}\Omega^2r^{-1} +|r\lambda||r\nu| r^{-3} + 4\pi r^2 T_{uv} r^{-1},
\eeqn
we note that
\beqn\nonumber
|\pv (r^{-1} \nu)| \leq r^{-1}|\pv\nu| + r^{-4} |r\nu||r\lambda| \leq \frac{1}{4}\Omega^2r^{-2} +2|r\lambda||r\nu| r^{-4} + 4\pi r^2 T_{uv} r^{-2}.
\eeqn
Using the \emph{a priori} estimates, we now have
\beqn\nonumber
\int_{U'}^{\us}\int_{V'}^{\vs}\left|\pv (r^{-1}\nu)\right|~\dd v\dd u \leq C,
\eeqn
allowing us to obtain
\beqn\nonumber
\left|\int_{U'}^{\us}\int_{V'}^{\vs}r^{-1}\nu \pv|\phi|^2~\dd v\dd u\right|\leq C.
\eeqn
Similarly, one can show that
\beqn\nonumber
\left|\int_{U'}^{\us}\int_{V'}^{\vs}r^{-1}\lambda \pu|\phi|^2~\dd v\dd u\right|\leq C.
\eeqn
We now immediately retrieve the bound we seek:
\beqn\nonumber
\left|\int_{\ue}^{\us}\int_{\ve}^{\vs}\phi^{\dagger}\pv\D_u\phi + \phi\pv\left(\D_u\phi\right)^{\dagger}~\dd v\dd u\right|\leq C.
\eeqn
In particular, integrating (\ref{logO}) we obtain
\beqn\nonumber
\sup_{\Dm}|\log\Omega^2|\leq C<\infty.
\eeqn
Thus, there are constants $c_0$ and $c_1$ such that
\beqn\label{pointwiseomega}
0 < c_0 \leq \Omega^2 \leq c_1  < \infty
\eeqn
uniformly in $\Dm$.

\subsubsection{One-sided estimates for $r\lambda$ and $r\nu$}
Note that (\ref{eqn:ruv}) gives
\beqn\nonumber
\pu (r\lambda) = \pv(r\nu)\geq -\frac{1}{4}\Omega^2.
\eeqn
Using the uniform estimate (\ref{pointwiseomega}) for $\Omega$, integration then yields
\beqna
-r\lambda (\us,\vs) &\leq& - r\lambda (U', \vs) + \frac{1}{4}\int_{U'}^{\us} \Omega^2~\dd u\nonumber\\
&\leq& \Lambda + \frac{1}{4}c_1 (U - U')=: \Lambda' < \infty\label{lambda0}.
\eeqna
Similarly, we note that
\beqn\nonumber
-r\nu (\us, \vs) \leq N + \frac{1}{4}c_1 (V-V') =: N' < \infty.
\eeqn
\subsubsection{$L^2$-estimates for $\pv\phi$ and ${\rm D}_u\phi$}
Integrating (\ref{eqn:convv}) we obtain
\beqna
\Omega^{-2}\lambda(\us,V') - \Omega^{-2}\lambda(\us, \vs) &=& \int_{V'}^{\vs}4\pi r\Omega^{-2} |\pv\phi|^2~\dd v\nonumber\\
&\geq& 4\pi r_0 c_1^{-1} \int_{V'}^{\vs}|\pv\phi|^2~\dd v.\nonumber
\eeqna
Thus, we have, using the one-sided estimate (\ref{lambda0}),
\beqna\nonumber
\int_{V'}^{\vs}|\pv\phi|^2~\dd v&\leq& \frac{c_1}{4\pi r_0}\left(\Omega^{-2}\lambda(\us,V') - \Omega^{-2}\lambda(\us, \vs)\right)\\
&\leq& \frac{c_1}{4\pi r_0^2 c_0}\left(\Lambda + \Lambda'\right)\label{raych}.
\eeqna
Similarly, we have
\beqn\label{raych2}
\int_{U'}^{\us}|\D_u\phi|^2~\dd u\leq \frac{c_1}{4\pi r_0^2 c_0}\left(N +N'\right).
\eeqn
\subsubsection{Uniform bound on $T_{uv}$}\label{sec:tuv}

With bounds on $\phi$ and $\Omega$, it remains to bound $\Qe$ in order to give, from (\ref{tuv}), an estimate for $T_{uv}$.

Using Cauchy-Schwarz, the bound on $\phi$, and (\ref{raych}), we note the bound
\beqn\nonumber
\left|\int_{\ve}^{\vs}r\left(\phi\left(\pv\phi\right)^{\dagger}-\phi^{\dagger}\pv\phi\right)
~\dd v\right|\leq 2 \left(\int_{V'}^{\vs}|\pv\phi|^2~\dd v\right)^{\frac{1}{2}}\left(\int_{V'}^{\vs}|r\phi|^2~\dd v\right)^{\frac{1}{2}} \leq C.
\eeqn
Upon integration of (\ref{beta}), we, therefore, obtain
\beqn\label{betabound}
|\Qe(\us, \vs)| \leq|\Qe(\us, \ve)| + \left|\int_{\ve}^{\vs} 2\pi\e \ii  r^2\left(\phi\left(\pv\phi\right)^{\dagger}-\phi^{\dagger}\pv\phi\right)
~\dd v\right| \leq B + 2\pi |\e| R C.
\eeqn
It thus follows that 
\beqn\nonumber
\sup_{\Dm} T_{uv} \leq C.
\eeqn
\subsubsection{Remaining uniform estimates}\label{sec:general_emkg/proof_gext/remain}

We can now bound all other quantities quite easily.

To estimate the first derivatives of $\phi$ let us use the wave equation (\ref{KG}) to compute
\beqna
\pu \theta_A + \e\ii A_u\theta_A&=&- \frac{\zeta_A}{r}\lambda  -\frac{1}{4}\Omega^2r\phi\left(\m^2 -\e\ii \frac{\Qe}{r^2}\right)\label{put}\\
\pv \zeta_A &=& - \frac{\theta_A}{r}\nu -\frac{1}{4}\Omega^2r\phi\left(\m^2  + \e\ii \frac{\Qe}{r^2}\right).\label{pvz}
\eeqna
From the uniform bounds on $T_{uv}$ and $\Omega^2$ we note that (\ref{boundla}) gives
\beqn\label{lambda1}
\sup_{\Dm}|r\lambda| \leq C.
\eeqn
Similarly, we have
\beqn\nonumber
\sup_{\Dm}|r\nu| \leq C.
\eeqn
Since (\ref{raych2}) and (\ref{lambda1}) give
\beqn\nonumber
\left|\int_{\ue}^{\us} -\frac{\zeta_A}{r}\lambda~\dd u\right|\leq C\int_{\ue}^{\us} \left|\D_u\phi\right|~\dd u\leq C \left(\int_{U'}^{\us}|\D_u\phi|^2~\dd u\right)^{\frac{1}{2}}\left(\int_{U'}^{\us}1~\dd u\right)^{\frac{1}{2}}\leq C,
\eeqn 
integrating (\ref{put}) we have
\beqn\nonumber
\sup_{\Dm} |\theta_A|\leq C
\eeqn
using the estimates on $\phi$, $\Omega^2$, and $\Qe$.  Using this bound, we integrate (\ref{pvz}) and similarly obtain
\beqn\nonumber
\sup_{\Dm} |\zeta_A|\leq C.
\eeqn
Integration of (\ref{fuv}) also gives, using (\ref{betabound}), the uniform estimate
\beqn\nonumber
\sup_{\Dm}|A_u|\leq C.
\eeqn
It remains now to show that $\pu \Omega$, $\pv\Omega$, $\pu \nu$, and $\pv \lambda$ are bounded quantities in $\Dm$.  By integrating (\ref{logO}), we obtain uniform estimates on $\pu\Omega$ and $\pv \Omega$ using already derived bounds.  Then, we can give uniform estimates for $\pu \nu$ and $\pv \lambda$ by using (\ref{eqn:conuu}) and (\ref{eqn:convv}), respectively.

  We have established that $N(\Dm) <\infty$ and Theorem \ref{thm:gext} follows from applying Proposition \ref{naext}. \qed 

\section{Proof of Theorem \ref{thm:main}: global characterization of spacetime}\label{sec:proof_main}

With the extension principle of Theorem \ref{thm:gext} having been proven, we are now able to give a characterization of first singularities that arise in the collapse of self-gravitating charged scalar fields. Together with monotonicity arguments, a consequence of the dominant energy condition (although, in some cases, monotonicity arising from the weaker null energy condition will suffice), we can then give a proof of Theorem \ref{thm:main}, establishing the global characterization of spacetime.

\subsection{Characterization of first singularities}\label{sec:proof_main/first}
We apply Theorem \ref{thm:gext} under the assumptions of initial data in Theorem \ref{thm:main} to give a characterization of first singularities $\mathcal{B}_1^+$.  Introducing the notation
\beqn\nonumber
r_{\inf}(p) = \lim_{q\rightarrow p}\inf_{J^-(p) \cap J^+(q)\cap \Q^+} r(q)
\eeqn
for $p\in \overline{\Q^+}$ and $q\in J^-(p)\cap \Q^+$, we have, as an immediate corollary of Theorem \ref{thm:gext}, recalling Proposition \ref{prop:finite_volume},
\begin{cor}\label{corext}
If $p\in \mathcal{B}_1^+$, then $r_{\inf}(p) = 0$.
\end{cor}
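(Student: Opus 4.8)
The plan is to prove the contrapositive: if $p\in\mathcal{B}_1^+$ has $r_{\inf}(p)>0$, I will verify the two hypotheses of the generalized extension principle (Theorem \ref{thm:gext}) on a suitable causal rectangle with future vertex $p$, forcing $p\in\Q^+$ and contradicting $p\in\mathcal{B}^+=\overline{\Q^+}\setminus\Q^+$. The central vertex $p=b_{\Gamma}$ is disposed of first and trivially: since $r$ vanishes continuously on $\Gamma$ and $\Gamma$ accumulates at $b_{\Gamma}$, every set $J^-(b_{\Gamma})\cap J^+(q)\cap\Q^+$ with $q$ near $b_{\Gamma}$ contains interior points arbitrarily close to the portion of $\Gamma$ just below $b_{\Gamma}$, where $r$ is arbitrarily small; hence $r_{\inf}(b_{\Gamma})=0$ unconditionally. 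So I may assume $p\in\mathcal{B}_1^+\setminus b_{\Gamma}$ is non-central, with $r_{\inf}(p)>0$.

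Because $p$ is a non-central first singularity, $J^-(p)\cap\Q^+$ is compactly generated, and by the characterization in \S\ref{sec:rud/finite} this supplies a causal rectangle $(J^+(q_1)\cap J^-(p))\setminus\{p\}\subset\Q^+$ for some $q_1\in(I^-(p)\cap\Q^+)\setminus\{p\}$. First I would record the coordinates $q_1=(u_{q_1},v_{q_1})$, $p=(u_p,v_p)$ and observe that the closed diamond $J^+(q_1)\cap J^-(p)$ is a coordinate rectangle whose four corners all lie in $J^+(q_1)\cap J^-(p)$; deleting $p$, the remaining three corners lie in $\Q^+$. Since $q_1\in I^-(p)$ we have $u_{q_1}\neq u_p$ and $v_{q_1}\neq v_p$, so the corner $(u_{q_1},v_p)$ is distinct from $p$ and lies in $\Q^+$. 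As $\Q^+$ is open and $V=\sup\mathcal{V}$ is not attained in $\Q^+$, this point cannot lie on $\{v=V\}$, whence $v_p<V$, i.e. $p\notin\{v=V\}$. This single geometric fact drives the two quantitative hypotheses.

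Next I would select the point $q$ for Theorem \ref{thm:gext}. Using $r_{\inf}(p)>0$, choose $q\in J^+(q_1)\cap I^-(p)$ so close to $p$ that $\inf_{J^-(p)\cap J^+(q)\cap\Q^+}r\geq r_0:=\tfrac12 r_{\inf}(p)>0$; then $\Dm:=(J^+(q)\cap J^-(p))\setminus\{p\}$ is a sub-rectangle of the one above, so $\Dm\subset\Q^+$ and $r\geq r_0$ on $\Dm$, giving the lower bound. For the upper bound, $v_p<V$ gives $\Dm\subset J^-(p)\cap\Q^+\subset\Q^+\cap\{v\leq v_p\}$, and by $\partial_u r<0$ (Proposition \ref{prop: antitrapped}), exactly as in the proof of Proposition \ref{iplus}, $\sup_{\Q^+\cap\{v\leq v_p\}}r\leq\sup_{\Sigma\cap\{v\leq v_p\}}r<\infty=:R$, so $r\leq R$ on $\Dm$. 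Finally, $v_p<V$ lets me invoke Proposition \ref{prop:finite_volume} to conclude that $J^-(p)\cap\Q^+$, and hence $\Dm$, has finite spacetime volume. With $0<r_0\leq r\leq R<\infty$ on $\Dm$ and finite volume, Theorem \ref{thm:gext} yields $p\in\Q^+$, the desired contradiction; therefore $r_{\inf}(p)=0$.

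The main obstacle is not any single estimate but the geometric input $v_p<V$, namely that a first singularity cannot lie on the ``infinity'' null line $\{v=V\}$: this is precisely what simultaneously unlocks the finite-volume hypothesis (Proposition \ref{prop:finite_volume} explicitly excludes $\{v=V\}$) and the uniform upper bound on $r$ (through the monotonicity $\partial_u r<0$). The clean route is the openness of $\Q^+$ applied to the corner of the compactly-generating diamond; the one point deserving care is that this corner genuinely lies in $\Q^+$, which is exactly the content of compact generation for non-central $p$. The lower bound is then essentially the definition of $r_{\inf}$, and the upper bound together with finiteness of the volume are inherited from the preliminary monotonicity results of \S\ref{sec:pre/noanti}--\S\ref{sec:pre/volume}.
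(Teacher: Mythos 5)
Your proposal is correct and takes essentially the same route as the paper, which obtains the corollary "immediately" from Theorem \ref{thm:gext} together with Proposition \ref{prop:finite_volume}: a causal rectangle from compact generation, the lower bound on $r$ from $r_{\inf}(p)>0$, the upper bound from $\partial_u r<0$ and monotonicity along $\Sigma$, finite volume from $p\notin\{v=V\}$, and then the extension principle contradicting $p\in\mathcal{B}^+$. The details you supply --- the corner argument forcing $v_p<V$ (since $\Q^+\cap\{v=V\}=\emptyset$) and the trivial treatment of $b_{\Gamma}$ via $r=0$ on $\Gamma$ --- are exactly the steps the paper leaves implicit.
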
 
As a consequence of Corollary \ref{corext}, if $p\in \overline{\mathcal{R}}\backslash{\overline{\Gamma}}\subset \overline{\Q^+}$ and $q\in \left( I^-(p)\cap\overline{\mathcal{R}}\right)\backslash \{p\}$ are such that
\beqn\nonumber
\Dm = \left(J^-(p)\cap J^+(q)\right)\backslash \{p\}\subset \mathcal{R}\cup \mathcal{A},
\eeqn
then $p\not\in \mathcal{B}_1^+$. Indeed, by taking $q$ close enough to $p$ such that $\Dm \cap \Gamma = \emptyset$, then $r\geq r_0 >0$ on $\Dm$ since $\lambda \geq 0$ and $\nu <0$ in $\mathcal{R}\cup \mathcal{A}$. Whence, we state
\begin{cor}If $p\in \mathcal{B}_1^+\cap \overline{\mathcal{R}}$, then $p=b_{\Gamma}$.
\end{cor}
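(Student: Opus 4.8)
The plan is to argue by contradiction, the only inputs being the monotonicity $\pv(\Omega^{-2}\pv r)\le 0$ of Raychaudhuri's equation \eqref{eqn:convv}, the sign $\pu r<0$ from Proposition~\ref{prop: antitrapped}, and Corollary~\ref{corext}. Suppose $p\in\mathcal{B}_1^+\cap\overline{\mathcal{R}}$ with $p\ne b_{\Gamma}$. First I would dispose of the possibility $p\in\overline{\Gamma}$: since $\Sigma\cup\Gamma\subset\Q^+$ in the manifold-with-boundary sense while $p\in\mathcal{B}_1^+\subset\mathcal{B}^+=\overline{\Q^+}\setminus\Q^+$, the only point of $\overline{\Gamma}$ lying on $\mathcal{B}^+$ is, by definition, its unique future limit point $b_{\Gamma}$; hence $p\in\overline{\Gamma}$ would force $p=b_{\Gamma}$, contrary to assumption. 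So I may assume $p=(U,V)\in\overline{\mathcal{R}}\setminus\overline{\Gamma}$, and it suffices to contradict $p\in\mathcal{B}_1^+$.

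Since $p$ is a first singularity, $J^-(p)\cap\Q^+$ is compactly generated, so (as $p\notin\overline{\Gamma}$) there is a causal rectangle $\Dm=\bigl(J^-(p)\cap J^+(q)\bigr)\setminus\{p\}\subset\Q^+$ with $q=(U',V')$; shrinking $q$ toward $p$ I may also arrange $\Dm\cap\Gamma=\emptyset$, so that $r>0$ throughout $\Dm$. The goal is then to violate Corollary~\ref{corext} by establishing the uniform bound $r\ge r_0:=r(U,V')>0$ on $\Dm$: once this holds, every smaller diamond $J^-(p)\cap J^+(q')\cap\Q^+\subset\Dm$ inherits $r\ge r_0$, whence $r_{\inf}(p)\ge r_0>0$, the desired contradiction. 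The reduction of the bound to the outgoing ray $\{u=U\}$ is immediate from $\pu r<0$: integrating in $u$ at fixed $v$ gives $r(u,v)\ge r(U,v)$ throughout $\Dm$ (the top edge $\{v=V\}$ being covered by letting $v\to V$ and using continuity of $r$ on $\Q^+$), so it remains only to bound $r$ from below along the right edge $\{(U,v):V'\le v<V\}$.

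The crux — and what I initially expected to be the obstacle, since nothing controls $\pv r$ transverse to the $v$-direction — is to show that this right edge lies in $\mathcal{R}\cup\mathcal{A}=\{\pv r\ge 0\}$. Here I would exploit that $p\in\overline{\mathcal{R}}$ supplies a sequence $a_n=(u_n,v_n)\in\mathcal{R}$ with $a_n\to p$, so $u_n\to U$ and $v_n\to V$. Fixing $v_0\in[V',V)$, for $n$ large one has $v_n>v_0$; since $\Omega^{-2}\pv r$ is non-increasing in $v$ by \eqref{eqn:convv} while $\pv r(a_n)>0$, the past-outgoing segment $\{(u_n,v):v_0\le v\le v_n\}$ (which stays in $\Q^+$ by global hyperbolicity) remains in $\mathcal{R}$, so in particular $(u_n,v_0)\in\mathcal{R}$. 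Letting $n\to\infty$ gives $(u_n,v_0)\to(U,v_0)\in\Q^+$, and because $\{\pv r\ge 0\}$ is closed in $\Q^+$ (as $r\in C^2$) the limit satisfies $\pv r(U,v_0)\ge 0$. Thus $\pv r\ge 0$ on the whole right edge, so $r(U,\cdot)$ is non-decreasing there and $r(U,v)\ge r(U,V')=r_0>0$. Together with the reduction above this yields $r\ge r_0$ on $\Dm$, completing the contradiction and proving $p=b_{\Gamma}$. I expect the only delicate bookkeeping to be the verification that the past-outgoing segments remain in $\Q^+$ down to level $v_0\ge V'$ and the limiting argument on the top edge; both are routine given $\Dm\subset\Q^+$.
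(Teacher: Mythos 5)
Your proof is correct, and although it shares the paper's overall skeleton --- produce a causal rectangle $\Dm=\left(J^-(p)\cap J^+(q)\right)\backslash\{p\}\subset\Q^+$ below $p$, establish a uniform lower bound $r\geq r_0>0$ on $\Dm$, and contradict Corollary~\ref{corext} --- it differs from the paper's argument at the decisive step, and in a way that actually makes it more complete. The paper proves the implication under the \emph{hypothesis} $\Dm\subset\mathcal{R}\cup\mathcal{A}$ (this is precisely the weak extension principle), where the sign conditions $\lambda\geq0$, $\nu<0$ give the bound at once; the passage from that statement to the Corollary is left implicit (``whence''), and it is not automatic, since \emph{a priori} a boundary point of $\overline{\mathcal{R}}$ could simultaneously be a limit of trapped spheres --- monotonicity (\ref{eqn:convv}) propagates trappedness only along outgoing rays, so nothing immediately forbids interlacing trapped and regular outgoing segments accumulating at $p$, in which case \emph{no} rectangle below $p$ lies in $\mathcal{R}\cup\mathcal{A}$. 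Your argument never needs that hypothesis: you only need the outgoing edge $\{U\}\times[V',V)$ to lie in $\mathcal{R}\cup\mathcal{A}$, and you \emph{derive} this from $p\in\overline{\mathcal{R}}$ by pushing the regular sequence backwards in $v$ with (\ref{eqn:convv}) and then passing to the limit using the closedness of $\{\pv r\geq0\}$ in $\Q^+$; the bound $r\geq r(U,V')$ then spreads over $\Dm$ using only $\nu<0$ (Proposition \ref{prop: antitrapped}) and continuity on the top edge. The bookkeeping you flag is genuine but routine: a past-directed constant-$u$ segment from a point of $\Q^+$ can exit $\Q^+$ only through $\Sigma\cup\Gamma$ (global hyperbolicity), and for $n$ large your segments lie in an arbitrarily small neighborhood of the compact set $\left(\{U\}\times[v_0,V]\right)\cup\{p\}$, which is at positive distance from $\overline{\Sigma\cup\Gamma}$; here one uses $p\neq b_{\Gamma}$ (your contradiction hypothesis) and $p\neq i^0$, the latter being automatic because the existence of the rectangle forces $I^-(p)\cap\Q^+\neq\emptyset$, which fails at $i^0$. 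In short: same strategy, but your right-edge lemma supplies exactly the step the paper's one-line derivation glosses over, and it is the right way to close the argument.
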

In particular, for the Einstein-Maxwell-Klein-Gordon system the generalized extension principle implies the weak extension principle (cf.~\S\ref{sec:intro/general/weak}).  More generally, since Proposition \ref{prop:finite_volume} holds for any matter model that obeys the null energy condition, this establishes Proposition \ref{imply} of \S \ref{sec:intro/general/tame}, i.e.,~every strongly tame Einstein-matter system is also weakly tame.  

\subsection{The Hawking mass and its monotonicity properties}
It will be convenient here to recall that the Hawking mass $m$ is given by  
\beqn\label{hm}
1-\frac{2m}{r} = g^{ab}\partial_ar \partial_b r = -4\Omega^{-2}\lambda \nu,
\eeqn
and its evolution equations can be shown to satisfy (cf.~\cite{DC95})
\beqna
\pu m &=& 8\pi r^2 \Omega^{-2}\left(T_{uv}\nu - T_{uu}\lambda\right)\nonumber\\
\pv m &=& 8\pi r^2 \Omega^{-2}\left(T_{uv}\lambda - T_{vv}\nu\right).\nonumber
\eeqna
Because we assume that $\m^2\geq 0$, the Einstein-Maxwell-Klein-Gordon system satisfies the dominant energy condition (cf.~\S\ref{energymomentum}):
\beqn\nonumber
T_{uv}\geq 0, \hspace{.5cm}T_{uu}\geq 0,\hspace{.5cm} \textrm{and}\hspace{.5cm}T_{vv}\geq 0.
\eeqn
As a consequence of Proposition \ref{prop: antitrapped}, the Hawking mass, therefore, satisfies monotonicity properties
\beqn \label{hawkingmassv}
\pv m \geq 0 \hspace{.5cm}\textrm{and}\hspace{.5cm} \pu m \leq 0
\eeqn
 in $\mathcal{R}\cup \mathcal{A}$.  This monotonicity is only used in a handful of the proofs below. Where it is not exploited (keeping in mind the need for the dominant energy condition\footnote{see, however, footnote \ref{foot:dominant}} in Corollary \ref{corext}), the null energy condition
\beqn\nonumber
T_{uu}\geq 0\hspace{.5cm} \textrm{and}\hspace{.5cm}T_{vv}\geq 0
\eeqn
will suffice.  

\subsection{Statements I and II}\label{sec:proof_I_II}
Recall the notation of \S \ref{sec:rud}.  We will systematically refine the boundary characterization of Proposition \ref{rudb} so as to obtain finally that which is given in Theorem \ref{thm:main}.

\subsubsection{Null segment $\mathcal{N}_{\Gamma}$ emanating from $b_{\Gamma}$}\label{ngamma}
Let 
\beqn\nonumber
\mathcal{CH}_{\Gamma} =\{q \in \mathcal{N}_{\Gamma} : \exists p, p'\in \mathcal{N}_{\Gamma}~~\textrm{s.t.}~~ q\in (p, p']~~\textrm{and}~~r_{\inf}(q') \neq 0 ~\forall q'\in (p, p')\},
\eeqn
and define
\beqn\nonumber
\mathcal{S}_{\Gamma} = \mathcal{N}_{\Gamma}\backslash \chg.
\eeqn
We note that $r_{\inf}$ need not \emph{a priori} vanish on $\mathcal{S}_{\Gamma}$.  By definition, however, for every $(U,v)\in \mathcal{S}_{\Gamma}$ and every $\delta >0$, there exists $v-\delta < \vt\leq v$ such that $(U, \vt)\in \mathcal{S}_{\Gamma}$ and  $r_{\textrm{inf}}(U,\vt)=0$. 

We show $\chg$ is a connected set. Suppose, on the contrary, that $\chg$ is disconnected.  Then, there exists
$(U,v)\in \chg$ and $(U,v')\in\chg$, where without loss of generality $v < v'$, and $v< v''<v'$ such that $(U, v'')\in \mathcal{S}_{\Gamma}$. Moreover, there exists $v < v''' \leq v''$ such that $r_{\inf}(U,v''') = 0$.  Thus, there exists a sequence $(U_j, v'''_j)\rightarrow (U, v''')$ such that $r(U_j, v'''_j)\rightarrow 0$ with $U_j\leq U$ and $v'''_j\leq v'''$.
Define
\beqn\nonumber
r_0 = \min\{r_{\inf}(U,v), r_{\inf}(U,v')\} >0.
\eeqn
Choose $J$ sufficiently large so that $r(U_j, v'''_j) < r_0$ for all $j\geq J$.
Since $\nu<0$ in $\Q^+$, it follows that
\beqn\nonumber
r(u,v) > r_0 \hspace{1cm} \textrm{and}\hspace{1cm}r(u,v') > r_0
\eeqn
for all $u\geq U_J$. It follows that for each $U_j\in [U_J, U)$ there exists $\vt(U_j)\in (v, v'''_j)$ and $\hat{v}(U_j)\in (v'''_j, v')$ such that
\beqn\nonumber
\lambda(U_j, \vt(U_j))<0\hspace{1cm}\textrm{and}\hspace{1cm} \lambda(U_j,\hat{v}(U_j)) >0.
\eeqn
This, however, contradicts monotonicity (\ref{eqn:convv}) and no such $\hat{v}(U_j)$ can exist.  We conclude that $\chg$ is a single (possibly empty) half-open interval. From monotonicity (\ref{eqn:convv}), it then follows that $\mathcal{N}_{\Gamma}$ is given by
\beqn\nonumber
\mathcal{N}_{\Gamma} = \sgo\cup\chg\cup \sgt,
\eeqn
where $\sgo$ is a half-open (possibly empty) connected component of $\mathcal{S}_{\Gamma}$ that emanates from (but does not include) $b_{\Gamma}$ and $\sgt$ is a half-open (possibly empty) connected component of $\mathcal{S}_{\Gamma}$ that emanates from (but does not include) the future endpoint of $\chg$. We note that if $\mathcal{CH}_{\Gamma}=\emptyset$, then, necessarily, $\sgt = \emptyset$.
\subsubsection{Null segment $\mathcal{N}_{i^+}$ emanating from $i^{\square}$}
Let
\beqn\nonumber
\mathcal{CH}_{i^+} =\{q \in \mathcal{N}_{i^+} : \exists p, p'\in \mathcal{N}_{i^+}~~\textrm{s.t.}~~ q\in (p, p']~~\textrm{and}~~r_{\inf}(q') \neq 0 ~\forall q'\in (p, p')\}
\eeqn
and define
\beqn\nonumber
\mathcal{S}_{i^+} = \mathcal{N}_{i^+}\backslash \mathcal{CH}_{i^+}.
\eeqn
Again we note that $r_{\inf}$ need not \emph{a priori} vanish on $\mathcal{S}_{i^+}$. By definition, however, for every $(u,V)\in \mathcal{S}_{i^+}$ and every $\delta >0$ there exists $u-\delta < \ut\leq u$ such that $(\ut, V)\in \mathcal{S}_{i^+}$ and  $r_{\textrm{inf}}(\ut,V)=0$. 

We will establish that $\mathcal{S}_{i^+}$ is a connected set by showing that if $(u, V)\in \mathcal{S}_{i^+}$ and $r_{\textrm{inf}}(u,V)=0$, then $r_{\inf}(u',V)=0$ for all $u'\geq u$.  It will then follow that $\mathcal{CH}_{i^+}$ is a half-open (possibly empty) interval that, necessarily, emanates from (but does not include) $i^{\square}$. $\mathcal{S}_{i^+}$ is then a half-open (possibly empty) interval that emanates from (but does not include) the future endpoint of $\mathcal{CH}_{i^+}\cup i^{\square}$.

Suppose $(u,V)\in \mathcal{S}_{i^+}$.  Then, by definition, there exists some $u' \leq u$ such that $(u', V)\in \mathcal{S}_{i^+}$ and a sequence $(u'_j, V_j)\rightarrow (u',V)$ such that $r(u'_j, V_j)\rightarrow 0$ with $u_j\leq u'$ and $V_j< V$.  Since $\nu<0$ in $\Q^+$, $r(u'', V_j)\leq r(u'_j, V_j)$ for all $\{(u'', V_j): u''\geq u'\}\cap \Q^+$.  Thus, $(u'', V)\in \mathcal{S}_{i^+}$ and, in particular, \emph{a posteriori} $r_{\inf}$ vanishes on $\mathcal{S}_{i^+}$.

\subsubsection{The remaining achronal boundary} 

Let us define
\beqn\nonumber
\mathcal{S} =  \bigcup_{p\in \mathcal{B}_1^+\backslash b_{\Gamma}} \{p\}\cup \mathcal{N}_p^1 \cup \mathcal{N}_p^2.
\nonumber
\eeqn
We will show that $r_{\inf}=0$ on $\mathcal{S}$.  

Recall that Corollary \ref{corext} gives that $r_{\inf}(u,v)=0$ for all $(u,v) \in \mathcal{B}_1^+\backslash b_{\Gamma}$. In particular, there exists a sequence $(u_j, v_j)\rightarrow (u,v)$ such that $r(u_j, v_j) \rightarrow 0$ with $u_j\leq u$ and $v_j\leq v$.  Let $\mathcal{N}^1_p\subset \{v = v(p)\}$.  Since $\nu<0$ in $\Q^+$, $r(u', v_j)\leq r(u_j, v_j)$ for all $u'\geq u$. Thus, $r_{\inf}(u', v) = 0$. Let $\mathcal{N}^2_p\subset \{u = u(p)\}$.  Since $(u,v)$ is \emph{non-central}, there exists, by compactness, a $v'< v$ such that $(u,v')\in \Q^+$ and an $\epsilon >0$ such that $r(u', v')> \epsilon$ for all $\{(u', v'): u'\leq u\}\cap \Q^+$.  Moreover, there exists a sufficiently large $J$ such that $r(u_j, v_j) < \epsilon$ for all $j\geq J$.  It follows that there exists a $\hat{v}(j)\in (v', v_j)$ such that $\lambda(u_j, \hat{v}(j))<0$. Appealing to monotonicity (\ref{eqn:convv}), we conclude that $\lambda(u_j, v'') <0$ for all $v'' \geq \hat{v}(j)$. Whence $r(u_j, v'') \leq r(u_j, v_j)$ and $r_{\inf}(u, v'')=0$.
\subsubsection{$\mathcal{I}^+$ is an open set in the topology of $\mathcal{B}^+$}
If $b_{\Gamma} = i^+$, then the openness of $\mathcal{I}^+$ is obvious by definition.  If $b_{\Gamma}\neq i^{\square}$, it follows from Statement VI, which is proven below in \S \ref{statementvi}, that $i^{\square}\not\in \mathcal{I}^+$. This will establish the claim.

\subsubsection{Continuous extendibility of $r$}\label{r_ext}

Of the remaining properties of the boundary to establish is that of the extendibility properties of $r$. We do this in the sequel in a case-by-case analysis.

\subsubsection*{$r$ extends continuously to zero on $\left(\mathcal{S}\cup \mathcal{S}_{i^+}\right)\backslash b_{\Gamma}$}
For every $(u,v)\in \left(\mathcal{S}\cup \mathcal{S}_{i^+}\right)\backslash b_{\Gamma}$, there exists a neighborhood $\mathcal{U}\subset\R^{1+1}$ of $(u,v)$ such that $\mathcal{U}\cap \Q^+\subset \mathcal{T}$. Thus, there exists a sequence $(u_j, v_j)\rightarrow (u,v)$ such that $r(u_j, v_j)\rightarrow 0$ with $u_j\leq u$ and $v_j\leq v$.  Perturbing the sequence, we may, without loss of generality, assume that $v_j < v$. Given $\epsilon >0$, there exists a sufficiently large $J$ such that $r(u_j, v_j)< \epsilon$ for all $j\geq J$.  Since $\nu<0$ and $\lambda <0$ in $\mathcal{U}\cap \Q^+$, it follows that $r(u', v') <\epsilon$ in $\{u'\geq u_J\}\cap \{v'\geq v_J\}\cap \mathcal{U}\cap \Q^+$. 

\subsubsection*{$r$ extends continuously to zero on $b_{\Gamma}\in \mathcal{S}_{i^+}$}

Let $(U,V) = b_{\Gamma}\in \mathcal{S}_{i^+}$ and consider a neighborhood $\mathcal{U}\subset \R^{1+1}$ of $b_{\Gamma}$.

Fix $u' < U$ such that $(u', V)\in \mathcal{U}\cap\mathcal{S}_{i^+}$.  Since $r$ extends continuously to zero on $(u', V)$, there exists a sequence $(u'_j, V_j)$ such that $r(u'_j, V_j)\rightarrow 0$ with $u'_j\leq u'$ and $V_j< V$.  Given $\epsilon >0$, there exists a sufficiently large $J$ such that $r(u'_j, V_j) < \epsilon$ for all $j\geq J$. Since there exists a sufficiently small neighborhood $\widetilde{\mathcal{U}}$ of $(u',V)$ such that $\widetilde{\mathcal{U}}\cap\Q^+\subset\mathcal{T}$, we can assume, without loss of generality, that $(u'_J, V_J)\in \mathcal{T}$.  By monotonicity (\ref{eqn:convv}), it follows that $(u'_J, v') \in \mathcal{T}$ for all $\{v'\geq V_J \}\cap \Q^+$.  Since $\nu<0$ in $\Q^+$, we have that
\beqn\nonumber
r(u'',v)\leq \epsilon
\eeqn
for all $(u'',v)\in \{u''\geq u'_{J}\}\cap \{v\geq V_{J}\}\cap \mathcal{U}\cap \Q^+$.

\subsubsection*{$r$ extends continuously to zero on $b_{\Gamma}\backslash \left(\mathcal{S}_{i^+}\cup\mathcal{CH}_{i^+}\cup i^{\square}\right)$}

Let $(U,V) = b_{\Gamma}\backslash \left(\mathcal{S}_{i^+}\cup\mathcal{CH}_{i^+}\cup i^{\square}\right)$ and consider a neighborhood $\mathcal{U}\subset \R^{1+1}$ of $b_{\Gamma}$. Since, by assumption, $b_{\Gamma}$ does not coincide with the future endpoint of $\mathcal{S}_{i^+}\cup\mathcal{CH}_{i^+}\cup i^{\square}$, we note that there exists a sufficiently small $\delta >0$ such that $\{|v-V|\leq \delta\}\cap \Sigma$ is compact.  Hence, there exist $u_0 <U$ and a positive constant $C(u_0)<\infty$ such that
\beqn\label{monokappa}
-\frac{1}{4}\left(\Omega^{-2}\nu\right)^{-1}(u,v) = \frac{\lambda}{1-\frac{2m}{r}}(u, v)\leq \frac{\lambda}{1-\frac{2m}{r}}(u_0,v)\leq C(u_0)
\eeqn
in $\{ |v-V|\leq \delta\}\cap \{u\geq u_0\}\cap \mathcal{U}\cap \Q^+$ by (\ref{hm}) and monotonicity (\ref{eqn:conuu}). Since (\ref{hm}) gives $1 -\frac{2m}{r}\leq 1$ in $\mathcal{R}\cup\mathcal{A}$ (cf.~Statement V), we obtain
\beqn\label{lambdabound2}
0\leq \lambda(u, v)\leq \frac{\lambda}{1-\frac{2m}{r}}(u_0,v)\left(1-\frac{2m}{r}\right)(u,v)\leq C(u_0),
\eeqn
for all $(u,v)\in \{|v-V|\leq \delta\}\cap \{u\geq u_0\}\cap \mathcal{U}\cap \left(\mathcal{R}\cup\mathcal{A}\right)$.

Given $\epsilon >0$, let us choose $\delta' \leq\min \{\delta, (2C(u_0))^{-1} \epsilon\}$. Along each outgoing null segment in $\mathcal{U}\cap \Q^+$ emanating from $\Gamma = \{(u,v_{\Gamma}(u))\}$ let us define
\[ v_{\mathcal{A}}(u) = \left\{ \begin{array}{ll}
\inf\{v: \lambda (u,v) = 0\}, & \mbox{if $\mathcal{A}\cap \{ v \leq V+\delta'\}\neq\emptyset$;} \\
V + \delta', &\mbox{if $\mathcal{A}\cap \{ v \leq V+\delta'\}=\emptyset$} . \end{array} \right. \]
We note that $v_{\Gamma}(u) < v_{\mathcal{A}}(u)$ since $\Gamma \subset \mathcal{R}$, which is proven below in \S\ref{statement:iv}, and thus  $\{u\}\times[v_{\Gamma}(u), v_{\mathcal{A}}(u))\subset \mathcal{R}$.
Let $(u', v_{\Gamma}(u'))$ be the intersection point of $\{v = V- \delta'\}$ with $\Gamma$.
Then,
\beqn\nonumber
r(u, v ) - r(u,v_{\Gamma}(u)) = \int^{v}_{v_{\Gamma}(u)} \lambda(u, \overline{v})~\dd \overline{v}\leq C(u_0) (v- v_{\Gamma}(u))\leq 2C(u_0)\delta'\leq \epsilon
\eeqn
in $\{v\leq  v_{\mathcal{A}}(u)\}\cap \{u\geq u'\}\cap \mathcal{U}\cap\left(\mathcal{R}\cup \mathcal{A}\right)$.
Since $r$ is decreasing in $ \mathcal{T}$, there exists a neighborhood $\mathcal{U}'\subset \R^{1+1}$ satisfying $\mathcal{U}'\subset \{|v-V|\leq \delta'\}\cap\{u\geq u'\} \cap\mathcal{U}
$
so that
\beqn\nonumber
r(u,v)\leq \epsilon
\eeqn
for all $(u,v)\in \mathcal{U}'\cap \Q^+$.
\subsubsection*{$r$ extends continuously to zero on $\sgo\backslash \left(\mathcal{CH}_{i^+}\cup i^{\square}\right)$}
Let $(U,v)\in \sgo\backslash \left(\mathcal{CH}_{i^+}\cup i^{\square}\right)$ and consider a neighborhood $\mathcal{U}\subset \R^{1+1}$ of $(U,v)$. If $(U,v)\in \mathcal{S}_{i^+}$, then we establish continuity as in the case in which $b_{\Gamma}\in \mathcal{S}_{i^+}$.  Thus, without loss of generality, let us assume that $(U,v)\not\in \mathcal{S}_{i^+}$.  Since $\sgo$ does not coincide with the future endpoint of $\mathcal{S}_{i^+}\cup\mathcal{CH}_{i^+}\cup i^{\square}$, we note that there exists a sufficiently small $\delta >0$ such that $\{|v'-v|\leq \delta\}\cap \Sigma$ is compact.  Hence, there exist $u_0 <U$ and a positive constant $C(u_0)<\infty$ such that
\beqn\nonumber
-\frac{1}{4}\left(\Omega^{-2}\nu\right)^{-1}(u,v') = \frac{\lambda}{1-\frac{2m}{r}}(u, v')\leq \frac{\lambda}{1-\frac{2m}{r}}(u_0,v')\leq C(u_0)
\eeqn
in $\{ |v'-v|\leq \delta\}\cap \{u\geq u_0\}\cap \mathcal{U}\cap \Q^+$  by (\ref{hm}) and monotonicity (\ref{eqn:conuu}).  Since (\ref{hm}) gives $1 -\frac{2m}{r}\leq 1$ in $\mathcal{R}\cup\mathcal{A}$ (cf.~Statement V), we therefore obtain

\beqn\nonumber
0\leq \lambda(u, v')\leq \frac{\lambda}{1-\frac{2m}{r}}(u_0,v')\left(1-\frac{2m}{r}\right)(u,v')\leq C(u_0),
\eeqn
for all $(u,v')\in \{|v'-v|\leq \delta\}\cap \{u\geq u_0\}\cap \mathcal{U}\cap \left(\mathcal{R}\cup\mathcal{A}\right)$.

Let $\epsilon >0$ be given and choose $0<\delta' \leq \delta$ such that
\beqn\nonumber
C(u_0)\delta' < \frac{1}{4}\epsilon.
\eeqn
By definition, there exists some $v- \delta' < v'' \leq v$ such that $(U,v'')\in \sgo$ and a sequence $(U_j, v''_j)\rightarrow  (U, v'')$ such that $r(U_j, v''_j)\rightarrow 0$ with $U_j\leq U$ and $v''_j\leq v''$. In particular, there exists a sufficiently large $J$ such that $r(U_j, v''_j) < \frac{1}{2}\epsilon$ for all $j\geq J$.
We then have
\beqn\nonumber
r(U_j,v''')= r(U_j, v''_j) + \int_{v''_j}^{v'''} \lambda (U_j, \overline{v})~\dd \overline{v} \leq \frac{1}{2}\epsilon + 2C \delta' \leq \epsilon
\eeqn
for all $(U_j, v''')\in \{v''_j\leq v''' \leq v + \delta'\}\cap \{U_j\geq u_0\}\cap\mathcal{U} \cap \left(\mathcal{R}\cup\mathcal{A}\right)$.  Since $\nu < 0$ in $\Q^+$, it then follows that 
\beqn\nonumber
r(u,v''')<\epsilon
\eeqn
in $\{v'' \leq v'''\leq v+\delta'\}\cap \{u\geq U_J\}\cap \mathcal{U}\cap \left(\mathcal{R}\cup\mathcal{A}\right)$.
Since $r$ is decreasing in $\mathcal{T}$, there is a neighborhood $\mathcal{U}'\subset\R^{1+1}$ satisfying $\mathcal{U}'\subset \{v'' \leq v'\leq v+\delta'\}\cap \{u\geq U_J\}\cap \mathcal{U}$ such that
\beqn\nonumber
r(u,v')\leq \epsilon
\eeqn
for all $(u,v')\in \mathcal{U}'\cap \Q^+$.
\subsubsection*{$r$ extends continuously to zero on $\sgt$}
For every $(U,v)\in \sgt$, there exists a neighborhood $\mathcal{U}\subset \R^{1+1}$ such that $\mathcal{U}\cap \Q^+\subset\mathcal{T}$.  This follows by construction and monotonicity (\ref{eqn:convv}), for if $\sgt\neq \emptyset$, then $\mathcal{CH}_{\Gamma}\neq \emptyset$.  By definition, there exists a $v' \leq v$ such that $(U,v')\in \sgt\cap\mathcal{U}$ and a sequence $(U_j, v'_j)\rightarrow (U, v')$ such that $r(U_j, v'_j)\rightarrow 0$ with $U_j\leq U$ and $v'_j \leq v'$.  Given $\epsilon >0$, there exists a sufficiently large $J$ such that $r(U_j, v'_j)< \epsilon$ for all $j\geq J$.  Since $\lambda <0$ and $\nu<0$ in $\mathcal{U}\cap \Q^+$, we therefore have
\beqn\nonumber
r(u, v'') <\epsilon
\eeqn
in $\mathcal{U}'\cap \Q^+$ for any neighborhood $\mathcal{U}'$ satisfying $\mathcal{U}'\subset\{v'' \geq v' \}\cap \{u\geq U_J\}\cap \mathcal{U}$.
  
\subsubsection*{$r$ extends continuously on ${\rm int}\left(\mathcal{CH}_{\Gamma}\right)$}\label{b0cont}
Let $(U,v)\in\textrm{int}(\mathcal{CH}_{\Gamma})$.  Suppose $\mathcal{U}\subset\R^{1+1}$ is a neighborhood of $(U,v)$. To establish continuity, we will show that $\lambda$ is uniformly bounded in $\mathcal{U}\cap \Q^+$. Since (\ref{lambdabound2}) gives a bound on $\lambda$ in $\mathcal{U}\cap (\mathcal{R}\cup\mathcal{A})$, it suffices to obtain a bound in $\mathcal{U}\cap \mathcal{T}$.
  
Re-writing (\ref{eqn:ruv}) in terms of the Hawking mass and (\ref{tuv}), we note that in the trapped region
\beqn\label{newlambda}
\pu (-\lambda) = \frac{2}{r^2}\left(m-\frac{\sigma^2}{2r}\right)\frac{\lambda}{1-\frac{2m}{r}}(-\nu)\leq \frac{2}{r^2}\frac{m\lambda}{1-\frac{2m}{r}}(-\nu),
\eeqn
where
\beqn\nonumber
\sigma^2 = 16\pi \Omega^{-2}r^4T_{uv}\geq 0.
\eeqn
Note that by (\ref{hm}) we have  $\frac{2m}{r}>1$ in $\mathcal{T}$ (cf.~Statement V).

In view of the assumption that $(U,v)$ is in the interior of $\chg$, there exists a sufficiently small $\delta >0$ such that 
$\{|v'-v|\leq \delta\}\cap \Sigma$ is compact and, moreover,
\beqn\nonumber
r >r_0 = \inf_{\{|v'-v|\leq \delta\}\cap \Sigma}  r_{\inf}(U,v')>0.
\eeqn
Since $\nu <0$ in $\Q^+$, it follows that $r> r_0$ in $\{|v'-v|\leq \delta\}\cap \mathcal{U}\cap \Q^+$.

Define the sets $\mathcal{U}_1$ and $\mathcal{U}_2$ by
\beqn\nonumber
\mathcal{U}_1 = \left\{\frac{2m}{r}\geq 2\right\}\cap\{|v'-v|\leq \delta\}\cap \mathcal{U}\cap \mathcal{T}\hspace{.5cm}\textrm{and}\hspace{.5cm} \mathcal{U}_2 = \left\{\frac{2m}{r} < 2\right\}\cap\{|v'-v|\leq \delta\}\cap \mathcal{U}\cap\mathcal{T}.
\eeqn
Integrating the right-hand side of (\ref{newlambda}) along segments $\left([u_0, u]\times \{v'\}\right)\cap \mathcal{U}\cap \mathcal{T}$, we obtain
\beqn\nonumber
 \int_{u_0}^u \frac{2}{r^2}\frac{m\lambda}{1-\frac{2m}{r}}(-\nu)~\dd \overline{u}= \int_{[u_0, u]\cap \mathcal{U}_1}\frac{|\lambda| (-\nu)}{r}\frac{\frac{2m}{r}}{\frac{2m}{r}-1}~\dd\overline{u}+ \int_{[u_0, u]\cap \mathcal{U}_2}\frac{2m}{r}\frac{\lambda}{1-\frac{2m}{r}}\frac{-\nu}{r}~\dd\overline{u}.
\eeqn
On the set $\mathcal{U}_1$, we note that
\beqn\nonumber
0<\frac{\frac{2m}{r}}{\frac{2m}{r}-1} \leq 2.
\eeqn
Thus,
\beqn\nonumber
\int_{[u_0, u]\cap\mathcal{U}_1}\frac{|\lambda| (-\nu)}{r}\frac{\frac{2m}{r}}{\frac{2m}{r}-1}~\dd\overline{u}\leq 2\int_{[u_0, u]\cap\mathcal{U}_1}\frac{|\lambda| (-\nu)}{r}~\dd\overline{u}.
\eeqn
From compactness, there exist $u_0 < U$ and a positive constant $C(u_0)<\infty$ such that
\beqn\nonumber
0<\frac{\lambda}{1-\frac{2m}{r}}(u, v')\leq \frac{\lambda}{1-\frac{2m}{r}}(u_0,v')\leq C(u_0)
\eeqn
in $\{|v'-v|\leq \delta\}\cap\{u\geq u_0\} \cap\mathcal{U}\cap \Q^+$ by monotonicity (\ref{eqn:conuu}). Hence,
\beqn\nonumber
 \int_{[u_0, u]\cap\mathcal{U}_2}\frac{2m}{r}\frac{\lambda}{1-\frac{2m}{r}}\frac{-\nu}{r}~\dd\overline{u}\leq2 C(u_0)\int_{[u_0, u]\cap\mathcal{U}_2}\frac{-\nu}{r}~\dd\overline{u}\leq C(u_0, r_0)<\infty,
\eeqn
having used the lower bound on $r$.
It follows, from integration of (\ref{newlambda}), that
\beqn\nonumber
 |\lambda| (u, v')\leq C(u_0, r_0) + 2\int_{[u_0, u]\cap\mathcal{U}_1}\frac{|\lambda| (-\nu)}{r}~\dd\overline{u}.
\eeqn
Applying a Gr\"onwall inequality then yields
\beqn\nonumber
|\lambda|(u,v')\leq C(u_0, r_0)\exp\left(\int_{[u_0, u]\cap\mathcal{U}_1}\frac{-\nu}{r}~\dd\overline{u}\right),
\eeqn
whence we obtain
\beqn\nonumber
\sup_{ \{|v'-v|\leq \delta\}\cap \{u\geq u_0\}\cap\mathcal{U}\cap \mathcal{T}}|\lambda|(u, v') \leq C(u_0, r_0)< \infty.
\eeqn
The result follows.  

\subsubsection*{$r$ extends continuously to $\infty$ on $\mathcal{I}^+$} Let $p = (u,V)\in \mathcal{I}^+$ and consider an arbitrary sequence of points $q_j=(u_j,V_j)\in J^-(p)\cap\Q^+$ with $q_j\rightarrow p$. Since $\lambda >0$ and $\nu <0$ in $J^+(\mathcal{I}^+)\cap \Q^+$, it follows that
\beqn\nonumber
\inf_{J^-(p) \cap J^+(q_j)\cap \Q^+} r(q) = r(u,V_j).
\eeqn
Since $r(u,V_j)\rightarrow \infty$ as $V_j\rightarrow V$, we have $r_{\inf}(p) = \infty$, establishing the claim.

 \subsubsection*{$r$ extends continuously to $\infty$ on $ i^0$}
Continuity of $r$ on $i^0$ follows immediately from an appropriate notion of `asymptotically flat' in Theorem \ref{thm:main}.

\subsection{Statement III}
Follows from the results given by Dafermos in \cite{MD05b}. 
\subsection{Statement IV}\label{statement:iv}
\subsubsection{Claim 1}
Let us begin by re-normalizing the co-ordinates on $\Q^+$ such  that $\Gamma$ can be expressed as
\beqn\nonumber
\Gamma = \{(u, v=u)\}\cap \Q^+.
\eeqn
Define now a (timelike) vector field $\textbf{T}$ on $\Q^+$ by
\beqn\nonumber
\textbf{T} = \frac{\partial}{\partial u} + \frac{\partial}{\partial v}.
\eeqn
Since we have
\beqn\nonumber
0 = \textbf{T} r|_{\Gamma} = \left( \nu + \lambda \right)|_{\Gamma},
\eeqn
and since $\nu<0$ in $\Q^+$, it follows that
\beqn\nonumber
\lambda|_{\Gamma} >0.
\eeqn 
This establishes the claim.
\subsubsection{Claims 2--3}
The claims follow immediately from monotonicity (\ref{eqn:convv}). See \cite{DC95}.
\subsubsection{Claim 4}
Let $(u,V)\in \textrm{int}\left(\mathcal{CH}_{i^+}\right)$ and consider a neighborhood $\mathcal{U}\subset \R^{1+1}$ of $(u,V)$ such that either $\mathcal{U}\cap \Q^+\subset\mathcal{A}$ or $\mathcal{U}\cap \Q^+\subset\mathcal{T}$.  To establish continuity, it suffices to show that $\nu$ is uniformly bounded in $\mathcal{U}\cap \Q^+$.

If $\mathcal{U}\cap \Q^+\subset \mathcal{A}$, we simply note that $r$ is constant along outgoing null segments in $\mathcal{U}\cap \Q^+$. It follows that $\nu$ is uniformly bounded in this case.

From (\ref{hm}) and monotonicity (\ref{eqn:convv}), we note that $\nu$ and $1-\frac{2m}{r}$ have the same sign in the trapped region and, moreover, that
\beqn\label{mono2}
\pv \left(\frac{\nu}{1-\frac{2m}{r}} \right) \leq 0.
\eeqn
If $\mathcal{U}\cap \Q^+\subset \mathcal{T}$, we retrieve a uniform bound on $\nu$ by proceeding exactly as in the case in which we established the uniform bound on $\lambda$ in the trapped region when proving the extendibility of $r$ on $\textrm{int}(\chg)$.  We simply note that $\pu \lambda = \pv \nu$ and that monotonicity (\ref{mono2}) takes the role of (\ref{monokappa}). Moreover, we note that since $\lambda <0$ in $\mathcal{T}$, this monotonicity takes the role of $\nu<0$ in ensuring that $r$ is bounded from below away from zero in $\mathcal{U}\cap \Q^+$. 

\begin{remark} We do not say anything about the case in which $\mathcal{U}\cap\Q^+\subset \mathcal{A}\cup \mathcal{T}$, as there is no possible uniform control on $\frac{\nu}{1-\frac{2m}{r}}$.  
\end{remark}
\subsubsection{Claim 5}
\noindent(a)  If $\mathcal{A}\neq\emptyset$, then the limit points of $\mathcal{A}$ on $b_{\Gamma}\cup \sgo\cup\chg$ necessarily form a (possibly degenerate) connected closed interval.  This follows from monotonicity given in Claim 2.  On the other hand, there is \emph{a priori} no characterization of the limit points of $\mathcal{A}$ on $\mathcal{CH}_{i^+}\cup i^{+}$ (cf.~the diagram given in the statement of Theorem \ref{thm:main}). \\

\noindent(b)--(e) The claims follow from monotonicity given in Claim 2 (cf.~\S\ref{ngamma}).
\subsection{Statement V}
\subsubsection{Claim 1}
The claim is given by (\ref{hawkingmassv}).
\subsubsection{Claim 2}
The claim follows immediately from (\ref{hm}) since $\lambda \geq0$ and $\nu <0$ in $\mathcal{R}\cup \mathcal{A}$. 
\subsection{Statement VI}\label{statementvi}
The statement follows from the results given by Dafermos in \cite{MD05b}.

\subsection{Statement VII}

In spherically symmetric spacetimes, the Kretschmann scalar satisfies
\beqn\nonumber
R_{\mu\nu\alpha\beta}R^{\mu\nu\alpha\beta}= 4K^2 +  \frac{4}{r^4}\left(\frac{2m}{r}\right)^2 + \frac{12}{r^2}\nabla_a\nabla_b r\nabla^a\nabla^b r,
\eeqn 
where $K$ is the Gaussian curvature of the quotient metric $g_{ab}$, which is given by
\beqn\nonumber
K = 4\Omega^{-2}\left(\Omega^{-1}\pu\pv\Omega - \Omega^{-2}\pu\Omega\pv\Omega\right).
\eeqn
One can compute (see the appendix of \cite{MDAR07}), 
\beqn\nonumber
\nabla_a\nabla_b r\nabla^a\nabla^b r = 2\left(\frac{m}{r^2} + 2\pi r \textrm{tr}T\right)^2 + 8\pi^2r^2 (T_{ab} - \frac{1}{2}g_{ab}\textrm{tr}T) (T^{ab} - \frac{1}{2}g^{ab}\textrm{tr}T),
\eeqn
where $\textrm{tr}T = g^{ab}T_{ab}$. The last term on the right-hand side is manifestly non-negative if the energy momentum tensor obeys the null energy condition, for
\beqna
(T_{ab} - \frac{1}{2}g_{ab}\textrm{tr}T) (T^{ab} - \frac{1}{2}g^{ab}\textrm{tr}T)  &=& T_{ab}T^{ab} - \frac{1}{2}(\textrm{tr}T)^2\nonumber\\
&=& 2(g^{uv})^2T_{uu}T_{vv}\nonumber.
\eeqna
In this case, the Kretschmann scalar satisfies
\beqn\label{ks}
R_{\mu\nu\alpha\beta}R^{\mu\nu\alpha\beta}\geq \frac{4}{r^4}\left(\frac{2m}{r}\right)^2.
\eeqn
\subsubsection{Claim 1}\label{blowup}
For every $p\in \sgt\cup\mathcal{S}\cup \mathcal{S}_{i^+}$, there exists a neighborhood $\mathcal{U}\subset \R^{1+1}$ of $p$ such that $\mathcal{U}\cap \Q^+ \subset \mathcal{T}$. Recalling that $\frac{2m}{r}>1$ in the trapped region, we have from (\ref{ks}),
\beqn\nonumber
R_{\mu\nu\alpha\beta}R^{\mu\nu\alpha\beta}>\frac{4}{r^4}.
\eeqn
For every sequence $\{p_j\}_{j=1}^{\infty}\subset \mathcal{U}\cap \Q^+$ with $p_j\rightarrow p$, we, therefore, obtain 
\beqn\nonumber
\lim_{j\rightarrow \infty} R_{\mu\nu\alpha\beta}R^{\mu\nu\alpha\beta}(p_j)> \lim_{j\rightarrow\infty} \frac{4}{r^4}(p_j) = \infty.
\eeqn
This establishes the claim.

\subsubsection{Claim 2} 
\noindent (a) Let $(U,v)\in \sgo$ and consider a neighborhood $\mathcal{U}\subset \R^{1+1}$ of $(U,v)$. If $(U,v)$ is a limit point of $\{(U_j, v_j)\}_{j=1}^{\infty}\subset\mathcal{A}\cup\mathcal{T}$, then the result is obvious since 
\beqn\nonumber
\limsup_{j\rightarrow\infty}R_{\mu\nu\alpha\beta}R^{\mu\nu\alpha\beta} (U_j, v_j) \geq \limsup_{j\rightarrow \infty}\frac{4}{r^4}(U_j, v_j) = \infty.
\eeqn
Thus, without loss of generality, we can assume that $\mathcal{U}\cap \Q^+\subset\mathcal{R}$. Moreover, let us assume that $(U,v)$ does not lie on $\mathcal{CH}_{i^+}\cup i^{\square}$.
Fix $u_0 < U$ such that $(u_0, v)\in \mathcal{U}\cap\Q^+$. Since $r$ extends continuously to zero on $\sgo\backslash \left(\mathcal{CH}_{i^+}\cup i^{\square}\right)$, it follows that
\beqn\nonumber
\lim_{u\rightarrow U}\int_{v-\delta}^{v+\delta} \lambda (u, v')~\dd v'=0
\eeqn
in $\{u\geq u_0\}\cap \{|v'-v|\leq \delta\}\cap \mathcal{U}\cap \Q^+$, for all sufficiently small $\delta>0$. Integration of (\ref{newlambda})  gives
\beqn\nonumber
\int_{v-\delta}^{v+\delta} \lambda (u_0, v')\exp\left(-\int^{r(u_0, v')}_{r(u,v')} \frac{\frac{1}{r}\left(\frac{2m}{r}\right)}{1-\frac{2m}{r}}~\dd r\right)~\dd v'\leq \int_{v-\delta}^{v+\delta}\lambda(u, v')~\dd v'\rightarrow 0
\eeqn 
as $u\rightarrow U$ in $\{|v'-v|\leq \delta\}\cap \mathcal{U}\cap \Q^+$. By compactness, there exists a constant $c>0$ such that $\lambda(u_0, v') > c >0$ for all $\{|v'-v|\leq \delta\}\cap \mathcal{U}\cap \Q^+$.
Thus, it must be that
\beqn\nonumber
\lim_{u\rightarrow U}\int_{v-\delta}^{v+\delta}\exp\left(-\int^{r(u_0, v')}_{r(u,v')}\frac{\frac{1}{r}\left(\frac{2m}{r}\right)}{1-\frac{2m}{r}}~\dd r\right)~\dd v'= 0.
\eeqn
In particular, we obtain
\beqn\label{boundinf}
\sup_{v-\delta \leq v'\leq v+\delta}\int^{r(u_0, v')}_{r(u,v')}\frac{\frac{1}{r}\left(\frac{2m}{r}\right)}{1-\frac{2m}{r}}~\dd r = \infty.
\eeqn
Let $\epsilon>0$. Suppose for the moment that 
\beqn\nonumber
\frac{2m}{r}(u, v')\leq r^{\epsilon}(u, v')
\eeqn
in $\{u\geq U-\hat{\epsilon}\}\cap\{|v'-v|\leq \delta\}\cap \mathcal{U}\cap \Q^+$ for some $\hat{\epsilon}>0$. Since
\beqn\nonumber
\int \frac{x^{-1 + \epsilon}}{1 - x^\epsilon} ~\dd x = - \epsilon^{-1} \log (1- x^\epsilon),
\eeqn
we have, shrinking $\mathcal{U}$ if necessary to ensure that $r<1$ in $\mathcal{U}\cap \Q^+$,
\beqn\nonumber
\sup_{v-\delta \leq v'\leq v+\delta}\int^{r(u_0, v')}_{r(u,v')}\frac{\frac{1}{r}\left(\frac{2m}{r}\right)}{1-\frac{2m}{r}}~\dd r\leq C(\epsilon, \hat{\epsilon}) <\infty,
\eeqn
contradicting (\ref{boundinf}). Thus, we conclude that there exists a sequence of points
\beqn\nonumber
(U_j, v_j)\in \{u\geq U-\hat{\epsilon}\}\cap\{|v'-v|\leq \delta\}\cap \mathcal{U}\cap \Q^+
\eeqn such that
\beqn\label{seq1}
\frac{2m}{r}(U_j, v_j) \geq r^{\epsilon}(U_j,v_j).
\eeqn
Letting $\delta \rightarrow 0$, we can construct a sequence $(U_{j_k}, v_{j_k})\rightarrow (U,v)$ for which (\ref{seq1}) holds.
The result follows from (\ref{ks}) since
\beqn\nonumber
\limsup_{j\rightarrow \infty} R_{\mu\nu\alpha\beta}R^{\mu\nu\alpha\beta}(U_{j_k}, v_{j_k})\geq \limsup_{j\rightarrow \infty}  \frac{4}{r^{4-2\epsilon}}(U_{j_k}, v_{j_k})= \infty.
\eeqn

Lastly, we note that if $(U,v)$ does lie on $\mathcal{CH}_{i^+}\cup i^{\square}$, then there is a sequence of points $(U, v_j)\rightarrow (U,v)$ on $\sgo\backslash (\mathcal{CH}_{i^+}\cup i^{\square})$ for which the above argument applies. In particular, there exists a sequence of points $(U_i, v_j)$ such that the Kretschmann scalar blows up as $U_i\rightarrow U$.  Thus, we can construct a suitable subsequence of points $(U_{i_k}, v_{j_k})$ for which the Kretschmann scalar blows up as $(U_{i_k}, v_{j_k})\rightarrow (U,v)$.\\

(b) As in Claim 1, given $p\in\sgo$, if there exists a neighborhood $\mathcal{U}\subset \R^{1+1}$ of $p$ such that $\mathcal{U}\cap \Q^+ \subset \mathcal{A}\cup \mathcal{T}$, then the Kretschmann scalar will extend continuously to $\infty$ on $\sgo\cap \mathcal{U}$.
\subsubsection{Claim 3}
We deduce from Theorem \ref{thm:gext} that if $\mathcal{BH} \neq \emptyset$,
then $J^-(\mathcal{I}^+) \cap \Q^+$ has a future boundary in $\Q^+$, which we call $\mathcal{H}^+$. In fact, for $\mathcal{I}^+ \subset \{v = V\}$, we have
\beqn\nonumber
\mathcal{H}^+ =  \{u = U\} \cap\Q^+\subset\mathcal{R}\cup \mathcal{A}
\eeqn
for some $U$. Fix $v_0 <V$ and consider the set $\mathcal{H}^+\cap \{v\geq v_0\} =
\{U\}\times [v_0,V)$.  Without loss of generality, we can assume that $(U, v_0)\not\in \Gamma$.

Since $\lambda \geq 0$ along $\mathcal{H}^+$, the area-radius $r$ has a well-defined limit $r_+$, which is given by
\beqn\nonumber
r_+ = \sup_{\mathcal{H}^+} r.
\eeqn
Similarly, by monotonicity of the Hawking mass (\ref{hawkingmassv}), the Hawking mass  has a well-defined limit $m_+$, which is given by 
\beqn\nonumber
m_+ = \sup_{\mathcal{H}^+} m.
\eeqn
Moreover, as $\frac{2m}{r}\leq 1$ along $\mathcal{H}^+$, the constants $r_+$ and $m_+$ satisfy the bound
\beqn\nonumber
m_+ \leq \frac{1}{2}r_+.
\eeqn
By assumption, we consider the case of strict inequality.  To establish the claim, we will show that there exists $u_*> U$ such that $r\geq \frac{1}{2}r_0$ in
\beqn\nonumber
\Dm = \left([U, u_*] \times [v_0, V]\right)\cap \Q^+,
\eeqn
for $r_0 = r(U,v_0)$.
We proceed with a bootstrap argument.  

Define a region $\Dm^* \subset \Dm$ to be the set of $p= (u,v)\in \Dm$ such that
\beqn\label{bootstrap}
r(\ut,\vt) > \frac{1}{2}r_0
\eeqn
for all $(\ut,\vt)\in \left(J^-(p)\backslash\{p\}\right)\cap \Dm$. Since $r\geq r_0$ along $\mathcal{H}^+$, it is clear that $\Dm^*$ is a connected non-empty open set (in the topology of $\Dm$).  The claim follows once we show that $\Dm^*$ is also closed (in the topology of $\Dm$).  That is, for $p\in \overline{\Dm^*}$ (in the topology of $\Dm$), we must show that $p\in \Dm^*$.  Note by continuity we have $r \geq \frac{1}{2} r_0$ in $J^-(p)\cap \Dm$.

Define a function $\kappa$ by
\beqn\nonumber
\kappa = \frac{\lambda}{1-\frac{2m}{r}}.
\eeqn
Since
\beqn\nonumber
\lim_{v\rightarrow V} \left(1-\frac{2m}{r}\right)(U,v) \neq 0, 
\eeqn
monotonicity (\ref{eqn:convv}) implies that
\beqn\nonumber
\inf_{\mathcal{H}^+} \left(1 - \frac{2m}{r} \right)>0.
\eeqn
It follows that
\beqn\nonumber
\int_{v_0}^v \kappa (U,\overline{v})~\dd \overline{v} \leq C < \infty.
\eeqn
On the other hand,  from monotonicity (\ref{eqn:conuu}), we deduce that $\pu \kappa \leq 0$. Thus, for $u'\in [U, u]$,
\beqn\label{kappaint}
\int_{v_0}^{v} \kappa (u',\overline{v}) ~\dd \overline{v}\leq\int_{v_0}^{v}\kappa(U,\overline{v})~\dd \overline{v} \leq C<\infty.
\eeqn

By monotonicity (\ref{eqn:convv}), we have that $\lambda$ can change sign at most once along an outgoing null segment. Thus,  
\beqn\label{lambdaint}
\int_{v_0}^{v} |\lambda|(u',\overline{v})~\dd \overline{v} \leq C<\infty.
\eeqn 

Since (\ref{newlambda}) gives
\beqn\label{newni2}
\pv \log (-\nu) = \frac{2\kappa}{r^2}\left(m-\frac{\sigma^2}{2r}\right) \leq \frac{\kappa}{r}\left(\frac{2m}{r}\right),
\eeqn
let us consider the sets
\beqn\nonumber
\mathcal{V}_1 = \left\{\frac{2m}{r}\geq 2\right\}\cap \Dm^* \hspace{1cm}\textrm{and}\hspace{1cm} \mathcal{V}_2 = \left\{\frac{2m}{r} < 2\right\}\cap \Dm^*.
\eeqn
On the set $\mathcal{V}_1$, we note that
\beqn\nonumber
0< \frac{\frac{2m}{r}}{\frac{2m}{r}-1}\leq 2.
\eeqn
Using the bootstrap assumption, (\ref{kappaint}) and (\ref{lambdaint}) then give a uniform bound on
\beqna\nonumber
\int_{v_0}^{v} \frac{\kappa}{r}\left(\frac{2m}{r}\right)~\dd \overline{v} &=& \int_{[v_0, v]\cap\mathcal{V}_1} \frac{|\lambda|}{r}\left(\frac{\frac{2m}{r}}{\frac{2m}{r}-1}\right)~\dd \overline{v}  + \int_{[v_0, v]\cap\mathcal{V}_2} \frac{\kappa}{r}\left(\frac{2m}{r}\right)~\dd \overline{v}.
\eeqna
In particular, integration of (\ref{newni2}) gives the estimate
\beqn\nonumber
(-\nu)(u',v) \leq C (-\nu)(u',v_0)\leq C \sup_{U\leq u'\leq \us} (-\nu)(u',v_0)\leq C <\infty,
\eeqn
from which integration in $u$ then gives
\beqn\nonumber
r(u,v) \geq r(U, v) - C (u- U)\geq r_0 - C (u- U).
\eeqn
For sufficiently small $u_*-U$ we obtain
\beqn\nonumber
r(u,v) \geq \frac{3}{4}r_0,
\eeqn
establishing the result. 

\subsubsection{Claim 4}

Suppose $(\M, g_{\mu\nu})$ is future-extendible as a $C^2$-Lorentzian manifold $(\widetilde{\mathcal{M}}, \widetilde{g_{\mu\nu}})$.  Then, there exists a timelike curve $\gamma\subset\widetilde{\M}$ exiting $\M$ such that the closure of the projection of $\gamma|_{\M}$ to $\Q^+$ intersects the boundary $\mathcal{B}^+\backslash i^0$.  Let $\mathcal{E}$ denote the set of all such boundary points on $\mathcal{B}^+\backslash i^0$ satisfying the above, i.e.,~
\begin{equation}
\begin{split}
\mathcal{E} =& \left\{ p\in \mathcal{B}^+\backslash i^0: \exists~ \textrm{timelike}~\gamma: I \rightarrow \widetilde{\M} ~~\textrm{with} ~~\gamma(t^*) \in \partial \M\subset \widetilde{\M}~~\textrm{and}\right. \hspace{1cm}\\
  &\hspace{5.5cm} \left. \hspace{3.5cm} ~~\gamma(t)\in \M ~~\forall t< t^*~~\textrm{s.t.} ~~\{p\}\cap\overline{\pi\left(\gamma|_{\M}\right)}\neq \emptyset\right\}.
\end{split}\nonumber
\end{equation}
It is important to emphasize that extendibility is not formulated \emph{per se} with respect to the quotient manifold $\Q^+$.  If we were able, however, to assert that past set structure `upstairs' is preserved  `downstairs', then criteria can be given on $\Q^+$ that will imply inextendibility of $\M$. These criteria on $\Q^+$ will be given by $C^2$-compatible scalar invariants, which we now introduce.

\subsubsection*{$C^2$-compatible scalar invariants}  Let $\gamma$ be as above and let $x\in \partial \mathcal{M}\subset \widetilde{\M}$ be the point through which $\gamma$ exits the spacetime and consider $x'\in \gamma|_{\M}$ sufficiently close to $x$ such that $J^+(x')\cap J^-(x)\subset \widetilde{\M}$ is compact. We call $\Xi$ a $C^2$-compatible scalar invariant on $\widetilde{\M}$ if $\Xi$ remains uniformly bounded in the spacetime region $J^+(x')\cap J^-(x)\cap \M$. Two examples of $C^2$-compatible scalar invariants are given by the Kretschmann scalar $R_{\mu\nu\alpha\beta}R^{\mu\nu\alpha\beta}$ and the norm $g(X,X)$ of any Killing vector field $X$.  In the latter case, $X$, which satisfies
\beqn\nonumber
\nabla_{\nu}\nabla_\alpha X_{\beta} = R_{\mu\nu\alpha\beta}X^{\mu},
\eeqn 
is continuous on $\M\cup \partial\M$, cf.~\cite{MDAR05b}.
 
\subsubsection*{Extendibility criteria on $\Q^+$}

Let us begin by quoting a standard result, a proof of which can be found, for example, in \cite{PCJC10}.
\begin{prop}\label{prop:extstand}Let $\gamma$ be a causal curve in $(\M, g)$.  If $(\M, g)$ is $C^2$-future-extendible, then $\gamma|_{\M}$ is an incomplete causal curve in $\M$ and every $C^2$-compatible invariant $\Xi$ remains bounded along $\gamma|_{\M}$.
\end{prop}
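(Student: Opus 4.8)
The plan is to localize the whole argument to a small causal diamond about the exit point and then invoke nothing beyond the definition of a $C^2$-compatible invariant together with the local causal structure of a $C^2$ Lorentzian manifold. Let $(\widetilde{\M},\widetilde{g})$ be the $C^2$ extension witnessing future-extendibility, and let $\gamma:[t_0,t^*]\to\widetilde{\M}$ be the causal curve with $\gamma(t^*)=x\in\partial\M$ and $\gamma(t)\in\M$ for $t<t^*$, so that $\gamma|_{\M}=\gamma|_{[t_0,t^*)}$. The single geometric input I would isolate as a lemma is the standard fact that in a $C^2$ (indeed $C^{1,1}$) spacetime every point possesses a globally hyperbolic neighborhood in which causal diamonds are compact; applying this at $x$ furnishes a point $x'=\gamma(t')$ with $t'<t^*$ close enough to $x$ that $J^+(x')\cap J^-(x)\subset\widetilde{\M}$ is compact. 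This is precisely the $x'$ selected in the paragraph preceding the Proposition.

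For the incompleteness assertion I would parametrize $\gamma$ by $\widetilde{g}$-proper time in the timelike case, and by a generalized affine parameter (defined via a parallel frame along $\gamma$) in the general causal case; since $g=\widetilde{g}|_{\M}$, this parameter agrees with the corresponding $g$-parameter along $\gamma|_{\M}$. The curve reaches $x$ at the finite value $t^*$, so $\gamma|_{\M}$ has finite total parameter length $t^*-t_0$. It is moreover future-inextendible within $\M$: any compact $K\subset\M$ is closed in $\widetilde{\M}$ and does not contain $x$ (as $\M$ is open and $x\in\partial\M$), hence some neighborhood of $x$ misses $K$ and $\gamma(t)\notin K$ for $t$ near $t^*$, so $\gamma|_{\M}$ eventually leaves every compact subset of $\M$ and admits no continuous $\M$-valued extension past $t^*$ (its unique $\widetilde{\M}$-limit being $x\notin\M$). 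A future-inextendible causal curve of finite generalized affine parameter is by definition future-incomplete, which is the first conclusion.

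For the boundedness of a $C^2$-compatible invariant $\Xi$ I would split $\gamma|_{\M}$ at $x'$. The tail $\gamma|_{[t',t^*)}$ is a future-directed causal curve running from $x'$ into $J^-(x)$, so every point on it lies in $J^+(x')\cap J^-(x)$; its trace in $\M$ therefore lies in $J^+(x')\cap J^-(x)\cap\M$, on which $\Xi$ is uniformly bounded by the very definition of $C^2$-compatibility. The head $\gamma|_{[t_0,t']}$ has compact image in $\M$ (a continuous curve on a closed bounded interval), and $\Xi$, being continuous on $\M$, is bounded there. Taking the maximum of the two bounds shows $\Xi$ is bounded along all of $\gamma|_{\M}$; the two named examples, $R_{\mu\nu\alpha\beta}R^{\mu\nu\alpha\beta}$ and $g(X,X)$ for a Killing field $X$, are $C^2$-compatible precisely because they extend continuously to the compact set $\overline{J^+(x')\cap J^-(x)}\subset\widetilde{\M}$.

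The main obstacle, and the only place where the regularity of the extension really enters, is the geometric lemma on compactness of small causal diamonds: one must know that $J^+(x')\cap J^-(x)$ is compact for $x'$ sufficiently close to $x$ in $\widetilde{\M}$. For $C^2$ metrics this is classical (convex normal and globally hyperbolic neighborhoods exist and local causal diamonds are compact), so I would quote it from the causality literature or from the reference already cited rather than reprove it; the remainder of the argument is then essentially bookkeeping. A minor secondary point to handle with care is the general non-timelike, possibly non-geodesic causal case in the incompleteness statement, where proper time must be replaced by a generalized affine parameter, but the inextendibility-plus-finiteness mechanism is unchanged.
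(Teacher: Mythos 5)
Your proof is correct, but note that the paper does not actually prove this proposition: it is explicitly quoted as a standard result, with the proof deferred to \cite{PCJC10}. So what you have produced is a self-contained reconstruction of the argument the paper outsources, and your reconstruction follows exactly the lines the literature takes: (i) localize at the exit point $x$ via the existence of globally hyperbolic neighborhoods for $C^2$ metrics, furnishing $x'=\gamma(t')$ with a compact causal diamond; (ii) observe that $\gamma|_{\M}$ is future-inextendible in $\M$ (its unique $\widetilde{\M}$-limit is $x\notin\M$) and has finite generalized affine parameter inherited from $\widetilde{g}$, hence is incomplete; (iii) note that the tail $\gamma|_{[t',t^*)}$ lies in $J^+(x')\cap J^-(x)\cap\M$, so boundedness of any $C^2$-compatible $\Xi$ along it is immediate from Kommemi's definition, boundedness on the compact head being trivial. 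This has the virtue of making transparent that, given the paper's definition of $C^2$-compatibility, the only genuinely geometric inputs are the local compact-diamond lemma and the finite-parameter observation; the substantive analytic content (that the Kretschmann scalar and Killing norms really are $C^2$-compatible) sits in the examples, as you correctly point out.

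One point to tighten: a globally hyperbolic neighborhood $U$ of $x$ gives compactness of the diamond computed \emph{within} $U$, not of $J^+(x')\cap J^-(x)$ computed in all of $\widetilde{\M}$; unless $U$ is causally convex in $\widetilde{\M}$ (which can fail, e.g.\ in extensions with causality violations), causal curves between $x'$ and $x$ may leave $U$, and the ambient diamond need not be compact. The harmless fix is to choose $t'$ close enough to $t^*$ that $\gamma([t',t^*])\subset U$ (possible by continuity of $\gamma$) and to read the compact diamond, and hence the definition of $C^2$-compatibility, relative to $U$; your two-sided causal-curve argument then places the tail inside that $U$-diamond verbatim. Since the paper itself simply posits the existence of an $x'$ with compact ambient diamond, this imprecision is inherited from its setup rather than introduced by you, and it does not affect the validity of your proof within the paper's framework.
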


From this result, one easily show the following:
\begin{lem}Let $P = I_\M^-(\gamma|_{\M})\subset\M$. If
\beqn\label{past}
\pi(P) = I^-(\pi(P))\cap \Q^+,
\eeqn
then for $p\in \mathcal{E}$ the following hold:
\begin{itemize}
\item [1.] $I^-(p)\cap \Q^+= \pi(P)$; and,
\item [2.] there exists a $p'\in \pi(\gamma|_{\M})$ such that every $C^2$-compatible scalar invariant $\Xi$ is uniformly bounded in the region $J^-(p)\cap J^+(p')\cap \Q^+$.\end{itemize}
\end{lem}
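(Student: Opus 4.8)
The plan is to work entirely through the projection $\pi:\M\to\Q^+$, using two standard features of the warped-product metric (\ref{metric}). First, a future-directed timelike curve upstairs projects to a future-directed timelike curve downstairs: along such a curve $-\Omega^2\dot u\dot v+r^2|\dot\omega|^2<0$ forces $\dot u\dot v>0$, so $u$ and $v$ are strictly monotone and the projected curve is timelike in $\Q^+$. Second, any causal curve in $\Q^+$ lifts to a causal curve in $\M$ issuing from a prescribed point of the fibre, the available angular displacement along the lift being bounded below by the timelike ``slack'' of the downstairs curve (the quantity $\int r|\dot\omega|\,\dd s\le\Omega_{\max}\sqrt{\Delta u\,\Delta v}$). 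Finally, since $\gamma$ exits $\M$ through $x\in\partial\M$, continuity gives $p=\pi(x)$, and every point $\gamma(t)\in\gamma|_{\M}$ satisfies $\gamma(t)\in P$ because $\gamma(t)\ll\gamma(t')$ for later $t'$; hence $\pi(\gamma(t))\in\pi(P)$.

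For Statement~1 I would prove the two inclusions. For $\pi(P)\subseteq I^-(p)\cap\Q^+$: if $q=\pi(\hat q)$ with $\hat q\in P=I_\M^-(\gamma|_{\M})$, then $\hat q\ll\gamma(t)$ upstairs for some $t$, and by the first fact above the projection gives $q\ll\pi(\gamma(t))$; since $\pi(\gamma(t))\to p$ we have $\pi(\gamma(t))\ll p$, whence $q\ll p$. For the reverse inclusion $I^-(p)\cap\Q^+\subseteq\pi(P)$, which is where the hypothesis (\ref{past}) enters: given $q\ll p$ with $q\in\Q^+$, openness of $I^+(q)$ together with $\pi(\gamma(t_n))\to p$ yields some $\pi(\gamma(t_n))\in I^+(q)$, i.e.\ $q\in I^-(\pi(\gamma(t_n)))$. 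As $\pi(\gamma(t_n))\in\pi(P)$ this gives $q\in I^-(\pi(P))\cap\Q^+=\pi(P)$ by (\ref{past}). Combining the two inclusions proves $I^-(p)\cap\Q^+=\pi(P)$.

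For Statement~2 I would set $p'=\pi(x')$ and show every $C^2$-compatible invariant $\Xi$ is uniformly bounded on the diamond $J^-(p)\cap J^+(p')\cap\Q^+$. By definition of $C^2$-compatibility, $\Xi$ is bounded on the upstairs region $\big(J^+(x')\cap J^-(x)\big)\cap\M$, and since $\Xi$ is $SO(3)$-invariant it descends to a function on $\Q^+$ with $\Xi(q)=\Xi(\hat q)$ for any lift $\hat q$. Thus it suffices to show that each $q$ in the downstairs diamond admits a lift $\hat q\in J^+(x')\cap J^-(x)$. For this I would lift a downstairs causal curve $p'\to q\to p$ to a causal curve upstairs through a point $\hat q$ over $q$, splitting the total allowed angular displacement across the two legs so that $\hat q$ can be reached from $x'$ and can reach $x$ (triangle inequality on $\mathbb{S}^2$). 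The crucial point is that $x'$ is chosen on $\gamma$ close to the exit point $x$, so their angular coordinates differ by an arbitrarily small amount $\varepsilon$; since any $q$ in the interior of the diamond carries a strictly positive angular budget, the two fibre ``caps'' $J^+(x')\cap\pi^{-1}(q)$ and $J^-(x)\cap\pi^{-1}(q)$ overlap once $\varepsilon$ is small enough.

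The main obstacle is precisely this last lifting step in Statement~2: unlike the invariant past used in Statement~1, the region $J^+(x')\cap J^-(x)$ is built from the single (non-$SO(3)$-invariant) points $x'$ and $x$, so I must control the angular geometry of the fibres rather than merely quoting that projections preserve causality. Handling the null boundary of the diamond, where one leg's budget degenerates, requires care, but there the lift is forced to constant angle and the remaining leg absorbs the small discrepancy $\varepsilon$; a limiting/continuity argument then extends the bound from the interior to all of $J^-(p)\cap J^+(p')\cap\Q^+$. Statement~1, by contrast, is essentially formal once (\ref{past}) is granted.
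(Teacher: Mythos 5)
The paper itself gives no proof of this Lemma: it is asserted to follow ``easily'' from Proposition \ref{prop:extstand}, and the projection/horizontal-lift machinery only appears afterwards, in the paragraph establishing the hypothesis (\ref{past}). So your proposal is really an attempt to fill a gap the paper waves at. Your proof of Statement 1 is correct and is the natural formal argument: projections of timelike curves are timelike, the projected curve $\pi(\gamma|_{\M})$ converges to $p$ (its null coordinates are monotone and bounded), and the reverse inclusion is exactly where (\ref{past}) enters.

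The genuine gap is in Statement 2. You need, for each $q$ in $J^-(p)\cap J^+(p')\cap\Q^+$, a \emph{single} lift $\hat q\in\pi^{-1}(q)$ lying simultaneously in $J^+(x')$ and $J^-(x)$, and you propose to get it from an overlap of fibre caps, using that the angular separation $\varepsilon$ of $x'$ and $x$ is small while interior points carry positive angular budget. Two things fail. First, the premise on $\varepsilon$ is unjustified: the angular coordinate exists only on $\M$, the extension $\widetilde{\M}$ carries no $SO(3)$-action, and the angular displacement of $\gamma$ between $x'$ and $x$ is controlled by $\int(\Omega/r)\sqrt{\dot u\dot v}\,\dd s$, not by proximity in $\widetilde{\M}$; near boundary points with $r\rightarrow 0$ (precisely the case $p=b_{\Gamma}$ that matters in the paper's Case 3) this integral can diverge, $\gamma$ can spiral, and $\omega(\gamma(t))$ need not converge at all. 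Second, and more fundamentally, the quantifiers do not close: $\varepsilon$, the diamond, and the budgets are all fixed by the same choice of $x'$, so ``taking $\varepsilon$ small'' shrinks the budgets proportionally. Concretely, take $\M=\{t<1\}$ in Minkowski space, $\widetilde{\M}$ all of Minkowski space, and $\gamma$ approaching $x=(1,\vec{x}_0)$, $|\vec{x}_0|=1$, with angular speed $c<1$ at radius $\approx 1$. The separation of $\omega(x')$ from the angles $\omega(\gamma(t))$ is of order $c\,\varepsilon$, while the combined cap radii at a point $q$ at coordinate distance $\delta$ from the corner of the diamond are of order $\sqrt{\varepsilon\delta}$; for $\delta\ll c^2\varepsilon$ the caps are disjoint. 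At the corner itself the $J^+(x')$-cap degenerates to the single horizontal lift $(q,\omega(x'))$, and a direct check shows this point is \emph{not} in $J^-(x)$ once $\omega(x')\neq\omega(\vec{x}_0)$. So the failure occupies an open region of the diamond, and your proposed limiting argument from the interior cannot repair it.

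Any rescue of the lifting strategy has to decouple the two scales: fix $x'$, but define the downstairs diamond with respect to $p''=\pi(x'')$ for $x''$ much closer to $x$, so that the $J^+(x')$-cap over every $q\in J^+(p'')$ has radius bounded below by the fixed budget of the leg from $\pi(x')$ to $p''$; the constraint $\hat q\in J^-(x)$ then requires a case distinction according to whether $r\rightarrow 0$ at $p$ (budgets blow up, caps become the whole fibre) or $r$ is bounded below (where one must still show the residual angular displacement of $\gamma$ beyond $x''$ drops below that fixed radius), and the null boundary of the closed diamond remains delicate. In short, the Lemma is substantially harder than either the paper's one-line assertion or your cap-overlap argument suggests, and as written your Statement 2 does not go through.
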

The hypothesis that (\ref{past}) holds asserts that past set structure is preserved under projection, for   
$P$ satisfies $P = I_\M^-(P)$.  Once this assumption is retrieved, then showing that there does not exist a timelike curve $\gamma$ that exits the spacetime through $x\in \partial \M$ reduces to finding a $C^2$-compatible scalar invariant $\Xi$ that is \emph{not} uniformly bounded in $J^-(p)\cap J^+(p')\cap \Q^+$ \emph{for every} $p'\in \pi(\gamma|_{\M})$.

\subsubsection*{Past set structure is preserved}
It is easy to show that since the metric $h$ on $\mathbb{S}^2$ is positive definite, timelike (resp.,~causal) vectors on $T_p\M$, for $p\in \M$, project to timelike (resp.,~causal) vectors on $T_{\pi(p)}\Q^+$.  On the other hand, null vectors need not project to null vectors, for if $\pi_h: \M\rightarrow \mathbb{S}^2$ denotes the standard projection map, then a null vector $V\in T_p\M$ will map to a timelike vector $\dd\pi (V) \in T_{\pi(p)}\Q^+$ unless $\dd\pi_h(V) =0$, i.e., unless $V$ is horizontal to the leaves $\Q^+\times\{x\}$ for $x\in \mathbb{S}^2$, in which case $\dd\pi(V)$ is also null.  Lastly, note that the horizontal lift of a timelike (resp.,~causal) vector in 
$T_q\Q^+$ is a timelike (resp.,~causal) vector in $T_p\M$, with $\pi(p) = q$. 

To establish (\ref{past}) we will show that
\beqn\label{past1}
\pi(P) = I^-(\pi (\gamma|_{\M}))\cap \Q^+,
\eeqn 
for then
\beqn\nonumber
I^-(\pi(P))\cap \Q^+ = I^-(I^-(\pi(\gamma|_{\M}))\cap \Q^+)= I^-(\pi(\gamma|_{\M}))\cap \Q^+ = \pi(P).
\eeqn
Consider $q\in \pi(P)$.  Then, there exists $p\in P$ such that $\pi(p) = q$.  Since $P$ is a past set, it follows that for some $p'\in \gamma|_{\M}$, there exists a timelike curve $\gamma_p:I \rightarrow \M$ such that $\gamma_p(0) = p$ and $\gamma_p(1) = p'$.  We then define a timelike vector field $X\in T\M$ along $\gamma_p$ such that
\beqn\nonumber
X(t) = \dot{\gamma}_p(t).
\eeqn
This vector field then projects to a timelike vector field $\dd\pi(X)\in T\Q^+$ along $\pi(\gamma_p)$.  In particular, there exists a future-directed timelike curve that connects $q$ and $\pi(p')\in \pi(\gamma|_{\M})$, i.e.,~$q\in I^-(\pi (\gamma|_{\M}))\cap \Q^+$.
To show the other inclusion, consider $q\in I^-(\pi (\gamma|_{\M}))\cap \Q^+$.  Then, by definition, there exists $q'\in \pi (\gamma|_{\M})$ and a timelike curve $\gamma_q: I \rightarrow \Q^+$ such that $\gamma_q(0) =q$ and $\gamma_q(1) = q'$.  We define a timelike vector field $Y\in T\Q^+$ along $\gamma_q$ such that
\beqn\nonumber
Y(t) = \dot{\gamma}_q(t).
\eeqn
The curve $\gamma_q$ has a unique horizontal lift $\hat{\gamma}_q$ in $\M$ through $p'\in \gamma|_{\M}$ such that $\pi(\hat{\gamma}_p) = \gamma_q$ and $\pi(p') = q'$. Since the horizontal lift of a timelike vector is also timelike, it follows that there exists a timelike vector field $\hat{Y}\in T\M$ along $\hat{\gamma}_q$ such that $\dd\pi(\hat{Y}) = Y$. We then follow along the lifted timelike curve from $p'$ to the point $p\in P$ such that $\pi(p) = q$. In particular, $q\in \pi(P)\cap \Q^+$.

\subsubsection*{Characterization of the set $\mathcal{E}$} In characterizing the extension, we consider three cases.   

\textbf{Case 1:} $\mathcal{E}\cap(\sgo\cup\sgt \cup\mathcal{S} \cup \mathcal{S}_{i^+})= \emptyset$\\
For the sake of contradiction, suppose that $\mathcal{E}\cap(\sgo\cup\sgt \cup\mathcal{S} \cup \mathcal{S}_{i^+})\neq \emptyset$ and let $p\in\mathcal{E}\cap \left(\sgo\cup\sgt \cup\mathcal{S} \cup \mathcal{S}_{i^+}\right)$.  By definition, there exists a timelike curve $\gamma\subset\widetilde{\M}$ exiting $\M$ such that $\{p\}\cap \overline{\pi(\gamma|_{\M})}\neq \emptyset$.  For all $p' \in \pi(\gamma|_{\M})$, Claims 1 and 2 assert that the Kretschmann scalar $R_{\mu\nu\alpha\beta}R^{\mu\nu\alpha\beta}$ is not uniformly bounded in $J^-(p) \cap J^+(p')\cap \Q^+$, contradicting the assumption that $p\in \mathcal{E}$. We conclude  that  $\mathcal{E}\cap(\sgo\cup\sgt\cup\mathcal{S} \cup \mathcal{S}_{i^+})= \emptyset$.

\textbf{Case 2:} $\mathcal{E}\cap \left(i^{\square}\cup\mathcal{I}^+\right)=\emptyset$\\
For the sake of contradiction, suppose that $\mathcal{E}\cap \left(i^{\square}\cup\mathcal{I}^+\right)\neq \emptyset$ and let $p\in \mathcal{E}\cap \left(i^{\square}\cup\mathcal{I}^+\right)$. By definition, there exists a timelike curve $\gamma\subset\widetilde{\M}$ exiting $\M$ such that $\{p\}\cap \overline{\pi(\gamma|_{\M})}\neq \emptyset$. We recall the fact that in spherical symmetry there are three Killing vector fields $X_1$, $X_2$, and $X_3$ such that 
\beqn\nonumber
 g(X_1, X_1) + g(X_2, X_2) +g(X_3, X_3) = 2 r^2.
\eeqn
Since $r$ is unbounded in $J^+(p'')\cap J^-(\mathcal{I}^+)\cap \Q^+$ for all $p'' \in \pi(\gamma|_{\M})$,  it follows that, without loss of generality, $g(X_1, X_1)$ is unbounded as well, contradicting the assumption that $p\in \mathcal{E}$. We conclude that $\mathcal{E}\cap \left(i^{\square}\cup\mathcal{I}^+\right)= \emptyset$.

\textbf{Case 3:} $\mathcal{E}\cap (\mathcal{CH}_{\Gamma}\cup\mathcal{CH}_{i^+})\neq \emptyset$\\
To establish the result, it suffices to show that
\beqn\nonumber
\mathcal{E} \neq \{b_{\Gamma}\},
\eeqn
i.e., one cannot simply extend through $b_{\Gamma}$, for in light of the above, it will follow that
\beqn\nonumber
\mathcal{E}\cap (\mathcal{CH}_{\Gamma}\cup\mathcal{CH}_{i^+})\neq \emptyset.
\eeqn 
Before we prove this claim, let us briefly summarize the main idea. Since the extension $\widetilde{\M}$ is a regular manifold, if $\gamma$ exits through $x\in \partial \M\subset\widetilde{\M}$ with the property that $\{b_{\Gamma}\}\cap \overline{\pi(\gamma|_{\M})}\neq \emptyset$, then there must be an open neighborhood $\widetilde{\mathcal{U}}\subset\widetilde{\M}$ of $x$ through which a suitable family of timelike curves (to be defined later) also exits into the extension. The worry is that this (and every) family of timelike curves when restricted to $\M$ will be crushed under the projection $\pi$ so as to always obtain a limit point on $b_{\Gamma}$.  For this to happen, however, both the $u$- and $v$-dimension of the tubular neighborhood swept out by this family must degenerate, i.e., become zero, as $\partial\M$ is approached. We demonstrate that this is not possible.  

Fix a curve $\gamma$ that exits $\M$ through $x\in \partial \M$  with the property that $\{b_{\Gamma}\}\cap \overline{\pi(\gamma|_{\M})}\neq \emptyset$.  We can choose $x\in \partial \M$, without loss of generality, such that there is an open neighborhood $\widetilde{\mathcal{U}}\subset\widetilde{\M}$ of $x$ such that $\{p\in \M: g'  \cdot p= p,~\forall g'\in SO(3)\}\cap \widetilde{\mathcal{U}} = \emptyset$ since the set of fixed points in $\M$ (the center $\Gamma$ `downstairs') is a curve in $\M$. Indeed, such a choice is possible because $\partial \M$ is a 3-dimensional Lipschitz submanifold of $\widetilde{\M}$.  At every point $x'\in \gamma|_{\M}$ there exists an open neighborhood $\widetilde{\mathcal{U}'}\subset\widetilde{\M}$ such that the exponential map $\exp_{x'}: T_{x'}\widetilde{\M} \rightarrow \widetilde{\M}$ induces a diffeomorphism
\beqn\nonumber
\exp_{x'}\left(B^4_{\epsilon}(0)\right)\rightarrow \widetilde{\mathcal{U}'},
\eeqn
where $B^4_{\epsilon}(0)\subset T_{x'}\widetilde{\M}$ is the 4-dimensional ball of radius $\epsilon$ centered around $0\in T_{x'}\widetilde{\M}$. Let $B^3_{\epsilon/2}(0)\subset B^4_{\epsilon}(0)$ be a 3-dimensional ball that intersects (under the image of the exponential map) $\gamma$ transversally.  Consider a family of disjoint timelike curves $\gamma_{\vec{s}} = \gamma_{(s_1, s_2, s_3)}$ that emanate from $\exp_{x'}\left(B^3_{\epsilon/2}(0)\right)$ such that
\begin{enumerate}
\item [a.] $\gamma_{\vec{s}}$ intersect $\exp_{x'}\left(B^3_{\epsilon/2}(0)\right)$ transversely; and,
\item [b.] $ \gamma_{(0,0,0)} = \gamma$.
\end{enumerate}    
For $x'\in \gamma|_{\M}$ suitably close to $x$, we can, moreover, require that for some suitably small $\epsilon_1$
\begin{enumerate}
\item [c.] $\gamma_{\vec{s}}(0)\in \M$ and $\gamma_{\vec{s}}(\frac{1}{2}\epsilon_1)\not\in \M$ for all $\vec{s}\in B^3_{\epsilon/2}(0)$.
\end{enumerate}
For each of these curves, let us define a pair of orthogonal (horizontal) null vectors $\mathbf{X}^1_{\vec{s}}$ and $\mathbf{X}^2_{\vec{s}}$ tangent to $ \gamma_{\vec{s}}(0)$ whose projection to $\Q^+$ is given by
\beqn\nonumber
\dd\pi(\mathbf{X}_{\vec{s}}^1)(\pi (\gamma_{\vec{s}}(0))) = a(\vec{s})\frac{\partial}{\partial u}\hspace{.7cm}\textrm{and}\hspace{.7cm}\dd\pi(\mathbf{X}^2_{\vec{s}})(\pi (\gamma_{\vec{s}}(0))) = b(\vec{s})\frac{\partial}{\partial v},
\eeqn
for smooth real-valued functions $a(\vec{s})=a(\pi (\gamma_{\vec{s}}(0)))$ and $b(\vec{s})=b(\pi (\gamma_{\vec{s}}(0)))$ normalized such that
\beqn\nonumber
2 = -g(\mathbf{X}_{\vec{s}}^1, \mathbf{X}_{\vec{s}}^2).
\eeqn

\begin{center}
\includegraphics[scale=.79]{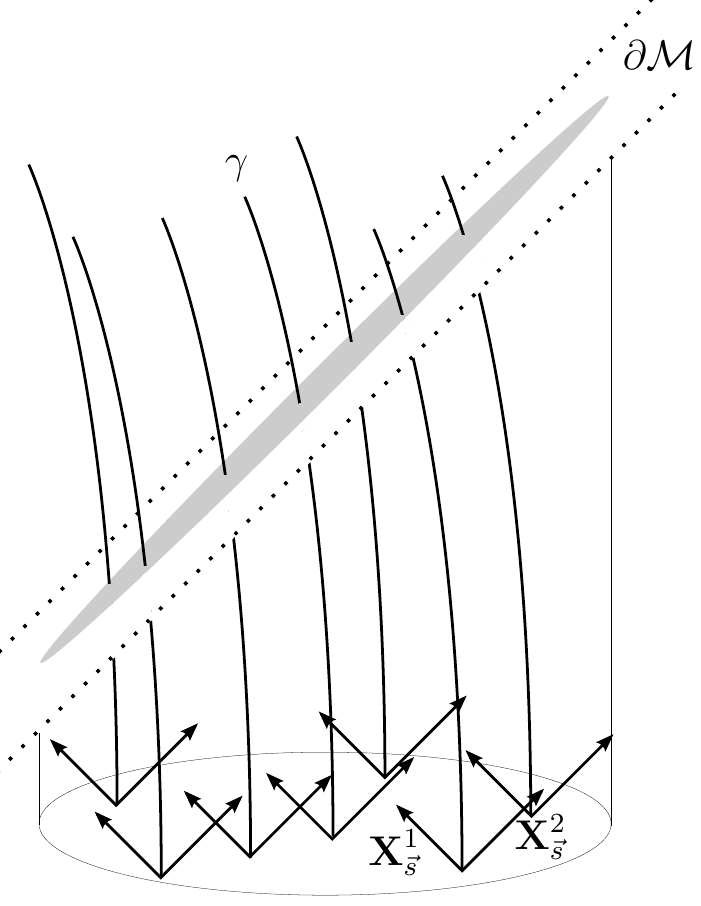}
\end{center}
Parallel transport the vectors $\mathbf{X}^1_{\vec{s}}$ and $\mathbf{X}^2_{\vec{s}}$ along each respective curve $\gamma_{\vec{s}}$.  This defines, in particular, two smooth vector fields $\mathbf{X}^1$ and $\mathbf{X}^2$ on $\widetilde{\mathcal{U}'}\cap \M$.  Because parallel transport preserves nullity, we note that
\beqn\nonumber
\mathbf{X}^1 = a  \frac{\partial}{\partial u}\hspace{.7cm}\textrm{and}\hspace{.7cm} \mathbf{X}^2 = b \frac{\partial}{\partial v}
\eeqn
for smooth real-valued functions $a$ and $b$ defined on $\widetilde{\mathcal{U}'}\cap \M$.
Moreover, because parallel transport preserves the inner product of vectors it follows that
\beqn\nonumber
2=-g(\mathbf{X}^1,\mathbf{X}^2) = ab\Omega^{2}.
\eeqn
Now, if the claim were false, then it must be that, in particular, this family of timelike curves when restricted to $\M$ projects to $J^-(b_{\Gamma})$.  We will show that there exists a choice of bounded co-ordinates on $\Q^+$ for which $\Omega^2$ is uniformly bounded in $J^-(b_{\Gamma})$.  Once this fact is established, it must be the case that there exists a constant $\xi >0$ such that \emph{either} $a > \xi$ or $b> \xi$. This latter fact will allow us to construct a tubular neighborhood of $\gamma$ whose $u$- or $v$-dimension can be uniformly controlled on $\widetilde{\mathcal{U}'}\cap \M$.  Consequently, we will show that this tubular neighborhood (upon restriction to $\M$) can not be entirely contained in $J^-(b_{\Gamma})$, providing a contradiction.  

\textbf{Construction of bounded co-ordinates}\\
Let $(U,V) = b_{\Gamma}$. Given a sequence $(U_j, V)\rightarrow (U,V)$, with $U_j\leq U$, we note that
\beqn\nonumber
\frac{2m}{r}(U_j, V) \not \rightarrow 1,
\eeqn
as $U_j\rightarrow U$.
For, if $\frac{2m}{r}(U_j, V)\rightarrow 1$, then $\limsup_{j\rightarrow \infty}R_{\mu\nu\alpha\beta}R^{\mu\nu\alpha\beta}(U_j, V)=\infty$, contradicting the assumption that $ b_{\Gamma}\in \mathcal{E}$. In particular, we conclude that $\mathcal{A}\cap J^-(b_{\Gamma})$ does not have a limit point on $b_{\Gamma}$.  As a result, we may assume, without loss of generality, that for fixed $v' < V$ and $u' < U$, the region
\beqn\nonumber
\mathcal{D} = \left([u', U] \times [v', V]\right)\cap \Q^+
\eeqn
satisfies $\mathcal{D}\cap \left(\mathcal{A}\cup\mathcal{T}\right)= \emptyset$.  Moreover, without loss of generality, we can assume that $b_{\Gamma}$ does not coincide with $\mathcal{S}_{i^+}\cup\mathcal{CH}_{i^+}\cup i^{\square}$. (If $b_{\Gamma}\in \mathcal{S}_{i^+}\cup i^{\square}$, then $b_{\Gamma}\not\in \mathcal{E}$ as argued above.\footnote{In fact, we have already implicitly ruled out the possibility that $b_{\Gamma}$ coincides with $\mathcal{S}_{i^+}$, as $\mathcal{D}\cap \left(\mathcal{A}\cup\mathcal{T}\right)= \emptyset$.}  If $b_{\Gamma}\in \mathcal{CH}_{i^+}$, then there is nothing to show.) In particular, on the compact set $\left(\{u'\}\times[v', V] \cup [u', U]\times\{v'\}\right)\cap \Q^+$, there exists a constant $c>0$ such that $1-\frac{2m}{r}> c$. Thus, if we re-normalize the co-ordinates on $\mathcal{D}$ such that
\beqn\nonumber
\frac{-\nu}{1-\frac{2m}{r}}(u,V) = 1 \hspace{1cm}\textrm{and}\hspace{1cm}\frac{\lambda}{1-\frac{2m}{r}}(u',v) = 1,
\eeqn
then these new co-ordinates $u$ and $v$ will have finite range. Indeed, for example, 
\beqn\nonumber
U - u' = \int_{u'}^U \frac{-\nu}{1-\frac{2m}{r}}(\overline{u},V)~\dd \overline{u} \leq c^{-1}\int_{u'}^U -\nu(\overline{u}, V)~\dd \overline{u} =c^{-1} r(u', V) < \infty.
\eeqn
By monotonicity (\ref{monokappa}) and (\ref{mono2}), it follows that
\beqn\nonumber
\frac{-\nu}{1-\frac{2m}{r}}(u,v) \leq 1 \hspace{1cm}\textrm{and}\hspace{1cm}\frac{\lambda}{1-\frac{2m}{r}}(u,v) \leq 1,
\eeqn
for all $(u,v)\in \mathcal{D}$.  In particular, we have that $\Omega^2$ is uniformly bounded in $\mathcal{D}$, for (\ref{hm}) yields
\beqn\nonumber
\frac{1}{4}\Omega^2 = \frac{-\lambda\nu}{1-\frac{2m}{r}} =\frac{\lambda}{1-\frac{2m}{r}}\frac{-\nu}{1-\frac{2m}{r}} \left(1-\frac{2m}{r}\right)\leq 1.
\eeqn

\textbf{Construction of the tubular neighborhood}\\
Suppose $a > \xi$ (the case where $b> \xi$ is similar). Consider a 2-dimensional ball $B^2_{\epsilon/4}(0)\subset B^3_{\epsilon/2}(0)$ that is transversal (under the image of the exponential map) to the integral curves of $\mathbf{X}^1$. Let $(s_1, s_2, 0) \in B^2_{\epsilon/4}(0)$ and consider a new family of curves $\hat{\gamma}^+_{(s_1, s_2, 0)}$ defined by following the point $\gamma_{(s_1, s_2, 0)}(t)$ along the future-directed integral curves of $\mathbf{X}^1$ for time $0\leq\tau \leq \delta$. Similarly, one can define $\hat{\gamma}_{(s_1, s_2, 0)}^-$ by flowing along the past-directed integral curves of $\mathbf{X}^1$. For $\delta$ suitably small, this flow, call it $\Phi_1$, defines a diffeomorphism onto its image. In particular, since $\textbf{X}^1$ is a smooth vector field, $\Phi_1$ defines a tubular neighborhood $\widetilde{\mathcal{V}_{\delta}}$ of $\gamma|_{\M}$ in $\M$. In this neighborhood, having flowed time $0<\delta_1\leq \delta$,
\beqn\label{ulower}
u\left(\hat{\gamma}^{+}_{(s_1,s_2, 0)}(\delta_1)\right)- u\left(\hat{\gamma}^{-}_{(s_1,s_2, 0)}(\delta_1)\right)\geq \delta_1 \xi >0.
\eeqn
 We now claim that $\pi(\widetilde{\mathcal{V}_{\delta}}) \not\subset \mathcal{D}$.  Suppose, on the contrary, that $\pi(\widetilde{\mathcal{V}_{\delta}}) \subset\Dm$.  Since
\beqn\nonumber
\pi\left(\widetilde{\mathcal{V}_{\delta}}\right) \cap J^+( \pi \left(\gamma(0)\right))\neq \emptyset,
\eeqn
let $p' = \pi \left(\gamma|_{\M}(t)\right)$.  We note that the $u$-dimension of $J^+(p')\cap \Dm$ tends to zero as $p'\rightarrow b_{\Gamma}$, i.e.,~for $p'', p'''\in J^+(p')\cap \Dm$ we have
\beqn\nonumber
\lim_{p'\rightarrow b_{\Gamma}}\sup_{J^+(p')\cap \Dm}|u(p'') - u(p''')|  = 0.
\eeqn
This then contradicts (\ref{ulower}).  We conclude that $\pi(\widetilde{\mathcal{V}_{\delta}}) \not\subset\Dm$.  Thus, there is a curve $\gamma_{(s_1, s_2, 0)}$ entering the extension for which $\{b_{\Gamma}\}\cap \overline{\pi\left(\gamma_{(s_1, s_2, 0)}|_{\M}\right)} = \emptyset$. Since we have shown that
\beqn\nonumber
\mathcal{E}\cap(\sgo\cup\sgt \cup\mathcal{S} \cup \mathcal{S}_{i^+}\cup i^{\square}\cup \mathcal{I}^+) = \emptyset,
\eeqn
it must be the case that 
\beqn\nonumber
 \mathcal{E}\cap \left(\chg\cup \mathcal{CH}_{i^+}\right) \neq \emptyset,
 \eeqn
establishing the result.

\section{Proof of Theorems \ref{thm:s_t_general} and \ref{thm:w_t_general}: global structure of strongly and weakly tame Einstein-matter systems}\label{sec:proof_general}

We note that only in \S\ref{sec:general_emkg}, in which we established the generalized extension principle for the Einstein-Maxwell-Klein-Gordon system, did we exploit model-specific structure.  Elsewhere, assertions were proven with the aid of simply the dominant (or the weaker null energy) condition.  Accordingly, since both Theorems \ref{thm:s_t_general} and \ref{thm:w_t_general}  presume that a suitable extension principle hold, we may reproduce the proof of either Theorem exactly as in \S \ref{sec:proof_main}.

\bibliographystyle{acm}
\bibliography{bibdesk}
\end{document}